\theoremstyle{plain}
\newtheorem{theorem}{Theorem}[section]
\newtheorem{proposition}[theorem]{Proposition}
\newtheorem{lemma}[theorem]{Lemma}
\theoremstyle{definition}
\newtheorem{definition}[theorem]{Definition}
\newtheorem{remark}[theorem]{Remark}
\numberwithin{equation}{section}
\def\cB{{\mathcal{B}}}
\def\C{{\mathbb{C}}}
\def\cC{{\mathcal{C}}}
\def\be{{\mathbf{e}}}
\def\cE{{\mathcal{E}}}
\def\sF{{\mathsf{F}}}
\def\cH{{\mathcal{H}}}
\def\sH{{\mathsf{H}}}
\def\bk{{\mathbf{k}}}
\def\cM{{\mathcal{M}}}
\def\N{{\mathbb{N}}}
\def\cO{{\mathcal{O}}}
\def\cQ{{\mathcal{Q}}}
\def\R{{\mathbb{R}}}
\def\cR{{\mathcal{R}}}
\def\S{{\mathbb{S}}}
\def\cS{{\mathcal{S}}}
\def\sS{{\mathsf{S}}}
\def\bu{{\mathbf{u}}}
\def\cV{{\mathcal{V}}}
\def\sV{{\mathsf{V}}}
\def\bv{{\mathbf{v}}}
\def\hbv{{\hat{\mathbf{v}}}}
\def\bw{{\mathbf{w}}}
\def\bW{{\mathbf{W}}}
\def\cW{{\mathcal{W}}}
\def\bx{{\mathbf{x}}}
\def\bX{{\mathbf{X}}}
\def\hbx{{\hat{\mathbf{x}}}}
\def\by{{\mathbf{y}}}
\def\bY{{\mathbf{Y}}}
\def\hby{{\hat{\mathbf{y}}}}
\def\Z{{\mathbb{Z}}}
\def\bZ{{\mathbf{Z}}}
\def\opsi{{\overline{\psi}}}
\def\oeta{{\overline{\eta}}}
\def\otau{{\overline{\tau}}}
\def\spin{{\{\uparrow,\downarrow\}}}
\def\ua{{\uparrow}}
\def\da{{\downarrow}}
\def\la{{\lambda}}
\def\bla{{\mbox{\boldmath$\lambda$}}}
\def\o{{\omega}}
\def\eps{{\varepsilon}}
\def\g{{\gamma}}
\def\G{{\Gamma}}
\def\s{{\sigma}}
\def\hrho{\hat{\rho}}
\def\orho{\overline{\rho}}
\def\heta{\hat{\eta}}
\def\oeta{\overline{\eta}}
\def\D{{\Delta}}
\def\<{{\langle}}
\def\>{{\rangle}}
\def\liminf{\mathop{\mathrm{liminf}}}
\def\Tr{\mathop{\mathrm{Tr}}\nolimits}
\def\Mat{\mathop{\mathrm{Mat}}}
\def\Map{\mathop{\mathrm{Map}}}
\def\sgn{\mathop{\mathrm{sgn}}\nolimits}
\def\Im{\mathop{\mathrm{Im}}}
\def\Re{\mathop{\mathrm{Re}}}
\def\b0{{\mathbf{0}}}
\def\frah{{\left(\frac{1}{h}\right)}}
\def\0betah{{[0,\beta)_h}}
\begin{document}

\title{Superconducting phase in the BCS model with imaginary magnetic
field. III. \\
Non-vanishing free dispersion relations}

\author{Yohei Kashima
\medskip\\
Division of Mathematical Science,\\
Graduate School of Engineering Science, Osaka University,\\
Toyonaka, Osaka 560-8531, Japan. \\
E-mail: kashima@sigmath.es.osaka-u.ac.jp}

\date{}

\maketitle

\begin{abstract} We analyze a class of the BCS model, whose free dispersion
 relation is non-vanishing, under the influence of imaginary magnetic
 field at positive temperature. The magnitude of the negative coupling
 constant must be small but is allowed to be independent of the
 temperature and the imaginary magnetic field. The infinite-volume limit
 of the free energy density is characterized. A spontaneous symmetry
 breaking and an off-diagonal long range order are proved to occur only
 in high temperatures. This is because the gap equation in this model
 has a positive solution only if the temperature is higher than a
 critical value. The proof is based on a double-scale integration 
 of the Grassmann integral formulation. In this scheme we
integrate with the infrared covariance first and with the
 ultra-violet covariance afterwards, which is opposite to 
the previous schemes in [Y. Kashima, accepted for publication in
 J. Math. Sci. Univ. Tokyo, arXiv:1609.06121], 
[Y. Kashima, accepted for publication in
 J. Math. Sci. Univ. Tokyo,  arXiv:1709.06714] or \cite{K_BCS_I},
 \cite{K_BCS_II} 
in short.  As the other focus, we  study geometric properties of the 
phase boundaries, which are periodic copies of a closed curve in the
 two-dimensional space of the temperature and the real time variable. 
Here we adopt the real time variable in place of the temperature times 
the imaginary magnetic field by considering its relevance within 
contemporary physics of  
 dynamical phase transition at positive temperature. As the main result,
 we show that for any choice of a non-vanishing free dispersion relation
 the representative curve of the phase boundaries has only one local
 minimum point, or in other words the phase boundaries do not oscillate
 with temperature, if and only if the minimum of the magnitude of the
 free dispersion relation over the maximum is larger than the critical
 value $\sqrt{17-12\sqrt{2}}$. Overall we use the same notational
 conventions as in \cite{K_BCS_I}, \cite{K_BCS_II}. So this work is a
 continuation of these preceding papers.
\footnote{\textit{2010 Mathematics Subject Classification.} Primary 82D55, Secondary 81T28.\\
\textit{keywords and phrases.} The BCS model, gap equation, phase boundaries,
spontaneous symmetry
 breaking, off-diagonal long range order, Grassmann integral formulation}
\end{abstract}

\tableofcontents

\section{Introduction}\label{sec_introduction}
 
\subsection{Introductory remarks}\label{subsec_introduction}

Since the proposal in 1957 (\cite{BCS}), the Bardeen-Cooper-Schrieffer
(BCS) model of interacting electrons has been considered as a primal
model to explain superconductivity from a microscopic principle. Apart
from the conventional reduction of its quartic Fermionic
interaction to a solvable quadratic one, we are still unable to make
explicit the thermodynamic limit of the BCS model for full set of
physical parameters. It is our longstanding desire to complete the
rigorous derivation of the thermodynamics and acquire fully
coherent applications of the BCS model.

It was shown in our previous works \cite{K_BCS_I}, \cite{K_BCS_II} that
the infinite-volume limit of the BCS model interacting with imaginary
magnetic field can be rigorously derived. The main difference between
these two constructions lies in properties of the free dispersion
relation. In \cite{K_BCS_I} we assumed the nearest-neighbor hopping and
tuned the chemical potential in a way that the free Fermi surface does
not degenerate. On the contrary, in \cite{K_BCS_II} we considered a
class of free dispersion relations which widely cover the ones with
degenerate but not empty free Fermi surface. Our mission here is to
achieve the same goal for non-vanishing free dispersion relations. We
characterize the infinite-volume limit of the free energy density and
the thermal expectation values of Cooper pair operators. The proof is
based on a multi-scale analysis of the Grassmann integral formulation. As an
illustration, let us summarize the applicability of the main theorems of
this series
to the typical free dispersion relation of nearest-neighbor hopping electron,
$e(\bk)=2\sum_{j=1}^d\cos k_j-\mu$ $:\R^d\to\R$, where $d$ $(\in \N)$ is
the spatial dimension and $\mu$ $(\in \R)$ is the chemical potential.
\begin{itemize}
\item \cite[\mbox{Theorem 1.3}]{K_BCS_I} applies to the case that $d$ is
      arbitrary and $|\mu|<2d$.
\item \cite[\mbox{Theorem 1.3}]{K_BCS_II} applies to the case that
      $d\in\{3,4\}$ and $|\mu|=2d$.
\item Theorem \ref{thm_infinite_volume_limit} of the present paper
      applies to the case that $d$ is arbitrary and $|\mu|>2d$. 
\end{itemize}
Qualitative properties of the free dispersion relation around its zero
points deeply affect the possible magnitude of interaction in this
approach. Therefore, characteristics of each paper of this series can be
explained in terms of dependency of the allowed magnitude of the
coupling constant on the temperature and the imaginary magnetic
field. In \cite{K_BCS_I} the magnitude of the coupling constant must be
smaller than some power of these parameters. Though the claimed
dependency is most complicated in this series, we can actually choose
the parameters so that they obey the necessary constraint and the gap
equation has a positive solution at the same time. In \cite{K_BCS_II} 
the magnitude of the coupling
constant can be largely independent of the temperature and the imaginary
magnetic field if the temperature is lower than a certain constant. As
the result, we were able to  prove phase transitions in arbitrarily small
temperatures for a fixed coupling constant. In this paper the magnitude
of negative coupling constant must be small but is independent of the 
temperature and the imaginary magnetic field. 
It turns out that the gap equation has a positive solution
only if the temperature is higher than a critical value. Accordingly,
the phase transitions characterized by spontaneous symmetry breaking (SSB)
and off-diagonal long range order (ODLRO) are proved to occur in the
high-temperature regions. 

The gapped property of the free dispersion relation is one essential
factor to make it possible to analyze the system independently of the
temperature and the imaginary magnetic field. However, a direct
combination of the non-vanishing free dispersion relation and the same
strategy as the core part of the multi-scale integrations of \cite{K_BCS_I},
\cite{K_BCS_II} does not lead to the desired result. We can see from the
constraints on the coupling constant \cite[\mbox{(1.2)}]{K_BCS_I}, 
\cite[\mbox{(1.18)}]{K_BCS_II} that the magnitude of the coupling
constant must be arbitrarily small in high temperatures for some
choices of the imaginary magnetic field in our previous constructions. 
The extra constraint in high
temperatures stems from a determinant bound on the covariance of the
last integration scale, which is tactically manipulated to be
independent of (imaginary) time variables. This constraint remains
regardless of the gapped property of the free dispersion relation as
long as we follow the same strategy as in \cite{K_BCS_I},
\cite{K_BCS_II}.

Let us explain this issue more by using formulas in a simple
way, as it also shows a novel aspect of the present construction. 
As usual, let $\beta$ $(\in\R_{>0})$ denote the inverse
temperature. Take an artificial parameter $h\in \frac{2}{\beta}\N$ and
set  
$$
[0,\beta)_h:=\left\{0,\frac{1}{h},\frac{2}{h},\cdots,\beta-\frac{1}{h}\right\},
$$
which is a discrete analogue of the interval $[0,\beta)$. For a finite
set $S$, which should be considered as a generalization of the product
set of the spatial lattice points and the orbital index, let $C:(S\times
[0,\beta)_h)^2\to\C$ denote the full covariance of the Grassmann
Gaussian integral formulation of our system. The main object to analyze
is the Grassmann Gaussian integral 
$$
\int e^{V(\psi)}d\mu_C(\psi)
$$ 
with a quartic Grassmann polynomial $V(\psi)$, which is as before a
correction term left after extracting the reference Grassmann polynomial. The full
covariance $C$ can be decomposed as follows. 
\begin{align}
C(Xs,Yt)=e^{i\frac{\pi}{\beta}(s-t)}(C_0(Xs,Yt)+C_1(Xs,Yt)),\ (\forall
 X,Y\in S,\ s,t\in [0,\beta)_h),\label{eq_formal_decomposition}
\end{align}
where the covariance $C_0:(S\times[0,\beta)_h)^2\to\C$ is in particular
independent of the time variables.
$$
C_0(Xs,Yt)=C_0(X0,Y0),\ (\forall
 X,Y\in S,\ s,t\in [0,\beta)_h).
$$
In essence the Matsubara frequency is fixed to be $\pi/\beta$
inside $C_0$ and $C_1$ sums over all the Matsubara frequencies but
$\pi/\beta$. 
Due to the gapped property of the free dispersion relation and
the partition of the Matsubara frequencies, the covariances $C_0$, $C_1$
satisfy the following bound properties. 
\begin{align*}
&|\det(C_0(X_is_i,Y_jt_j))_{1\le i,j\le n}|\le {c_{onst}}^n\beta^{-n},\\
&|\det(C_1(X_is_i,Y_jt_j))_{1\le i,j\le n}|\le {c_{onst}}^n,\\
&(\forall n\in \N,\ X_j,Y_j\in S,\ s_j,t_j\in [0,\beta)_h\
 (j=1,\cdots,n)),\\
&\sup_{(Y,t)\in S\times [0,\beta)_h}\Bigg(\frac{1}{h}\sum_{(X,s)\in
 S\times [0,\beta)_h}(|C_a(Xs,Yt)|+|C_a(Yt,Xs)|)\Bigg)\le {c_{onst}},\
 (\forall a\in \{0,1\}),
\end{align*}
where $c_{onst}$ $(\in \R_{>0})$ is independent of $\beta$ and the
imaginary magnetic field, though it may depend on other parameters
such as the spatial dimension or the minimum value of the magnitude of 
the free
dispersion relation. By \eqref{eq_formal_decomposition} and a gauge
invariance we can transform as follows. 
\begin{align*}
\int e^{V(\psi)}d\mu_C(\psi)&=\int\int
 e^{V(\psi^0+\psi^1)}d\mu_{C_0}(\psi^0)d\mu_{C_1}(\psi^1)\\
&=\int\int e^{V(\psi^0+\psi^1)}d\mu_{C_1}(\psi^1)d\mu_{C_0}(\psi^0).
\end{align*}
At this point we have two ways to proceed, either integrating with $C_0$
first or with $C_1$ first. Integrating with $C_1$ first is essentially
the same strategy as in the previous papers and the determinant bound on
$C_0$ remains to affect the possible magnitude of the coupling constant
at the end. This is the reason why the coupling constant needed to be
small even in high temperatures in \cite{K_BCS_I}, \cite{K_BCS_II}. We
can see from the $\beta$-dependent determinant bound on $C_0$ claimed
above that this is not the way to achieve our goal. Interestingly we
find that the determinant bound on $C_0$ does not affect the magnitude
of the coupling constant at all if we integrate with $C_0$ first and
make use of a vanishing property of the kernel function of
$V(\psi)$. Since the other bounds on $C_0$, $C_1$ listed above are
independent of $\beta$ and the imaginary magnetic field, this way leads
to the goal.

We can apply many of the general estimates established in
\cite{K_BCS_I}, \cite{K_BCS_II} and the Grassmann Gaussian integral
formulation stated in \cite{K_BCS_II} without any modification. At the
same time we need some modified versions of the previous general estimates in
order to implement the present double-scale integration scheme. However,
the modification can be done in a systematic way so that it does not
require a widespread reconstruction. Therefore, as far as it concerns
the general estimation of the Grassmann integration, the present
construction should not be longer than the previous ones. Moreover, the
conclusive part of the derivation of the infinite-volume limit after
building the general integration regime is essentially parallel to that
of the previous papers. 
Not to disappoint the readers later, we should clearly mention at this stage
that we will only explain which lemmas are
necessary to complete the proof of each claim of the theorem
in the final part of our construction (Subsection
\ref{subsec_infinite_volume}). 
On the other hand, estimation of the real covariance needs to be carefully
performed so that it does not yield any extra dependency on the
temperature and the imaginary magnetic field in the resulting theory. In
particular the determinant bound on the ultra-violet covariance $C_1$
requires a complicated application of the useful general determinant
bound by de Siqueira Pedra and Salmhofer \cite[\mbox{Theorem
1.3}]{PS}. The parts making up the proof of the derivation of the
infinite-volume limit are presented in the second half of the paper,
namely Section \ref{sec_derivation}.

As yet we cannot prove a superconducting order characterized by SSB and
ODLRO by this method in the BCS model without imaginary magnetic
field. In this approach we fail to take the coupling constant large
enough to ensure the solvability of the gap equation without the
imaginary magnetic field. The present class of free dispersion relations
includes the non-zero constant ones, with which the Hamiltonian is
called the strong coupling limit of the BCS model. We should remark that
a totally different method based on characterization of equilibrium
state on $C^*$-algebra applies to the strong coupling limit of the BCS
model and proves SSB and ODLRO (\cite{BP}). However, the method is not known to
be applicable to the BCS model with imaginary magnetic field, which is
not hermitian, at present. 

In the first half of the paper we analyze the free energy density, which
is made explicit by the theorem proved in the second half of the paper,
as a real-valued function of the temperature and the real time
variable. Here let us introduce the free energy density at a formal
level for illustrative purposes. The official definition will be given in
the next subsection. Let $\sH$, $\sS_z$ denote the BCS model Hamiltonian
and the $z$-component of the spin operator respectively. For $\theta\in
\R$ we consider the operator $\sH+i\theta\sS_z$ as the BCS model
interacting with the imaginary magnetic field. The infinite-volume limit
of the free energy density is the following. 
\begin{align*}
\lim_{L\to \infty\atop L\in\N}\left(-\frac{1}{\beta L^d}\log(\Tr
 e^{-\beta(\sH+i\theta \sS_z)})\right),
\end{align*}
where the parameter $L$ $(\in \N)$ controls the size of a $d$-dimensional
spatial lattice. By admitting the explicit form of the limit we study
regularity of the function
\begin{align}
(\beta,t)\mapsto \lim_{L\to \infty\atop L\in\N}\left(-\frac{1}{\beta L^d}\log(\Tr
 e^{-\beta\sH+i t \sS_z})\right):\R_{>0}\times
 \R\to\R\label{eq_formal_free_energy}
\end{align}
and geometric properties of the subset of $\R_{>0}\times\R$ where this
function loses analyticity. The reason why we study the free energy
density as a function of $(\beta,t)$ rather than $(\beta,\theta)$ is
that functions of the form 
\begin{align}
(\beta,t)\mapsto \lim_{L\to\infty}\frac{1}{L^d}\log\left(
\frac{\Tr e^{-\beta \sH+it\sS_z}}{\Tr e^{-\beta
 \sH}}\right):\R_{>0}\times\R\to \R
\label{eq_formal_free_energy_original}
\end{align}
 are becoming
relevant in contemporary physics of dynamical phase
transition (DPT) at positive temperature (\cite{BBD}, \cite{HB},
\cite{AK}, \cite{SFS}, \cite{Metal} and so on). In this context 
the function \eqref{eq_formal_free_energy_original} is seen as a finite-temperature
version of the infinite-volume limit of the overlap amplitude 
$$
\lim_{L\to \infty}\frac{1}{L^d}\log\<\psi_0,e^{it\sS_z}\psi_0\>,
$$
where $\psi_0$ is a ground state of $\sH$. Since the function 
$$
\beta\mapsto \lim_{L\to\infty}\frac{1}{L^d}\log(\Tr e^{-\beta \sH}):\R_{>0}\to\R
$$
is real analytic in the weak coupling regime of this paper (see
Proposition \ref{prop_phase_transition} \eqref{item_general_regularity}), the
regularity of the function \eqref{eq_formal_free_energy_original} is
equivalent to that of \eqref{eq_formal_free_energy}. 
In fact the concept of dynamical quantum phase transition at zero 
temperature has become a notable
topic of physics (\cite{HPK}, \cite{H}, \cite{Z}) and it recently reached a
state of experimental confirmation (see e.g. \cite{Jetal},
\cite{ZPHKBKGGM}, 
\cite{Fetal}). As the term indicates, non-analyticity with the
real time variable $t$ defines an occurrence of DPT both at zero
temperature and at positive temperature. DPTs at positive temperature
have been shown in quantum many-body systems which can be mapped to Fermionic
systems governed by quadratic Hamiltonians (see e.g. \cite{BBD},
\cite{HB}). 
To the author's
knowledge, no rigorous result of DPT in the BCS model at positive
temperature has been reported. In this situation we believe that
we should push forward mathematical analysis of the function
\eqref{eq_formal_free_energy} for possible future physical applications.

It is advantageous that with the present class of free dispersion
relations the characterization of the function
\eqref{eq_formal_free_energy} is justified for any $(\beta,t)\in
\R_{>0}\times\R$ as long as the coupling constant is fixed to be
small. The contents of the first half of this paper, which is Section
\ref{sec_free_energy} plus Appendices \ref{app_special_function},
\ref{app_integral}, are essentially independent of the second half. The
readers who want to complete the proof of the characterization of the
function \eqref{eq_formal_free_energy} can read the second half
first. We prove that the function is $C^1$-class in $\R_{>0}\times \R$
and the second order derivatives have jump discontinuity across a
one-dimensional submanifold of $\R_{>0}\times \R$
which we call phase boundaries. Then we
focus on describing geometric properties of the phase boundaries. We
find that the phase boundaries consist of periodic copies of one closed curve
(or more precisely periodic copies of the restriction of one closed curve
in $\R^2$ to $\R_{>0}\times \R$) and the representative curve is axially symmetric with
respect to the horizontal line $\{(\beta,2\pi)\ |\ \beta\in \R_{>0}\}$. Therefore,
letting $\beta_c$ denote the critical inverse temperature, 
the problem is reduced to an analysis of graph of a function on
$(0,\beta_c)$, which is
the lower half of the representative curve. In particular we focus on
determining when the function has only one local minimum point in
$(0,\beta_c)$, or in other words, when the representative curve of the
phase boundaries does not oscillate with temperature. It will turn out
that answers to this question can be expressed in terms of the ratio of
the maximum and the minimum of the magnitude of the free dispersion relation. The results
are summarized in Theorem \ref{thm_boundary_characterization} as the
second main result of this paper. 

Overall we keep using the same notational conventions as in
\cite{K_BCS_I}, \cite{K_BCS_II}. We will often refer the readers to
related parts of these papers for the meaning of notations rather
than restating them. We provide a supplementary short list of notations
which only contains new notations at the end of the paper. The readers
should refer to the comprehensive lists presented at the end of \cite{K_BCS_I},
\cite{K_BCS_II} for the other notations. 

This paper is organized as follows. In the next subsection we state the
theorem concerning the infinite-volume limit of the BCS model with
imaginary magnetic field at positive temperature and outline the main
results concerning the analysis of the free energy density and the phase
boundaries. In Section \ref{sec_free_energy} by admitting the explicit
form of the free energy density we study its regularity and geometric
properties of the phase boundaries. Moreover we analyze the phase
boundaries for a couple of specific examples of the free Hamiltonian. In
Section \ref{sec_derivation} we prove the theorem concerning the
infinite-volume limit in the constructive manner. In Appendix
\ref{app_special_function} we prepare a lemma which is used to study
the phase boundaries in Section \ref{sec_free_energy}. In
Appendix \ref{app_integral} we give a formula of a definite integral 
which we need to analyze a specific model in Sub-subsection 
\ref{subsubsec_one_band}.

\subsection{The main results}\label{subsec_main_results}

First let us state our main results on the derivation of the
infinite-volume limit of the free energy density and the thermal
expectation values. Let $d$ $(\in \N)$ denote the spatial dimension. Let 
$\{\bv_j\}_{j=1}^d$, $\{\hbv_j\}_{j=1}^d$ denote a basis of $\R^d$, its
dual basis respectively. They satisfy that
$$
\<\bv_i,\hbv_j\>=\delta_{i,j},\quad (\forall i,j\in \{1,\cdots,d\}),
$$
where $\<\cdot,\cdot\>$ is the canonical inner product of $\R^d$. With
$L\in \N$ the spatial lattice $\G$ and the momentum lattice $\G^*$
are defined by 
\begin{align*}
&\G:=\left\{\sum_{j=1}^dm_j\bv_j\ \Big|\ m_j\in \{0,1,\cdots,L-1\}\
 (j=1,\cdots,d)\right\},\\
&\G^*:=\left\{\sum_{j=1}^d\hat{m}_j\hbv_j\ \Big|\ \hat{m}_j\in
 \left\{0,\frac{2\pi}{L},\frac{4\pi}{L},\cdots,2\pi-\frac{2\pi}{L}\right\}\ (j=1,\cdots,d)\right\}.
\end{align*}
To formulate the infinite-volume limit of our interest, we use the infinite sets
$\G_{\infty}$, $\G_{\infty}^*$ defined by
\begin{align*}
&\G_{\infty}:=\left\{\sum_{j=1}^dm_j\bv_j\ \Big|\ m_j\in \Z\
 (j=1,\cdots,d)\right\},\\
&\G_{\infty}^*:=\left\{\sum_{j=1}^d\hat{k}_j\hbv_j\ \Big|\ \hat{k}_j\in
 [0,2\pi]\ (j=1,\cdots,d)\right\}.
\end{align*}

Let us define a set of matrix-valued functions, which are one-particle
Hamiltonians in momentum space. Using a function belonging to the set,
we will define the free part of our Hamiltonian. For $b\in\N$ and $e_{min}$,
$e_{max}\in\R_{>0}$ with $e_{min}\le e_{max}$ we define the subset
$\cE(e_{min},e_{max})$ of $\Map(\R^d,\Mat(b,\C))$ as follows. $E$
belongs to $\cE(e_{min},e_{max})$ if and only if 
\begin{align}
&E\in C^{\infty}(\R^d,\Mat(b,\C)),\notag\\
&E(\bk)=E(\bk)^*,\quad (\forall \bk\in
 \R^d),\label{eq_one_particle_hermitian}\\
&E(\bk+2\pi \hbv_j)=E(\bk),\quad (\forall \bk\in \R^d,\ j\in
 \{1,\cdots,d\}),\notag\\
&E(\bk)=\overline{E(-\bk)},\ (\forall
 \bk\in\R^d),\label{eq_time_reversal_symmetry}\\
&\inf_{\bk\in \R^d}\inf_{\bu\in\C^b\atop\text{with
 }\|\bu\|_{\C^b}=1}\|E(\bk)\bu\|_{\C^b}=e_{min}(>0),\label{eq_one_particle_lowest_band}\\
&\sup_{\bk\in\R^d}\|E(\bk)\|_{b\times b}=e_{max}.\notag
\end{align}
We remark that for $n\in\N$ $\Mat(n,\C)$ denotes the set of $n\times n$ complex
matrices, $\|\cdot\|_{n\times n}$ denotes the operator norm on
$\Mat(n,\C)$ and $\|\cdot\|_{\C^n}$ denotes the canonical norm of
$\C^n$. Set $\cB:=\{1,2,\cdots,b\}$. For $E\in \cE(e_{min},e_{max})$ we
define the free Hamiltonian $\sH_0$ as follows. 
\begin{align*}
\sH_0:=\frac{1}{L^d}\sum_{(\rho,\bx),(\eta,\by)\in\cB\times
 \G}\sum_{\s\in\spin}\sum_{\bk\in\G^*}e^{i\<\bx-\by,\bk\>}E(\bk)(\rho,\eta)\psi_{\rho\bx\s}^*\psi_{\eta\by\s},
\end{align*}
where $\psi_{\rho\bx \s}$ $(\psi_{\rho\bx \s}^*)$ denotes the Fermionic
annihilation (creation) operator for 
$(\rho,\bx,\s)$ $\in \cB\times
\G\times \spin$. 
It follows from \eqref{eq_one_particle_hermitian} that $\sH_0$ is a
self-adjoint operator on the Fermionic Fock space $F_f(L^2(\cB\times
\G\times\spin))$. With the negative coupling constant $U$ $(\in
\R_{<0})$ the interacting part $\sV$ is defined by
\begin{align*}
\sV:=\frac{U}{L^d}\sum_{(\rho,\bx),(\eta,\by)\in\cB\times \G}\psi_{\rho\bx\ua}^*\psi_{\rho\bx\da}^*\psi_{\eta\by\da}\psi_{\eta\by\ua}.
\end{align*}
The whole Hamiltonian $\sH$ is then defined by $\sH:=\sH_0+\sV$. As a
common purpose of this series, we study the infinite-volume limit of the
many-electron system governed by $\sH+i\theta \sS_z$ $(\theta\in\R)$,
where $\sS_z$ is the $z$-component of the spin operator defined by  
\begin{align*}
\sS_z:=\frac{1}{2}\sum_{(\rho,\bx)\in\cB\times
\G}(\psi_{\rho\bx\ua}^*\psi_{\rho\bx\ua}-
 \psi_{\rho\bx\da}^*\psi_{\rho\bx\da}).
\end{align*}
To describe SSB, we need the symmetry breaking external field operator $\sF$
defined by
\begin{align*}
\sF:=\g\sum_{(\rho,\bx)\in\cB\times
 \G}(\psi_{\rho\bx\ua}^*\psi_{\rho\bx\da}^*+\psi_{\rho\bx\da}\psi_{\rho\bx\ua}),\quad
 (\g\in\R).
\end{align*}

One essential difference from the previous works \cite{K_BCS_I},
\cite{K_BCS_II} is the solvability of the gap equation. Let us formulate the
gap equation and see when it is solvable. Take $E\in
\cE(e_{min},e_{max})$ and define the function $g_E:\R_{>0}\times\R\times
\R\to\R$ by
\begin{align*}
&g_E(x,t,z)\\
&:=-\frac{2}{|U|}+D_d\int_{\G_{\infty}^*}d\bk 
\Tr\left(\frac{\sinh(x\sqrt{E(\bk)^2+z^2})}{(\cos(t/2)+\cosh(x\sqrt{E(\bk)^2+z^2}))\sqrt{E(\bk)^2+z^2}}\right),
\end{align*}
where
$$
D_d:=|\det(\hbv_1,\cdots,\hbv_d)|^{-1}(2\pi)^{-d}.
$$
As in \cite{K_BCS_II}, throughout the paper we admit that for any
function $f:\R\backslash \{0\}\to\C$ and $E\in \cE(e_{min},e_{max})$ the
map $f(E(\cdot)):\R^d\to \Mat(b,\C)$ is defined via the spectral
decomposition of $E(\bk)$ for each $\bk\in\R^d$. We should remark that
because of the property \eqref{eq_one_particle_lowest_band}, $f(E(\bk))$
is well-defined for any $\bk\in\R^d$ even if $f(x)$ is not defined at
$x=0$. Our gap equation is to find $\D\in \R_{\ge 0}$ such that
$$ g_E(\beta,\beta\theta, \D)=0. $$
The following lemma can be proved by using the fact that for any
$\eps\in [-1,1]$ the function 
\begin{align}
x\mapsto \frac{\sinh x}{(\eps+\cosh
 x)x}:(0,\infty)\to\R\label{eq_reference_function}
\end{align}
is strictly monotone decreasing in the same way as in the proof of 
\cite[\mbox{Lemma 1.2}]{K_BCS_II}.

\begin{lemma}\label{lem_gap_equation}
The following statements hold for any $(\beta,\theta)\in\R_{>0}\times
 \R$. The equation $g_E(\beta,\beta\theta,\D)=0$ has a solution $\D$ in
 $[0,\infty)$ if and only if $g_E(\beta,\beta \theta,0)\ge 0$. Moreover,
 if a solution exists in $[0,\infty)$, it is unique. 
\end{lemma}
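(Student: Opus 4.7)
The plan is to fix $(\beta,\theta)\in\R_{>0}\times\R$, set $\eps:=\cos(\beta\theta/2)\in[-1,1]$, and study $h(z):=g_E(\beta,\beta\theta,z)$ as a function of $z\in[0,\infty)$. Writing $y:=\beta\sqrt{E(\bk)^2+z^2}$, the integrand in the trace can be rewritten as $\beta\cdot \tilde f(y)$, where $\tilde f(y):=\sinh y/((\eps+\cosh y)y)$ is precisely the reference function \eqref{eq_reference_function} (with $\eps$ in place of the generic parameter in $[-1,1]$). Since \eqref{eq_one_particle_lowest_band} forces every eigenvalue of the hermitian matrix $E(\bk)$ to have absolute value at least $e_{min}>0$, every eigenvalue of $\sqrt{E(\bk)^2+z^2}$ is at least $\sqrt{e_{min}^2+z^2}>0$, so the arguments of $\tilde f$ stay strictly positive uniformly in $\bk$ and $z\ge 0$.

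First I would establish strict monotonicity of $h$. For each $\bk\in\G_\infty^*$ and each eigenvalue $\la_i(\bk)$ of $E(\bk)$, the map $z\mapsto \beta\sqrt{\la_i(\bk)^2+z^2}$ is strictly increasing on $[0,\infty)$ (this is where I use that $\la_i(\bk)\neq 0$, which forces the derivative to be positive at $z=0$ as well). Diagonalising $E(\bk)$ and using the stated monotone decrease of $\tilde f$ on $(0,\infty)$, each summand in the trace is a strictly decreasing function of $z\ge 0$. Integrating over the compact set $\G_\infty^*$ and passing the strict inequality through the integral (using that the pointwise inequality is strict on a set of positive measure) yields that $h$ is strictly decreasing on $[0,\infty)$.

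Next I would check continuity of $h$ and its behaviour as $z\to\infty$. Continuity follows from dominated convergence: the integrand is continuous in $(\bk,z)$ and uniformly bounded on $\G_\infty^*\times[0,M]$ for any $M$ since $\tilde f$ is bounded on any compact subset of $(0,\infty)$. For the limit, note $\tilde f(y)\to 0$ as $y\to\infty$ (because $\sinh y/(\eps+\cosh y)\to 1$ while $y\to\infty$), so by dominated convergence the integral tends to $0$ and hence $h(z)\to -2/|U|<0$.

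With these ingredients the lemma follows at once. If $h(0)\ge 0$, then the continuous strictly decreasing function $h$ takes the value $0$ at exactly one point $\D\in[0,\infty)$ by the intermediate value theorem, giving existence and uniqueness. Conversely, if $h(0)<0$, strict monotonicity forces $h(z)<h(0)<0$ for all $z>0$, so no nonnegative solution exists. I do not expect a genuine obstacle here; the only point that requires mild care is ensuring that the strict pointwise decrease in $z$ survives the $\bk$-integration, which is immediate from the strict positivity of every eigenvalue of $|E(\bk)|$.
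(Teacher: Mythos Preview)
Your proof is correct and follows exactly the approach the paper indicates: exploit the strict monotone decrease of the reference function \eqref{eq_reference_function} to show that $z\mapsto g_E(\beta,\beta\theta,z)$ is strictly decreasing and tends to $-2/|U|$, then invoke the intermediate value theorem. One harmless slip: the derivative of $z\mapsto\sqrt{\la_i(\bk)^2+z^2}$ at $z=0$ equals $0$, not a positive number, so your parenthetical justification is off; strict monotonicity on $[0,\infty)$ holds trivially anyway, and the genuine role of $\la_i(\bk)\neq 0$ is only to keep the argument of $\tilde f$ inside $(0,\infty)$, which you already observed in your first paragraph.
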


The next lemma tells us that if the interaction in the present model is
weak, there is a critical temperature such that the gap equation has
 a positive solution if and only if the temperature is higher than
the critical temperature. 

\begin{lemma}\label{lem_critical_temperature}
Assume that 
$$|U|<\frac{2e_{min}}{b}.$$
Then, there uniquely exists 
$$
\beta_c\in \left(0,\frac{2}{e_{min}}\tanh^{-1}\left(\frac{b |U|}{2e_{min}}\right)\right]
$$
such that the following statements hold.
\begin{enumerate}[(i)]
\item\label{item_critical_basic} 
     For any $\beta\in \R_{>0}$ $g_E(\beta,\pi,0)<0$.
\item\label{item_critical_positive} 
     For any $\beta\in (0,\beta_c)$ $g_E(\beta,2\pi,0)>0$ and thus there
     exists $\theta\in \R$ such
     that the gap equation $g_E(\beta,\beta\theta,\D)=0$ has a solution
     in $(0,\infty)$.
\item\label{item_critical_zero}
     $g_E(\beta_c,2\pi,0)=0$ and thus there exists $\theta\in \R$ such
     that $g_E(\beta_c,\beta_c\theta,\D)=0$ has the solution $\D=0$.
\item\label{item_critical_negative}
     For any $\beta\in (\beta_c,\infty)$ $g_E(\beta,2\pi,0)<0$ and thus
     for any $\theta\in\R$ the gap equation
     $g_E(\beta,\beta\theta,\D)=0$ has no solution in $[0,\infty)$.
\end{enumerate}
\end{lemma}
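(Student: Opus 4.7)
The strategy is to reduce the entire lemma to analyzing $g_E(\beta,t,0)$ at the two critical values $t=\pi$ and $t=2\pi$. Writing the integrand via the real eigenvalues $e_1(\bk),\ldots,e_b(\bk)$ of the Hermitian matrix $E(\bk)$ (which satisfy $e_{min}\le|e_j(\bk)|\le e_{max}$), we have
$$g_E(\beta,t,0)=-\frac{2}{|U|}+D_d\int_{\G_{\infty}^*}d\bk\sum_{j=1}^{b}\frac{\sinh(\beta|e_j(\bk)|)}{(\cos(t/2)+\cosh(\beta|e_j(\bk)|))|e_j(\bk)|}.$$
Since each summand is a strictly decreasing function of $\cos(t/2)\in[-1,1]$, the map $t\mapsto g_E(\beta,t,0)$ is maximized at $\cos(t/2)=-1$, i.e., at $t=2\pi$; I will also use the normalization $D_d\int_{\G_{\infty}^*}d\bk=1$.

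For part \eqref{item_critical_basic}, at $t=\pi$ the summand reduces to $\tanh(\beta|e_j|)/|e_j|$; applying $\tanh<1$ and $|e_j|\ge e_{min}$ pointwise gives $g_E(\beta,\pi,0)<-2/|U|+b/e_{min}$, which is strictly negative by the hypothesis $|U|<2e_{min}/b$. For the construction of $\beta_c$, the identity $\sinh(x)/(\cosh(x)-1)=\coth(x/2)$ transforms $F(\beta):=g_E(\beta,2\pi,0)$ into $-2/|U|+D_d\int d\bk\sum_{j}\coth(\beta|e_j(\bk)|/2)/|e_j(\bk)|$. Direct differentiation $\partial_\beta[\coth(\beta|e_j|/2)/|e_j|]=-1/(2\sinh^2(\beta|e_j|/2))<0$ shows that $F$ is strictly monotone decreasing; the asymptotics $\coth(\beta|e_j|/2)/|e_j|\sim 2/(\beta|e_j|^2)$ as $\beta\downarrow 0$ and $\to 1/|e_j|$ as $\beta\to\infty$ (the latter justified by dominated convergence, since $\coth(\beta|e_j|/2)/|e_j|\le\coth(\beta e_{min}/2)/e_{min}$) yield $F(\beta)\to+\infty$ and $F(\beta)\to-2/|U|+D_d\int d\bk\sum_{j}1/|e_j(\bk)|\le-2/|U|+b/e_{min}<0$ respectively. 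The intermediate value theorem then produces a unique $\beta_c\in\R_{>0}$ with $F(\beta_c)=0$, and the choice $\theta=2\pi/\beta_c$ establishes part \eqref{item_critical_zero}.

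To obtain the upper bound $\beta_c\le\beta^*:=(2/e_{min})\tanh^{-1}(b|U|/(2e_{min}))$, I note that $x\mapsto\coth(\beta x/2)/x$ is strictly decreasing on $(0,\infty)$ for each fixed $\beta>0$ (one can verify the sign of the derivative directly, or reduce to the fact that $y\mapsto\coth(y)/y$ is decreasing). Hence $\coth(\beta|e_j|/2)/|e_j|\le\coth(\beta e_{min}/2)/e_{min}$ and $F(\beta)\le-2/|U|+(b/e_{min})\coth(\beta e_{min}/2)$. Setting the right-hand side to zero yields $\tanh(\beta e_{min}/2)=b|U|/(2e_{min})$, an equation whose solution is well-defined by the hypothesis and equals $\beta^*$. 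Thus $F(\beta^*)\le 0$, and the monotonicity of $F$ gives $\beta_c\le\beta^*$.

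Finally, parts \eqref{item_critical_positive} and \eqref{item_critical_negative} follow from the opening observation. For $\beta\in(0,\beta_c)$ take $\theta=2\pi/\beta$; then $g_E(\beta,\beta\theta,0)=F(\beta)>0$, so by Lemma \ref{lem_gap_equation} the gap equation has a unique solution in $[0,\infty)$, which must be strictly positive because $\D=0$ satisfies the equation only when $g_E(\beta,\beta\theta,0)=0$. For $\beta>\beta_c$, every $\theta\in\R$ satisfies $g_E(\beta,\beta\theta,0)\le F(\beta)<0$, and Lemma \ref{lem_gap_equation} rules out any solution in $[0,\infty)$. The only step requiring slight care is the bound $\beta_c\le\beta^*$, which simultaneously exploits both the $\beta$-monotonicity and the $|e_j|$-monotonicity of $\coth(\beta|e_j|/2)/|e_j|$; everything else follows from elementary hyperbolic identities and asymptotics.
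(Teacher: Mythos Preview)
Your proof is correct and follows essentially the same approach as the paper: both establish \eqref{item_critical_basic} via the pointwise bound $\tanh(\beta|e_j|)/|e_j|<1/e_{min}$, both obtain $\beta_c$ from the strict monotonicity and limiting behavior of $\beta\mapsto g_E(\beta,2\pi,0)$, and both derive the upper bound on $\beta_c$ from the inequality $g_E(\beta_c,2\pi,0)\le -2/|U|+(b/e_{min})\coth(\beta_c e_{min}/2)$. The only cosmetic difference is that you diagonalize $E(\bk)$ explicitly and state the maximization over $t$ at the outset, whereas the paper works in spectral-trace notation throughout.
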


\begin{proof}
By the assumption, for any $\beta\in \R_{>0}$
\begin{align*}
g_E(\beta,\pi,0)=-\frac{2}{|U|}+D_d\int_{\G_{\infty}^*}d\bk \Tr
 \left(\frac{\tanh(\beta E(\bk))}{E(\bk)}\right)
\le -\frac{2}{|U|}+ \frac{b}{e_{min}}<0.
\end{align*}
Thus \eqref{item_critical_basic} holds. 

Observe that the function $\beta\mapsto g_E(\beta,2\pi,0):\R_{>0}\to\R$
 is monotone decreasing, 
\begin{align*}
&\lim_{\beta \searrow 0}g_E(\beta,2\pi,0)=\infty,\\
&\lim_{\beta \nearrow \infty}g_E(\beta,2\pi,0)
=-\frac{2}{|U|}+D_d\int_{\G_{\infty}^*}d\bk\Tr\left(\frac{1}{|E(\bk)|}\right)\le 
-\frac{2}{|U|}+\frac{b}{e_{min}}<0.
\end{align*}
Thus, there uniquely exists $\beta_c\in\R_{>0}$ such that 
\begin{align*}
&g_E(\beta,2\pi,0)>0,\quad (\forall \beta \in (0,\beta_c)),\\
&g_E(\beta_c,2\pi,0)=0,\\
&g_E(\beta,2\pi,0)<0,\quad (\forall \beta \in (\beta_c,\infty)).
\end{align*}
Moreover, 
\begin{align*}
0=g_E(\beta_c,2\pi,0)\le -\frac{2}{|U|}+\frac{b}{\tanh(\beta_c
 e_{min}/2)e_{min}},
\end{align*}
which implies that
\begin{align*}
\beta_c\le
 \frac{2}{e_{min}}\tanh^{-1}\left(\frac{b|U|}{2e_{min}}\right).
\end{align*}
The claims \eqref{item_critical_positive}, \eqref{item_critical_zero},
 \eqref{item_critical_negative} follow from these properties.
\end{proof}

To shorten formulas, let us introduce the parameterized matrix-valued
functions $G_{x,y,z}:\R^d\to \Mat(b,\C)$ $((x,y,z)\in \R_{>0}\times
\R\times \R)$ by
\begin{align*}
G_{x,y,z}(\bk):=\frac{\sinh(x\sqrt{E(\bk)^2+z^2})}{(\cos(xy/2)+\cosh(x\sqrt{E(\bk)^2+z^2}))\sqrt{E(\bk)^2+z^2}}.
\end{align*}
Also, for $E\in \cE(e_{min},e_{max})$ let us set 
\begin{align}
c_E:=\sup_{\bk\in \R^d}\sup_{m_j\in \N\cup\{0\}\atop (j=1,\cdots,d)}
\left\|\prod_{j=1}^d\frac{\partial^{m_j}}{\partial k_j^{m_j}}E(\bk)
\right\|_{b\times b}1_{\sum_{j=1}^dm_j\le d+2}.\label{eq_band_spectra_total_derivative}
\end{align}
For any $\sum_{j=1}^dm_j\bv_j\in\G_{\infty}$ there uniquely exists
$\sum_{j=1}^dm_j'\bv_j\in\G$ such that $m_j=m_j'$ $(\text{mod } L)$ for
any $j\in \{1,\cdots,d\}$. This rule defines the map $r_L:\G_{\infty}\to
\G$. For any $(\rho,\bx,\s)\in\cB\times \G_{\infty}\times \spin$ we
identify $\psi_{\rho\bx \s}^*$, $\psi_{\rho \bx \s}$ with $\psi_{\rho
r_L(\bx) \s}^*$, $\psi_{\rho r_L(\bx) \s}$ respectively. 
For clarity of the statements of the main results let us recall a few
more notational rules. For a function
$f:\G_{\infty}\times\G_{\infty}\to\C$ and $a\in\C$ we write
$\lim_{\|\bx-\by\|_{\R^d}\to \infty}f(\bx,\by)=a$ if for any
$\eps\in\R_{>0}$ there exists $\delta \in\R_{>0}$ such that for any
$\bx,\by\in \G_{\infty}$ satisfying $\|\bx-\by\|_{\R^d}>\delta$,
$|f(\bx,\by)-a|<\eps$. Here $\|\cdot\|_{\R^d}$ denotes the Euclidean
norm of $\R^d$. 

\begin{theorem}\label{thm_infinite_volume_limit}
Let $E\in \cE(e_{min},e_{max})$. Let $\D$ $(\in \R_{\ge 0})$ be the
 solution of the gap equation $g_E(\beta, \beta \theta, \D)=0$ if
 $g_E(\beta,\beta\theta,0)\ge 0$. Let $\D:=0$ if
 $g_E(\beta,\beta\theta,0)< 0$. Then, there exists $c'\in (0,1]$
 depending only on $d,b,(\hbv_j)_{j=1}^d,c_E$ such that the following
 statements hold for any 
$$
U\in \left(-\frac{2c'}{b}\min\{e_{min},e_{min}^{d+1}\},0\right),
$$
$\beta\in \R_{>0}$, $\theta\in \R$.
\begin{enumerate}[(i)]
\item\label{item_partition_positivity}
There exists $L_0\in \N$ such that
\begin{align*}
\Tr e^{-\beta (\sH+i\theta \sS_z+\sF)}\in\R_{>0},\quad(\forall
 L\in\N\text{ with }L\ge L_0,\ \g\in[0,1]).
\end{align*}
\item\label{item_free_energy}
\begin{align*}
&\lim_{L\to \infty\atop L\in \N}\left(-\frac{1}{\beta L^d}\log(\Tr
 e^{-\beta(\sH+i\theta \sS_z)})\right)\\
&=\frac{\D^2}{|U|}-\frac{D_d}{\beta}\int_{\G_{\infty}^*}d\bk\Tr\log\Bigg(2\cos\left(\frac{\beta\theta}{2}\right)e^{-\beta
 E(\bk)}\\
&\qquad\qquad\qquad\qquad\qquad\qquad +e^{\beta(\sqrt{E(\bk)^2+\D^2}-E(\bk))}
+e^{-\beta(\sqrt{E(\bk)^2+\D^2}+E(\bk))}\Bigg).
\end{align*}
\item\label{item_SSB}
\begin{align*}
&\lim_{\g\searrow 0\atop \g\in (0,1]}\lim_{L\to \infty\atop L\in\N}\frac{\Tr(e^{-\beta
(\sH+i\theta \sS_z+\sF)}\psi_{\hrho\hbx\ua}^*\psi_{\hrho\hbx\da}^*)}{\Tr e^{-\beta (\sH+i\theta \sS_z+\sF)}}=\lim_{\g\searrow 0\atop \g\in (0,1]}\lim_{L\to \infty\atop L\in\N}\frac{\Tr(e^{-\beta
(\sH+i\theta \sS_z+\sF)}\psi_{\hrho\hbx\da}\psi_{\hrho\hbx\ua})}{\Tr
 e^{-\beta (\sH+i\theta \sS_z+\sF)}}\\
&=-\frac{\D D_d}{2}\int_{\G_{\infty}^*}d\bk G_{\beta, \theta, \D}(\bk)(\hrho,\hrho),\quad
(\forall \hrho\in \cB,\ \hbx\in \G_{\infty}).
\end{align*}
\item\label{item_ODLRO}
If $g_E(\beta,\beta\theta,0)\neq 0$, 
\begin{align*}
&\lim_{\|\hbx-\hby\|_{\R^d}\to\infty}\lim_{L\to \infty\atop L\in\N}\frac{\Tr (e^{-\beta
(\sH+i\theta \sS_z)}\psi_{\hrho\hbx\ua}^*\psi_{\hrho\hbx\da}^*\psi_{\heta\hby\da}\psi_{\heta\hby\ua})}{\Tr
 e^{-\beta (\sH+i\theta \sS_z)}}\\
&=
\D^2\prod_{\rho\in \{\hrho,\heta\}}\left(
\frac{D_d}{2}\int_{\G_{\infty}^*}d\bk G_{\beta,\theta,\D}(\bk)(\rho,\rho)\right),\quad
(\forall\hrho,\heta \in\cB).
\end{align*}
If $g_E(\beta,\beta\theta,0)= 0$, 
\begin{align*}
&\lim_{\|\hbx-\hby\|_{\R^d}\to\infty}\limsup_{L\to \infty\atop L\in\N}\left|\frac{\Tr (e^{-\beta
(\sH+i\theta \sS_z)}\psi_{\hrho\hbx\ua}^*\psi_{\hrho\hbx\da}^*\psi_{\heta\hby\da}\psi_{\heta\hby\ua})}{\Tr
 e^{-\beta (\sH+i\theta \sS_z)}}\right|=0,\quad(\forall\hrho,\heta
 \in\cB).
\end{align*}
\item\label{item_CD}
\begin{align*}
&\lim_{L\to \infty\atop
 L\in\N}\frac{1}{L^{2d}}\sum_{(\hrho,\hbx),(\heta,\hby)\in\cB\times \G}\frac{\Tr (e^{-\beta
(\sH+i\theta \sS_z)}\psi_{\hrho\hbx\ua}^*\psi_{\hrho\hbx\da}^*\psi_{\heta\hby\da}\psi_{\heta\hby\ua})}{\Tr
 e^{-\beta (\sH+i\theta \sS_z)}}=\frac{\D^2}{U^2}.
\end{align*}
\end{enumerate}
\end{theorem}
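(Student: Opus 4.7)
The plan is to follow the Grassmann integral formulation set up in \cite{K_BCS_II}, apply the novel double-scale integration scheme sketched in Subsection~\ref{subsec_introduction}, and then derive the five claims from the resulting convergent expansion in essentially the same manner as in \cite{K_BCS_II}. First I would rewrite the ratio $\Tr e^{-\beta(\sH+i\theta\sS_z+\sF)}/\mathrm{(mean\text{-}field\ partition\ function)}$ as a Grassmann Gaussian integral $\int e^{V(\psi)}\,d\mu_C(\psi)$ over $\cB\times\G\times\spin\times[0,\beta)_h$, where the reference Gaussian measure absorbs the BCS mean-field quadratic operator built from the candidate gap value $\D$ supplied by Lemma~\ref{lem_gap_equation}. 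By construction the quartic remainder $V(\psi)$ has the key vanishing property of its kernel at the coinciding Matsubara-frequency configuration, which is the essential input for the subsequent infrared integration.

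Next I would decompose the covariance as in \eqref{eq_formal_decomposition} and integrate with $C_0$ first. Although $C_0$ satisfies only a $\beta^{-1}$-weighted determinant bound, the vanishing property of the kernel of $V(\psi)$ is exactly what cancels this $\beta^{-1}$ factor when a Brydges--Battle--Federbush tree expansion is applied. Consequently the effective potential produced by the $C_0$-integration admits kernel-norm bounds of the same type used in \cite{K_BCS_I}, \cite{K_BCS_II}, but with constants independent of $\beta$ and $\theta$. The subsequent integration with the ultraviolet covariance $C_1$ then proceeds along the lines of the previous papers, provided I also construct a Gram representation of $C_1$ so that the de~Siqueira Pedra--Salmhofer determinant bound \cite[Theorem~1.3]{PS} yields a constant independent of $\beta$ and $\theta$; this is the complicated covariance estimate specifically flagged in the introduction.

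Once the full convergent expansion of the Grassmann partition function and its derivatives with respect to $\g$ and appropriate source terms is in hand, claims \eqref{item_partition_positivity}--\eqref{item_CD} follow by arguments parallel to those of \cite{K_BCS_II}. Concretely, \eqref{item_partition_positivity} comes from the fact that the Grassmann correction factor is close to $1$ for $L$ sufficiently large, so the sign of the partition function is dictated by the explicit mean-field part, which is positive. Claim \eqref{item_free_energy} is obtained by sending $L\to\infty$ in the expansion and using Lemma~\ref{lem_gap_equation} to recognise the gap equation as the stationarity condition for the mean-field free energy in $\D$. Claims \eqref{item_SSB}, \eqref{item_ODLRO} and \eqref{item_CD} are obtained by differentiating $\log$ of the partition function in $\g$ or in auxiliary source terms coupled to Cooper-pair operators, then passing to the $L\to\infty$, $\g\searrow 0$ and $\|\hbx-\hby\|_{\R^d}\to\infty$ limits. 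The ODLRO factorisation in \eqref{item_ODLRO} uses the cluster property of the convergent expansion, which provides exponential decay in $\|\hbx-\hby\|_{\R^d}$ whenever $\D>0$; when $g_E(\beta,\beta\theta,0)=0$ and hence $\D=0$, the same cluster property forces the correlation to decay to zero.

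The hard part is the covariance analysis in the first half of the construction, namely arranging the $C_0$-integration so that the $\beta^{-1}$ in the determinant bound is exactly compensated by the vanishing of the kernel of $V(\psi)$ at the lowest Matsubara frequency, and simultaneously extracting a Gram representation of $C_1$ that produces a determinant bound uniform in $\beta$ and $\theta$. Once these two covariance estimates are in place, the remainder of the construction is a systematic adaptation of the general Grassmann machinery of \cite{K_BCS_I}, \cite{K_BCS_II}, and the passage from the convergent expansion to claims \eqref{item_partition_positivity}--\eqref{item_CD} is essentially the same as in \cite{K_BCS_II}, so the proof reduces to identifying which of the earlier general lemmas must be invoked at each step.
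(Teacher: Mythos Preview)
Your proposal is correct and follows essentially the same approach as the paper: integrate first with the time-independent covariance $C_0$, using the vanishing property \eqref{eq_time_vanishing_property} of the quartic kernel to prevent the $\beta$-dependent determinant bound from shrinking the analyticity domain in $u$, then integrate with $C_1$ via the Pedra--Salmhofer bound of Lemma~\ref{lem_covariance_determinant_bound}, and finally transfer the machinery of \cite{K_BCS_II} to derive \eqref{item_partition_positivity}--\eqref{item_CD}. One small correction: the reference covariance is $C(\phi)$ for a Hubbard--Stratonovich field $\phi\in\C$ (not the fixed gap $\D$), so the covariance estimates must be---and in Lemmas~\ref{lem_standard_covariance_bounds} and~\ref{lem_covariance_determinant_bound} they are---uniform in $\phi$; the value $\D$ enters only at the end as the saddle point of the residual Gaussian integral over $\phi$.
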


\begin{remark} 
We should emphasis that $c'$ is independent of $\beta$, $\theta$. Thus, 
once $U$ is fixed, the infinite-volume limits are valid for all
 $(\beta,\theta)\in \R_{>0}\times \R$. This is a notable difference from 
\cite[\mbox{Theorem 1.3}]{K_BCS_I}, \cite[\mbox{Theorem 1.3}]{K_BCS_II}
 where it is assumed that $\beta\theta/2 \notin \pi(2\Z+1)$ and $U$ is
 not independent of $(\beta,\theta)$. Since 
$$
|U|<\frac{2c'}{b}\min\{e_{min},e_{min}^{d+1}\}\le \frac{2e_{min}}{b},
$$
Lemma \ref{lem_critical_temperature} ensures that there exists
 $(\beta,\theta)\in\R_{>0}\times \R$ such that
 $g_E(\beta,\beta\theta,0)>0$ and 
$\D>0$. Thus the claims
 \eqref{item_SSB}, \eqref{item_ODLRO} in particular imply the existence
 of SSB, ODLRO respectively.
\end{remark}

\begin{remark}\label{rem_free_band_spectra}
The smoothness of $\bk\mapsto E(\bk)$ is assumed only for simplicity. All
 the results in this paper can be reconstructed by assuming that
 $\bk\mapsto E(\bk):\R^d\to \Mat(b,\C)$ is continuously differentiable to
 some finite degree depending only on the spatial dimension. The
 symmetry \eqref{eq_time_reversal_symmetry} is assumed to adopt 
\cite[\mbox{Lemma 3.6}]{K_BCS_II} as our formulation. More precisely, we
 used the symmetry \eqref{eq_time_reversal_symmetry} to characterize the covariance
 ``$C(\phi)$'' in \cite[\mbox{Lemma 3.5 (ii)}]{K_BCS_II}. Since the Grassmann
 integral formulation \cite[\mbox{Lemma 3.6}]{K_BCS_II} contains the
 covariance ``$C(\phi)$'', accordingly we assume
 \eqref{eq_time_reversal_symmetry}. The covariance ``$C(\phi)$''
 will be explicitly written in Subsection \ref{subsec_covariances} in
 the same form as in \cite[\mbox{Lemma 5.1}]{K_BCS_II}, which was
 derived from \cite[\mbox{Lemma 3.5 (ii)}]{K_BCS_II}. However, the
 symmetry \eqref{eq_time_reversal_symmetry} itself plays no explicit
 role in this paper. 
\end{remark}

\begin{remark}\label{rem_zero_temperature_limit}
In \cite[\mbox{Corollary 1.11}]{K_BCS_II} we derived the
 zero-temperature limit of the free energy density and the thermal
 expectations. By arguing in parallel with the proof of 
\cite[\mbox{Corollary 1.11}]{K_BCS_II} presented at the end of 
\cite[\mbox{Subsection 5.2}]{K_BCS_II} we can derive the
 zero-temperature limit from Theorem \ref{thm_infinite_volume_limit}. 
Not to lengthen the paper, let us
 state the results in an abbreviated form. 

There exists $c''\in (0,1]$ depending only on $d,b,(\hbv_j)_{j=1}^d,c_E$ 
such that for any 
$$
U\in \left(-\frac{2c''}{b}\min\{e_{min},e_{min}^{d+1}\},0\right)
$$
and $\theta \in \R$ five claims which are same as the claims 
``(i), (ii), (iii), (iv), (v)'' of 
\cite[\mbox{Corollary 1.11}]{K_BCS_II} without the constraint
 $\beta\theta/2\notin \pi (2\Z+1)$ hold.

Here we can drop the constraint $\beta\theta/2\notin \pi (2\Z+1)$ as we do
 not need it throughout this paper thanks to the assumption
 \eqref{eq_one_particle_lowest_band}. After the inequality ``(5.72)''
 in the proof of \cite[\mbox{Corollary 1.11}]{K_BCS_II} a spatial decay
 property of the infinite-volume, zero-temperature limit of the
 covariance was proved in order to study the zero-temperature limit of
 the 4-point correlation function. This part can be replaced by the decay property
 discussed in Remark \ref{rem_decay_for_zero_limit} later.
 The property \eqref{eq_one_particle_lowest_band}
 also helps to shorten the derivation of the zero-temperature limit of
 the free energy density. Apart from these changes, the arguments close to
 the proof of \cite[\mbox{Corollary 1.11}]{K_BCS_II} yield the claims. 
Again the results imply no superconducting order in the
 zero-temperature limit. However, this time the results may not come as
 a surprise, since in low temperatures our gap equation has no solution
 at all as shown in Lemma \ref{lem_critical_temperature}
 \eqref{item_critical_negative}.
\end{remark}

\begin{remark}
Since we do not have any $\beta$-dependent constraint on $U$ in 
Theorem \ref{thm_infinite_volume_limit}, we can also study the
 infinite-temperature limit $\beta\searrow 0$ of the free energy density
 and the thermal expectations. If we set $\D\in\R_{\ge 0}$ by the same
 rule as in Theorem \ref{thm_infinite_volume_limit}, it follows that for
 any $U\in \R_{<0}$, $\theta\in\R$ there exists $\beta_c'\in\R_{>0}$
 such that $\D=0$ for any $\beta\in (0,\beta_c']$. This is because 
$$\lim_{\beta\searrow 0}g_E(\beta,\beta\theta,0)=-\frac{2}{|U|}<0.$$
Let us take $U\in (-\frac{2c'}{b}\min\{e_{min},e_{min}^{d+1}\},0)$ for
 the constant $c'$ introduced in Theorem
 \ref{thm_infinite_volume_limit} and fix any $\theta \in \R$. 
Considering the above property of
 $\D$, we can see from Theorem \ref{thm_infinite_volume_limit}
\eqref{item_free_energy} that
\begin{align*}
\lim_{\beta \searrow 0}\lim_{L\to \infty\atop L\in \N}\left(-\frac{1}{\beta L^d}\log(\Tr
 e^{-\beta(\sH+i\theta \sS_z)})\right)=-\infty.
\end{align*}
Moreover, it is not difficult to modify the proof of 
\cite[\mbox{Corollary 1.11}]{K_BCS_II} to confirm that three claims
 which are same as the claims ``(iii), (iv), (v)'' of 
\cite[\mbox{Corollary 1.11}]{K_BCS_II} apart from having the notation
 $\lim_{\beta \searrow 0}$ in place of 
$$\lim_{\beta\to
 \infty,\beta\in \R_{>0}\atop
\text{with }\frac{\beta\theta}{2}\notin \pi (2\Z+1)}$$
hold. To prove the analogue of the claim ``(iv)'', we need a spatial
 decay property of the covariance in the limit $L\to \infty$,
 $\beta\searrow 0$ in particular. We can explicitly take the limit 
$L\to \infty$, $\beta\searrow 0$ in the characterization 
\cite[\mbox{Lemma 5.11}]{K_BCS_II} and observe that the covariance is in
 fact diagonal with the spatial variables in the limit.
Again the results imply no superconducting order in the limit
 $\beta\searrow 0$.
\end{remark}

Theorem \ref{thm_infinite_volume_limit} \eqref{item_free_energy} gives
the exact formula for the function \eqref{eq_formal_free_energy},
provided $|U|$ is small as required in the theorem. Loss of
analyticity of the function \eqref{eq_formal_free_energy} with $t$ is
considered as an indication of DPT at positive temperature in
contemporary physics (see e.g. \cite{BBD}, \cite{HB}, \cite{AK}), though
the function \eqref{eq_formal_free_energy} with the BCS model has not
been rigorously treated yet, to the author's
knowledge. As one of the main themes of this paper, we focus on the
following questions.
\begin{itemize}
\item At which $(\beta,t)\in\R_{>0}\times \R$ does the function
      \eqref{eq_formal_free_energy} lose analyticity ?
\item What is the regularity of the function
      \eqref{eq_formal_free_energy} when it is not analytic ?
\item What is the shape of the subset of $\R_{>0}\times \R$ where
      the function \eqref{eq_formal_free_energy} is not analytic ?
\end{itemize}
We will study these questions in Section \ref{sec_free_energy}. Answers
to the first and the second question can be found without much
difficulty, since we have already studied similar questions in 
\cite[\mbox{Section 2}]{K_BCS_II}. After studying these two questions, we
will know that the function \eqref{eq_formal_free_energy} is $C^1$-class
in $\R_{>0}\times\R$ and its 2nd order derivatives have jump
discontinuities across a subset of $\R_{>0}\times \R$, which consists of 
periodic copies of one closed curve. To answer the last question, 
we need
constructive arguments. It will turn out that the ratio
$e_{min}/e_{max}$ is the key parameter to classify the shape of the set
of our interest. In particular we will show that 
the lower half of the representative curve of the set has only one local
minimum point, in other words the representative curve does not
oscillate with the temperature, for any $E\in \cE(e_{min},e_{max})$
if and only if $e_{min}/e_{max}$ is larger than the critical value
$\sqrt{17-12\sqrt{2}}$. The result will be officially stated in Theorem 
\ref{thm_boundary_characterization} as the second theorem of this
paper.

\section{Analysis of the free energy density}\label{sec_free_energy}

We assume that $|U|< 2 e_{min} /b$ throughout this section so that we
can refer to the results of Lemma \ref{lem_critical_temperature}. Let $E\in
\cE(e_{min},e_{max})$ and let us define the function $\D:\R_{>0}\times
\R\to \R_{\ge 0}$ as follows. Let $\D(\beta,t)$ be the solution of
$g_E(\beta,t,\D)=0$ if $g_E(\beta,t,0)\ge 0$. Let $\D(\beta,t):=0$
if $g_E(\beta,t,0)<0$. The well-definedness of the function $\D(\cdot)$
is guaranteed by Lemma \ref{lem_gap_equation}. Then we define the
function $F_E:\R_{>0}\times \R\to \R$ by
\begin{align*}
&F_{E}(\beta,t)\\
&:=\frac{\D(\beta,t)^2}{|U|}-\frac{D_d}{\beta}\int_{\G_{\infty}^*}d\bk\Tr\log\Bigg(2\cos\left(\frac{t}{2}\right)e^{-\beta
 E(\bk)}\\
&\qquad\qquad\qquad\qquad\qquad\qquad +e^{\beta(\sqrt{E(\bk)^2+\D(\beta,t)^2}-E(\bk))}
+e^{-\beta(\sqrt{E(\bk)^2+\D(\beta,t)^2}+E(\bk))}\Bigg).
\end{align*}
It follows from Theorem \ref{thm_infinite_volume_limit}
\eqref{item_free_energy} that if $U\in
(-\frac{2c'}{b}\min\{e_{min},e_{min}^{d+1}\},0)$, 
\begin{align*}
F_E(\beta,t)=\lim_{L\to \infty\atop L\in \N}\left(-\frac{1}{\beta
 L^d}\log (\Tr e^{-\beta \sH+it \sS_z})\right),\quad (\forall
 (\beta,t)\in\R_{>0}\times \R).
\end{align*}
Thus the function $F_E(\beta,t)$ can be seen as an extension of the free
energy density with respect to the magnitude of the coupling
constant. In this section we study the regularity of $F_E$ with
$(\beta,t)$ and characterize the subset of $\R_{>0}\times\R$ where the
analyticity is lost. The contents of this section are independent of
Section \ref{sec_derivation}, which is devoted to proving Theorem
\ref{thm_infinite_volume_limit}. The readers can read this section
separately from Section \ref{sec_derivation}.

\subsection{Phase transitions}\label{subsec_phase_transitions}

The domain $\R_{>0}\times \R$ can be decomposed as
follows. $\R_{>0}\times\R=Q_{+}\sqcup Q_{-}\sqcup Q_0$, where 
\begin{align*}
&Q_+:=\left\{(\beta,t)\in\R_{>0}\times\R\ |\ g_E(\beta,t,0)>0\right\},\\
&Q_-:=\left\{(\beta,t)\in\R_{>0}\times\R\ |\ g_E(\beta,t,0)<0\right\},\\
&Q_0:=\left\{(\beta,t)\in\R_{>0}\times\R\ |\ g_E(\beta,t,0)=0\right\}.
\end{align*}
In this subsection we will prove that the function $F_E$ is $C^1$-class in
$\R_{>0}\times \R$, real analytic in $Q_+\cup Q_-$ and non-analytic at
any point of $Q_0$ as a function of two variables. 
More specifically, we will prove that 2nd order
derivatives of $F_E$ have jump discontinuity across $Q_0$, which is a
sign of 2nd order phase transition. Also, we will see that $Q_0$
consists of 
periodic copies of a restriction of a closed curve in $\R^2$. Let us
call the curves making up 
$Q_0$ phase boundaries. In fact the regularity of $F_E$ can be
studied in a way similar to \cite[\mbox{Section 2}]{K_BCS_II}. 
However, we decide not to omit it, since it characterizes the nature of
the phase transitions. 

Let us start by describing universal properties of the phase boundaries,
which hold regardless of $e_{min}$, $e_{max}(\in \R_{>0})$.
We can deduce
from Lemma \ref{lem_critical_temperature} 
\eqref{item_critical_basic},\eqref{item_critical_positive} 
that for any $\beta\in (0,\beta_c)$ there
uniquely exists $\tau(\beta)\in (\pi,2\pi)$ such that
$g_E(\beta,\tau(\beta),0)=0$. This rule defines the function
$\tau:(0,\beta_c)\to (\pi,2\pi)$. The following lemma means little at
this point. However, it will support conclusive parts of our
construction later, or more specifically the proofs of 
Proposition \ref{prop_necessity_equality} and Proposition
\ref{prop_multi_orbital}. Also, it will implicitly support the proof of
Proposition \ref{prop_necessity_inequality}. 

\begin{lemma}\label{lem_tau_implicit_uniqueness}
Assume that $|U|<2e_{min}/b$, $y\in (-1,0)$, $\beta\in \R_{>0}$, $E\in
 \cE(e_{min},e_{max})$ and 
\begin{align*}
-\frac{2}{|U|}+D_d\int_{\G^*_{\infty}}d\bk \Tr\left(\frac{\sinh(\beta
 E(\bk))}{(y+\cosh(\beta E(\bk)))E(\bk)}\right)=0.
\end{align*}
Then $\beta\in (0,\beta_c)$ and $y=\cos(\tau(\beta)/2)$.
\end{lemma}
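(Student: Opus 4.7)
The plan is to rewrite the hypothesis as $g_E(\beta,t_0,0)=0$ for a suitably chosen $t_0\in(\pi,2\pi)$, and then apply Lemma \ref{lem_critical_temperature} together with the very definition of $\tau(\beta)$. Since $y\in(-1,0)$ and $t\mapsto\cos(t/2)$ is a strictly decreasing bijection from $(\pi,2\pi)$ onto $(-1,0)$, I would first pick the unique $t_0\in(\pi,2\pi)$ with $\cos(t_0/2)=y$. Under the spectral calculus convention of the paper, the displayed hypothesis then reads exactly $g_E(\beta,t_0,0)=0$, the integrand being well defined thanks to the spectral gap \eqref{eq_one_particle_lowest_band}, which guarantees $y+\cosh(\beta E(\bk))>0$ pointwise in $\bk$.

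The main step is to localize $\beta$ in $(0,\beta_c)$. For this I would exploit strict monotonicity in the ``$y$-slot'' of the reference function
$$f_y(x):=\frac{\sinh x}{(y+\cosh x)\,x},$$
which is even in $x$ (so the signs of the eigenvalues of $E(\bk)$ are harmless), is continuously extended by $1/(y+1)$ at $x=0$, and satisfies $\partial f_y(x)/\partial y=-\sinh(x)/((y+\cosh x)^2 x)<0$ for all $x>0$, $y>-1$. After passing through the spectral calculus and integrating over $\G_\infty^*$, this pointwise estimate yields strict monotonicity of $g_E(\beta,t,0)$ in $\cos(t/2)$; since $-1<y$, we get
$$g_E(\beta,2\pi,0)\;>\;g_E(\beta,t_0,0)\;=\;0,$$
so by Lemma \ref{lem_critical_temperature} \eqref{item_critical_negative} (and \eqref{item_critical_zero}) we must have $\beta\in(0,\beta_c)$.

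Once $\beta\in(0,\beta_c)$ has been secured, the identification $y=\cos(\tau(\beta)/2)$ is immediate: $\tau(\beta)$ was defined in the paragraph preceding the lemma as the \emph{unique} element of $(\pi,2\pi)$ at which $g_E(\beta,\cdot,0)$ vanishes; hence $t_0=\tau(\beta)$, and $\cos(t_0/2)=y$ by construction. The only delicate point in the whole argument is checking that the strict monotonicity in $y$ of the scalar function $f_y$ survives both the spectral calculus and the Brillouin-zone integration, which is the ``main obstacle'' but is essentially transparent: the inequality $\partial_y f_y<0$ holds pointwise on $x>0$, and $E(\bk)$ is nowhere degenerate by \eqref{eq_one_particle_lowest_band}, so the strict inequality propagates after taking traces and integrating.
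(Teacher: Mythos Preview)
Your proof is correct and follows exactly the approach the paper intends; the lemma is stated in the paper without proof, and the natural argument is precisely the one you give---pick $t_0\in(\pi,2\pi)$ with $\cos(t_0/2)=y$, use the strict monotonicity of $g_E(\beta,t,0)$ in $\cos(t/2)$ to get $g_E(\beta,2\pi,0)>0$, invoke Lemma~\ref{lem_critical_temperature} to place $\beta$ in $(0,\beta_c)$, and conclude by the uniqueness in the definition of $\tau(\beta)$.
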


Basic properties of the function
$\tau(\cdot)$ are summarized as follows. For an open set $O$ of $\R^n$
let $C^{\o}(O)$ denote the set of real analytic functions on $O$.

\begin{lemma}\label{lem_boundary_function}
\begin{enumerate}[(i)]
\item\label{item_boundary_function_smoothness}
$$\tau\in
     C^{\omega}((0,\beta_c)).$$
\item\label{item_boundary_function_boundary} 
$$
\lim_{\beta \nearrow \beta_c}\tau(\beta)=\lim_{\beta\searrow 0}\tau(\beta)=2\pi.
$$ 
\item\label{item_boundary_function_derivative}
$$
\lim_{\beta \nearrow \beta_c}\frac{d \tau}{d\beta}(\beta)=+\infty,\quad \lim_{\beta \searrow 0}\frac{d \tau}{d\beta}(\beta)=-\infty.
$$ 
\end{enumerate}
\end{lemma}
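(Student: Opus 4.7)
For part \eqref{item_boundary_function_smoothness} the plan is to apply the analytic implicit function theorem to $H(\beta, t) := g_E(\beta, t, 0)$. First I would check that $H$ is real-analytic on a neighborhood of $\R_{>0} \times [\pi, 2\pi]$: since all eigenvalues of $E(\bk)$ satisfy $|\lambda| \geq e_{min} > 0$ by \eqref{eq_one_particle_lowest_band}, the denominator $\cos(t/2) + \cosh(\beta E(\bk))$ is uniformly bounded below by $\cosh(\beta e_{min}) - 1 > 0$, so combining smoothness of $E$ with the spectral functional calculus yields real-analyticity of $H$. Differentiating under the integral gives
\[
\partial_t H(\beta, t) = \frac{D_d \sin(t/2)}{2} \int_{\G_{\infty}^*} d\bk\, \Tr\!\left(\frac{\sinh(\beta E(\bk))}{(\cos(t/2) + \cosh(\beta E(\bk)))^2 E(\bk)}\right),
\]
which is strictly positive for $t \in (\pi, 2\pi)$ because $\sin(t/2) > 0$ and the integrand acts as a positive scalar on every eigenvector of $E(\bk)$ (since $\sinh(x)/x > 0$ on $\R \setminus \{0\}$). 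The analytic implicit function theorem applied at each $(\beta, \tau(\beta))$ then delivers $\tau \in C^{\omega}((0, \beta_c))$.

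For \eqref{item_boundary_function_boundary} I would argue by contradiction from continuity of $H$ and the strict monotonicity of $H(\beta, \cdot)$ on $(\pi, 2\pi)$ just established. If $\tau(\beta_n) \to t^* \in [\pi, 2\pi)$ along a subsequence with $\beta_n \nearrow \beta_c$, continuity forces $H(\beta_c, t^*) = 0$; but $H(\beta_c, \pi) < 0$ by Lemma \ref{lem_critical_temperature} \eqref{item_critical_basic} and $H(\beta_c, 2\pi) = 0$ by \eqref{item_critical_zero}, so strict increase forbids a zero in $[\pi, 2\pi)$, contradiction. For $\beta \searrow 0$, dominated convergence shows $H(\beta, t) \to -2/|U| < 0$ uniformly on $[\pi, 2\pi - \delta]$ for any $\delta > 0$ (the integrand tends pointwise to $0$ and is uniformly bounded once $1 + \cos(t/2)$ is bounded below), so $\tau(\beta) > 2\pi - \delta$ eventually; since $\delta$ is arbitrary, $\tau(\beta) \to 2\pi$.

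For \eqref{item_boundary_function_derivative} I would use the implicit function formula $\tau'(\beta) = -\partial_\beta H / \partial_t H$ at $(\beta, \tau(\beta))$, with
\[
\partial_\beta H(\beta, t) = D_d \int_{\G_{\infty}^*} d\bk\, \Tr\!\left(\frac{1 + \cos(t/2)\cosh(\beta E(\bk))}{(\cos(t/2) + \cosh(\beta E(\bk)))^2}\right).
\]
For $\beta \nearrow \beta_c$ the sine factor drives $\partial_t H(\beta, \tau(\beta)) \to 0^+$ while the remaining integral stays bounded, whereas $\partial_\beta H(\beta_c, 2\pi) = -D_d \int \Tr((\cosh(\beta_c E(\bk)) - 1)^{-1}) d\bk$ is a finite strictly negative number; continuity then yields $\tau'(\beta) \to +\infty$. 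For $\beta \searrow 0$ both derivatives diverge, so the strategy is first to extract the rate at which $\tau(\beta) \to 2\pi$. Writing $\eta := 1 + \cos(\tau/2)$ and $s := 2\pi - \tau$ (so $\eta \sim s^2/8$), the Taylor expansions $\sinh(\beta E)/E = \beta I + O(\beta^3)$ and $\cosh(\beta E) - 1 = O(\beta^2)$ inserted into $H(\beta, \tau(\beta)) = 0$ yield the leading-order balance $\eta \sim b|U|\beta/2$, i.e.\ $s(\beta) \sim 2\sqrt{b|U|\beta}$. Substituting the same expansions into $\partial_\beta H$ and $\partial_t H$ gives $\partial_\beta H \sim b/\eta$ and $\partial_t H \sim b\beta \sin(s/2)/(2\eta^2)$ to leading order, whence $\tau'(\beta) \sim -s(\beta)/(2\beta) \sim -\sqrt{b|U|/\beta} \to -\infty$.

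The main obstacle will be making the small-$\beta$ asymptotics in \eqref{item_boundary_function_derivative} rigorous: the error terms from the Taylor expansions need to be shown to remain genuinely subleading after tracing and integrating over $\G_{\infty}^*$, in a regime where the integrand of $H$ itself is developing a non-integrable singularity in the joint limit $\beta \to 0$, $t \to 2\pi$. This can be handled cleanly by writing each expansion with an explicit integral remainder, bounding the resulting matrix-valued expressions uniformly in $\bk$ via \eqref{eq_one_particle_lowest_band}, and exploiting the lower bound $\cos(\tau(\beta)/2) + \cosh(\beta E(\bk)) \geq \eta(\beta)$ on the denominators together with the explicit rate $\eta(\beta) \sim b|U|\beta/2$ to convert pointwise expansions into uniform estimates.
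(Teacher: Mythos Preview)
Your proposal is correct and follows essentially the same approach as the paper: analytic implicit function theorem for \eqref{item_boundary_function_smoothness}, contradiction via monotonicity of $t\mapsto g_E(\beta,t,0)$ for \eqref{item_boundary_function_boundary}, and the quotient formula $\tau'=-\partial_\beta H/\partial_t H$ for \eqref{item_boundary_function_derivative}, with the $\beta\searrow 0$ case hinging on the rate $(1+\cos(\tau(\beta)/2))/\beta\to b|U|/2$. The only stylistic difference is that the paper establishes this last rate by an explicit $\limsup/\liminf$ squeeze argument rather than your Taylor-expansion asymptotics, but the content is the same.
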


\begin{remark}
In the proofs of Lemma \ref{lem_boundary_function}, Proposition
 \ref{prop_manifold}, Proposition \ref{prop_phase_transition} 
and Proposition \ref{prop_sufficiency}
we will
 apply the implicit function theorem, the inverse function theorem
and the identity theorem for real analytic functions. These theorems are found in e.g. 
\cite[\mbox{Chapter 1, Chapter 2}]{KP}. 
\end{remark}

\begin{proof}[Proof of Lemma \ref{lem_boundary_function}]
\eqref{item_boundary_function_smoothness}: One can see from the
 definition that the function $(x,t)\mapsto g_E(x,t,0):\R_{>0}\times
 \R\to \R$ is real analytic. Since $\frac{\partial g_E}{\partial
 t}(\beta,\tau(\beta),0)\neq 0$ for all $\beta \in (0,\beta_c)$, the
 analytic implicit function theorem ensures the claim.

\eqref{item_boundary_function_boundary}: Suppose that there exists
 $\eps\in \R_{>0}$ such that for any $\delta\in\R_{>0}$ there exists
 $\beta_{\delta}\in (\beta_c-\delta,\beta_c)\cap (0,\beta_c)$ such that
 $\tau(\beta_{\delta})\le 2\pi -\eps$. Then for any $\delta \in \R_{>0}$
\begin{align*}
0=g_E(\beta_{\delta},\tau(\beta_{\delta}),0)\le
 g_E(\beta_{\delta},2\pi-\eps,0)\le \sup_{\beta\in
 (\beta_c-\delta,\beta_c)}g_E(\beta,2\pi-\eps,0).
\end{align*}
By sending $\delta \searrow 0$, $0\le
 g_E(\beta_c,2\pi-\eps,0)<g_E(\beta_c,2\pi,0)=0$, which is a
 contradiction. Thus $\lim_{\beta \nearrow
 \beta_c}\tau(\beta)=2\pi$. 

Suppose that there exists $\eps\in \R_{>0}$ such that for any $\delta
 \in\R_{>0}$ there exists $\beta_{\delta}\in (0,\delta)\cap (0,\beta_c)$ 
such that
 $\tau(\beta_{\delta})\le 2\pi -\eps$. Then for any $\delta \in \R_{>0}$
$$
0=g_E(\beta_{\delta},\tau(\beta_{\delta}),0)\le \sup_{\beta\in
 (0,\delta)}g_E(\beta, 2\pi-\eps,0).
$$
By sending $\delta\searrow 0$, $0\le g_E(0,2\pi
 -\eps,0)=-2/|U|<0$, which is a contradiction. Thus $\lim_{\beta \searrow
 0}\tau(\beta)=2\pi$.

\eqref{item_boundary_function_derivative}: For $\beta\in (0,\beta_c)$ 
\begin{align}
\frac{d \tau}{d\beta}(\beta)&= -\frac{\frac{\partial g_E}{\partial
 x}(\beta,\tau(\beta),0)}{\frac{\partial g_E}{\partial
 t}(\beta,\tau(\beta),0)}\label{eq_implicit_derivative}\\
&=-\frac{2 \int_{\G_{\infty}^*}d\bk \Tr
 \left(\frac{1+\cos(\tau(\beta)/2)\cosh(\beta
 E(\bk))}{(\cos(\tau(\beta)/2)+\cosh(\beta
 E(\bk)))^2}\right)}{\sin\big(\frac{\tau(\beta)}{2}\big)
 \int_{\G_{\infty}^*}d\bk \Tr
 \left(\frac{\sinh(\beta
 E(\bk))}{(\cos(\tau(\beta)/2)+\cosh(\beta
 E(\bk)))^2E(\bk)}\right)}.\notag
\end{align}
Then by using the result of \eqref{item_boundary_function_boundary},
\begin{align*}
\lim_{\beta\nearrow
 \beta_c}\frac{d\tau}{d\beta}(\beta)=\lim_{\beta\nearrow \beta_c}
\frac{2 \int_{\G_{\infty}^*}d\bk \Tr
 \left(\frac{1}{\cosh(\beta_c E(\bk))-1}\right)}{\sin\big(\frac{\tau(\beta)}{2}\big)
 \int_{\G_{\infty}^*}d\bk \Tr
 \left(\frac{\sinh(\beta_c
 E(\bk))}{(\cosh(\beta_c
 E(\bk))-1)^2E(\bk)}\right)}=\infty.
\end{align*}

To study the limit $\lim_{\beta\searrow 0}\frac{d\tau}{d\beta}(\beta)$,
 let us show that 
\begin{align}
\lim_{\beta\searrow
 0}\frac{\cos(\tau(\beta)/2)+1}{\beta}=\frac{b|U|}{2}.\label{eq_key_convergence_zero}
\end{align}
Suppose that there exists $\eps\in \R_{>0}$ such that 
\begin{align*}
\sup_{\beta\in (0,\delta)}\frac{\cos(\tau(\beta)/2)+1}{\beta}\ge
 \frac{b|U|}{2}+\eps,\quad (\forall \delta \in (0,\beta_c)).
\end{align*}
Take any $\delta \in (0,\beta_c)$. Then there exists $\beta_{\delta}\in
 (0,\delta)$ such that 
$$
\frac{\cos(\tau(\beta_{\delta})/2)+1}{\beta_{\delta}}\ge
 \frac{b|U|}{2}+\frac{\eps}{2},
$$
and thus
\begin{align*}
\frac{2}{|U|}&\le D_d\int_{\G_{\infty}^*}d\bk \Tr
 \left(\frac{\sinh(\beta_{\delta}E(\bk))}{\left(\frac{b|U|}{2}+\frac{\eps}{2}+\frac{\cosh(\beta_{\delta}
 E(\bk))-1}{\beta_{\delta}} \right)\beta_{\delta}E(\bk)}
\right)\\
&\le \frac{2}{b|U|+\eps}D_d\int_{\G_{\infty}^*}d\bk
 \Tr\left(\frac{\sinh(\delta E(\bk))}{\delta E(\bk)}\right).
\end{align*}
By sending $\delta \searrow 0$, $\frac{2}{|U|}\le \frac{2
 b}{b|U|+\eps}<\frac{2}{|U|}$, which is a contradiction. Thus for any
 $\eps\in \R_{>0}$ there exists $\delta \in (0,\beta_c)$ such that 
\begin{align*}
\sup_{\beta\in (0,\delta)}\frac{\cos(\tau(\beta)/2)+1}{\beta}<
 \frac{b|U|}{2}+\eps,
\end{align*}
which implies that 
$$
\limsup_{\beta \searrow 0}\frac{\cos(\tau(\beta)/2)+1}{\beta}\le 
 \frac{b|U|}{2}.
$$
On the other hand, suppose that there exists $\eps\in \R_{>0}$ such that 
\begin{align*}
\inf_{\beta\in (0,\delta)}\frac{\cos(\tau(\beta)/2)+1}{\beta}\le
 \frac{b|U|}{2}-\eps,\quad (\forall \delta \in (0,\beta_c)).
\end{align*}
Take any $\delta \in (0,\beta_c)$. Then there exists $\beta_{\delta}\in
 (0,\delta)$ such that
$$
\frac{\cos(\tau(\beta_{\delta})/2)+1}{\beta_{\delta}}\le
 \frac{b|U|}{2}-\frac{\eps}{2},
$$
and thus
\begin{align*}
\frac{2}{|U|}&\ge D_d\int_{\G_{\infty}^*}d\bk \Tr
 \left(\frac{\sinh(\beta_{\delta}E(\bk))}{\left(\frac{b|U|}{2}-\frac{\eps}{2}+\frac{\cosh(\beta_{\delta}
 E(\bk))-1}{\beta_{\delta}} \right)\beta_{\delta}E(\bk)}
\right)\\
&\ge D_d\int_{\G_{\infty}^*}d\bk \Tr
 \left(\frac{1}{\frac{b|U|}{2}-\frac{\eps}{2}+\frac{\cosh(\delta
 E(\bk))-1}{\delta}}\right).
\end{align*}
By sending $\delta \searrow 0$, $\frac{2}{|U|}\ge
 \frac{2b}{b|U|-\eps}>\frac{2}{|U|}$, which is a contradiction. Thus for
 any $\eps\in \R_{>0}$ there exists $\delta \in (0,\beta_c)$ such that 
\begin{align*}
\inf_{\beta\in (0,\delta)}\frac{\cos(\tau(\beta)/2)+1}{\beta}>
 \frac{b|U|}{2}-\eps,
\end{align*}
which implies that 
$$
\liminf_{\beta\searrow 0}\frac{\cos(\tau(\beta)/2)+1}{\beta}\ge 
 \frac{b|U|}{2}.
$$
Therefore, the property \eqref{eq_key_convergence_zero} follows. 

By applying \eqref{eq_key_convergence_zero} we can derive that
\begin{align*}
\lim_{\beta\searrow 0}\frac{\int_{\G_{\infty}^*}d\bk \Tr
 \left(\frac{1+\cos(\tau(\beta)/2)\cosh(\beta
 E(\bk))}{(\cos(\tau(\beta)/2)+\cosh(\beta
 E(\bk)))^2}\right)}{
 \int_{\G_{\infty}^*}d\bk \Tr
 \left(\frac{\sinh(\beta
 E(\bk))}{(\cos(\tau(\beta)/2)+\cosh(\beta
 E(\bk)))^2E(\bk)}\right)}=\frac{b|U|}{2}.
\end{align*}
By combining this with \eqref{eq_implicit_derivative} and the result of
 \eqref{item_boundary_function_boundary} we can deduce the claim on the
 limit $\lim_{\beta\searrow 0}\frac{d\tau}{d\beta}(\beta)$.
\end{proof}

By parity, periodicity and Lemma \ref{lem_critical_temperature}
the set $Q_0$ is characterized as
follows. 
\begin{align}
Q_0=&\{(\beta,\delta \tau(\beta)+4\pi m)\ |\ \beta \in (0,\beta_c),\
 \delta \in \{1,-1\},\ m\in \Z\}\label{eq_profile_boundary}\\
&\cup \{(\beta_c,2\pi + 4\pi m)\ |\ m\in
 \Z\}.\notag
\end{align}
Set 
\begin{align*}
\widehat{Q}_0:=\{(\beta,\tau(\beta)),\ (\beta,4\pi-\tau(\beta))\ |\ \beta\in
 (0,\beta_c)\}\cup \{(0,2\pi),\ (\beta_c,2\pi)\},
\end{align*}
which is a closed curve in $\R^2$ by Lemma \ref{lem_boundary_function}
\eqref{item_boundary_function_boundary}. We can see that $Q_0$ consists
of periodic copies of $\widehat{Q}_0\cap (\R_{>0}\times \R)$. This fact
motivates us to study the curve $\widehat{Q}_0$ as the representative
 of the phase boundaries.

\begin{proposition}\label{prop_manifold}
$\widehat{Q}_0$ is a 1-dimensional real analytic submanifold of $\R^2$.
\end{proposition}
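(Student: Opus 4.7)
I will produce a real analytic local chart at every point of $\widehat{Q}_0$. The set splits naturally into three parts: the two interior arcs $\{(\beta,\tau(\beta)):\beta\in(0,\beta_c)\}$ and $\{(\beta,4\pi-\tau(\beta)):\beta\in(0,\beta_c)\}$, together with the two boundary points $(0,2\pi)$ and $(\beta_c,2\pi)$. On the interior arcs, the real analyticity of $\tau$ from Lemma \ref{lem_boundary_function}\,\eqref{item_boundary_function_smoothness} supplies real analytic graph parametrizations directly, so all the work is concentrated at the two endpoints. There $\tau$ does not extend analytically because $|d\tau/d\beta|\to\infty$ by Lemma \ref{lem_boundary_function}\,\eqref{item_boundary_function_derivative}; the natural remedy is to turn the graph on its side and parametrize by $t$ rather than by $\beta$.

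At $(\beta_c,2\pi)$ I apply the analytic implicit function theorem to $g_E(\beta,t,0)=0$ with $\beta$ as the dependent variable. Since $\sin(t/2)=0$ at $t=2\pi$, direct differentiation gives $\partial g_E/\partial t(\beta_c,2\pi,0)=0$, whereas the strict monotonicity of $\beta\mapsto g_E(\beta,2\pi,0)$ used to define $\beta_c$ in the proof of Lemma \ref{lem_critical_temperature} yields $\partial g_E/\partial\beta(\beta_c,2\pi,0)\neq 0$. This produces a real analytic solution $\beta(t)$ near $t=2\pi$ with $\beta(2\pi)=\beta_c$ and $\beta'(2\pi)=0$, and hence a real analytic embedding $t\mapsto(\beta(t),t)$ with non-vanishing velocity $(0,1)$. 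Local uniqueness in the implicit function theorem, together with Lemma \ref{lem_tau_implicit_uniqueness}, guarantees that this single chart simultaneously covers the tail of the $\tau$-branch (for $t<2\pi$) and of the $4\pi-\tau$-branch (for $t>2\pi$).

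The serious obstacle is the endpoint $(0,2\pi)$: here $g_E(0,t,0)=-2/|U|\neq 0$, so the implicit function theorem is not applicable to $g_E$ itself. My plan is to remove the singularity by an explicit reparametrization. Setting $s:=t-2\pi$, one has $\cos(t/2)+\cosh(\beta E(\bk))=2\sin^2(s/4)+2\sinh^2(\beta E(\bk)/2)$, and with the entire function $h_1(x):=\sinh(x)/x$ (so $h_1(0)=1$), the equation $g_E(\beta,2\pi+s,0)=0$ rewrites as
\begin{align*}
-\frac{2}{|U|}+D_d\int_{\G_{\infty}^*}d\bk\,\Tr\!\left(\frac{2\beta\,h_1(\beta E(\bk))}{4\sin^2(s/4)+\beta^2E(\bk)^2[h_1(\beta E(\bk)/2)]^2}\right)=0.
\end{align*}
The key substitution $\mu:=\sin(s/4)$ and $\beta:=\mu^2\alpha$ absorbs a common factor of $\mu^2$ and converts this into an equation $\tilde g(\mu,\alpha)=0$ in which $\tilde g$ is real analytic on a neighborhood of $(\mu,\alpha)=(0,4/(b|U|))$, satisfies $\tilde g(0,4/(b|U|))=0$ (using $D_d\int_{\G_{\infty}^*}d\bk=1$), and has $\partial\tilde g/\partial\alpha=b/2\neq 0$ at that point. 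The analytic implicit function theorem then yields a real analytic $\alpha(\mu)$ with $\alpha(0)=4/(b|U|)$, and the map $\mu\mapsto(\mu^2\alpha(\mu),\,2\pi+4\arcsin\mu)$ is a real analytic embedding of a neighborhood of $0$ into $\R^2$ with non-vanishing velocity $(0,4)$ at $\mu=0$. The value $4/(b|U|)$ is precisely what is predicted by the limit $\lim_{\beta\searrow 0}(\cos(\tau(\beta)/2)+1)/\beta=b|U|/2$ established in the proof of Lemma \ref{lem_boundary_function}, and this identification is what ensures that the image of the chart lies in $\widehat{Q}_0$ near $(0,2\pi)$ rather than tracing some spurious analytic branch. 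Finding this reparametrization is the only nontrivial step of the argument.
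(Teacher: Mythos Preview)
Your proof is correct and matches the paper's strategy: graph charts on the interior arcs from Lemma~\ref{lem_boundary_function}\,\eqref{item_boundary_function_smoothness}, the implicit function theorem in $\beta$ at $(\beta_c,2\pi)$, and a desingularizing reparametrization at $(0,2\pi)$. At the last point the paper instead introduces $y=(\cos(t/2)+1)/\beta$ and the auxiliary function $\phi(x,y)$, solves for $y=\eta(x)$, then inverts $\xi(x):=x\eta(x)-1$ via the inverse function theorem before composing with $t\mapsto\cos(t/2)$; your substitution $\beta=\mu^2\alpha$, $\mu=\sin((t-2\pi)/4)$ is an equivalent blow-up (indeed $\alpha=2/y$) but produces the chart with a single application of the implicit function theorem, avoiding the separate inversion step.
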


\begin{proof}
By Lemma \ref{lem_boundary_function}
 \eqref{item_boundary_function_smoothness} the maps
\begin{align*}
&\beta\mapsto (\beta,\tau(\beta)):(0,\beta_c)\to \{(\beta,\tau(\beta))\
 |\ \beta\in (0,\beta_c)\},\\
&\beta\mapsto (\beta, 4\pi-\tau(\beta)):(0,\beta_c)\to \{(\beta,4\pi-\tau(\beta))\
 |\ \beta\in (0,\beta_c)\}
\end{align*}
are real analytic homeomorphism. Thus it suffices to prove that there
 exist open intervals $I_1$, $I_2$, an open neighborhood $U_1$ of
 $(0,2\pi)$, an open neighborhood $U_2$ of $(\beta_c,2\pi)$ in $\R^2$
 and real analytic homeomorphisms $f_j:I_j\to U_j\cap \widehat{Q}_0$
 $(j=1,2)$. We can see that  
$\frac{\partial g_E}{\partial x}(\beta_c,2\pi,0)<0$. Thus the analytic
 implicit function theorem ensures that there exists $\eps_1\in (0,\pi)$
 and $\hat{f}\in C^{\o}((2\pi-\eps_1,2\pi+\eps_1))$ such that
 $\hat{f}(2\pi)=\beta_c$, $\hat{f}(t)>0$ and $g_E(\hat{f}(t),t,0)=0$ for
 any $t\in (2\pi-\eps_1,2\pi+\eps_1)$. Thus, $(\beta_c,2\pi)\in
 \{(\hat{f}(t),t)\ |\ t\in (2\pi-\eps_1,2\pi+\eps_1)\}\subset
 \widehat{Q}_0$. Since $\widehat{Q}_0$ is symmetric with respect to the
 line $\{(x,2\pi)\ |\ x\in\R\}$, there exists $\eps_2\in\R_{>0}$ such
 that 
\begin{align*}
&\{(\hat{f}(t),t)\ |\ t\in
 (2\pi-\eps_1,2\pi+\eps_1)\}=(\beta_c-\eps_2,\beta_c+\eps_2)\times 
(2\pi-\eps_1,2\pi+\eps_1)\cap \widehat{Q}_0.
\end{align*}
If we define the map $f_2:(2\pi-\eps_1,2\pi+\eps_1)\to
 (\beta_c-\eps_2,\beta_c+\eps_2)\times (2\pi-\eps_1,2\pi+\eps_1)\cap
 \widehat{Q}_0$ by $f_2(t):=(\hat{f}(t),t)$, we see that the claim on
 $(\beta_c,2\pi)$ holds. 

Let us prove the claim on $(0,2\pi)$. Observe that there exist $\eps_3$,
 $\eps_4\in \R_{>0}$ such that the function 
\begin{align*}
(x,y)\mapsto -\frac{2}{|U|}+ D_d\int_{\G_{\infty}^*}d\bk \Tr
 \left(\frac{\sinh(x E(\bk))}{\big(y+\frac{\cosh(x
 E(\bk))-1}{x} \big)x E(\bk)}\right)
\end{align*}
is real analytic in $(-\eps_3,\eps_3)\times
 (b|U|/2-\eps_4,b|U|/2+\eps_4)$. Let $\phi(x,y)$ denote
 this function. We can check that $\phi(0,b|U|/2)=0$ and $\frac{\partial
 \phi}{\partial y}(0,b|U|/2)<0$. Thus by the analytic implicit function
 theorem there exist $\eps_5\in (0,\eps_3)$ and a real analytic function
 $\eta:(-\eps_5,\eps_5)\to\R_{>0}$ such that 
$\eta(0)=b|U|/2$, $\phi(x,\eta(x))=0$, $(\forall x\in
 (-\eps_5,\eps_5))$. 
Then let us
 define the function $\xi:(-\eps_5,\eps_5)\to\R$ by
 $\xi(x):=x\eta(x)-1$. It follows that 
$\xi\in C^{\o}((-\eps_5,\eps_5))$, $\xi(0)=-1$, $\frac{d\xi}{dx}(0)=b|U|/2>0$.
Thus there exists $\eps_6\in (0,\eps_5)$ such that $\xi(\cdot)$ is
 strictly monotone increasing in $(-\eps_6,\eps_6)$ and 
\begin{align*}
-\frac{2}{|U|}+ D_d\int_{\G_{\infty}^*}d\bk \Tr
 \left(\frac{\sinh(x E(\bk))}{(\xi(x)+\cosh(x
 E(\bk)))E(\bk)}\right)=0,\quad (\forall x\in (-\eps_6,\eps_6)\backslash\{0\}).
\end{align*}
Then by the inverse function theorem there exist $\eps_7\in \R_{>0}$ and
 a real analytic function $\la:(-1-\eps_7,-1+\eps_7)\to
 (-\eps_6,\eps_6)$ such that $\la(\cdot)$ is strictly monotone
 increasing, $\la(-1)=0$, $\xi(\la(y))=y$, $(\forall y\in
 (-1-\eps_7,-1+\eps_7))$. It follows that
\begin{align*}
&-\frac{2}{|U|}+ D_d\int_{\G_{\infty}^*}d\bk \Tr
 \left(\frac{\sinh(\la(y) E(\bk))}{(y+\cosh(\la(y)
 E(\bk)))E(\bk)}\right)=0,\\
&(\forall y\in (-1-\eps_7,-1+\eps_7)\backslash\{-1\}).
\end{align*}
We can take $\eps_8\in (0,\pi)$ so that $\cos(t/2)\in [-1,-1+\eps_7)$,
 $(\forall t\in (2\pi-\eps_8,2\pi+\eps_8))$. Let us define the function
 $\nu:(2\pi-\eps_8,2\pi+\eps_8)\to\R$ by
 $\nu(t):=\la(\cos(t/2))$. Observe that $\nu\in
 C^{\o}((2\pi-\eps_8,2\pi+\eps_8))$, $\nu(2\pi)=0$, $\nu(t)>0$,
 $(\forall t\in (2\pi-\eps_8,2\pi+\eps_8)\backslash \{2\pi\})$,
 $g_E(\nu(t),t,0)=0$, $(\forall t\in
 (2\pi-\eps_8,2\pi+\eps_8)\backslash\{2\pi\})$. Thus $(0,2\pi)\in
 \{(\nu(t),t)\ |\ t\in (2\pi-\eps_8,2\pi+\eps_8)\}\subset
 \widehat{Q}_0$. Since $\widehat{Q}_0$ is symmetric with respect to the
 line $\{(x,2\pi)\ |\ x\in x\in \R\}$, there exists $\eps_9\in \R_{>0}$
 such that 
\begin{align*}
\{(\nu(t),t)\ |\ t\in (2\pi-\eps_8,2\pi+\eps_8)\}=(-\eps_9,\eps_9)\times
 (2\pi-\eps_8,2\pi+\eps_8)\cap \widehat{Q}_0.
\end{align*}
We can define the map $f_1:(2\pi-\eps_8,2\pi+\eps_8)\to (-\eps_9,\eps_9)\times
 (2\pi-\eps_8,2\pi+\eps_8)\cap \widehat{Q}_0$ by $f_1(t):=(\nu(t),t)$ so
 that the claim on $(0,2\pi)$ holds as well. The proof is now complete.
\end{proof}

By taking into account the definition of the function $\D(\cdot)$,
 Lemma \ref{lem_boundary_function} and Proposition
\ref{prop_manifold} we can schematically draw a $\beta-t$ phase diagram
restricted within the plane $\R_{>0}\times (0,4\pi)$ as in Figure
\ref{fig_phase_diagram}.

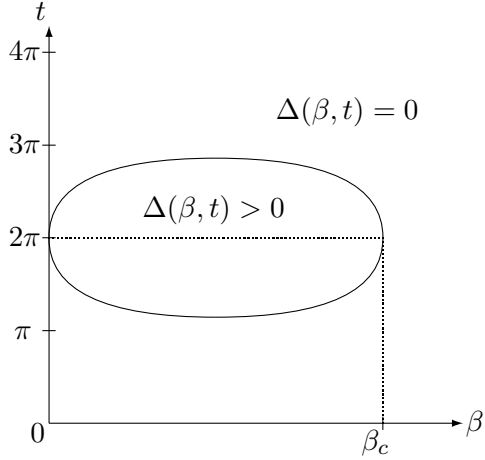
\begin{figure}
\begin{center}
\begin{picture}(180,170)(0,0)

\put(15,10){\vector(1,0){155}}
\put(15,10){\vector(0,1){150}}
\put(8,3){\small 0}
\put(171,7){\small$\beta$}
\put(10,162){\small$t$}
\put(132,0){\small$\beta_c$}

\put(12.5,45){\line(1,0){5}}
\put(12.5,80){\line(1,0){5}}
\put(12.5,115){\line(1,0){5}}
\put(12.5,150){\line(1,0){5}}
\put(140,7.5){\line(0,1){5}}

\qbezier[80](15,80)(80,80)(140,80)
\qbezier[45](140,10)(140,45)(140,80)

\put(2,42){\small$\pi$}
\put(0,77){\small$2\pi$}
\put(0,112){\small$3\pi$}
\put(0,147){\small$4\pi$}

\qbezier(15,80)(15,110)(77.5,110)
\qbezier(15,80)(15,50)(77.5,50)
\qbezier(140,80)(140,110)(77.5,110)
\qbezier(140,80)(140,50)(77.5,50)

\put(100,125){\small$\D(\beta,t)=0$}
\put(50,88){\small$\D(\beta,t)>0$}

\end{picture}
 \caption{The schematic $\beta-t$ phase diagram restricted within $\R_{>0}\times(0,4\pi)$. The curve corresponds to $\widehat{Q}_0$.}\label{fig_phase_diagram}
\end{center}
\end{figure}

Next let us study the regularity of $F_E(\cdot,\cdot)$. In particular
let us show non-analyticity of $F_E(\cdot,\cdot)$ on $Q_0$.

\begin{proposition}\label{prop_phase_transition}
The following statements hold.
\begin{enumerate}[(i)]
\item\label{item_general_regularity}
\begin{align*}
F_E|_{Q_+\cup Q_-}\in C^{\o}(Q_+\cup Q_-),\quad F_E\in C^1(\R_{>0}\times
 \R).
\end{align*}
\item\label{item_nonanalytic_beta}
For any $(\beta_0,t_0)\in Q_0$
$\lim_{(\beta,t)\to(\beta_0,t_0),(\beta,t)\in Q_+}\frac{\partial^2
 F_E}{\partial\beta^2}(\beta,t)$, $\lim_{(\beta,t)\to(\beta_0,t_0),
     (\beta,t)\in Q_-}$ $\frac{\partial^2
 F_E}{\partial\beta^2}(\beta,t)$ 
converge to finite values. Moreover, if $\beta_0\in (0,\beta_c)$ and
     $\frac{d \tau}{d\beta}(\beta_0)\neq 0$ or $\beta_0=\beta_c$, 
\begin{align*}
 \lim_{(\beta,t)\to(\beta_0,t_0)\atop (\beta,t)\in Q_+}\frac{\partial^2
 F_E}{\partial\beta^2}(\beta,t)<
 \lim_{(\beta,t)\to(\beta_0,t_0)\atop (\beta,t)\in Q_-}\frac{\partial^2
 F_E}{\partial\beta^2}(\beta,t).
\end{align*}
If $\beta_0\in (0,\beta_c)$ and $\frac{d \tau}{d\beta}(\beta_0)=0$, 
\begin{align*}
\lim_{(\beta,t)\to(\beta_0,t_0)\atop (\beta,t)\in Q_+}\frac{\partial^2
 F_E}{\partial\beta^2}(\beta,t)=
 \lim_{(\beta,t)\to(\beta_0,t_0)\atop (\beta,t)\in Q_-}\frac{\partial^2
 F_E}{\partial\beta^2}(\beta,t).
\end{align*}
\item\label{item_nonanalytic_time}
For any $(\beta_0,t_0)\in Q_0$
$\lim_{(\beta,t)\to(\beta_0,t_0),(\beta,t)\in Q_+}\frac{\partial^2
 F_E}{\partial t^2}(\beta,t)$, 
$\lim_{(\beta,t)\to(\beta_0,t_0),(\beta,t)\in Q_-}$ $\frac{\partial^2
 F_E}{\partial t^2}(\beta,t)$
converge to finite values. Moreover, if $\beta_0\in (0,\beta_c)$, 
\begin{align*}
 \lim_{(\beta,t)\to(\beta_0,t_0)\atop (\beta,t)\in Q_+}\frac{\partial^2
 F_E}{\partial t^2}(\beta,t)<
 \lim_{(\beta,t)\to(\beta_0,t_0)\atop (\beta,t)\in Q_-}\frac{\partial^2
 F_E}{\partial t^2}(\beta,t).
\end{align*}
If $\beta_0=\beta_c$, 
\begin{align*}
 \lim_{(\beta,t)\to(\beta_0,t_0)\atop (\beta,t)\in Q_+}\frac{\partial^2
 F_E}{\partial t^2}(\beta,t)=
 \lim_{(\beta,t)\to(\beta_0,t_0)\atop (\beta,t)\in Q_-}\frac{\partial^2
 F_E}{\partial t^2}(\beta,t).
\end{align*}
\end{enumerate}
\end{proposition}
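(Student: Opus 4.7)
The plan is to handle the non-smoothness of $\D(\beta,t)$ near $Q_0$ by introducing the auxiliary variable $y=\D^2$. Since $\cosh(\beta\sqrt{E^2+y})=\sum_{n=0}^{\infty}\beta^{2n}(E^2+y)^n/(2n)!$ is a genuine power series in $y$ on $y>-e_{min}^2$, the map
$$\Psi(\beta,t,y):=\frac{y}{|U|}-\frac{D_d}{\beta}\int_{\G_{\infty}^*}d\bk\,\Tr\log\!\left(2e^{-\beta E(\bk)}\bigl(\cos(t/2)+\cosh(\beta\sqrt{E(\bk)^2+y})\bigr)\right)$$
is real analytic on $\R_{>0}\times\R\times(-e_{min}^2,\infty)$, and the elementary rewrite $e^{\beta(\sqrt{E^2+\D^2}-E)}+e^{-\beta(\sqrt{E^2+\D^2}+E)}=2e^{-\beta E}\cosh(\beta\sqrt{E^2+\D^2})$ shows $F_E(\beta,t)=\Psi(\beta,t,\D(\beta,t)^2)$. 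Writing $\tilde g(\beta,t,y):=g_E(\beta,t,\sqrt{y})$ for the analytic extension of the gap-equation function across $y=0$, a direct computation yields the key identity $\partial\Psi/\partial y=-\tfrac12\tilde g$, so the gap equation is precisely the critical-point condition $\partial\Psi/\partial y=0$ in the $y$-variable.

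For (i), on $Q_-$ we have $F_E=\Psi(\beta,t,0)$, manifestly analytic. On $Q_+$, the analytic implicit function theorem applied to $\tilde g(\beta,t,y)=0$ produces an analytic $y(\beta,t)$ in a neighborhood of $Q_+\cup Q_0$ with $y|_{Q_0}=0$, whence $F_E=\Psi(\beta,t,y(\beta,t))$ is analytic on $Q_+$. The non-degeneracy $\partial_y\tilde g(\beta_0,t_0,0)<0$ needed here follows from the strict monotonicity of the reference function \eqref{eq_reference_function} already used in Lemma \ref{lem_gap_equation}. For the $C^1$ matching across $Q_0$, the chain rule combined with $\partial\Psi/\partial y\equiv 0$ on $Q_+$ gives $\partial_\beta F_E=\partial_\beta\Psi$ from both sides, which converges to the common value $\partial_\beta\Psi(\beta_0,t_0,0)$ as $(\beta,t)\to(\beta_0,t_0)\in Q_0$; the $t$-derivative is identical.

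For (ii) and (iii), differentiating once more yields, on $Q_+$,
$$\frac{\partial^2 F_E}{\partial\beta^2}=\frac{\partial^2\Psi}{\partial\beta^2}+\frac{\partial^2\Psi}{\partial\beta\,\partial y}\cdot\frac{\partial y}{\partial\beta},$$
while on $Q_-$ only the first term survives. Substituting $\partial^2\Psi/\partial\beta\partial y=-\tfrac12\partial_\beta\tilde g$ and the implicit-function formula $\partial y/\partial\beta=-\partial_\beta\tilde g/\partial_y\tilde g$, the jump across $Q_0$ becomes
$$\lim_{Q_+}\frac{\partial^2 F_E}{\partial\beta^2}-\lim_{Q_-}\frac{\partial^2 F_E}{\partial\beta^2}=\frac{(\partial_\beta g_E)^2}{2\,\partial_y\tilde g}\bigg|_{(\beta_0,t_0,0)}\le 0,$$
with equality iff $\partial_\beta g_E(\beta_0,t_0,0)=0$. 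By \eqref{eq_implicit_derivative}, $d\tau/d\beta=-\partial_\beta g_E/\partial_t g_E$ on $Q_0$, so for $\beta_0\in(0,\beta_c)$ the jump is strict iff $d\tau/d\beta(\beta_0)\neq 0$; at $\beta_0=\beta_c$, the explicit evaluation $\partial_\beta g_E(\beta_c,2\pi,0)=-\tfrac{D_d}{2}\int\Tr(1/\sinh^2(\beta_c E(\bk)/2))d\bk<0$ forces a strict jump. The argument for $\partial^2 F_E/\partial t^2$ is identical upon replacing $\partial_\beta$ by $\partial_t$: for $\beta_0\in(0,\beta_c)$ the existence of $\tau$ via the implicit function theorem already requires $\partial_t g_E\neq 0$, giving a strict jump; at $\beta_0=\beta_c$, $t_0=2\pi$, the fact that $g_E(\cdot,\cdot,0)$ depends on $t$ only through $\cos(t/2)$ makes $\partial_t g_E(\beta_c,2\pi,0)$ proportional to $\sin(\pi)=0$, so no jump occurs.

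The principal technical hurdle is the implicit-function argument in the $y$-variable, namely confirming $\partial_y\tilde g(\beta_0,t_0,0)\neq 0$: this licenses extending $y(\beta,t)$ as an analytic function across $Q_0$, rather than only on $Q_+$ where monotonicity of $g_E$ in $\D$ for $\D>0$ is directly available. With this and the identity $\partial\Psi/\partial y=-\tfrac12\tilde g$ in hand, the rest reduces to standard chain-rule bookkeeping.
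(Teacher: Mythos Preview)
Your proof is correct and takes a cleaner route than the paper. The paper works with $\D$ and the auxiliary function $\widehat F_E(x,t,z)$ of \eqref{eq_extracted_function}; since $\partial_z g_E$ vanishes linearly at $z=0$, the jump term on $Q_+$ appears as $\D\,(\partial_x g_E)^2/\partial_z g_E$, and the paper must establish separately (see \eqref{eq_derivative_denominator}) that $\D^{-1}\partial_z g_E$ has a finite negative limit at $Q_0$. Your substitution $y=\D^2$ makes $\Psi$ and $\tilde g$ genuinely analytic across $y=0$, so the implicit function theorem applies directly at points of $Q_0$ and the jump is simply $\frac{1}{2}(\partial_\beta g_E)^2/\partial_y\tilde g$ with $\partial_y\tilde g<0$ manifest from the monotonicity of \eqref{eq_reference_function}; the identity $\partial_y\Psi=-\frac{1}{2}\tilde g$ (gap equation equals stationarity in $y$) packages both the first-order cancellation and the second-order jump very neatly. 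One step you leave implicit but the paper treats carefully is the passage from ``the one-sided partial derivatives match on $Q_0$'' to $F_E\in C^1$: the paper argues that $Q_0$ meets each horizontal and each vertical line in isolated points (via the identity theorem and Lemma~\ref{lem_boundary_function}) and then uses one-variable reasoning. In your framework a more direct completion is available, since near $Q_0$ one can write $F_E=\Psi(\beta,t,0)+\big(\max\{g_E(\beta,t,0),0\}\big)^2\cdot H(\beta,t)$ for some analytic $H$, which is manifestly $C^1$.
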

\begin{proof}
The claims can be proved in a way similar to the proofs of ``Lemma 2.2'',
 ``Proposition 2.6'' of \cite{K_BCS_II}. However, 
we do not significantly skip the explanations for the readers' convenience. 

\eqref{item_general_regularity}: By using the fact that for $\eps\in
 [-1,1]$ the function \eqref{eq_reference_function} is strictly monotone
 decreasing we can check that $\frac{\partial g_E}{\partial
 z}(\beta,t,\D(\beta,t))<0$ for any $(\beta,t)\in Q_+$. Thus by the
 analytic implicit function theorem $\D|_{Q_+}\in C^{\o}(Q_+)$. Since
 $\D|_{Q_-}\in C^{\o}(Q_-)$ trivially, $\D|_{Q_+\cup Q_-}\in
 C^{\o}(Q_+\cup Q_-)$. Let us prove that $\D\in C(\R_{>0}\times
 \R)$. Let $(\beta_0,t_0)\in Q_0$. Suppose that there exists $\eps \in
 \R_{>0}$ such that for any $\delta \in \R_{>0}$ there exists
 $(\beta_{\delta}, t_{\delta})\in \R_{>0}\times \R$ such that
$\|(\beta_0,t_0)-(\beta_{\delta},t_{\delta})\|_{\R^2}<\delta$ and
 $\D(\beta_{\delta},t_{\delta})\ge \eps$. Then,
\begin{align*}
0=g_E(\beta_{\delta},t_{\delta},\D(\beta_{\delta},t_{\delta}))\le
 \sup_{(\beta,t)\in\R_{>0}\times \R\atop\text{with }\|(\beta,t)-(\beta_0,t_0)\|_{\R^2}<\delta}
g_{E}(\beta,t,\eps).
\end{align*}
By sending $\delta\searrow 0$, $0\le g_E(\beta_0,t_0,\eps)<
 g_E(\beta_0,t_0,0)=0$, which is a contradiction. Thus,
 $\lim_{(\beta,t)\to (\beta_0,t_0)}\D(\beta,t)=0=\D(\beta_0,t_0)$. This
 implies that $\D\in C(\R_{>0}\times\R)$. It readily follows from the
 confirmed regularity of $\D$ that $F_E|_{Q_+\cup Q_-}\in
 C^{\o}(Q_+\cup Q_-)$, $F_E\in C(\R_{>0}\times
 \R)$. Let us define the function $\widehat{F}_E:\R_{>0}\times
 \R\times\R\to \R$ by 
\begin{align}
&\widehat{F}_E(x,t,z):=\frac{z^2}{|U|}-\frac{D_d}{x}\int_{\G_{\infty}^*}d\bk\Tr\log\left(
\cos\left(\frac{t}{2}\right)+\cosh(x\sqrt{E(\bk)^2+z^2})\right).\label{eq_extracted_function}
\end{align}
Observe that the regularity of the function $(\beta,t)\mapsto
 \widehat{F}_E(\beta,t,\D(\beta,t)):\R_{>0}\times\R\to\R$ is same as
 that of $F_E(\beta,t)$. By considering the definition of $\D(\beta,t)$
 we can derive that for any $(\beta,t)\in Q_+\cup Q_-$
\begin{align}
&\frac{\partial}{\partial
 \beta}\widehat{F}_E(\beta,t,\D(\beta,t))=\frac{\partial
 \widehat{F}_E}{\partial x}(\beta,t,\D(\beta,t)),\
\frac{\partial}{\partial
 t}\widehat{F}_E(\beta,t,\D(\beta,t))=\frac{\partial
 \widehat{F}_E}{\partial
 t}(\beta,t,\D(\beta,t)).\label{eq_first_derivative_cancel}
\end{align}
Take any $(\beta_0,t_0)\in Q_0$. The above equalities imply that 
\begin{align}
&\lim_{(\beta,t)\to (\beta_0,t_0)\atop (\beta,t)\in Q_+\cup Q_-}
\frac{\partial}{\partial
 \beta}\widehat{F}_E(\beta,t,\D(\beta,t))=\frac{\partial
 \widehat{F}_E}{\partial
 x}(\beta_0,t_0,\D(\beta_0,t_0)),\label{eq_one_derivative_beta}\\
&\lim_{(\beta,t)\to (\beta_0,t_0)\atop (\beta,t)\in Q_+\cup Q_-}
\frac{\partial}{\partial
 t}\widehat{F}_E(\beta,t,\D(\beta,t))=\frac{\partial
 \widehat{F}_E}{\partial
 t}(\beta_0,t_0,\D(\beta_0,t_0)).\label{eq_one_derivative_time}
\end{align}
We remark that the function $\beta\mapsto g_E(\beta,t_0,0)$ is real analytic in
 $\R_{>0}$. Since 
$$\lim_{\beta\to \infty}g_E(\beta,t_0,0)\le
 -\frac{2}{|U|}+\frac{b}{e_{min}}<0$$
by assumption, this function is not identically zero. Therefore, there
 exists $\eps\in \R_{>0}$ such that for any $\beta\in
 (\beta_0-\eps,\beta_0+\eps)\backslash \{\beta_0\}$
 $g_E(\beta,t_0,0)\neq 0$. Otherwise the identity theorem for real analytic
 functions yields a contradiction. This means that $(\beta,t_0)\in
 Q_+\cup Q_-$ for any $\beta \in
 (\beta_0-\eps,\beta_0+\eps)\backslash\{\beta_0\}$. Thus, it follows
 from \eqref{eq_one_derivative_beta} that $\beta\mapsto
 \widehat{F}_E(\beta,t_0,\D(\beta,t_0))$ is differentiable at
 $\beta=\beta_0$ and 
\begin{align*}
\frac{\partial}{\partial\beta}\widehat{F}_E(\beta,t_0,\D(\beta,t_0))\Big|_{\beta=\beta_0}=
\frac{\partial \widehat{F}_E}{\partial x}(\beta_0,t_0,\D(\beta_0,t_0)).
\end{align*}
By recalling Lemma \ref{lem_boundary_function}
 \eqref{item_boundary_function_boundary},\eqref{item_boundary_function_derivative} and
 \eqref{eq_profile_boundary} we see that there exists $\eps\in \R_{>0}$
 such that $(\beta_0,t)\in Q_+\cup Q_-$ for any $t\in
 (t_0-\eps,t_0+\eps)\backslash \{t_0\}$. Thus by \eqref{eq_one_derivative_time}
$t\mapsto \widehat{F}_E(\beta_0,t,\D(\beta_0,t))$ is differentiable at
 $t=t_0$ and 
\begin{align*}
\frac{\partial}{\partial t}\widehat{F}_E(\beta_0,t,\D(\beta_0,t))\Big|_{t=t_0}=
\frac{\partial \widehat{F}_E}{\partial t}(\beta_0,t_0,\D(\beta_0,t_0)).
\end{align*}
Since $(\beta,t)\mapsto (\frac{\partial \widehat{F}_E}{\partial
 x}(\beta,t,\D(\beta,t)), \frac{\partial \widehat{F}_E}{\partial
 t}(\beta,t,\D(\beta,t)))$ is continuous in $\R_{>0}\times\R$, it
 follows that $(\beta,t)\mapsto \widehat{F}_E(\beta,t,\D(\beta,t))$ is
 $C^1$-class in $\R_{>0}\times \R$ and so is $F_E(\cdot,\cdot)$.

\eqref{item_nonanalytic_beta}: We can derive from
 \eqref{eq_first_derivative_cancel} and the gap equation
 $g_E(\beta,t,\D(\beta,t))=0$ $((\beta,t)\in Q_+)$ that 
\begin{align}
&\frac{\partial^2}{\partial
 \beta^2}\widehat{F}_E(\beta,t,\D(\beta,t))=\frac{\partial^2\widehat{F}_E}{\partial
 x^2}(\beta,t,\D(\beta,t)),\quad (\forall (\beta,t)\in
 Q_-),\label{eq_second_derivative_beta}\\
&\frac{\partial^2}{\partial
 \beta^2}\widehat{F}_E(\beta,t,\D(\beta,t))=\frac{\partial^2\widehat{F}_E}{\partial
 x^2}(\beta,t,\D(\beta,t))+\frac{\partial^2 \widehat{F}_E}{\partial
 x\partial z}(\beta,t,\D(\beta,t))\frac{\partial \D}{\partial
 \beta}(\beta,t)\notag\\
&\qquad\qquad\qquad\qquad\quad =\frac{\partial^2\widehat{F}_E}{\partial
 x^2}(\beta,t,\D(\beta,t))- \frac{\partial g_E}{\partial
 x}(\beta,t,\D(\beta,t))\D(\beta,t)\frac{\partial
 \D}{\partial \beta}(\beta,t)\notag\\
&\qquad\qquad\qquad\qquad\quad =\frac{\partial^2\widehat{F}_E}{\partial
 x^2}(\beta,t,\D(\beta,t))+\D(\beta,t)\frac{\left(\frac{\partial
 g_E}{\partial x}(\beta,t,\D(\beta,t))\right)^2}{\frac{\partial
 g_E}{\partial z}(\beta,t,\D(\beta,t))},\notag\\
&(\forall (\beta,t)\in
 Q_+).\notag
\end{align}
Let us define the function $\hat{g}:\R_{>0}\times \R\times \R_{>0}\to \R$
 by
$$
\hat{g}(x,t,z):=\frac{\sinh(xz)}{\left(\cos(t/2)+\cosh(xz)\right)z}.
$$
Observe that for $(\beta,t)\in Q_+$
\begin{align*}
&\frac{1}{\D(\beta,t)}\cdot\frac{\partial g_E}{\partial
 z}(\beta,t,\D(\beta,t))\\
&=D_d\int_{\G_{\infty}^*}d\bk \Tr \left(
\frac{\partial \hat{g}}{\partial
 z}\big(\beta,t,\sqrt{E(\bk)^2+\D(\beta,t)^2}\big)\cdot\frac{1}{\sqrt{E(\bk)^2+\D(\beta,t)^2}}\right)
\end{align*}
and 
\begin{align}
\lim_{(\beta,t)\to (\beta_0,t_0)\atop (\beta,t)\in Q_+}
\frac{1}{\D(\beta,t)}\cdot\frac{\partial g_E}{\partial
 z}(\beta,t,\D(\beta,t))
&=D_d\int_{\G_{\infty}^*}d\bk \Tr \left(
\frac{\partial \hat{g}}{\partial
 z}(\beta_0,t_0,|E(\bk)|)\cdot\frac{1}{|E(\bk)|}\right)\label{eq_derivative_denominator}\\
&<0.\notag
\end{align}
Here we again used the monotone decreasing property of the function
 \eqref{eq_reference_function}. Let us study the term $\frac{\partial
 g_E}{\partial x}(\beta_0,t_0,0)$. If $\beta_0\in (0,\beta_c)$,
\begin{align*}
\frac{\partial g_E}{\partial x}(\beta_0,t_0,0)=\frac{\partial
 g_E}{\partial x}(\beta_0,\tau(\beta_0),0)=-\frac{d
 \tau}{d\beta}(\beta_0)\frac{\partial g_E}{\partial
 t}(\beta_0,\tau(\beta_0),0).
\end{align*}
Since $\tau(\beta_0)\in (0,2\pi)$ and $t\mapsto g_E(\beta_0,t,0)$ is
 strictly monotone increasing in $(0,2\pi)$, 
\begin{align}
\frac{\partial g_E}{\partial t}(\beta_0,\tau(\beta_0),0)\neq
 0.\label{eq_time_monotonicity}
\end{align}
Thus, $\frac{\partial g_E}{\partial x}(\beta_0,t_0,0)=0$ if and only if
 $\frac{d \tau}{d\beta}(\beta_0)=0$. If $\beta_0=\beta_c$, $t_0=2\pi$
 (mod $4\pi$). In this case we can directly check that 
$\frac{\partial g_E}{\partial x}(\beta_0,t_0,0)<0$. We can conclude the
 claimed convergent properties by combining the above properties of 
$\frac{\partial g_E}{\partial x}(\beta_0,t_0,0)$ with
 \eqref{eq_second_derivative_beta}, \eqref{eq_derivative_denominator}. 

\eqref{item_nonanalytic_time}: In the same way as in the proof of
 \eqref{item_nonanalytic_beta} we have that
\begin{align}
&\frac{\partial^2}{\partial
 t^2}\widehat{F}_E(\beta,t,\D(\beta,t))=\frac{\partial^2\widehat{F}_E}{\partial
 t^2}(\beta,t,\D(\beta,t)),\quad (\forall (\beta,t)\in
 Q_-),\label{eq_second_derivative_time}\\
&\frac{\partial^2}{\partial
 t^2}\widehat{F}_E(\beta,t,\D(\beta,t))=\frac{\partial^2\widehat{F}_E}{\partial
 t^2}(\beta,t,\D(\beta,t))+\D(\beta,t)\frac{\left(\frac{\partial
 g_E}{\partial t}(\beta,t,\D(\beta,t))\right)^2}{\frac{\partial
 g_E}{\partial z}(\beta,t,\D(\beta,t))},\notag\\
&(\forall (\beta,t)\in
 Q_+).\notag
\end{align}
If $\beta_0\in (0,\beta_c)$, $\tau(\beta_0)\in (0,2\pi)$. By periodicity
 and \eqref{eq_time_monotonicity} 
$$\left|\frac{\partial g_E}{\partial
 t}(\beta_0,t_0,0)\right|=\left|\frac{\partial g_E}{\partial
 t}(\beta_0,\tau(\beta_0),0)\right|>0.$$
If $\beta_0=\beta_c$, $t_0=2\pi$ (mod
 $4\pi$), and thus $\frac{\partial g_E}{\partial
 t}(\beta_0,t_0,0)=0$. The claimed convergent properties follow from
 \eqref{eq_derivative_denominator}, \eqref{eq_second_derivative_time}
 and the above properties of $\frac{\partial g_E}{\partial t}(\beta_0,t_0,0)$.
\end{proof}
 
\begin{remark}
For $(\rho,\eta)=(+,-)$ or $(-,+)$ let us set 
\begin{align*}
Q_{\rho,\eta}^{\beta}:=\left\{(\beta_0,t_0)\in Q_0\ \Big|\ \exists \eps \in
 \R_{>0}\text{ s.t. }\begin{array}{l} (\beta,t_0)\in Q_{\rho},\ (\forall
		      \beta\in (\beta_0-\eps,\beta_0)),\\ 
                      (\beta,t_0)\in Q_{\eta},\ (\forall
		      \beta\in (\beta_0,\beta_0+\eps))\end{array}               
\right\}.
\end{align*}
Proposition \ref{prop_phase_transition} \eqref{item_nonanalytic_beta}
 implies that if $(\beta_0,t_0)\in Q_{\rho,\eta}^{\beta}$ satisfies
 $\beta_0\neq \beta_c$ and $\frac{\partial \tau}{\partial
 \beta}(\beta_0)\neq 0$ or $\beta_0=\beta_c$, 
$$
\lim_{\beta \nearrow \beta_0}\frac{\partial^2 F_E}{\partial
 \beta^2}(\beta,t_0)\neq \lim_{\beta \searrow \beta_0}\frac{\partial^2 F_E}{\partial
 \beta^2}(\beta,t_0).
$$
This means that a 2nd order phase transition driven by $\beta$ occurs at
$(\beta,t)=(\beta_0,t_0)$. Assume that $\beta_0\in (0,\beta_c)$, 
$\frac{d \tau}{d\beta}(\beta_0)=0$ and $\beta\mapsto \tau(\beta)$ is
 monotone increasing or decreasing in a neighborhood of $\beta_0$. Then
 $(\beta_0,\tau(\beta_0))\in Q_{+,-}^{\beta}$ or
 $(\beta_0,\tau(\beta_0))\in Q_{-,+}^{\beta}$ respectively. In this case
 Proposition \ref{prop_phase_transition} \eqref{item_nonanalytic_beta}
 implies that $\beta\mapsto
 \frac{\partial^2 F_E}{\partial\beta^2}(\beta,\tau(\beta_0))$ is
 continuous at $\beta=\beta_0$, even though the trajectory $\beta\mapsto
 (\beta,\tau(\beta_0))$ crosses $Q_0$ at $\beta=\beta_0$ from $Q_+$ to
 $Q_-$ or from $Q_-$ to $Q_+$. This interestingly suggests a possibility of
 higher order phase transition with $\beta$ at
 $(\beta,t)=(\beta_0,\tau(\beta_0))$. However,
 as we will see in the following subsections, the monotonicity of
 $\tau(\cdot)$ is sensitive to individual characteristics of $E(\cdot)$ and
 we do not pursue the question whether $\tau(\cdot)$ can satisfy the
 above properties in this paper. On the other hand, if we set 
\begin{align*}
Q_{\rho,\eta}^{t}:=\left\{(\beta_0,t_0)\in Q_0\ \Big|\ \exists \eps \in
 \R_{>0}\text{ s.t. }\begin{array}{l} (\beta_0,t)\in Q_{\rho},\ (\forall
		      t\in (t_0-\eps,t_0)),\\ 
                      (\beta_0,t)\in Q_{\eta},\ (\forall
		      t\in (t_0,t_0+\eps))\end{array}               
\right\}
\end{align*}
for $(\rho,\eta)=(+,-)$ or $(-,+)$, we can see from Lemma
 \ref{lem_boundary_function} and \eqref{eq_profile_boundary} that 
$$
Q_{+,-}^t\sqcup Q_{-,+}^t=\{(\beta_0,t_0)\in Q_0\ |\ \beta_0\neq \beta_c
 \}. 
$$
Thus by Proposition \ref{prop_phase_transition}
 \eqref{item_nonanalytic_time}
\begin{align*}
\lim_{t \nearrow t_0}\frac{\partial^2 F_E}{\partial
 t^2}(\beta_0,t)\neq \lim_{t \searrow t_0}\frac{\partial^2 F_E}{\partial
 t^2}(\beta_0,t),\quad (\forall (\beta_0,t_0)\in Q_{+,-}^t\cup
 Q_{-,+}^t).
\end{align*}
In other words, $t\mapsto \frac{\partial^2 F_E}{\partial t^2}(\beta,t)$
 has jump discontinuity whenever the trajectory $t\mapsto (\beta,t)$
 crosses $Q_0$ from $Q_+$ to $Q_-$ or from $Q_-$ to $Q_+$. This means that
 the phase transitions driven by $t$ in this system are of 2nd order.
\end{remark}

\begin{remark}
The free energy density characterized in Theorem
 \ref{thm_infinite_volume_limit} \eqref{item_free_energy} corresponds to
 $F_E(\beta,\beta\theta)$. In 
\cite[\mbox{Subsection 2.3}]{K_BCS_II} we focused on the properties of
 the function $(\beta,\theta)\mapsto F_E(\beta,\beta\theta)$
 $:\R_{>0}\times\R\to \R$ under different assumptions on $E(\cdot)$. The
 reason why we treated the function $(\beta,t)\mapsto F_E(\beta,t)$ here
 is that it is considered as a dynamical free energy density
 studied in today's physics of DPT. At this point the function
 $F_E(\beta,\beta\theta)$ lacks physical interpretation and its phase
 boundaries are structurally more complicated to analyze than those of
 $F_E(\beta,t)$. Nonetheless, it is possible to study the regularity of
 $(\beta,\theta)\mapsto F_E(\beta,\beta\theta)$ in a manner similar to
 Proposition \ref{prop_phase_transition}. In this case the set 
$$
\{(\beta,\theta)\in\R_{>0}\times \R\ |\ g_E(\beta,\beta\theta,0)=0\}
$$
defines the phase boundaries and it can be shown that 2nd order
 partial derivatives of the function $(\beta,\theta)\mapsto
 F_E(\beta,\beta\theta)$ have jump discontinuities on the phase
 boundaries. However, we do not explicitly present the results for conciseness of
 the paper.
\end{remark}

\subsection{Shape of the phase boundary}\label{subsec_shape}
 
In view of the characterization \eqref{eq_profile_boundary}, we notice 
that the graph of the function $\tau(\cdot)$ determines the
shape of the phase boundaries. So let us study the profile of
$\tau(\cdot)$ more deeply. Its universal properties have already been
summarized in Lemma \ref{lem_boundary_function} and Proposition
\ref{prop_manifold}. As the next step, we should try to reveal geometric properties
which may vary with details of $E(\cdot)$. It will turn out that the ratio
$e_{min}/e_{max}$ is a prime index to classify the shape of
$\tau(\cdot)$. From now on we let $c$ denote a generic positive constant
independent of any parameter. The following proposition tells us when
$\tau(\cdot):(0,\beta_c)\to \R$ is strictly downward convex. 

\begin{proposition}\label{prop_convexity}
There exists $e_0\in (0,1)$ independent of any parameter such that if
 $e_{min}/e_{max}\ge e_0$, for any $U\in
 [-\frac{e_{min}}{\sinh(2)b},0)$, $E\in \cE(e_{min},e_{max})$ and
 $\beta\in (0,\beta_c)$, 
$$\frac{d^2 \tau}{d\beta^2}(\beta)>0.$$
\end{proposition}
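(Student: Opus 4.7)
The plan is to reduce the inequality $\tau''(\beta)>0$ to the explicitly solvable base case $e_{min}=e_{max}$, then extend by a perturbation argument to the regime where $e_{min}/e_{max}$ is close to $1$. As a first step, I would introduce $C(\beta):=\cos(\tau(\beta)/2)$, which lies in $(-1,0)$ on $(0,\beta_c)$ by Lemma~\ref{lem_boundary_function}, and rewrite
\[
\frac{d^2\tau}{d\beta^2}(\beta)\;=\;-\frac{2}{\sin(\tau(\beta)/2)}\left[C''(\beta)+\frac{C(\beta)\,C'(\beta)^2}{1-C(\beta)^2}\right].
\]
Since $\sin(\tau(\beta)/2)>0$, the task becomes to show strict negativity of the bracket, which is a more transparent reformulation of the implicit differentiation of $g_E(\beta,\tau(\beta),0)=0$ than the one based on \eqref{eq_implicit_derivative}.

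Second, in the base case $e_{min}=e_{max}=e$, the identity $E(\bk)^2\equiv e^2 I_b$ makes the integrand in $g_E$ independent of $\bk$, and the gap equation reduces to the algebraic relation $C(\beta)=\frac{b|U|}{2e}\sinh(\beta e)-\cosh(\beta e)$. Differentiating twice and using this same relation to eliminate the factor $b|U|/2$ produces the clean identity $C''(\beta)=e^2 C(\beta)$. Consequently the bracket equals $C(\beta)\bigl[e^2+C'(\beta)^2/(1-C(\beta)^2)\bigr]$, i.e.\ the product of the strictly negative factor $C(\beta)$ with a manifestly positive one, and hence $\tau''>0$ strictly on $(0,\beta_c)$; moreover the bracket tends to $-\infty$ at both endpoints since $C'(\beta)^2/(1-C(\beta)^2)=\tau'(\beta)^2/4$ blows up there by Lemma~\ref{lem_boundary_function}~\eqref{item_boundary_function_derivative}.

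Third, for general $E\in\cE(e_{min},e_{max})$ the same implicit differentiation procedure expresses $C'(\beta)$ and $C''(\beta)$ as ratios of integrals over $\G_{\infty}^*$ of hyperbolic functions of the singular values of $E(\bk)$. I would write each such integral as its value in the constant-spectrum case with, say, $e:=(e_{min}+e_{max})/2$ plus a remainder controlled by $e_{max}-e_{min}$. Here the hypothesis $|U|\le e_{min}/(\sinh(2)\,b)$ enters through Lemma~\ref{lem_critical_temperature}, bounding $\beta e_{max}$ uniformly on $(0,\beta_c)$ and thereby giving Lipschitz control of the hyperbolic integrands in the spectral variable on the relevant range. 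Combined with the strict base-case negativity, this perturbation delivers a uniform strictly negative upper bound on the bracket once $e_{max}-e_{min}$ is small enough, which determines the threshold $e_0$.

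The main obstacle is uniformity in $\beta$ across the open interval $(0,\beta_c)$: since $\tau'(\beta)\to\pm\infty$ at the two endpoints by Lemma~\ref{lem_boundary_function}~\eqref{item_boundary_function_derivative}, the bracket is a difference of large quantities and one must ensure that the perturbative remainder does not overwhelm the base-case blow-up. The identity $C''=e^2 C$ is crucial precisely because it collapses the sign of the bracket to that of the single factor $C(\beta)$, so the perturbation needs only to preserve a sign rather than match a vanishing leading order; checking this sign preservation near $\beta\nearrow\beta_c$ (where $C\to-1$) and near $\beta\searrow 0$ (where $C\to -1$ as well by Lemma~\ref{lem_boundary_function}~\eqref{item_boundary_function_boundary}) is the technical heart of the argument.
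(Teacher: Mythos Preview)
Your base-case identity $C''=e^2C$ is correct and is precisely the paper's diagonal computation in different coordinates: writing the numerator $2g_{xt}g_xg_t-g_{xx}g_t^2-g_{tt}g_x^2$ of \eqref{eq_tau_second_derivative_expansion} as a triple spectral integral with integrand $f_\beta(y_1,y_2,y_3)$, the paper shows the purely algebraic identity $f_\beta(y,y,y)=-\tfrac{y^2}{4}\sinh(\beta y)\cos(\tau(\beta)/2)\,f_\beta^0(y)^3$, which is positive for every $y\in[e_{min},e_{max}]$ since $\cos(\tau(\beta)/2)<0$. So the core insight is shared. The perturbation step, however, is organised differently. You compare the whole problem to a \emph{separate} constant-spectrum problem with $e=(e_{min}+e_{max})/2$; this forces you to perturb $C$ itself, to reconcile the two different intervals $(0,\beta_c)$ and $(0,\beta_c^0)$, and to confront the fact that $C''=e^2C$ genuinely uses the base-case gap equation (substituting the general $C$ into the constant-$e$ formulas does \emph{not} yield $e^2C$). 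The paper stays inside the one problem throughout: the identity for $f_\beta(y,y,y)$ holds for the actual $C(\beta)$ and every $y$, with no gap equation invoked, and the perturbation is the pointwise Lipschitz bound $|f_\beta(y_1,y_1,y_1)-f_\beta(y_1,y_2,y_3)|$. Uniformity in $\beta$ then comes for free, because both the diagonal lower bound \eqref{eq_whole_function_lower} and the off-diagonal error \eqref{eq_whole_function_difference_upper} carry the same degenerating prefactor $\sinh(\beta e_{min})\,f_\beta^0(e_{min})^3$, so their ratio is governed by $e_{max}/e_{min}$ alone. Your route is plausible but would need extra bookkeeping that the integrand-level comparison sidesteps entirely.
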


\begin{proof}
First of all let us prepare a few quantitative bounds based on the
 assumption 
\begin{align}
|U|\le \frac{e_{min}}{\sinh(2)b}.\label{eq_tighter_coupling}
\end{align}
Observe that $1/\sinh(2)<2\tanh(1)<2$, which implies that
 $|U|<2e_{min}/b$ and combined with Lemma
 \ref{lem_critical_temperature} that
\begin{align}
\beta_c\le \frac{2}{e_{min}}\tanh^{-1}\left(\frac{b|U|}{2
 e_{min}}\right)
<
 \frac{2}{e_{min}}\tanh^{-1}(\tanh(1))=\frac{2}{e_{min}}.\label{eq_critical_beta_bound}
\end{align}
It follows from \eqref{eq_critical_beta_bound} and the equality 
\begin{align}
g_E(\beta,\tau(\beta),0)=0,\quad (\beta\in
 (0,\beta_c))\label{eq_boundary_gap_equation}
\end{align}
that
\begin{align}
\frac{2}{|U|}\le \frac{b\sinh(\beta
 e_{min})}{e_{min}(\cos(\tau(\beta)/2)+\cosh(\beta e_{min}))}\le
 \frac{b\sinh(2)}{e_{min}(\cos(\tau(\beta)/2)+\cosh(\beta e_{min}))},\label{eq_boundary_gap_equation_direct}
\end{align}
or by \eqref{eq_tighter_coupling}
\begin{align}
&\cos\left(\frac{\tau(\beta)}{2}\right)+\cosh(\beta e_{min})\le
 \frac{1}{2},\label{eq_cos_cosh_upper}\\
&-\cos\left(\frac{\tau(\beta)}{2}\right)\ge
 \frac{1}{2},\quad (\forall \beta\in
 (0,\beta_c)).\label{eq_cosign_lower}
\end{align}

By differentiating both sides of \eqref{eq_boundary_gap_equation} twice
 and substituting the first equality of \eqref{eq_implicit_derivative}
 we obtain that for any $\beta\in (0,\beta_c)$ 
\begin{align}
\frac{d^2\tau}{d\beta^2}(\beta)=\frac{1}{\left(\frac{\partial
 g_E}{\partial t}(\beta,\tau(\beta),0)\right)^3}
\Bigg(&2\frac{\partial^2 g_E}{\partial x \partial t}(\beta,\tau(\beta),0)
\frac{\partial g_E}{\partial x}(\beta,\tau(\beta),0)
\frac{\partial g_E}{\partial t}(\beta,\tau(\beta),0)\label{eq_tau_second_derivative_expansion}\\
&-\frac{\partial^2 g_E}{\partial x^2}(\beta,\tau(\beta),0)
\left(\frac{\partial g_E}{\partial t}(\beta,\tau(\beta),0)\right)^2\notag\\
&-\frac{\partial^2 g_E}{\partial t^2}(\beta,\tau(\beta),0)
\left(\frac{\partial g_E}{\partial x}(\beta,\tau(\beta),0)\right)^2
\Bigg).\notag
\end{align}
Define the functions $f_{\beta}^0$,
$f_{\beta}^x$, $f_{\beta}^t$, $f_{\beta}^{xx}$,
 $f_{\beta}^{xt}$, $f_{\beta}^{tt}:\R\to\R$ by 
\begin{align*}
&f_{\beta}^0(y):=\cos\left(\frac{\tau(\beta)}{2}\right)+\cosh(\beta
 y),\\
&f_{\beta}^x(y):=y\left(\cos\left(\frac{\tau(\beta)}{2}\right)\cosh(\beta
 y)+1\right),\\
&f_{\beta}^t(y):=\frac{1}{2}\sin\left(\frac{\tau(\beta)}{2}\right)\sinh(\beta
 y),\\
&f_{\beta}^{xx}(y):=y^2\sinh(\beta
 y)\left(\cos^2\left(\frac{\tau(\beta)}{2}\right)-
 \cos\left(\frac{\tau(\beta)}{2}\right)\cosh(\beta y) -2\right),\\
&f_{\beta}^{xt}(y):=\frac{y}{2}\sin\left(\frac{\tau(\beta)}{2}\right)
\left(\cos\left(\frac{\tau(\beta)}{2}\right)\cosh(\beta
 y)+1-\sinh^2(\beta y)\right),\\
&f_{\beta}^{tt}(y):=\frac{1}{4}\sinh(\beta y)\left(
1+\sin^2\left(\frac{\tau(\beta)}{2}\right)+\cos\left(\frac{\tau(\beta)}{2}\right)\cosh(\beta
 y)\right).
\end{align*}
Then the formula \eqref{eq_tau_second_derivative_expansion} can be
 rewritten as follows. For any $\beta\in (0,\beta_c)$
\begin{align*}
\frac{d^2 \tau}{d\beta^2}(\beta)
=&\frac{1}{\left(\frac{\partial
 g_E}{\partial t}(\beta,\tau(\beta),0)\right)^3}
\prod_{j=1}^3\left(D_d\int_{\G_{\infty}^*}d\bk_j\right)\\
&\cdot\Bigg(
2\Tr
\left(\frac{f_{\beta}^{xt}(|E(\bk_1)|)}{|E(\bk_1)|f_{\beta}^0(E(\bk_1))^3}\right)
\Tr
 \left(\frac{f_{\beta}^{x}(|E(\bk_2)|)}{|E(\bk_2)|f_{\beta}^0(E(\bk_2))^2}\right)\\
&\qquad\cdot \Tr
 \left(\frac{f_{\beta}^{t}(|E(\bk_3)|)}{|E(\bk_3)|f_{\beta}^0(E(\bk_3))^2}\right)\\
&\quad-\Tr
 \left(\frac{f_{\beta}^{xx}(|E(\bk_1)|)}{|E(\bk_1)|
f_{\beta}^0(E(\bk_1))^3}\right)\Tr
 \left(\frac{f_{\beta}^{t}(|E(\bk_2)|)}{|E(\bk_2)|f_{\beta}^0(E(\bk_2))^2}\right)\\
&\qquad\cdot\Tr
 \left(\frac{f_{\beta}^{t}(|E(\bk_3)|)}{|E(\bk_3)|f_{\beta}^0(E(\bk_3))^2}\right)\\
&\quad-\Tr
 \left(\frac{f_{\beta}^{tt}(|E(\bk_1)|)}{|E(\bk_1)|f_{\beta}^0(E(\bk_1))^3}\right)\Tr
 \left(\frac{f_{\beta}^{x}(|E(\bk_2)|)}{|E(\bk_2)|f_{\beta}^0(E(\bk_2))^2}\right)\\
&\qquad\cdot\Tr
 \left(\frac{f_{\beta}^{x}(|E(\bk_3)|)}{|E(\bk_3)|f_{\beta}^0(E(\bk_3))^2}\right)
\Bigg).
\end{align*}
Let us define the function $f_{\beta}:\R^3\to \R$ by
\begin{align*}
f_{\beta}(y_1,y_2,y_3):=2f_{\beta}^{xt}(y_1) f_{\beta}^{x}(y_2)
 f_{\beta}^{t}(y_3)
- f_{\beta}^{xx}(y_1) f_{\beta}^{t}(y_2) f_{\beta}^{t}(y_3)
- f_{\beta}^{tt}(y_1) f_{\beta}^{x}(y_2) f_{\beta}^{x}(y_3).
\end{align*}
We can see from above that if 
\begin{align}
\min\{f_{\beta}(y_1,y_2,y_3)\ |\ y_j\in [e_{min},e_{max}]\
 (j=1,2,3)\}>0,\label{eq_convexity_sufficiency}
\end{align}
then $\frac{d^2\tau}{d\beta^2}(\beta)>0$. 

Let us prove \eqref{eq_convexity_sufficiency}. Observe that 
\begin{align*}
f_{\beta}(y,y,y)=&\frac{y^2}{2}\sin^2\left(\frac{\tau(\beta)}{2}\right)\sinh(\beta
 y)\left(\cos\left(\frac{\tau(\beta)}{2}\right)\cosh(\beta y)+1
\right)^2\\
&-\frac{y^2}{4}\sin^2\left(\frac{\tau(\beta)}{2}\right)\sinh^3(\beta
 y)\left(\cos^2\left(\frac{\tau(\beta)}{2}\right)+\cos\left(\frac{\tau(\beta)}{2}\right)\cosh(\beta
 y)\right)\\
&-f_{\beta}^{tt}(y)f_{\beta}^x(y)^2\\
=&-\frac{y^2}{4}\sinh(\beta
 y)\left(\cos^2\left(\frac{\tau(\beta)}{2}\right)+\cos\left(\frac{\tau(\beta)}{2}\right)\cosh(\beta
 y)\right)\\
&\quad\cdot\left(\cos\left(\frac{\tau(\beta)}{2}\right)\cosh(\beta
 y)+1\right)^2\\
&-\frac{y^2}{4}\sin^2\left(\frac{\tau(\beta)}{2}\right)\sinh^3(\beta
 y)\left(\cos^2\left(\frac{\tau(\beta)}{2}\right)+\cos\left(\frac{\tau(\beta)}{2}\right)\cosh(\beta
 y)\right)\\
=&-\frac{y^2}{4}\sinh(\beta
 y)\cos\left(\frac{\tau(\beta)}{2}\right)f_{\beta}^0(y)^3,
\end{align*}
which combined with \eqref{eq_cosign_lower} implies that
\begin{align}
\min_{y\in [e_{min},e_{max}]}f_{\beta}(y,y,y)\ge 
\frac{e_{min}^2}{8}\sinh(\beta e_{min})f_{\beta}^0(e_{min})^3.\label{eq_whole_function_lower}
\end{align}
For a continuous function $f:[e_{min},e_{max}]\to\R$ let
 $\|f\|_{\infty}$ denote $\sup_{y\in [e_{min},e_{max}]}|f(y)|$ in the
 following. For any $y_j\in [e_{min},e_{max}]$ $(j=1,2,3)$
\begin{align}
&|f_{\beta}(y_1,y_1,y_1)-f_{\beta}(y_1,y_2,y_3)|\label{eq_whole_function_difference}\\
&\le
 |f_{\beta}(y_1,y_1,y_1)-f_{\beta}(y_1,y_2,y_1)|+|f_{\beta}(y_1,y_2,y_1)-f_{\beta}(y_1,y_2,y_3)|\notag\\
&\le 2(e_{max}-e_{min})\Bigg(\|f_{\beta}^{xt}\|_{\infty}\left(\left\|\frac{d}{dy}f_{\beta}^x\right\|_{\infty}\|f_{\beta}^t\|_{\infty}+
 \|f_{\beta}^x\|_{\infty}\left\|\frac{d}{dy}f_{\beta}^t\right\|_{\infty}\right)\notag\\
&\quad +\|f_{\beta}^{xx}\|_{\infty}\|f_{\beta}^t\|_{\infty}\left\|\frac{d}{dy}f_{\beta}^t\right\|_{\infty}+
 \|f_{\beta}^{tt}\|_{\infty}\|f_{\beta}^x\|_{\infty}\left\|\frac{d}{dy}f_{\beta}^x\right\|_{\infty}\Bigg).\notag
\end{align}
To estimate the right-hand side of the above inequality, let us prepare
 necessary bounds.
\begin{align*}
&\|f_{\beta}^x\|_{\infty}\le
 ce_{max}f_{\beta}^0(e_{max}),\\
&\left\|\frac{d}{dy}f_{\beta}^x\right\|_{\infty}\le 
c\left(f_{\beta}^0(e_{max})+\sinh^2(\beta e_{max})\right),\\
&\|f_{\beta}^t\|_{\infty}\le c
 f_{\beta}^0(e_{max})^{\frac{1}{2}}\sinh(\beta e_{max}),\\
&\left\|\frac{d}{dy}f_{\beta}^t\right\|_{\infty}\le \beta \left|
\sin\left(\frac{\tau(\beta)}{2}\right)f_{\beta}^0(e_{max})\right|+\beta
 \left|\sin\left(\frac{\tau(\beta)}{2}\right)\cos\left(\frac{\tau(\beta)}{2}\right)\right|\\
&\qquad\qquad\ \le c\beta (f_{\beta}^0(e_{max})^{\frac{3}{2}}+ f_{\beta}^0(e_{max})^{\frac{1}{2}}),\\
&\|f_{\beta}^{xx}\|_{\infty}\le c e_{max}^2\sinh(\beta e_{max})
f_{\beta}^0(e_{max}),\\
&\|f_{\beta}^{xt}\|_{\infty}\le c e_{max}
 f_{\beta}^0(e_{max})^{\frac{1}{2}}
\left( f_{\beta}^0(e_{max}) + \sinh^2(\beta e_{max})\right),\\
&\|f_{\beta}^{tt}\|_{\infty}\le c\sinh(\beta e_{max})
 f_{\beta}^0(e_{max}),
\end{align*}
which lead to that 
\begin{align*}
&\|f_{\beta}^{xt}\|_{\infty}\left\|\frac{d}{dy}f_{\beta}^x\right\|_{\infty}\|f_{\beta}^t\|_{\infty}\\
&\le
 ce_{max}\sinh(\beta
 e_{max}) f_{\beta}^0(e_{max})^3 + ce_{max}\sinh^5(\beta
 e_{max})f_{\beta}^0(e_{max}),\\
&\|f_{\beta}^{xt}\|_{\infty}\|f_{\beta}^x\|_{\infty}
\left\|\frac{d}{dy}f_{\beta}^t\right\|_{\infty}\\
&\le
 ce_{max}\sinh(\beta
 e_{max})\sum_{j=3}^4f_{\beta}^0(e_{max})^j + ce_{max}\sinh^3(\beta
 e_{max})\sum_{j=2}^3f_{\beta}^0(e_{max})^j,\\
&\|f_{\beta}^{xx}\|_{\infty}\|f_{\beta}^t\|_{\infty}
\left\|\frac{d}{dy}f_{\beta}^t\right\|_{\infty}\le
 ce_{max}\sinh^3(\beta
 e_{max})\sum_{j=2}^3f_{\beta}^0(e_{max})^j,\\
&\|f_{\beta}^{tt}\|_{\infty}\|f_{\beta}^x\|_{\infty}
\left\|\frac{d}{dy}f_{\beta}^x\right\|_{\infty}\\
&\le c e_{max}\sinh(\beta e_{max}) f_{\beta}^0(e_{max})^3+ ce_{max}\sinh^3(\beta
 e_{max})f_{\beta}^0(e_{max})^2.
\end{align*}
By combining these inequalities with
 \eqref{eq_whole_function_difference} we obtain that
\begin{align}
&|f_{\beta}(y_1,y_1,y_1)-f_{\beta}(y_1,y_2,y_3)|\label{eq_whole_function_difference_application}\\
&\le c(e_{max}-e_{min})e_{max}\sinh(\beta e_{max})\notag\\
&\quad \cdot\left(
\sum_{j=3}^4f_{\beta}^0(e_{max})^j+\sinh^2(\beta
 e_{max})\sum_{j=2}^3f_{\beta}^0(e_{max})^j+\sinh^4(\beta
 e_{max})f_{\beta}^0(e_{max})\right).\notag
\end{align}
Let us bound the right-hand side of this inequality by that of
 \eqref{eq_whole_function_lower}. Let us prepare a few more inequalities
 for this purpose. We can use \eqref{eq_critical_beta_bound} to derive
 that 
\begin{align}
&\sinh(\beta e_{max})=\left(\frac{\sinh(\beta e_{max})-\sinh(\beta
 e_{min})}{\sinh(\beta e_{min})}+1\right)\sinh(\beta e_{min})\label{eq_sinh_sinh}\\
&\qquad\qquad\quad\  \le \left(\frac{\beta (e_{max}-e_{min})\cosh(\beta e_{max})}{\sinh(\beta
 e_{min})}+1\right)\sinh(\beta e_{min})\notag\\
&\qquad\qquad\quad\  \le
 \left(\left(\frac{e_{max}}{e_{min}}-1\right)\cosh\left(\frac{2e_{max}}{e_{min}}\right)+1\right)\sinh(\beta
 e_{min}),\notag\\
&f_{\beta}^0(e_{max})\le \cos\left(\frac{\tau(\beta)}{2}\right)+1+(\beta
 e_{max})^2\cosh(\beta e_{max})\label{eq_cos_cosh}\\
&\qquad\qquad\le
 \cos\left(\frac{\tau(\beta)}{2}\right)+1+2\left(\frac{e_{max}}{e_{min}}\right)^2\cosh(\beta
 e_{max})(\cosh(\beta e_{min})-1)\notag\\
&\qquad\qquad \le 2
 \left(\frac{e_{max}}{e_{min}}\right)^2\cosh\left(\frac{2e_{max}}{e_{min}}\right)f_{\beta}^0(e_{min}).\notag
\end{align}
Moreover, by \eqref{eq_critical_beta_bound} and \eqref{eq_cos_cosh}
\begin{align}
\sinh^2(\beta e_{max})&=\cosh^2(\beta e_{max})-1\label{eq_sinh_cos_cosh}\\
&\le (\cosh(\beta e_{max})+1)f_{\beta}^0(e_{max})\notag\\
&\le 2\left(\frac{e_{max}}{e_{min}}
\right)^2\cosh\left(\frac{2e_{max}}{e_{min}}\right)\left(\cosh\left(\frac{2e_{max}}{e_{min}}
\right)+1\right)f_{\beta}^0(e_{min}).\notag
\end{align}
Substitution of \eqref{eq_cos_cosh_upper}, \eqref{eq_sinh_sinh},
 \eqref{eq_cos_cosh}, \eqref{eq_sinh_cos_cosh} into
 \eqref{eq_whole_function_difference_application} yields the following
 inequality.
We especially use \eqref{eq_sinh_sinh} to bound $\sinh(\beta e_{max})$ in
 front of the large parenthesis and \eqref{eq_sinh_cos_cosh} to bound 
$\sinh^2(\beta e_{max})$, $\sinh^4(\beta e_{max})$ inside the large parenthesis.
\begin{align*}
&|f_{\beta}(y_1,y_1,y_1)-f_{\beta}(y_1,y_2,y_3)|\\
&\le
 c\left(\frac{e_{max}}{e_{min}}-1\right)\frac{e_{max}}{e_{min}}\left(\left(\frac{e_{max}}{e_{min}}-1\right)\cosh\left(\frac{2e_{max}}{e_{min}}\right)+1\right)\\
&\quad \cdot
 \Bigg(\left(\frac{e_{max}}{e_{min}}\right)^8\left(\cosh\left(\frac{2e_{max}}{e_{min}}\right)\right)^4\\
&\qquad\quad +\left(\cosh\left(\frac{2e_{max}}{e_{min}}\right)+1\right)
\left(\frac{e_{max}}{e_{min}}\right)^8
\left(\cosh\left(\frac{2e_{max}}{e_{min}}\right)\right)^4\\
&\qquad\quad +\left(\cosh\left(\frac{2e_{max}}{e_{min}}\right)+1\right)^2
\left(\frac{e_{max}}{e_{min}}\right)^6
\left(\cosh\left(\frac{2e_{max}}{e_{min}}\right)\right)^3\Bigg)\\
&\quad\cdot e_{min}^2\sinh(\beta
 e_{min}) f_{\beta}^0(e_{min})^3,\\
&(\forall y_j\in [e_{min},e_{max}]\ (j=1,2,3)).
\end{align*}
We can see that there exists $e_0\in (0,1)$ independent of any parameter
 such that if $e_{min}/e_{max}\ge e_0$, 
\begin{align}
&|f_{\beta}(y_1,y_1,y_1)-f_{\beta}(y_1,y_2,y_3)|
\le \frac{e_{min}^2}{16}\sinh(\beta
 e_{min})f_{\beta}^0(e_{min})^3,\label{eq_whole_function_difference_upper}\\
&(\forall y_j\in [e_{min},e_{max}]\ (j=1,2,3)). \notag
\end{align}
The inequalities \eqref{eq_whole_function_lower},
 \eqref{eq_whole_function_difference_upper} imply
 \eqref{eq_convexity_sufficiency} and thus the claim holds true.
\end{proof}

Proposition \ref{prop_convexity} together with Lemma
\ref{lem_boundary_function} means in particular that under the
assumptions of Proposition \ref{prop_convexity} $\tau(\cdot)$ has one
and only one local minimum point in $(0,\beta_c)$. We will see that this
property does not always hold if $e_{min}/e_{max}$ is small. To describe
the profile of $\tau(\cdot)$ in terms of number of local minimum
points, let us make clear the definition. 

\begin{definition}
Let $f$ be a real-valued function on an open interval $(a,b)$ and $c\in
 (a,b)$. The point $c$ is said to be a local minimum point of $f$ if
 there exists $\eps\in\R_{>0}$ such that $f(c)\le f(x)$ for any $x\in
 (c-\eps,c+\eps)$. 
\end{definition}

Our main goal in this section is to give a necessary and sufficient
condition for $\tau(\cdot)$ to have only one local minimum point for any
choice of $E\in \cE(e_{min},e_{max})$. The next proposition gives a
sufficient condition. 

\begin{proposition}\label{prop_sufficiency}
Assume that $e_{min}/e_{max}>\sqrt{17-12\sqrt{2}}$. Then there exists
 $U_0(b,e_{min}, e_{max})\in (0,\frac{e_{min}}{\sinh(2)b}]$ depending
 only on $b$, $e_{min}$, $e_{max}$ such that for any $U\in [-U_0(b,
 e_{min}, e_{max}), 0)$ and $E\in \cE(e_{min},e_{max})$ $\tau(\cdot)$
 has one and only one local minimum point in $(0,\beta_c)$. 
\end{proposition}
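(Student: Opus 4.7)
By Lemma~\ref{lem_boundary_function}, $\tau\in C^\omega((0,\beta_c))$ satisfies $\tau(\beta)\to 2\pi$ as $\beta\to 0^+$ or $\beta\to\beta_c^-$, with $d\tau/d\beta\to\mp\infty$ at the respective endpoints; in particular $\tau$ attains an interior minimum. My strategy is to show that at every critical point $\beta_0\in(0,\beta_c)$ of $\tau$ one has $d^2\tau/d\beta^2(\beta_0)>0$. Uniqueness of the local minimum then follows by a standard argument: two local minima would sandwich an interior local maximum at which $\tau''\le 0$, contradicting the strict bound.

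At a critical point, the formula \eqref{eq_tau_second_derivative_expansion} together with $\partial g_E/\partial x(\beta_0,\tau(\beta_0),0)=0$ collapses to
\[
\tau''(\beta_0)=-\frac{\partial^2 g_E/\partial x^2}{\partial g_E/\partial t}\bigg|_{(\beta_0,\tau(\beta_0),0)}.
\]
Since $\tau(\beta_0)\in(\pi,2\pi)$ one verifies $\partial g_E/\partial t>0$, exactly as in the proof of Proposition~\ref{prop_phase_transition}, so with $y:=\cos(\tau(\beta_0)/2)\in(-1,0)$ the goal becomes
\[
D_d\int_{\G_\infty^*}\Tr\frac{E\sinh(\beta_0 E)\bigl(y^2-y\cosh(\beta_0 E)-2\bigr)}{\bigl(y+\cosh(\beta_0 E)\bigr)^3}\,d\bk<0,
\]
subject to the critical constraint
\[
\int_{\G_\infty^*}\Tr\frac{y\cosh(\beta_0 E)+1}{\bigl(y+\cosh(\beta_0 E)\bigr)^2}\,d\bk=0.
\]
Writing $\psi_y(c):=(yc+1)/(y+c)^2$ and $\phi_y(c):=(y^2-yc-2)/(y+c)^3$, a direct calculation yields the identity $\phi_y=d\psi_y/dc$. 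Moreover $\psi_y$ vanishes at $c^\ast:=-1/y$ and $\phi_y$ at $c^{\ast\ast}:=(y^2-2)/y=c^\ast+(y^2-1)/y$, with $1<c^\ast<c^{\ast\ast}$, so the critical equation forces the spectrum of $|E(\bk)|$ to straddle $c^\ast$.

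Passing to the pushforward measure $\mu$ on $[e_{min},e_{max}]$ coming from the eigenvalues of $|E(\bk)|$ against $d\bk$ reduces both integrals to one-dimensional moments against $\mu$. By Lemma~\ref{lem_critical_temperature}, the bound $|U|\le U_0$ keeps $\beta_0 e_{max}\le\beta_c e_{max}$ controlled in terms of $(b,e_{min},e_{max})$ alone, so we are in a perturbative regime. The plan is then to prove that, under $\int\psi_y\,d\mu=0$ and the gap equation, the least favorable choice of $\mu$ is the two-point measure concentrated at the endpoints $\{e_{min},e_{max}\}$; this reduction will exploit Appendix~\ref{app_special_function}, together with the boundary asymptotics $\cos(\tau(\beta)/2)+1\sim b|U|\beta/2$ extracted during the proof of Lemma~\ref{lem_boundary_function}\eqref{item_boundary_function_derivative}. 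Substituting this extremal measure into the critical equation eliminates $y$ and reduces the target inequality to an algebraic inequality in the ratio $r=e_{min}/e_{max}$ whose sharp failure locus is precisely $r=3-2\sqrt{2}=\sqrt{17-12\sqrt{2}}$; choosing $U_0$ small enough to absorb the perturbative corrections then yields $\tau''(\beta_0)>0$ at every critical point whenever $r$ exceeds this threshold. The principal obstacle is exactly this extremal-measure reduction together with the algebraic optimization that identifies $3-2\sqrt{2}$ as the sharp constant.
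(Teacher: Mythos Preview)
Your overall reduction is correct and matches the paper's implicit strategy: at a critical point $\beta_0$ of $\tau$ one has $\partial_x g_E(\beta_0,\tau(\beta_0),0)=0$, whence by \eqref{eq_tau_second_derivative_expansion} the sign of $\tau''(\beta_0)$ is that of $-\partial_{xx}g_E(\beta_0,\tau(\beta_0),0)$, and it suffices to make the latter positive. Your observation that $\phi_y=\psi_y'$ is also useful and correct.

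The genuine gap is the step you call the ``extremal-measure reduction''. You assert that the worst case for $\int e\sinh(\beta_0 e)\,\phi_y(\cosh(\beta_0 e))\,d\mu(e)$ under the linear constraint $\int\psi_y(\cosh(\beta_0 e))\,d\mu(e)=0$ (plus the gap equation) is attained at the two-point measure on $\{e_{min},e_{max}\}$, but you give no mechanism for this. A linear objective with two linear constraints on probability measures has extremals supported on up to three points, and there is no a priori reason the support should be pushed to the endpoints; the relevant integrands are neither monotone nor convex in $e$, so no Chebyshev-system or convexity argument is available without further work. Your invocation of Appendix~\ref{app_special_function} is misplaced: that lemma \emph{constructs} particular $E$'s realizing prescribed level-set measures, which is exactly what one needs for the necessity direction (Propositions~\ref{prop_necessity_equality}, \ref{prop_necessity_inequality}); it says nothing about extremality over all $E$.

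The paper closes this gap by a completely different device. Rather than optimize over $\mu$, it proves a pairwise Wronskian-type inequality (Lemma~\ref{lem_sufficiency_lemma}): for $u(x,y,z)=\sinh(xz)/((y+\cosh(xz))z)$ and any two eigenvalues $e_1>e_2$ in $[e_{min},e_{max}]$, one has
\[
\partial_x u(x_0,y,e_1)\,\partial_{xx}u(x_0,y,e_2)-\partial_{xx}u(x_0,y,e_1)\,\partial_x u(x_0,y,e_2)>0
\]
provided $|y+1|$ is small enough in terms of $e_{min}/e_{max}$ alone, and this is where the threshold $\sqrt{17-12\sqrt{2}}$ emerges (via an explicit quartic in $x$ at $y=-1$). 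Summing this against the smallest eigenvalue $e_{bL^d}$ and using $\sum_j\partial_x u(x_0,y,e_j)=0$ immediately gives $\sum_j\partial_{xx}u(x_0,y,e_j)<0$, i.e.\ $F_L''(x_0)<0$ at any critical point of $F_L$, with no extremization over measures needed. The paper then passes to $L\to\infty$ and finishes not by ``$\tau''>0$ at critical points'' but by a slightly more robust contradiction: three equal values of $\tau$ would force $x\mapsto g_E(x,\tau(\beta_1'),0)$ to vanish at three points, violating the single-hump shape. If you want to salvage your approach, the missing idea is precisely this Wronskian comparison; the two-point-endpoint reduction, as stated, is not established and likely not true without it.
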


\begin{remark}
According to the proof of the proposition, $U_0(b,e_{min}, e_{max})$ is
 equal to 
\begin{align*}
\frac{c'
 \frac{e_{min}^2}{e_{max}}\big(\big(\frac{e_{min}}{e_{max}}\big)^2-17+12\sqrt{2}\big)}{\sinh(2)b\cosh^2\big(2c''
 \frac{e_{max}}{e_{min}}\big)\cosh^2\big(c''\frac{e_{max}}{e_{min}}\big)}
\end{align*}
with generic constants $c'\in (0,1]$, $c''\in \R_{>0}$. 
More specifically, $U_0(b,e_{min},e_{max})$ is given by the right-hand side of \eqref{eq_coupling_closeness_to_one}.
\end{remark}

Let us prepare an essential part of the proof of Proposition
\ref{prop_sufficiency} separately in the next lemma. Define the function
$u:\R_{>0}\times [-1,1]\times \R_{>0}\to \R$ by 
\begin{align}
u(x,y,z):=\frac{\sinh(xz)}{(y+\cosh(xz))z}.\label{eq_temporal_function}
\end{align}

\begin{lemma}\label{lem_sufficiency_lemma}
Assume that $\sqrt{17-12\sqrt{2}}<e_{min}/e_{max}<1$. Then there exists
 $c_1\in \R_{>0}$ independent of any parameter such that for any
 $(x,y)\in \R_{>0}\times (-1,0)$ satisfying 
\begin{align*}
\frac{|y+1|}{1-|y+1|}<c_1
\frac{\frac{e_{min}}{e_{max}}\big(\big(\frac{e_{min}}{e_{max}}\big)^2-17+12\sqrt{2}\big)}{\cosh^2(2x)\cosh^2(x)}
\end{align*}
and $e_1$, $e_2\in \R_{>0}$ satisfying $e_{max}\ge e_1 > e_2\ge
 e_{min}$, 
\begin{align}
&\frac{\partial u}{\partial x}\left(\sqrt{y+1}\cdot\frac{x}{e_1},y,e_1\right)
\frac{\partial^2 u}{\partial x^2}\left(\sqrt{y+1}\cdot\frac{x}{e_1},y,e_2\right)\label{eq_sufficiency_key_inequality}\\
&-\frac{\partial^2 u}{\partial
 x^2}\left(\sqrt{y+1}\cdot\frac{x}{e_1},y,e_1\right)
\frac{\partial u}{\partial
 x}\left(\sqrt{y+1}\cdot\frac{x}{e_1},y,e_2\right)>0.\notag
\end{align}
\end{lemma}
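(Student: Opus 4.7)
My plan is to expand in the small parameter $\eps := \sqrt{y+1}$, treating $x$ as an independent variable. Direct differentiation gives
\begin{align*}
\frac{\partial u}{\partial x}(x,y,z) &= \frac{1+y\cosh(xz)}{(y+\cosh(xz))^{2}},\\
\frac{\partial^{2} u}{\partial x^{2}}(x,y,z) &= \frac{z\sinh(xz)(y^{2}-2-y\cosh(xz))}{(y+\cosh(xz))^{3}}.
\end{align*}
With $r := e_{2}/e_{1} \in [e_{min}/e_{max},1)$, the hyperbolic arguments entering \eqref{eq_sufficiency_key_inequality} are $a_{1}=\eps x$ and $a_{2}=r\eps x$. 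Substituting $y=-1+\eps^{2}$ and setting $\alpha_{j}:=\cosh(a_{j})-1$, each of the four factors in \eqref{eq_sufficiency_key_inequality} becomes a rational expression in $\eps^{2}$, $\alpha_{j}$, and $\sinh(a_{j})$.

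Next I would identify the leading order as $\eps \to 0$ with $x$ fixed. One has $\partial_{x} u = O(\eps^{-2})$ and $\partial_{x}^{2} u = O(\eps^{-3})$, so the left-hand side of \eqref{eq_sufficiency_key_inequality} is $O(\eps^{-5})$. Using $\alpha_{j}=(\eps x r_{j})^{2}/2 + O(\eps^{4})$ and $\sinh(a_{j}) = \eps x r_{j} + O(\eps^{3})$ with $r_{1}=1$, $r_{2}=r$, a direct computation shows that the leading coefficient reduces after all cancellations to
\begin{equation*}
\frac{x\,e_{1}\,(1-r^{2})\left(3 r^{2} u^{2} - (r^{2}+1) u + 3\right)}{\eps^{5}(1+u)^{3}(1+r^{2} u)^{3}}, \qquad u := x^{2}/2.
\end{equation*}
The critical algebraic point is that the quadratic $q(u) := 3 r^{2} u^{2} - (r^{2}+1) u + 3$ has discriminant $(r^{2}+1)^{2} - 36 r^{2} = r^{4} - 34 r^{2} + 1$, which is strictly negative precisely when $r^{2} \in (17 - 12\sqrt{2},\, 17 + 12\sqrt{2})$. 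Since $r \ge e_{min}/e_{max} > \sqrt{17-12\sqrt{2}} = 3 - 2\sqrt{2}$ (and trivially $r<1<\sqrt{17+12\sqrt{2}}$), $q$ is strictly positive on all of $\R$ with minimum $((17+12\sqrt{2}) - r^{2})(r^{2} - (17-12\sqrt{2}))/(12 r^{2})$, which is bounded below by a universal positive constant times $(e_{min}/e_{max})((e_{min}/e_{max})^{2} - 17 + 12\sqrt{2})$, matching exactly the numerator in the hypothesis.

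For the error analysis, I would apply the sharp Taylor remainders $|\cosh a - 1 - a^{2}/2| \le (a^{4}/24)\cosh a$ and $|\sinh a - a| \le (a^{3}/6)\cosh a$ at $a = a_{j}$. After clearing denominators, every subleading contribution to \eqref{eq_sufficiency_key_inequality} carries at least one extra factor of $\eps^{2}$ compared to the leading term, multiplied by a polynomial in $x$ and a product of at most four hyperbolic evaluations at $a_{j}$; using the duplication identity $\cosh^{2}(a) = (1+\cosh(2a))/2$ to merge squared factors, the worst case accumulates at most $\cosh^{2}(2a_{j})\cosh^{2}(a_{j})$. Because the hypothesis forces $\eps \le 1$, $\cosh(a_{j}) \le \cosh(\eps x) \le \cosh(x)$ and $\cosh(2a_{j}) \le \cosh(2x)$, so the total error is bounded by a universal constant times $\eps^{2}\cosh^{2}(2x)\cosh^{2}(x)$ multiplied by the leading factor. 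Choosing $c_{1}$ small enough against the fixed numerical constants coming from the remainder bounds, the hypothesis (which amounts to $\eps^{2}/(1-\eps^{2})$ small relative to the explicit quotient) forces the subleading terms to be dominated by the positive leading term, yielding strict positivity of \eqref{eq_sufficiency_key_inequality}. The main obstacle is the careful bookkeeping required to produce an error bound of exactly the form $\cosh^{2}(2x)\cosh^{2}(x)$ with an explicit universal constant, rather than a weaker exponential in $x$; this will involve grouping terms in the cross-product according to how many pairs of $\cosh$-factors can combine after application of the duplication identity.
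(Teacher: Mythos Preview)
Your leading-order computation is correct and coincides with the paper's: after clearing denominators the main term is proportional to $(1-r^{2})\,q(u)$ with $r=e_{2}/e_{1}$, $u=x^{2}/2$, and $q(u)=3r^{2}u^{2}-(r^{2}+1)u+3$, whose positivity is governed by the discriminant $r^{4}-34r^{2}+1$. The paper packages exactly this expression into an auxiliary function $v(x,y,z)$ (entire in $(x,y,z)$) with $(\text{LHS}) = (\text{positive prefactor})\cdot v(x,y,e_{2}/e_{1})$, and computes $v(x,-1,z)=(1-z^{2})\,q(x^{2}/2)$.

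The gap is in your error analysis. The inequality must hold \emph{uniformly} for all $e_{1}>e_{2}$ in $[e_{min},e_{max}]$, so in particular for $e_{2}$ arbitrarily close to $e_{1}$. Your leading term carries the factor $(1-r^{2})$, which then tends to $0$; hence the error term must carry the same factor, or the comparison fails. If you split the remainder by inserting Taylor tails for $\cosh a_{j}$ and $\sinh a_{j}$ one at a time, the \emph{individual} pieces do not vanish at $r=1$ (only their sum does, since the full expression is identically zero when $e_{1}=e_{2}$), so a term-by-term bound of the form ``$\eps^{2}\cosh^{2}(2x)\cosh^{2}(x)$ times a universal constant'' cannot be played against $(1-r^{2})q(u)$. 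Your sentence ``bounded by \ldots\ multiplied by the leading factor'' either asserts this $(1-r)$-proportionality without justification, or drops the $(1-r^{2})$ and then the conclusion does not follow.

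The paper resolves this with one extra idea: since $v(x,y,1)=0$ identically, one Taylor expands $v$ simultaneously in $(y+1)$ and in $(z-1)$ about $(-1,1)$ via Cauchy's formula on the circles $|y+1|=1$, $|z-1|=1$. This produces
\[
\bigl|v(x,y,r)-v(x,-1,r)\bigr|\le c\,\cosh^{2}(2x)\cosh^{2}(x)\,\frac{e_{max}}{e_{min}}\,(1-r)\,\frac{|y+1|}{1-|y+1|},
\]
the factor $(1-r)$ coming from the forced vanishing at $z=1$ and the $e_{max}/e_{min}$ from summing $\sum_{n\ge1}|r-1|^{n}=(1-r)/r$. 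Matching this against $v(x,-1,r)\ge(1-r)\bigl((e_{min}/e_{max})^{2}-17+12\sqrt{2}\bigr)$ cancels the $(1-r)$ and yields precisely the hypothesis with the stated numerator $\frac{e_{min}}{e_{max}}\bigl((e_{min}/e_{max})^{2}-17+12\sqrt{2}\bigr)$. Your lower bound on $\min q$ alone does not produce the extra $e_{min}/e_{max}$; that factor enters through the error estimate, not the main term. To repair your argument along direct-remainder lines you would have to reorganize the error as a difference that is manifestly divisible by $(1-r)$ before estimating, which is essentially what the paper's double expansion accomplishes.
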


\begin{proof}
Define the function $v:\R_{>0}\times (-1,0)\times \R_{>0}\to \R$ by 
\begin{align*}
v(x,y,z)
&:=\frac{1}{x(y+1)^{\frac{7}{2}}}\Bigg(
 z\sinh(\sqrt{y+1}\cdot xz)\big(y^2-\cosh(\sqrt{y+1}\cdot xz)y
 -2\big)\\
&\qquad\qquad\qquad\quad\cdot \big(\cosh(\sqrt{y+1}\cdot x)y+1\big)\big(\cosh(\sqrt{y+1}\cdot x)+y\big)\\
&\qquad\qquad\qquad\quad-\sinh(\sqrt{y+1}\cdot x)\big(y^2-\cosh(\sqrt{y+1}\cdot x)y
 -2\big)\\
&\qquad\qquad\qquad\qquad\cdot\big(\cosh(\sqrt{y+1}\cdot xz)y+1\big)\big(\cosh(\sqrt{y+1}\cdot xz)+y\big)
\Bigg).
\end{align*}
Let us observe that for any $(x,y)\in \R_{>0}\times (-1,0)$
\begin{align}
&(\text{L.H.S of
 }\eqref{eq_sufficiency_key_inequality})=\frac{e_1x(y+1)^{\frac{7}{2}}}{\prod_{j=1}^2\big(\cosh\big(\sqrt{y+1}\cdot
 x\frac{e_j}{e_1}\big)+y\big)^3}\cdot v\left(x,y,\frac{e_2}{e_1}\right).\label{eq_key_inequality_transform}
\end{align}
We can also derive that 
\begin{align}
&v(x,y,z)\label{eq_analytic_continuation_entire}\\
&=z\left(z+\sum_{n=1}^{\infty}\frac{1}{(2n+1)!}(y+1)^nz^{2n+1}x^{2n}\right)
\left(y-2-y\sum_{n=1}^{\infty}\frac{1}{(2n)!}(y+1)^{n-1}z^{2n}x^{2n}\right)\notag\\
&\qquad\cdot\left(1+y\sum_{n=1}^{\infty}\frac{1}{(2n)!}(y+1)^{n-1}x^{2n}\right)
\left(1+\sum_{n=1}^{\infty}\frac{1}{(2n)!}(y+1)^{n-1}x^{2n}\right)\notag\\
&\quad -\left(1+\sum_{n=1}^{\infty}\frac{1}{(2n+1)!}(y+1)^nx^{2n}\right)
\left(y-2-y\sum_{n=1}^{\infty}\frac{1}{(2n)!}(y+1)^{n-1}x^{2n}\right)\notag\\
&\qquad\quad \cdot \left(1+y\sum_{n=1}^{\infty}\frac{1}{(2n)!}(y+1)^{n-1}z^{2n}x^{2n}\right)
\left(1+\sum_{n=1}^{\infty}\frac{1}{(2n)!}(y+1)^{n-1}z^{2n}x^{2n}\right).\notag
\end{align}
This expansion implies that the function $v(\cdot,\cdot,\cdot)$ can be
 analytically continued into $\C^3$. By abusing notation we let
 $v(x,y,z)$ denote the entire function defined by the right-hand side of
 \eqref{eq_analytic_continuation_entire} as well. It follows from the
 assumption that for any $x\in \R$
\begin{align}
&v\left(x,-1,\frac{e_2}{e_1}\right)\label{eq_key_core_lower}\\
&=3\left(\frac{e_2}{e_1}\right)^2\Bigg(1-\left(\frac{e_2}{e_1}\right)^2\Bigg)
\Bigg(
\Bigg(\frac{x^2}{2}-\frac{1+\big(\frac{e_2}{e_1}\big)^2}{6\big(\frac{e_2}{e_1}\big)^2}\Bigg)^2+\frac{-\big(\frac{e_2}{e_1}\big)^4+34\big(\frac{e_2}{e_1}\big)^2-1}{36\big(\frac{e_2}{e_1}\big)^4}
\Bigg)\notag\\
&\ge
 \frac{1-\big(\frac{e_2}{e_1}\big)^2}{12\big(\frac{e_2}{e_1}\big)^2}
\Bigg(17+12\sqrt{2}-\left(\frac{e_2}{e_1}\right)^2\Bigg)
\Bigg(\left(\frac{e_2}{e_1}\right)^2-17+12\sqrt{2}\Bigg)\notag\\
&\ge
 \left(1-\frac{e_2}{e_1}\right)\Bigg(\left(\frac{e_{min}}{e_{max}}\right)^2-17+12\sqrt{2}\Bigg)>0.\notag
\end{align}
Also, the Taylor expansion and the Cauchy formula yield that for any
 $x\in\C$, $y\in (-1,0)$
\begin{align*}
v\left(x,y,\frac{e_2}{e_1}\right)
&=v\left(x,-1,\frac{e_2}{e_1}\right)+\sum_{m=1}^{\infty}\frac{1}{2\pi
 i}\oint_{|\zeta+1|=1}d\zeta\frac{v\big(x,\zeta,\frac{e_2}{e_1}\big)}{(\zeta+1)^{m+1}}(y+1)^m\\
&=v\left(x,-1,\frac{e_2}{e_1}\right)\\
&\quad +\sum_{m=1}^{\infty}\sum_{n=1}^{\infty}
\frac{1}{(2\pi
 i)^2}\oint_{|\zeta+1|=1}d\zeta\oint_{|\xi-1|=1}d\xi 
\frac{v(x,\zeta,\xi)}{(\zeta+1)^{m+1}(\xi-1)^{n+1}}\\
&\qquad\qquad\quad \cdot (y+1)^m\left(\frac{e_2}{e_1}-1\right)^n.
\end{align*}
In the second equality we used the fact that $v(x,y,1)=0$ for any $x$,
 $y\in \C$. Moreover, by considering \eqref{eq_analytic_continuation_entire}
we can see that for any $x\in \R_{>0}$,
 $y\in (-1,0)$
\begin{align*}
\Bigg|v\left(x,y,\frac{e_2}{e_1}\right)-
 v\left(x,-1,\frac{e_2}{e_1}\right)\Bigg|&\le
 c\cosh^2(2x)\cosh^2(x)\sum_{m=1}^{\infty}|y+1|^m\sum_{n=1}^{\infty}\left|\frac{e_2}{e_1}-1\right|^n\\
&\le c
 \cosh^2(2x)\cosh^2(x)\frac{e_{max}}{e_{min}}\left|\frac{e_2}{e_1}-1\right|\frac{|y+1|}{1-|y+1|},
\end{align*}
which combined with \eqref{eq_key_core_lower} implies that for any $x\in
 \R_{>0}$, $y\in (-1,0)$
\begin{align}
&v\left(x,y,\frac{e_2}{e_1}\right)\label{eq_key_core_final_lower}\\
&\ge \left(1-\frac{e_2}{e_1}\right)\Bigg(
\left(\frac{e_{min}}{e_{max}}\right)^2-17+12\sqrt{2}-c\frac{e_{max}}{e_{min}}\cosh^2(2x)\cosh^2(x)\frac{|y+1|}{1-|y+1|}
\Bigg).\notag
\end{align}
We can deduce the claim from \eqref{eq_key_inequality_transform},
 \eqref{eq_key_core_final_lower}. 
\end{proof}

In the following we let $\cosh^{-1}$ $(:\R_{\ge 1}\to \R_{\ge 0})$
denote the inverse function of $\cosh|_{\R_{\ge 0}}:\R_{\ge 0}\to
\R_{\ge 1}$. 
\begin{proof}[Proof of Proposition \ref{prop_sufficiency}]
Let us fix $L\in \N$ and $y\in (-1,-1/2]$. Define the function
 $F_L:\R\to \R$ by 
\begin{align*}
F_L(x):=\frac{1}{L^d}\sum_{\bk\in\G^*}\Tr\left(\frac{\sinh(x
 E(\bk))}{(y+\cosh(x E(\bk)))E(\bk)}\right).
\end{align*}
There are $e_j\in [e_{min},e_{max}]$ $(j=1,2,\cdots,bL^d)$ such that 
$e_{max}\ge e_1\ge e_2\ge \cdots \ge e_{bL^d}\ge e_{min}$ and 
$$
F_L(x)=\frac{1}{L^d}\sum_{j=1}^{bL^d}u(x,y,e_j),
$$
where $u(\cdot)$ is the function defined in
 \eqref{eq_temporal_function}. Let us prove that
\begin{align}
&\exists x_0\in
 \left[\frac{1}{e_{max}}\cosh^{-1}(|y|^{-1}),\frac{1}{e_{min}}\cosh^{-1}(|y|^{-1})\right]\label{eq_finite_volume_claim}\\
&\quad \text{ s.t. }\begin{array}{l} \frac{d}{dx}F_L(x)>0,\quad (\forall x\in (0,x_0)),\\
                               \frac{d}{dx}F_L(x_0)=0,\\
                               \frac{d}{dx}F_L(x)<0,\quad (\forall x\in
				(x_0,\infty)). 
\end{array}
\notag
\end{align}
We can check by calculation that for any $z\in \R_{>0}$
\begin{align}
&\frac{\partial u}{\partial x}(x,y,z)>0,\quad \left(\forall x\in
 \left(0,\frac{1}{z}\cosh^{-1}(|y|^{-1})\right)\right),\label{eq_temporal_function_derivative}\\
&\frac{\partial u}{\partial
 x}\left(\frac{1}{z}\cosh^{-1}(|y|^{-1}),y,z\right)=0,\notag\\
&\frac{\partial u}{\partial x}(x,y,z)<0,\quad \left(\forall x\in
 \left(\frac{1}{z}\cosh^{-1}(|y|^{-1}),\infty\right)\right),\notag\\
&\frac{\partial^2 u}{\partial
 x^2}\left(\frac{1}{z}\cosh^{-1}(|y|^{-1}),y,z\right)<0.\notag
\end{align}
Thus, if $e_1=e_{bL^d}$, the claim \eqref{eq_finite_volume_claim} holds
 with $x_0=\frac{1}{e_1}\cosh^{-1}(|y|^{-1})$. Let us assume that
 $e_1>e_{bL^d}$. This obviously implies that $e_{max}>e_{min}$. We can
 deduce from \eqref{eq_temporal_function_derivative} that
\begin{align*}
&\frac{d}{dx}F_L(x)>0,\quad \left(\forall x\in
 \left(0,\frac{1}{e_1}\cosh^{-1}(|y|^{-1})\right]\right),\\
&\frac{d}{dx}F_L(x)<0,\quad \left(\forall x\in
 \left[\frac{1}{e_{bL^d}}\cosh^{-1}(|y|^{-1}),\infty\right)\right).
\end{align*}
Thus there exists $x_0\in
 (\frac{1}{e_1}\cosh^{-1}(|y|^{-1}),\frac{1}{e_{bL^d}}\cosh^{-1}(|y|^{-1}))$
 such that $\frac{d}{dx}F_L(x_0)=0$. Set
\begin{align}
c_{max}:=\sup_{y\in
 (-1,-\frac{1}{2}]}\frac{\cosh^{-1}(|y|^{-1})}{\sqrt{y+1}}.\label{eq_definition_c_max}
\end{align}
By using the equality
\begin{align}
\cosh^{-1}(|y|^{-1})=\log(|y|^{-1}+\sqrt{|y|^{-2}-1}),\label{eq_arc_cosh}
\end{align}
one can confirm that $0<c_{max}<\infty$. It follows that 
\begin{align*}
\left|\frac{e_j}{\sqrt{y+1}}x_0\right|\le
 c_{max}\frac{e_{max}}{e_{min}},\quad (\forall j\in \{1,\cdots, bL^d\}).
\end{align*}
Then we can apply Lemma \ref{lem_sufficiency_lemma} to conclude that if
\begin{align}
&|y+1|<\frac{c_1}{2}\cdot
 \frac{\frac{e_{min}}{e_{max}}\big(\big(\frac{e_{min}}{e_{max}}\big)^2-17+12\sqrt{2}\big)}{\cosh^2\big(2c_{max}\frac{e_{max}}{e_{min}}\big)\cosh^2\big(c_{max}\frac{e_{max}}{e_{min}}\big)}\label{eq_closeness_to_minus_one}
\end{align}
and $e_j>e_{bL^d}$, 
\begin{align*}
\frac{\partial u}{\partial x}(x_0,y,e_j)\frac{\partial^2 u}{\partial
 x^2}(x_0,y,e_{bL^d})- \frac{\partial^2 u}{\partial
 x^2}(x_0,y,e_j)\frac{\partial u}{\partial x}(x_0,y,e_{bL^d})>0.
\end{align*}
Since $e_1>e_{bL^d}$, this implies that 
\begin{align*}
&\frac{\partial u}{\partial
 x}(x_0,y,e_{bL^d})\frac{d^2}{dx^2}F_L(x_0)=\frac{1}{L^d}\sum_{j=1}^{bL^d}\frac{\partial^2
 u}{\partial x^2}(x_0,y,e_j)\frac{\partial u}{\partial
 x}(x_0,y,e_{bL^d})\\
&<\frac{1}{L^d}\sum_{j=1}^{bL^d}\frac{\partial u}{\partial
 x}(x_0,y,e_j)\frac{\partial^2 u}{\partial
 x^2}(x_0,y,e_{bL^d})=\frac{d}{dx}F_L(x_0)\frac{\partial^2 u}{\partial
 x^2}(x_0,y,e_{bL^d})=0.
\end{align*}
Since $x_0 \in (0,\frac{1}{e_{bL^d}}\cosh^{-1}(|y|^{-1}))$, $\frac{\partial u}{\partial
 x}(x_0,y,e_{bL^d})>0$ by \eqref{eq_temporal_function_derivative}. Thus
 we obtain that $\frac{d^2}{dx^2}F_L(x_0)<0$. It follows from the above
 argument that if $e_1>e_{bL^d}$ and 
\eqref{eq_closeness_to_minus_one} holds, the claim
 \eqref{eq_finite_volume_claim} holds. 
This can be confirmed as follows. Suppose that $x_1,x_2\in
 [\frac{1}{e_{max}}\cosh^{-1}(|y|^{-1}),\frac{1}{e_{min}}\cosh^{-1}(|y|^{-1})]$,
 $x_1<x_2$ and $\frac{d}{dx}F_L(x_j)=0$ for $j=1,2$. Since the function
 $\frac{d}{dx}F_L(\cdot)$ is non-constant and real analytic in
 $\R_{>0}$, 
$$
\sharp \left\{x\in [x_1,x_2]\ \big|\ \frac{d}{dx}F_L(x)=0\right\}<\infty.
$$
Thus, there exists $x_3\in (x_1,x_2]$ such that $\frac{d}{dx}F_L(x_3)=0$
 and $\frac{d}{dx}F_L(x)\neq 0$ for any $x\in (x_1,x_3)$. Since
 $\frac{d^2}{dx^2}F_L(x_j)<0$ for $j=1,3$, there exists $x_4\in
 (x_1,x_3)$ such that $\frac{d}{dx}F_L(x_4)=0$, which is a
 contradiction. 
Now we can conclude that 
under the assumption
 \eqref{eq_closeness_to_minus_one} the claim
 \eqref{eq_finite_volume_claim} holds.

Define the function $F_{\infty}:\R\times (-1,0)\to \R$ by
\begin{align}
F_{\infty}(x,y):=D_d\int_{\G_{\infty}^*}d\bk \Tr\left(\frac{\sinh(x
 E(\bk))}{(y+\cosh(x
 E(\bk)))E(\bk)}\right).\label{eq_infinite_volume_trace}
\end{align}
Since $E\in C^{\infty}(\R^d,\Mat(b,\C))$, for any $y\in
 (-1,-\frac{1}{2}]$ $\frac{d}{dx}F_L(\cdot)$ converges to
 $\frac{\partial F_{\infty}}{\partial x}(\cdot,y)$ locally uniformly as
 $L\to \infty$. Therefore if $y\in (-1,-\frac{1}{2}]$ satisfies
 \eqref{eq_closeness_to_minus_one}, there exists $\hat{x}\in
 [\frac{1}{e_{max}}\cosh^{-1}(|y|^{-1}),\frac{1}{e_{min}}\cosh^{-1}(|y|^{-1})]$
 such that 
\begin{align}
&\frac{\partial F_{\infty}}{\partial x}(x,y)\ge 0,\quad (\forall x\in
 (0,\hat{x})),\label{eq_infinite_volume_claim}\\
&\frac{\partial F_{\infty}}{\partial x}(\hat{x},y)= 0,\notag\\
&\frac{\partial F_{\infty}}{\partial x}(x,y)\le 0,\quad (\forall x\in
 (\hat{x},\infty)).\notag
\end{align}
Let us recall that the assumption \eqref{eq_tighter_coupling} implies
 \eqref{eq_boundary_gap_equation_direct} and \eqref{eq_cosign_lower}. 
If we assume that 
\begin{align}
&|U|\le 
 \frac{\min\left\{1,\frac{c_1}{2}\right\}\frac{e_{min}^2}{e_{max}}\big(\big(\frac{e_{min}}{e_{max}}\big)^2-17+12\sqrt{2}\big)}{\sinh(2)b\cosh^2\big(2c_{max}\frac{e_{max}}{e_{min}}\big)\cosh^2\big(c_{max}\frac{e_{max}}{e_{min}}\big)},\label{eq_coupling_closeness_to_one}
\end{align}
\eqref{eq_tighter_coupling} holds. Thus, by \eqref{eq_cosign_lower}
$\cos(\tau(\beta)/2)\in (-1,-1/2]$ for all $\beta\in
 (0,\beta_c)$. Moreover, \eqref{eq_boundary_gap_equation_direct} and
 \eqref{eq_coupling_closeness_to_one} again ensure that 
\eqref{eq_closeness_to_minus_one} holds with $y=\cos(\tau(\beta)/2)$
 for any $\beta\in (0,\beta_c)$. Let us note that the right-hand side of
 \eqref{eq_coupling_closeness_to_one} does not depend on $E$ $(\in
 \cE(e_{min},e_{max}))$.
These properties combined with
 \eqref{eq_infinite_volume_claim} imply that on the assumption
 \eqref{eq_coupling_closeness_to_one} for any $E\in
 \cE(e_{min},e_{max})$, 
$\beta\in (0,\beta_c)$
 there exists $\tilde{x}\in\R_{>0}$ such that 
\begin{align}
&\frac{\partial g_E}{\partial x}(x,\tau(\beta),0)\ge 0,\quad (\forall
 x\in (0,\tilde{x})),\label{eq_infinite_volume_application}\\
&\frac{\partial g_E}{\partial x}(\tilde{x},\tau(\beta),0)= 0,\notag\\
&\frac{\partial g_E}{\partial x}(x,\tau(\beta),0)\le 0,\quad (\forall
 x\in (\tilde{x},\infty)).\notag
\end{align}

Finally let us prove that $\tau(\cdot)$ has one and only one local
 minimum point in $(0,\beta_c)$. Suppose that
 $0<\beta_1<\beta_2<\beta_c$ and $\beta_1$, $\beta_2$ are local minimum
 points. If $\tau(\beta_1)\le \tau(\beta_2)$, there exist $\beta_1'$,
 $\beta_2'$, $\beta_3'\in (0,\beta_2]$ such that
 $\beta_1'<\beta_2'<\beta_3'$ and
 $\tau(\beta_1')=\tau(\beta_2')=\tau(\beta_3')$. If
 $\tau(\beta_1)>\tau(\beta_2)$, we can take such $\beta_1'$,
 $\beta_2'$, $\beta_3'$ from $[\beta_1,\beta_c)$. It follows that
 $g_E(\beta_j',\tau(\beta_1'),0)=0$ for all $j\in \{1,2,3\}$. By
 \eqref{eq_infinite_volume_application} there exists $\tilde{x}\in
 \R_{>0}$ such that 
\begin{align*}
&\frac{\partial g_E}{\partial x}(x,\tau(\beta_1'),0)\ge 0,\quad (\forall
 x\in (0,\tilde{x})),\\ 
&\frac{\partial g_E}{\partial x}(\tilde{x},\tau(\beta_1'), 0)=0,\\
&\frac{\partial g_E}{\partial x}(x,\tau(\beta_1'),0)\le 0,\quad (\forall
 x\in (\tilde{x},\infty)).
\end{align*}
If $\tilde{x}\in (0,\beta_2']$, the function $x\mapsto
 g_E(x,\tau(\beta_1'),0)$ must be identically zero in
 $[\beta_2',\beta_3']$. Since this function is real analytic in
 $\R_{>0}$,
the identity theorem 
ensures that this function is identically zero in $\R_{>0}$,
 which is a contradiction. If $\tilde{x}\in (\beta_2',\infty)$, this
 function must be identically zero in $[\beta_1',\beta_2']$, which also
 leads to a contradiction. Therefore, if $\tau(\cdot)$ has a local
 minimum point in $(0,\beta_c)$, it must be unique. Let us define the
 function $\hat{\tau}(\cdot):[0,\beta_c]\to\R$ as
 follows. $\hat{\tau}(x):=2\pi$ for $x\in \{0,\beta_c\}$,
 $\hat{\tau}(x):=\tau(x)$ for $x\in (0,\beta_c)$. By Lemma
 \ref{lem_boundary_function}, $\hat{\tau}\in C([0,\beta_c])$ and
 $\hat{\tau}(x)\le \hat{\tau}(0)=\hat{\tau}(\beta_c)$ for any $x\in
 [0,\beta_c]$. Thus $\hat{\tau}(\cdot)$ attains its global minimum in
 $(0,\beta_c)$, which implies that $\tau(\cdot)$ has a local minimum
 point in $(0,\beta_c)$. The proof is complete. 
\end{proof}

Next we will prove that the conclusion of Proposition \ref{prop_sufficiency}
does not hold if $e_{min}/e_{max}\le \sqrt{17-12\sqrt{2}}$. We divide
the problem into two cases, $e_{min}/e_{max}= \sqrt{17-12\sqrt{2}}$ or
$e_{min}/e_{max}<\sqrt{17-12\sqrt{2}}$. The following proposition states
the result for the case that the equality holds.

\begin{proposition}\label{prop_necessity_equality}
Assume that $e_{min}/e_{max}= \sqrt{17-12\sqrt{2}}$. Then for any $d$,
 $b\in \N$, basis $(\hbv_j)_{j=1}^d$ of $\R^d$, $U_0\in (0,2e_{min}/b)$
 there exist $U\in [-U_0,0)$ and $E\in\cE(e_{min}$, $e_{max})$ such that 
$\tau(\cdot)$ has more than one local minimum points in $(0,\beta_c)$. 
\end{proposition}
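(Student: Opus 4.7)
The plan is to construct $E \in \cE(e_{min}, e_{max})$ whose pushforward spectral density concentrates at the two values $e_{min}$ and $e_{max}$ with a weight ratio just below a critical value $r_0^{*}$, and then to take $|U|$ small enough that a leading-order bifurcation analysis of the critical points of $\tau(\cdot)$ produces three critical points, forcing at least two local minima.

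For any $b \in \N$, I would take $E(\bk) = \mathrm{diag}(e_1(\bk), \dots, e_b(\bk))$ with smooth, $(2\pi\hbv_j)_j$-periodic, even scalar functions $e_j : \R^d \to \R$, each valued in $[e_{min}, e_{max}]$, each equal to $e_{min}$ on a prescribed measurable set of measure $\mu_{j,1}$ and to $e_{max}$ on a set of measure $\mu_{j,2}$, joined by smooth transition strips of arbitrarily small measure. Then $E \in \cE(e_{min}, e_{max})$ and the effective spectral weights in the gap equation are $w_a \propto \sum_j \mu_{j,a}$ for $a = 1, 2$, which can be tuned to any prescribed positive ratio. For $|U|$ small the gap equation $g_E(\beta, \tau(\beta), 0) = 0$ forces $y(\beta) := \cos(\tau(\beta)/2) \to -1$ on the relevant $\beta$-range; rescaling $\beta = \sqrt{2(y+1)}\,\eta/e_{max}$ and using
\[
\frac{\partial u}{\partial x}(\beta, y, e) = \frac{y\cosh(\beta e) + 1}{(y + \cosh(\beta e))^2},
\]
the critical point equation $\partial_x g_E(\beta, \tau(\beta), 0) = 0$ reduces to leading order to
\[
P(\eta^2) := w_1 \frac{1 - \eta^2 z^2}{(1 + \eta^2 z^2)^2} + w_2 \frac{1 - \eta^2}{(1 + \eta^2)^2} = 0, \quad z := e_{min}/e_{max}.
\]

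At $z^2 = 17 - 12\sqrt{2}$, imposing $P(\eta^2) = P'(\eta^2) = 0$ simultaneously reduces, after cross-multiplication and factoring out the non-vanishing $z^2 - 1$, to the quadratic $3 z^2 a^2 - (z^2 + 1) a + 3 = 0$ in $a = \eta^2$, whose discriminant $z^4 - 34 z^2 + 1$ vanishes precisely at this $z$. There is thus a unique positive double root $\eta_0^2 = (z^2 + 1)/(6 z^2)$, and a Vieta computation on the cubic $(1+a)^2 (1+a z^2)^2 P(a)$ supplies a second simple root $\eta_1^2 \neq \eta_0^2$, provided $w_1/w_2 = r_0^{*} := -\phi(\eta_0^2)/\phi(\eta_0^2 z^2)$ with $\phi(t) := (1-t)/(1+t)^2$. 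Since $\partial_{w_1} P(\eta_0^2) = \phi(\eta_0^2 z^2) > 0$ and $\partial_{\eta^2}^2 P(\eta_0^2) > 0$, decreasing $w_1/w_2$ slightly below $r_0^{*}$ splits the double root at $\eta_0^2$ into two simple real roots, while the root at $\eta_1^2$ persists, yielding three distinct positive zeros of $P$. For $|U| \in [-U_0, 0)$ with $|U|$ sufficiently small, the exact equation inherits three distinct zeros in $(0, \beta_c)$ by the implicit function theorem applied in the rescaled variables, and Lemma \ref{lem_boundary_function} (the slopes $\mp\infty$ of $\tau$ at the endpoints) forces the three critical points of $\tau$ to alternate as minimum--maximum--minimum, giving two local minima in $(0, \beta_c)$.

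The principal obstacle is the quantitative error control of the last step: one must verify that the higher-order terms in the Taylor expansion $\cosh(\beta e) = 1 + (\beta e)^2/2 + O((\beta e)^4)$, evaluated against the scaling $\beta \sim \sqrt{|U|}$, induce only an $o(1)$ perturbation of $P$ on the relevant $\eta^2$-range as $|U| \to 0$, uniformly in the transition measures used in the construction of $E$. This amounts to a uniform implicit function theorem on the rescaled variables $(\eta^2, y+1)$, exploiting the quadratic normal form $P(\eta^2) \approx \tfrac{1}{2} P''(\eta_0^2)(\eta^2 - \eta_0^2)^2 + c\,(w_1/w_2 - r_0^{*})$ at the degeneracy; the smooth approximation of step functions in the construction of $E$ is routine by comparison.
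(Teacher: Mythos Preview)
Your bifurcation picture is wrong at the critical ratio. At $z^2 = 17 - 12\sqrt{2}$, the cubic $Q(a) := (1+a)^2(1+az^2)^2 P(a) = w_1(1-az^2)(1+a)^2 + w_2(1-a)(1+az^2)^2$ does not have a double root at $a_0 := (z^2+1)/(6z^2) = 3+2\sqrt{2}$ and a separate simple root; it has a \emph{triple} root at $a_0$. Indeed, using $a_0^2 = 1/z^2$ one checks that with $w_1/w_2 = r_0^* = 1/a_0$ the product of the three roots of $Q$ equals $a_0^3$ and their sum equals $3a_0$ (equivalently, $a_0^3 - 5a_0^2 - 5a_0 + 1 = 0$), so all three coincide. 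Hence $P''(\eta_0^2) = 0$, contradicting your claim $\partial_{\eta^2}^2 P(\eta_0^2) > 0$, and there is no persistent root $\eta_1^2$. Equivalently, the reduced level-set function $a \mapsto \tilde w(a,-1,\eta_0) := -\phi(a)/\phi(az^2)$ is strictly monotone on $(1,1/\eta_0)$ with a horizontal inflection at $a_0$; for every positive level $s$ the equation $P(a)=0$ has exactly one real positive root. Perturbing a triple root along a single parameter generically yields one real root and a complex conjugate pair, so decreasing $w_1/w_2$ alone cannot produce three real critical points.

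What the paper does, and what your leading-order limit discards, is to perturb in the \emph{other} direction: keep $y = \cos(\tau/2)$ slightly above $-1$ rather than passing fully to the limit. The key computation (Lemma~\ref{lem_necessity_equality_lemma}) is that $\partial_x\partial_y \tilde w(a_0,-1,\eta_0) = -\dfrac{16(1-\eta_0)^2}{3(1-\eta_0 a_0)^3(a_0+1)^3} < 0$, so the inflection of $\tilde w(\cdot,-1,\eta_0)$ at $a_0$ unfolds, for $y$ close to $-1$, into a local maximum followed by a local minimum of $w(\cdot,y,\eta_0)$. This creates an interval of levels $s$ hit three times by $w(\cdot,y,\eta_0)$, and the paper then constructs $E$ (via Lemma~\ref{lem_special_function}) with spectral weights close enough to the critical ratio that $F_\infty(\cdot,y)$ inherits this three-crossing property, yielding three $\beta$'s with the same $\tau$-value and hence two local minima. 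Your scheme could be repaired by carrying the $O(y+1)$ correction in $P$ and performing a two-parameter cusp unfolding, but that is essentially the paper's argument; the one-parameter fold analysis you propose is insufficient at this borderline ratio.
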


\begin{remark}
We should stress that in our proof we construct such $E(\in
 \cE(e_{min},$ $e_{max}))$ depending on $U_0$. On the contrary, we will
 construct  $E(\in
 \cE(e_{min},e_{max}))$ independently of the magnitude of the coupling
 constant when we deal with the case
 $e_{min}/e_{max}<\sqrt{17-12\sqrt{2}}$ in Proposition
 \ref{prop_necessity_inequality}.
\end{remark}

Let us show a lemma which we need to prove the above proposition. Set
\begin{align}
D:=\left\{(x,y,z)\in\R_{>0}\times(-1,0)\times\R_{>0}\ \Big|\ x
 <\frac{1}{2z(y+1)}
(\cosh^{-1}(|y|^{-1}))^2\right\}.
\label{eq_domain_D}
\end{align}
Define the function $w:D\to \R$ by
\begin{align}
w(x,y,z):=-\frac{(1+y\cosh(\sqrt{y+1}\sqrt{2x}))(y+\cosh(\sqrt{y+1}\sqrt{2zx}))^2}
{(1+y\cosh(\sqrt{y+1}\sqrt{2zx}))(y+\cosh(\sqrt{y+1}\sqrt{2x}))^2}.
\label{eq_internal_function}
\end{align}
The necessary lemma concerns properties of the function $w$. For
$(x,y,z)\in D$ we can rewrite as follows. 
\begin{align}
w(x,y,z)=-\frac{\left(1+y\sum_{m=1}^{\infty}\frac{(y+1)^{m-1}}{(2m)!}2^mx^m\right)\left(
1+\sum_{n=1}^{\infty}\frac{(y+1)^{n-1}}{(2n)!}2^nz^nx^n\right)^2}{
\left(1+y\sum_{m=1}^{\infty}\frac{(y+1)^{m-1}}{(2m)!}2^mz^mx^m\right)\left(
1+\sum_{n=1}^{\infty}\frac{(y+1)^{n-1}}{(2n)!}2^nx^n\right)^2}.
\label{eq_internal_function_continuation}
\end{align}
Define the open set $\tilde{D}$ of $\C^3$ by
\begin{align}
\tilde{D}:=\Bigg\{&(x,y,z)\in\C^3\ \Bigg|\ \label{eq_continued_domain_D}\\
&\left|1+y\sum_{m=1}^{\infty}\frac{(y+1)^{m-1}}{(2m)!}2^mz^mx^m\right|
\left|1+\sum_{n=1}^{\infty}\frac{(y+1)^{n-1}}{(2n)!}2^nx^n\right|^2>0
\Bigg\}.\notag
\end{align}
Then we can define the analytic function $\tilde{w}:\tilde{D}\to \C$ by
the right-hand side of \eqref{eq_internal_function_continuation}. It
follows that $\tilde{w}|_D=w$. It will often be more convenient to deal
with $\tilde{w}$ than $w$ during our construction. Note that for $z\in
\R_{>0}$ and $x\in (0,z^{-1})$, $(x,-1,z)\in \tilde{D}$. We 
will particularly use the following equalities. For $z\in
\R_{>0}$ and $x\in (0,z^{-1})$
\begin{align}
&\tilde{w}(x,-1,z)=\frac{(x-1)(1+zx)^2}{(1-zx)(1+x)^2},\label{eq_extended_function}\\
&\frac{\partial\tilde{w}}{\partial
 x}(x,-1,z)=\frac{3z(1-z)(1+zx)}{(1-zx)^2(1+x)^3}\left(x^2-\frac{z+1}{3z}x+\frac{1}{z}\right),\label{eq_extended_function_one_derivative}\\
&\frac{\partial\tilde{w}}{\partial
 y}(x,-1,z)=
 -\frac{x(1+zx)}{6(1-zx)^2(x+1)^3}\label{eq_extended_function_two_derivative}\\
&\qquad\qquad\qquad\qquad \cdot ((6+3x+x^2)(1-z^2x^2)+z(x^2-1)(6+3zx +
 z^2x^2)).\notag
\end{align}
To shorten subsequent formulas, let us set $a_0:=3+2\sqrt{2}$,
$\eta_0:=17-12\sqrt{2}$.
\begin{lemma}\label{lem_necessity_equality_lemma}
There exists $y_0\in (-1,0)$ such that for any $y\in (-1,y_0]$ 
\begin{align}
&\frac{1}{2(y+1)}(\cosh^{-1}(|y|^{-1}))^2<a_0<\frac{1}{2\eta_0(y+1)}(\cosh^{-1}(|y|^{-1}))^2,\label{eq_internal_location}\\
&0< w(a_0,y,\eta_0)<1.\label{eq_internal_function_positive}
\end{align}
Moreover, there exist
\begin{align*}
&x_1(y)\in
 \left(\frac{1}{2(y+1)}(\cosh^{-1}(|y|^{-1}))^2,a_0\right),\\
&x_2(y)\in \left(a_0,\frac{1}{2\eta_0(y+1)}(\cosh^{-1}(|y|^{-1}))^2\right)
\end{align*}
such that
\begin{align*}
&w(x_1(y),y,\eta_0)=w(a_0,y,\eta_0)=w(x_2(y),y,\eta_0),\\
&w(x,y,\eta_0)>w(a_0,y,\eta_0),\quad (\forall x\in (x_1(y),a_0)),\\
&w(x,y,\eta_0)<w(a_0,y,\eta_0),\quad (\forall x\in (a_0,x_2(y))).
\end{align*}
\end{lemma}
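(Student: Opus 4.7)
The plan is to dispatch the four sub-claims in order: the geometric bound \eqref{eq_internal_location}, the value bound \eqref{eq_internal_function_positive}, and then the simultaneous construction of $x_1(y)$, $x_2(y)$ with the oscillation inequalities. Throughout I work with the analytic extension $\tilde{w}$ on $\tilde{D}$ defined by \eqref{eq_continued_domain_D}, which agrees with $w$ on $D$.

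For \eqref{eq_internal_location}, I would start from the expansion $\cosh^{-1}(|y|^{-1})=\sqrt{2(y+1)/|y|}\,(1+o(1))$ as $y\searrow -1$, which follows from \eqref{eq_arc_cosh}. Hence the left-hand and right-hand sides of \eqref{eq_internal_location} converge to $1$ and $\eta_0^{-1}=17+12\sqrt{2}$ respectively. Since $1<a_0=3+2\sqrt{2}<17+12\sqrt{2}$, strict inequality holds for all $y$ sufficiently close to $-1$. For \eqref{eq_internal_function_positive}, I use \eqref{eq_extended_function}: with the identity $\eta_0 a_0^{2}=1$ (equivalently $\eta_0 a_0=1/a_0=3-2\sqrt{2}$), a direct computation gives $\tilde{w}(a_0,-1,\eta_0)=3-2\sqrt{2}\in(0,1)$, and continuity of the rational function $\tilde{w}$ at $(a_0,-1,\eta_0)\in\tilde{D}$ transfers the bound to a neighborhood.

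The heart of the argument is the oscillation. Using \eqref{eq_extended_function_one_derivative} with $z=\eta_0$ and the two identities $\eta_0^{-1}=a_0^{2}$ and $(\eta_0+1)/(3\eta_0)=2a_0$, the bracketed quadratic collapses to $(x-a_0)^{2}$, so $\partial_x\tilde{w}(x,-1,\eta_0)=A(x)(x-a_0)^{2}$ with $A(a_0)>0$. In particular $\partial_x\tilde{w}(\cdot,-1,\eta_0)$ has a double zero at $a_0$, and $\tilde{w}(\cdot,-1,\eta_0)$ is strictly increasing on $(0,\eta_0^{-1})$ with a horizontal inflection at $a_0$. I then compute the mixed derivative $D:=\partial_x\partial_y\tilde{w}(a_0,-1,\eta_0)$ from \eqref{eq_extended_function_two_derivative} by logarithmic differentiation of the rational prefactor $p(x)=-x(1+\eta_0 x)/(6(1-\eta_0 x)^{2}(x+1)^{3})$ and evaluation of its polynomial factor, using repeatedly $\eta_0 a_0=3-2\sqrt{2}$ and $\eta_0^{2}a_0^{2}=\eta_0$. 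The resulting arithmetic gives $D<0$.

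With $A(a_0)>0$ and $D<0$, the Taylor expansion $\partial_x\tilde{w}(x,y,\eta_0)=A(a_0)(x-a_0)^{2}+D(y+1)+\text{h.o.t.}$ shows that for $y+1>0$ sufficiently small, $\partial_x\tilde{w}(\cdot,y,\eta_0)$ has exactly two zeros $x_1^{*}(y)<a_0<x_2^{*}(y)$ near $a_0$: positive outside $[x_1^{*},x_2^{*}]$, negative inside. Thus $\tilde{w}(\cdot,y,\eta_0)$ has a local maximum at $x_1^{*}(y)$, a local minimum at $x_2^{*}(y)$, and is strictly decreasing on $(x_1^{*},x_2^{*})$. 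Because $\tilde{w}(\cdot,-1,\eta_0)$ is strictly increasing through $a_0$, for $y$ close to $-1$ the values $\tilde{w}(x_1^{*}(y),y,\eta_0)$ and $\tilde{w}(x_2^{*}(y),y,\eta_0)$ straddle $\tilde{w}(a_0,y,\eta_0)$. The intermediate value theorem applied to the strictly monotone branches on the outside of $[x_1^{*},x_2^{*}]$ produces $x_1(y)\in(x_1(y)_{\mathrm{left}}, x_1^{*}(y))$ and $x_2(y)\in(x_2^{*}(y),x_2(y)_{\mathrm{right}})$ satisfying $\tilde{w}(x_j(y),y,\eta_0)=\tilde{w}(a_0,y,\eta_0)$, while the shape (max at $x_1^{*}$, descent through $a_0$, min at $x_2^{*}$) yields $w>w(a_0,\cdot,\eta_0)$ on $(x_1(y),a_0)$ and $w<w(a_0,\cdot,\eta_0)$ on $(a_0,x_2(y))$. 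Choosing $y_0$ smaller than $-1$ plus the minimum of the thresholds obtained in each step concludes the proof. The main obstacle is the explicit sign verification $D<0$, which is a tedious but purely algebraic reduction governed by $\eta_0 a_0^{2}=1$; every other step is either asymptotics or elementary perturbation theory of a double zero.
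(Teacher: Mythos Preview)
Your approach shares all the key ingredients with the paper: the asymptotics for \eqref{eq_internal_location}, the evaluation $\tilde{w}(a_0,-1,\eta_0)=1/a_0$ for \eqref{eq_internal_function_positive}, and above all the sign computation $D=\partial_x\partial_y\tilde{w}(a_0,-1,\eta_0)<0$, which the paper carries out via the splitting $\partial_y\tilde{w}(x,-1,\eta_0)=w_1(x)w_2(x)$ and the identities \eqref{eq_eta_quadratic}--\eqref{eq_a_eta_quadratic} to the closed form $-16(1-\eta_0)^2/\bigl(3(1-\eta_0 a_0)^3(a_0+1)^3\bigr)$.

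Where you diverge is the extraction of $x_1(y),x_2(y)$. The paper ignores the double-zero structure entirely: from $\partial_x\tilde{w}(a_0,-1,\eta_0)=0$ and $D<0$ it concludes only $\partial_x w(a_0,y,\eta_0)<0$ for $y$ near $-1$, then pairs this single sign with the global boundary facts $w\to 0$ at the left edge of $D$ and $w\to\infty$ at the right edge to obtain $x_1(y),x_2(y)$ as the nearest level-crossings of $w(a_0,y,\eta_0)$. Your local bifurcation argument (two simple zeros $x_1^{*}<a_0<x_2^{*}$ of $\partial_x\tilde{w}$) is also valid, but as written it has a small gap: for the IVT on the outside monotone branches you must verify that $\tilde{w}$ actually descends to the level $\tilde{w}(a_0,y,\eta_0)$ there, which you have not justified. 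This follows by integrating your quadratic approximation of $\partial_x\tilde{w}$ to a cubic in $(x-a_0)$, which recrosses zero at $a_0\pm\sqrt{3|D|(y+1)/A(a_0)}$ and is still within the regime where the Taylor remainder is negligible; or, more economically, by invoking the boundary values as the paper does.
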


\begin{proof}
The following equalities are useful. 
\begin{align}
&\eta_0^2-34\eta_0+1=0,\label{eq_eta_quadratic}\\
&a_0=\frac{\eta_0+1}{6\eta_0},\label{eq_a_eta_relation}\\
&a_0(\eta_0+1)=6,\label{eq_a_eta_product}\\
&a_0^2=\frac{1}{\eta_0},\label{eq_a_quadratic}\\
&a_0^2-\frac{\eta_0+1}{3\eta_0}a_0+\frac{1}{\eta_0}=0.\label{eq_a_eta_quadratic}
\end{align}
We can deduce from \eqref{eq_arc_cosh} that 
\begin{align}
&\lim_{y\searrow
 -1}\frac{1}{2(y+1)}(\cosh^{-1}(|y|^{-1}))^2=1<a_0<\frac{1}{\eta_0}=\lim_{y\searrow -1}\frac{1}{2\eta_0(y+1)}(\cosh^{-1}(|y|^{-1}))^2.\label{eq_arc_cosh_limit_relation}
\end{align}
This implies that there exists $\eps\in\R_{>0}$ such that
$(a_0,y,\eta_0)\in D$ for any $y\in (-1,-1+\eps)$. Moreover,
 $(a_0,-1,\eta_0)\in \tilde{D}$. By multiplying both the denominator
 and the numerator of \eqref{eq_extended_function} by $a_0^2$ and using
 \eqref{eq_a_quadratic} we can derive that 
\begin{align}
\tilde{w}(a_0,-1,\eta_0)=\frac{1}{a_0}.\label{eq_extended_function_a}
\end{align}
Thus, there exists $y_1\in (-1,-1+\eps)$ such that for any $y\in
 (-1,y_1]$ \eqref{eq_internal_location} and
 \eqref{eq_internal_function_positive} hold. Also, by
 \eqref{eq_extended_function_one_derivative} and \eqref{eq_a_eta_quadratic}
$\frac{\partial \tilde{w}}{\partial x}(a_0,-1,\eta_0)=0$. Next let us
 compute $\frac{\partial^2\tilde{w}}{\partial x\partial
 y}(a_0,-1,\eta_0)$. The computation can be quite complicated if we
 follow a wrong way. Let us present right steps leading to a concise
 formula, though this would not be the only approach. 
Let us decompose the right-hand side of
 \eqref{eq_extended_function_two_derivative} as follows. 
\begin{align*}
&\frac{\partial \tilde{w}}{\partial y}(x,-1,\eta_0)=w_1(x)w_2(x),\\
&w_1(x):=-\frac{x(1+\eta_0x)}{6(1-\eta_0x)^2(x+1)^3},\\
&w_2(x):=(6+3x +
 x^2)(1-\eta^2_0x^2)+\eta_0(x^2-1)(6+3\eta_0x+\eta_0^2x^2).
\end{align*}
Using \eqref{eq_a_quadratic}, \eqref{eq_a_eta_relation}, \eqref{eq_a_eta_product},
\eqref{eq_a_quadratic} in this order, we obtain
 that
\begin{align}
\frac{d
 w_1}{dx}(a_0)&=-\frac{1+(3\eta_0-2)a_0+3\eta_0a_0^2+3\eta_0^2a_0^3}{6(1-\eta_0a_0)^3(a_0+1)^4}=-\frac{5+\eta_0-2a_0}{6(1-\eta_0a_0)^3(a_0+1)^4}\label{eq_first_term_derivative}\\
&=\frac{1-\eta_0}{6(1-\eta_0a_0)^3(a_0+1)^3}=-\frac{w_1(a_0)}{a_0}.\notag
\end{align}
By using \eqref{eq_eta_quadratic} and \eqref{eq_a_quadratic} repeatedly
\begin{align}
w_2(a_0)=(1-\eta_0)(46+3(1+\eta_0)a_0).\label{eq_second_term_direct}
\end{align}
By using \eqref{eq_a_quadratic} only, 
\begin{align*}
\frac{d w_2}{d x}(a_0)=(1-\eta_0)(2(\eta_0^2+5\eta_0+1)a_0+3(1+\eta_0)).
\end{align*}
Then by using \eqref{eq_eta_quadratic} and \eqref{eq_a_quadratic} again
\begin{align}
&a_0\frac{d
 w_2}{dx}(a_0)=(1-\eta_0)(78+3(1+\eta_0)a_0).\label{eq_second_term_derivative}
\end{align}
By combining \eqref{eq_first_term_derivative},
 \eqref{eq_second_term_direct}, \eqref{eq_second_term_derivative} and
 using \eqref{eq_a_quadratic} once
\begin{align*}
\frac{\partial^2\tilde{w}}{\partial x\partial
 y}(a_0,-1,\eta_0)&=-\frac{w_1(a_0)}{a_0}\left(
w_2(a_0)-a_0\frac{dw_2}{dx}(a_0)\right)=32(1-\eta_0)\frac{w_1(a_0)}{a_0}\\
&=-\frac{16(1-\eta_0)^2}{3(1-\eta_0a_0)^3(a_0+1)^3}.
\end{align*}

Since $\frac{\partial^2\tilde{w}}{\partial x\partial
 y}(a_0,-1,\eta_0)<0$, there exists $y_2\in (-1,y_1]$ such that 
\begin{align*}
\frac{\partial^2\tilde{w}}{\partial x\partial
 y}(a_0,-1,\eta_0)+\sup_{t\in
 [-1,y_1]}\left|\frac{\partial^3\tilde{w}}{\partial x\partial
 y^2}(a_0,t,\eta_0)\right|(y_2+1)<0.
\end{align*}
Since $\frac{\partial\tilde{w}}{\partial x}(a_0,-1,\eta_0)=0$, this
 estimate ensures that for any $y\in (-1,y_2]$
\begin{align}
\frac{\partial\tilde{w}}{\partial x}(a_0,y,\eta_0)&=
\frac{\partial^2\tilde{w}}{\partial x\partial
 y}(a_0,-1,\eta_0)(y+1)+\int_{-1}^ydt (y-t)\frac{\partial^3\tilde{w}}{\partial x\partial
 y^2}(a_0,t,\eta_0)\label{eq_internal_field_dependent}\\
&\le \left(\frac{\partial^2\tilde{w}}{\partial x\partial
 y}(a_0,-1,\eta_0)+\sup_{t\in [-1,y_1]}\left|\frac{\partial^3\tilde{w}}{\partial x\partial
 y^2}(a_0,t,\eta_0)\right|(y_2+1)\right)(y+1)\notag\\
&<0.\notag
\end{align}
Let us fix $y\in (-1,y_2]$. Observe that 
\begin{align*}
&\lim_{x\searrow
 \frac{1}{2(y+1)}(\cosh^{-1}(|y|^{-1}))^2}w(x,y,\eta_0)=0,\quad 
\lim_{x\nearrow
 \frac{1}{2\eta_0(y+1)}(\cosh^{-1}(|y|^{-1}))^2}w(x,y,\eta_0)=\infty,
\end{align*}
which combined with the inequality \eqref{eq_internal_field_dependent}
 imply the existence of $x_1(y)$, $x_2(y)$ with the claimed properties.
\end{proof}

Define the function $W:\R_{>0}\times (-1,0)\times \R_{>0}\times
\R_{>0}\to \R$ by 
\begin{align}
W(x,y,z,s):=\frac{\sinh(x)}{y+\cosh(x)}+s\frac{\sinh(zx)}{(y+\cosh(zx))z}.\label{eq_large_function}
\end{align}
We will use this function and the functions $w:D\to \R$,
$\tilde{w}:\tilde{D}\to\C$ in the rest of this section mainly for
organizing proofs. 

\begin{proof}[Proof of Proposition \ref{prop_necessity_equality}]
By Lemma \ref{lem_necessity_equality_lemma} there exists $y_0\in (-1,0)$
 such that for any $y\in (-1,y_0]$ 
\begin{align}
&\frac{1}{\sqrt{y+1}}\cosh^{-1}(|y|^{-1})<\sqrt{2
 a_0}<\frac{1}{\sqrt{\eta_0(y+1)}}\cosh^{-1}(|y|^{-1}),\label{eq_critical_a_interval}\\
&0<w(a_0,y,\eta_0)<1.\notag
\end{align}
Observe that by \eqref{eq_critical_a_interval} and the inequality
 $\sinh(x)\ge x$ $(\forall x\in \R_{>0})$,
\begin{align}
&\frac{b}{w(a_0,y,\eta_0)+1}W(\sqrt{2a_0(y+1)},y,\sqrt{\eta_0},w(a_0,y,\eta_0))\label{eq_blow_up_lower_bound}\\
&\ge 
\frac{b\cosh^{-1}(|y|^{-1})}{y+\cosh(\eta_0^{-\frac{1}{2}}\cosh^{-1}(|y|^{-1}))}\notag
\end{align}
for any $y\in (-1,y_0]$. Take any $U_0\in (0,2e_{min}/b)$. By using
 \eqref{eq_arc_cosh} one can check that the right-hand side of
 \eqref{eq_blow_up_lower_bound} diverges to $+\infty$ as $y\searrow
 -1$. Thus, there exists $y_1\in (-1,y_0]$ such that for any $y\in
 (-1,y_1]$
\begin{align}
\frac{b}{w(a_0,y,\eta_0)+1}W(\sqrt{2a_0(y+1)},y,\sqrt{\eta_0},w(a_0,y,\eta_0))>
\frac{2}{U_0}.\label{eq_large_function_lower_specific}
\end{align}
Note that for $(x,y,z)\in \R_{>0}\times (-1,0)\times \R_{>0}$ satisfying
 $x< \frac{1}{\sqrt{z(y+1)}}\cosh^{-1}(|y|^{-1})$, 
\begin{align}
&\frac{\partial W}{\partial x}(\sqrt{y+1}\cdot
 x,y,\sqrt{z},s)=\frac{1+y\cosh(\sqrt{z(y+1)}\cdot x)}{(y+\cosh(\sqrt{z(y+1)}\cdot x))^2}
\left(s-w\left(\frac{x^2}{2},y,z\right)\right).\label{eq_large_small_relation}
\end{align}
Let us fix $y\in (-1,y_1]$. Lemma \ref{lem_necessity_equality_lemma}
 ensures that there exist
\begin{align*}
&\hat{x}_1(y)\in \left(\frac{1}{\sqrt{y+1}}\cosh^{-1}(|y|^{-1}),\sqrt{2a_0}
\right),\\
&\hat{x}_2(y)\in \left(\sqrt{2a_0},
\frac{1}{\sqrt{\eta_0(y+1)}}\cosh^{-1}(|y|^{-1})\right)
\end{align*}
such that
\begin{align*}
&\frac{\partial W}{\partial x}(\sqrt{y+1}\cdot
 \hat{x}_j(y),y,\sqrt{\eta_0},w(a_0,y,\eta_0))\\
&=\frac{\partial W}{\partial
 x}(\sqrt{2a_0(y+1)},y,\sqrt{\eta_0},w(a_0,y,\eta_0))=0,\quad (j=1,2),\\
&\frac{\partial W}{\partial x}(\sqrt{y+1}\cdot
 x,y,\sqrt{\eta_0},w(a_0,y,\eta_0))<0,\quad (\forall x\in
 (\hat{x}_1(y),\sqrt{2a_0})),\\
&\frac{\partial W}{\partial x}(\sqrt{y+1}\cdot
 x,y,\sqrt{\eta_0},w(a_0,y,\eta_0))>0,\quad (\forall x\in
 (\sqrt{2a_0},\hat{x}_2(y))).
\end{align*}
These imply that 
\begin{align}
&\min_{j\in \{1,2\}} W(\sqrt{y+1}\cdot
 \hat{x}_j(y),y,\sqrt{\eta_0},w(a_0,y,\eta_0))\label{eq_large_local_minimum}\\
&> W(\sqrt{2a_0(y+1)},y,\sqrt{\eta_0},w(a_0,y,\eta_0)).\notag
\end{align}
By \eqref{eq_large_function_lower_specific},
 \eqref{eq_large_local_minimum} we can take small $\delta\in\R_{>0}$ so
 that 
\begin{align}
&\frac{1}{w(a_0,y,\eta_0)+1}-\delta>0,\notag\\
&\frac{2}{U_0}<b\left(\frac{1}{w(a_0,y,\eta_0)+1}-\delta\right)\label{eq_large_small_mixed}\\
&\qquad\quad\cdot W\left(
\sqrt{2a_0(y+1)},y,\sqrt{\eta_0},\frac{w(a_0,y,\eta_0)+\delta(w(a_0,y,\eta_0)+1)}{1-\delta
 (w(a_0,y,\eta_0)+1)}
\right)\notag\\
&\quad\ <\frac{b}{w(a_0,y,\eta_0)+1}\min_{j\in \{1,2\}} W(\sqrt{y+1}\cdot
 \hat{x}_j(y),y,\sqrt{\eta_0},w(a_0,y,\eta_0)).\notag
\end{align}
Here let us apply Lemma \ref{lem_special_function} proved in Appendix
 \ref{app_special_function} with $e_{min}=\sqrt{\eta_0}$, $e_{max}=1$, 
$s=\frac{1}{w(a_0,y,\eta_0)+1}-\delta$, $t=\frac{1}{w(a_0,y,\eta_0)+1}$.
By substituting the matrix-valued function $E$ into the function
 \eqref{eq_infinite_volume_trace} and recalling the monotone decreasing
 property of the function \eqref{eq_reference_function} we observe that
 for any $x\in \R_{>0}$ 
\begin{align}
&F_{\infty}(x,y)\ge
 bs\frac{\sinh(x)}{y+\cosh(x)}+b(t-s)\frac{\sinh(x)}{y+\cosh(x)}+b(1-t) 
\frac{\sinh(x\sqrt{\eta_0})}{(y+\cosh(x\sqrt{\eta_0}))\sqrt{\eta_0}}\label{eq_special_function_application}\\
&=\frac{b}{w(a_0,y,\eta_0)+1}W(x,y,\sqrt{\eta_0},w(a_0,y,\eta_0)),\notag\\
&F_{\infty}(x,y)\notag\\
&\le bs \frac{\sinh(x)}{y+\cosh(x)} +
 b(t-s)\frac{\sinh(x\sqrt{\eta_0})}{(y+\cosh(x\sqrt{\eta_0}))\sqrt{\eta_0}}+b(1-t)\frac{\sinh(x\sqrt{\eta_0})}{(y+\cosh(x\sqrt{\eta_0}))\sqrt{\eta_0}}\notag\\
&=b\left(\frac{1}{w(a_0,y,\eta_0)+1}-\delta\right)
W\left(x,y,\sqrt{\eta_0},\frac{w(a_0,y,\eta_0)+\delta(w(a_0,y,\eta_0)+1)}{1-\delta
 (w(a_0,y,\eta_0)+1)}\right).\notag
\end{align}
By combining these inequalities with \eqref{eq_large_small_mixed} we
 have that 
\begin{align*}
&F_{\infty}(\sqrt{2a_0(y+1)},y)<\min_{j\in
 \{1,2\}}F_{\infty}(\sqrt{y+1}\cdot \hat{x}_j(y),y),\\
&\frac{2}{U_0}<\min_{j\in \{1,2\}}F_{\infty}(\sqrt{y+1}\cdot
 \hat{x}_j(y),y).
\end{align*}
This implies that there exists $U\in [-U_0,0)$ such that
\begin{align*}
F_{\infty}(\sqrt{2a_0(y+1)},y)<\frac{2}{|U|}<\min_{j\in
 \{1,2\}}F_{\infty}(\sqrt{y+1}\cdot\hat{x}_j(y),y).
\end{align*}
Therefore, by taking into account the fact $F_{\infty}(0,y)=0$ we see
 that there exist
\begin{align*}
&\beta_1\in \big(0,\sqrt{y+1}\cdot\hat{x}_1(y)\big),\\ 
&\beta_2\in \big(\sqrt{y+1}\cdot\hat{x}_1(y), \sqrt{2a_0(y+1)}\big),\\
&\beta_3\in \big(\sqrt{2a_0(y+1)}, \sqrt{y+1}\cdot \hat{x}_2(y)\big)
\end{align*}
 such that 
$-2/|U|+F_{\infty}(\beta_j,y)=0$ for all $j\in \{1,2,3\}$. Moreover, it
 follows from Lemma \ref{lem_tau_implicit_uniqueness} that
 $0<\beta_1<\beta_2<\beta_3<\beta_c$, 
 $y=\cos(\tau(\beta_j)/2)$ for all $j\in \{1,2,3\}$, and thus
 $\tau(\beta_1)=\tau(\beta_2)=\tau(\beta_3)$. 

Finally let us prove that there exist $\hat{\beta}_1$, $\hat{\beta}_2\in
 (0,\beta_c)$ such that $\hat{\beta}_1<\hat{\beta}_2$ and these are
 local minimum points of $\tau(\cdot)$. If $\tau(\beta)=\tau(\beta_2)$
 $(\forall \beta\in (\beta_1,\beta_2))$ or $\tau(\beta)=\tau(\beta_2)$
 $(\forall \beta\in (\beta_2,\beta_3))$, such $\hat{\beta}_1$,
 $\hat{\beta}_2$ obviously exist. If there exists $\beta'\in
 (\beta_1,\beta_2)$ such that $\tau(\beta')>\tau(\beta_2)$, since
 $\lim_{\beta\searrow 0}\tau(\beta)=\lim_{\beta\nearrow
 \beta_c}\tau(\beta)=2\pi>\tau(\beta_1)=\tau(\beta_2)$, local minimum points  
$\hat{\beta}_1$, $\hat{\beta}_2$ exist in $(0,\beta')$,
 $(\beta',\beta_c)$ respectively. The same conclusion holds if there
 exists $\beta'\in (\beta_2,\beta_3)$ such that
 $\tau(\beta')>\tau(\beta_2)$. It remains to study the case that there
 are $\beta_1'\in (\beta_1,\beta_2)$, $\beta_2'\in (\beta_2,\beta_3)$
 such that $\tau(\beta_j')<\tau(\beta_2)$ for $j\in \{1,2\}$. In
 this case local minimum points $\hat{\beta}_1$, $\hat{\beta}_2$ exist
 in $(\beta_1,\beta_2)$, $(\beta_2,\beta_3)$ respectively. The
 proposition has been proved.
\end{proof}

A stronger conclusion than Proposition \ref{prop_necessity_equality}
holds when $e_{min}/e_{max}<\sqrt{17-12\sqrt{2}}$.

\begin{proposition}\label{prop_necessity_inequality}
Assume that $e_{min}/e_{max}<\sqrt{17-12\sqrt{2}}$. 
Then for any $d$, $b\in \N$, basis $(\hbv_j)_{j=1}^d$ of $\R^d$ 
there exist $E\in \cE(e_{min},e_{max})$ and $U_0\in (0,2e_{min}/b)$
such that for any $U\in [-U_0,0)$ $\tau(\cdot)$ has more than one local 
minimum points in $(0,\beta_c)$. 
\end{proposition}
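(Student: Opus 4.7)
The plan is to exploit the fact that $(e_{min}/e_{max})^2 < \eta_0 := 17-12\sqrt{2}$ already forces the critical-point structure of $\tilde w(\cdot, -1, z_0)$ to be non-degenerate at $y = -1$ itself, rather than only after perturbation in $y$ as in Proposition \ref{prop_necessity_equality}. Writing $z_0 := (e_{min}/e_{max})^2 < \eta_0$, the discriminant $(z_0^2 - 34 z_0 + 1)/(9 z_0^2)$ of the quadratic appearing in \eqref{eq_extended_function_one_derivative} is strictly positive. Hence $\tilde w(\cdot, -1, z_0)$ has two distinct critical points $0 < x_-^0 < x_+^0 < 1/z_0$, namely a local maximum at $x_-^0$ and a local minimum at $x_+^0$, with $\tilde w(x_-^0, -1, z_0) > \tilde w(x_+^0, -1, z_0)$. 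I would then fix $s^*$ in the \emph{open} interval $(\tilde w(x_+^0, -1, z_0), \tilde w(x_-^0, -1, z_0))$ once and for all, independently of $U$. By the analytic implicit function theorem at each of the three non-degenerate preimages of $s^*$ under $\tilde w(\cdot, -1, z_0)$, I expect to obtain $\eps_0 > 0$ and smooth curves $u_1(y) < u_2(y) < u_3(y)$ on $(-1, -1 + \eps_0)$ solving $w(u_j(y), y, z_0) = s^*$, with $\partial_u w$ of alternating signs at these three points.

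Next I would apply Lemma \ref{lem_special_function} (after rescaling to $e_{max} = 1$ as in the proof of Proposition \ref{prop_necessity_equality}) with $s = 1/(1+s^*) - \delta$ and $t = 1/(1+s^*)$ for a small fixed $\delta > 0$, to construct $E \in \cE(e_{min}, e_{max})$ — independent of $U$ — satisfying a sandwich of the form
$$\frac{b}{(1+s^*)e_{max}}\,W(\beta e_{max}, y, \sqrt{z_0}, s^*) \le F_\infty(\beta, y) \le \frac{b\bigl(1 - (1+s^*)\delta\bigr)}{(1+s^*)e_{max}}\,W(\beta e_{max}, y, \sqrt{z_0}, s_\delta),$$
with $s_\delta \to s^*$ as $\delta \to 0$, in direct analogy with \eqref{eq_special_function_application}. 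Using \eqref{eq_large_small_relation} together with the previous paragraph, for every $y \in (-1, -1+\eps_0)$ the function $\tilde x \mapsto W(\sqrt{y+1}\,\tilde x, y, \sqrt{z_0}, s^*)$ will have local maxima at $\tilde x = \sqrt{2 u_j(y)}$ ($j=1,3$) and a local minimum at $\tilde x = \sqrt{2 u_2(y)}$ between them; choosing $\delta$ small enough preserves this structure for $s_\delta$ as well.

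The key quantitative ingredient will be a rescaled Taylor expansion at $y = -1$,
$$W(\sqrt{y+1}\,\tilde x, y, \sqrt{z_0}, s^*) = \frac{1}{\sqrt{y+1}}\left(\frac{\tilde x}{1 + \tilde x^2/2} + \frac{s^* \tilde x}{1 + z_0\tilde x^2/2}\right) + O\bigl(\sqrt{y+1}\bigr),$$
uniformly for $\tilde x$ in a compact neighborhood of $\{\sqrt{2 u_j(-1)}\}_{j=1}^{3}$. Denoting by $L(y)$ the local-minimum value and by $M(y)$ the smaller of the two local-maximum values of $F_\infty(\cdot, y)$, I then expect $L(y) \sim \bar C_L/\sqrt{y+1}$ and $M(y) \sim \bar C_M/\sqrt{y+1}$ as $y \to -1$, with strict separation $\bar C_L < \bar C_M$ inherited from the three-value inequality. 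Combined with continuity on $(-1, -1+\eps_0)$, this should yield $\bigcup_{y \in (-1, -1+\eps_1)}(L(y), M(y)) \supset [c_0, \infty)$ for some $\eps_1 \in (0, \eps_0]$ and $c_0 > 0$.

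Finally, setting $U_0 := \min\{2/c_0,\, e_{min}/b\} \in (0, 2e_{min}/b)$, I claim that for every $U \in [-U_0, 0)$ one can pick $y(U) \in (-1, -1+\eps_1)$ with $2/|U| \in (L(y(U)), M(y(U)))$; the gap equation $F_\infty(\beta, y(U)) = 2/|U|$ then admits at least three distinct solutions $\beta_1 < \beta_2 < \beta_3$ in $(0, \beta_c)$. Lemma \ref{lem_tau_implicit_uniqueness} gives $\tau(\beta_1) = \tau(\beta_2) = \tau(\beta_3)$, and the concluding case analysis at the end of the proof of Proposition \ref{prop_necessity_equality} (using $\lim_{\beta \searrow 0}\tau = \lim_{\beta\nearrow\beta_c}\tau = 2\pi$ from Lemma \ref{lem_boundary_function}) then produces at least two local minima of $\tau$ on $(0, \beta_c)$. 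The hard part will be propagating the $1/\sqrt{y+1}$ asymptotics through the Lemma \ref{lem_special_function} sandwich uniformly as $y \to -1$ so as to certify the covering $\bigcup_y (L(y), M(y)) \supset [c_0, \infty)$; the openness of the interval from which $s^*$ is chosen — in sharp contrast to the single degenerate value $w(a_0, y, \eta_0)$ used in Proposition \ref{prop_necessity_equality} — is precisely what should allow the construction of $E$, and hence $U_0$, to be uniform over the whole interval $[-U_0, 0)$.
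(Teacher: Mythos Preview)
Your approach is essentially the paper's, and it is correct in substance. Both arguments exploit that for $z_0=(e_{min}/e_{max})^2<\eta_0$ the function $\tilde w(\cdot,-1,z_0)$ already has a genuine max/min pair $a_-<a_+$, pick a level $s^*$ strictly between the two critical values, build $E$ once via Lemma~\ref{lem_special_function} with weights $1/(1+s^*)-\delta$ and $1/(1+s^*)$, and then use the $1/\sqrt{y+1}$ blow-up to select, for each small $|U|$, a $y$ close to $-1$ at which $F_\infty(\cdot,y)$ crosses $2/|U|$ three times. The paper simply makes the concrete choice $s^*=\tilde w(\hat a(\eta),-1,\eta)$ with $\hat a(\eta)=\tfrac12(a_-+a_+)$, introduces the limit object $\widehat W(x,\sqrt z,s)=\lim_{y\searrow-1}\sqrt{y+1}\,W(\sqrt{y+1}\,x,y,\sqrt z,s)$ to make your Taylor step explicit, and replaces your covering $\bigcup_y(L(y),M(y))\supset[c_0,\infty)$ by the equivalent device of taking $y_2=\sup\{y:F_\infty(\sqrt{2\hat a(\eta)(y+1)},y)=2/|U|\}$ and then shifting to a nearby $y_3>y_2$.

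One point to tighten: you define $L(y)$ and $M(y)$ as local extremal \emph{values of $F_\infty(\cdot,y)$} itself, but the sandwich from Lemma~\ref{lem_special_function} does not give you control over where (or whether) $F_\infty$ has local extrema. What it does give, and what you actually need, is
\[
F_\infty\bigl(\tfrac{1}{e_{max}}\sqrt{2u_2(y)(y+1)},y\bigr)\le A(y),\qquad
\min_{j\in\{1,3\}}F_\infty\bigl(\tfrac{1}{e_{max}}\sqrt{2u_j(y)(y+1)},y\bigr)\ge B(y),
\]
where $A(y),B(y)$ are the upper and lower sandwich values at those three specific abscissae; it is these quantities that satisfy $A(y)\sim\bar C_L/\sqrt{y+1}$, $B(y)\sim\bar C_M/\sqrt{y+1}$ with $\bar C_L<\bar C_M$ once $\delta$ is small. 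With that correction your covering argument goes through exactly as you sketch (continuity of $A$, $A\to\infty$ as $y\searrow-1$, and $A<B$ let you take $y$ just above $\sup\{y:A(y)\ge c\}$), and the three crossings follow from the intermediate value theorem together with $F_\infty(0,y)=0$. This is precisely what the paper does, only phrased through $F_\infty$ at the three fixed points rather than through $A,B$.
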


\begin{remark}
The difference from the conclusion of Proposition
 \ref{prop_necessity_equality} is that here $E$ $(\in
 \cE(e_{min},e_{max}))$ is independent of the choice of small $U$. This
 conclusion implies the conclusion of Proposition
 \ref{prop_necessity_equality}. 
\end{remark}

Observe that for $\eta \in (0,17-12\sqrt{2}]$,
$(\frac{1+\eta}{6\eta})^2-\frac{1}{\eta}\ge 0$. 
This allows us to define the real numbers $a_{+}(\eta)$, $a_-(\eta)$,
$\hat{a}(\eta)$ by
\begin{align}
&a_+(\eta):=\frac{1+\eta}{6\eta}+\Big(\Big(\frac{1+\eta}{6\eta}\Big)^2-\frac{1}{\eta}
\Big)^{\frac{1}{2}},\label{eq_eta_dependent_variable}\\
&a_-(\eta):=\frac{1+\eta}{6\eta}-\Big(\Big(\frac{1+\eta}{6\eta}\Big)^2-\frac{1}{\eta}
\Big)^{\frac{1}{2}},\notag\\
&\hat{a}(\eta):=a_-(\eta)+\frac{a_+(\eta)-a_-(\eta)}{2}.\notag
\end{align}
Let us summarize basic properties concerning these numbers, which can be
deduced from \eqref{eq_extended_function_one_derivative}, \eqref{eq_a_eta_relation} and will be
used not only in the proof of Proposition
\ref{prop_necessity_inequality} but also in Sub-subsection
\ref{subsubsec_multi_orbital}. 

\begin{lemma}\label{lem_extended_function_properties}
If $\eta=17-12\sqrt{2}$ $(=\eta_0)$, 
\begin{align}
1<a_+(\eta)=a_-(\eta)=\hat{a}(\eta)=a_0=3+2\sqrt{2}<\eta^{-1}.\label{eq_eta_function_critical_order}
\end{align}
For any $\eta\in (0,17-12\sqrt{2})$ 
\begin{align}
&1<a_-(\eta)<\hat{a}(\eta)<a_+(\eta)<\eta^{-1},\label{eq_eta_function_first_order}\\
&\frac{\partial \tilde{w}}{\partial x}(x,-1,\eta)>0,\quad (\forall x\in
 (0,a_-(\eta))),\label{eq_extended_function_slope}\\
&\frac{\partial \tilde{w}}{\partial x}(a_-(\eta),-1,\eta)=0,\notag\\
&\frac{\partial \tilde{w}}{\partial x}(x,-1,\eta)<0,\quad (\forall x\in
 (a_-(\eta),a_+(\eta))),\notag\\
&\frac{\partial \tilde{w}}{\partial x}(a_+(\eta),-1,\eta)=0,\notag\\
&\frac{\partial \tilde{w}}{\partial x}(x,-1,\eta)>0,\quad (\forall x\in
 (a_+(\eta),\eta^{-1})),\notag\\
&0<\tilde{w}(a_+(\eta),-1,\eta)< \tilde{w}(\hat{a}(\eta),-1,\eta)
 <\tilde{w}(a_-(\eta),-1,\eta).\label{eq_extended_function_profile}
\end{align}
\end{lemma}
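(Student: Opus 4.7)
Formula \eqref{eq_extended_function_one_derivative} displays $\frac{\partial\tilde{w}}{\partial x}(x,-1,\eta)$ as a strictly positive prefactor times the quadratic
$$Q(x) := x^2 - \frac{\eta+1}{3\eta}\,x + \frac{1}{\eta}$$
for $x\in(-1,\eta^{-1})$ and $\eta\in(0,1)$, so the whole lemma reduces to a root analysis of $Q$. Its discriminant is $\frac{\eta^2-34\eta+1}{9\eta^2}$, and by construction the numbers $a_\pm(\eta)$ in \eqref{eq_eta_dependent_variable} are precisely the real roots of $Q$ whenever these exist.

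First I would treat the critical case $\eta=\eta_0$. Identity \eqref{eq_eta_quadratic} says exactly that the discriminant of $Q$ vanishes at $\eta_0$, so $a_-(\eta_0)=a_+(\eta_0)=\hat a(\eta_0)=(1+\eta_0)/(6\eta_0)$; by \eqref{eq_a_eta_relation} this common value is $a_0$. The two-sided bound $1<a_0<\eta_0^{-1}$ follows from \eqref{eq_a_quadratic}, since $a_0=\eta_0^{-1/2}$ and $\eta_0<1$. This gives \eqref{eq_eta_function_critical_order}.

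Next, for $\eta\in(0,\eta_0)$ the discriminant is strictly positive, hence $a_-(\eta)<a_+(\eta)$ and $\hat a(\eta)$ sits strictly between them. To pin the two roots into $(1,\eta^{-1})$, I would verify that both $x=1$ and $x=\eta^{-1}$ lie strictly outside $[a_-(\eta),a_+(\eta)]$ on the correct sides of the vertex of $Q$. Direct substitution yields $Q(1)=\tfrac{2(1+\eta)}{3\eta}>0$ and $Q(\eta^{-1})=\tfrac{2(1+\eta)}{3\eta^{2}}>0$, while the vertex satisfies $1<(\eta+1)/(6\eta)<\eta^{-1}$ because $\eta\le\eta_0<1/5$ forces $(\eta+1)/(6\eta)>1$, and $\eta<5$ forces $(\eta+1)/(6\eta)<\eta^{-1}$. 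These three facts together force $1<a_-(\eta)$ and $a_+(\eta)<\eta^{-1}$, proving \eqref{eq_eta_function_first_order}.

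The sign pattern \eqref{eq_extended_function_slope} is then immediate from the root structure of the upward-opening quadratic $Q$ together with the positivity of the prefactor in \eqref{eq_extended_function_one_derivative} for $x\in(-1,\eta^{-1})$. Consequently $x\mapsto\tilde w(x,-1,\eta)$ has a strict local maximum at $a_-(\eta)$, a strict local minimum at $a_+(\eta)$, and is strictly decreasing on the open interval $(a_-(\eta),a_+(\eta))$. Since $\hat a(\eta)$ is the midpoint of this interval, strict monotonicity gives
$$\tilde w(a_+(\eta),-1,\eta)<\tilde w(\hat a(\eta),-1,\eta)<\tilde w(a_-(\eta),-1,\eta).$$
Finally, positivity of $\tilde w(a_+(\eta),-1,\eta)$ is read off from \eqref{eq_extended_function}: for $x\in(1,\eta^{-1})$ every factor in $(x-1)(1+\eta x)^2/[(1-\eta x)(1+x)^2]$ is strictly positive, and $a_+(\eta)$ lies in this interval by the previous step, so \eqref{eq_extended_function_profile} follows. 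There is no genuine obstacle in this argument; the only point requiring care is confirming on which side of each root the reference points $1$ and $\eta^{-1}$ sit, which is precisely why the location of the vertex of $Q$ enters.
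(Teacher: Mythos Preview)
Your proof is correct and follows exactly the approach the paper indicates: the paper simply states that the lemma ``can be deduced from \eqref{eq_extended_function_one_derivative}, \eqref{eq_a_eta_relation}'' without spelling out details, and your argument is a clean execution of that deduction via the quadratic $Q$. One cosmetic point: the derivative formula \eqref{eq_extended_function_one_derivative} is only asserted for $x\in(0,z^{-1})$, not $(-1,z^{-1})$, but since every $x$-value you actually use lies in $(0,\eta^{-1})$ this does not affect anything.
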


\begin{proof}[Proof of Proposition \ref{prop_necessity_inequality}]
Define the function $\widehat{W}:\R_{>0}\times \R_{>0}\times\R_{>0}\to
 \R$ by 
\begin{align*}
\widehat{W}(x,z,s):=\frac{x}{1+\frac{x^2}{2}}+s\frac{x}{1+z^2\frac{x^2}{2}}.
\end{align*}
Let us observe that for $(x,z)\in \R_{>0}\times\R_{>0}$ satisfying
 $x<\sqrt{2/z}$
\begin{align*}
&\frac{\partial \widehat{W}}{\partial
 x}(x,\sqrt{z},s)=\frac{1-z\frac{x^2}{2}}{(1+z\frac{x^2}{2})^2}\left(s-\tilde{w}\left(\frac{x^2}{2},
-1,z\right)\right).
\end{align*}
Fix $\eta\in (0,17-12\sqrt{2})$. On the basis of 
 \eqref{eq_extended_function_slope} and the facts
 $\tilde{w}(1,-1,\eta)=0$, $\lim_{x\nearrow
 \eta^{-1}}\tilde{w}(x,-1,\eta)=+\infty$,  we conclude that there exist
 $x_1\in (\sqrt{2},\sqrt{2\hat{a}(\eta)})$, $x_2\in
 (\sqrt{2\hat{a}(\eta)},\sqrt{2\eta^{-1}})$ such that 
\begin{align*}
&\frac{\partial \widehat{W}}{\partial
 x}(x_j,\sqrt{\eta},\tilde{w}(\hat{a}(\eta),-1,\eta))\\
&=
\frac{\partial \widehat{W}}{\partial
 x}(\sqrt{2\hat{a}(\eta)},\sqrt{\eta},\tilde{w}(\hat{a}(\eta),-1,\eta))=0,\quad(\forall
 j\in \{1,2\}),\\
&\frac{\partial \widehat{W}}{\partial
 x}(x,\sqrt{\eta},\tilde{w}(\hat{a}(\eta),-1,\eta))<0,\quad (\forall
 x\in (x_1,\sqrt{2\hat{a}(\eta)})),\\
&\frac{\partial \widehat{W}}{\partial
 x}(x,\sqrt{\eta},\tilde{w}(\hat{a}(\eta),-1,\eta))>0,\quad (\forall
 x\in (\sqrt{2\hat{a}(\eta)},x_2)).
\end{align*}
These imply that 
\begin{align*}
\min_{j\in
 \{1,2\}}\widehat{W}(x_j,\sqrt{\eta},\tilde{w}(\hat{a}(\eta),-1,\eta))
>\widehat{W}(\sqrt{2\hat{a}(\eta)},\sqrt{\eta},\tilde{w}(\hat{a}(\eta),-1,\eta)).
\end{align*}
We can choose small $\delta\in\R_{>0}$ so that 
\begin{align}
&\frac{1}{\tilde{w}(\hat{a}(\eta),-1,\eta)+1}-\delta>0,\notag\\
&\frac{b}{\tilde{w}(\hat{a}(\eta),-1,\eta)+1}\min_{j\in \{1,2\}}
 \widehat{W}(x_j,\sqrt{\eta},\tilde{w}(\hat{a}(\eta),-1,\eta))\label{eq_special_function_application_next}\\
&> b\left(\frac{1}{\tilde{w}(\hat{a}(\eta),-1,\eta)+1}-\delta\right)\notag\\
&\quad\cdot \widehat{W}\left(
\sqrt{2\hat{a}(\eta)},\sqrt{\eta},\frac{\tilde{w}(\hat{a}(\eta),-1,\eta)+\delta(\tilde{w}(\hat{a}(\eta),-1,\eta)+1)}{1-\delta
 (\tilde{w}(\hat{a}(\eta),-1,\eta)+1)}
\right).\notag
\end{align}
Here we apply Lemma \ref{lem_special_function} with 
$e_{min}=\sqrt{\eta}$, $e_{max}=1$,
 $s=\frac{1}{\tilde{w}(\hat{a}(\eta),-1,\eta)+1}-\delta$,
 $t=\frac{1}{\tilde{w}(\hat{a}(\eta),-1,\eta)+1}$. With the
 matrix-valued function $E$ we define the function
 $\widehat{F}_{\infty}:\R\to \R$ by 
\begin{align*}
\widehat{F}_{\infty}(x):=D_d\int_{\G_{\infty}^*}d\bk\Tr\Bigg(\frac{x}{1+\frac{x^2}{2}E(\bk)^2}\Bigg).
\end{align*}
By arguing in the same way as in \eqref{eq_special_function_application}
 we can derive from \eqref{eq_special_function_application_next} that
$\widehat{F}_{\infty}(\sqrt{2\hat{a}(\eta)})$ $<\min_{j\in
 \{1,2\}}\widehat{F}_{\infty}(x_j)$. Check that $\lim_{y\searrow
 -1}\sqrt{y+1}F_{\infty}(\sqrt{y+1}\cdot x,y)=\widehat{F}_{\infty}(x)$
 for all $x\in \R$, where $F_{\infty}(\cdot)$ is the function defined in
 \eqref{eq_infinite_volume_trace}. Thus, there exists $y_1(\eta)\in
 (-1,0)$ such that for any $y\in (-1,y_1(\eta)]$ 
\begin{align}
F_{\infty}(\sqrt{2\hat{a}(\eta)(y+1)},y)<\min_{j\in
 \{1,2\}}F_{\infty}(\sqrt{y+1}\cdot
 x_j,y).\label{eq_infinity_function_relation}
\end{align}
By recalling the monotone decreasing property of the function
 \eqref{eq_reference_function} we have that for any $y\in
 (-1,y_1(\eta)]$ 
\begin{align}
\frac{b\sinh(\sqrt{2\hat{a}(\eta)(y+1)})}{y+\cosh(\sqrt{2\hat{a}(\eta)(y+1)})}&\le 
F_{\infty}(\sqrt{2\hat{a}(\eta)(y+1)},y)\label{eq_infinity_function_above_below}\\
&\le
 \frac{b\sinh(\sqrt{2\hat{a}(\eta)(y+1)\eta})}{(y+\cosh(\sqrt{2\hat{a}(\eta)(y+1)\eta}))\sqrt{\eta}}.\notag
\end{align}
Set 
\begin{align*}
U_0:=\min\left\{
\frac{e_{min}}{b},\frac{(y_1(\eta)+\cosh(\sqrt{2\hat{a}(\eta)(y_1(\eta)+1)\eta}))\sqrt{\eta}}{b\sinh(\sqrt{2\hat{a}(\eta)(y_1(\eta)+1)\eta})}\right\}.
\end{align*}
It follows that $U_0\in (0,2e_{min}/b)$. Take any $U\in
 [-U_0,0)$. By \eqref{eq_infinity_function_above_below} 
\begin{align*}
F_{\infty}(\sqrt{2\hat{a}(\eta)(y_1(\eta)+1)},y_1(\eta))<\frac{2}{U_0}\le
 \frac{2}{|U|}.
\end{align*}
Set
\begin{align*}
S:=\left\{
y\in (-1,y_1(\eta)]\ \Big|\ F_{\infty}(\sqrt{2\hat{a}(\eta)(y+1)},y)=\frac{2}{|U|}
\right\}.
\end{align*}
By considering the fact that the left-hand side of
 \eqref{eq_infinity_function_above_below} diverges to $+\infty$ as
 $y\searrow -1$ we see that $S\neq \emptyset$. Set $y_2(\eta,U):=\sup
 S$. Then, $-1<y_2(\eta,U)<y_1(\eta)$ and by
 \eqref{eq_infinity_function_relation}
\begin{align*}
&F_{\infty}(\sqrt{2\hat{a}(\eta)(y_2(\eta,U)+1)},y_2(\eta,U))=\frac{2}{|U|}\\
&<\min_{j\in
 \{1,2\}}F_{\infty}(\sqrt{y_2(\eta,U)+1}\cdot x_j,y_2(\eta,U)),\\
&F_{\infty}(\sqrt{2\hat{a}(\eta)(y+1)},y)<\frac{2}{|U|},\quad (\forall
 y\in (y_2(\eta,U),y_1(\eta)]).
\end{align*}
This implies that if we take $y_3(\eta,U)\in (y_2(\eta,U),y_1(\eta)]$
 sufficiently close to $y_2(\eta,U)$,
\begin{align*}
&F_{\infty}(\sqrt{2\hat{a}(\eta)(y_3(\eta,U)+1)},y_3(\eta,U))<\frac{2}{|U|}\\
&<\min_{j\in
 \{1,2\}}F_{\infty}(\sqrt{y_3(\eta,U)+1}\cdot x_j,y_3(\eta,U)).
\end{align*}
Then we only need to repeat the same argument as in the last part of the
 proof of Proposition \ref{prop_necessity_equality} to conclude that
 $\tau(\cdot)$ has at least two local minimum points.
\end{proof}

By combining Proposition \ref{prop_sufficiency}, Proposition
\ref{prop_necessity_equality}, Proposition
\ref{prop_necessity_inequality} we reach the following theorem.

\begin{theorem}\label{thm_boundary_characterization}
For any $d,b\in \N$, basis $(\hbv_j)_{j=1}^d$ of $\R^d$ and $e_{min}$,
 $e_{max}\in \R_{>0}$ satisfying $e_{min}\le e_{max}$ the following
 statements are equivalent to each other.
\begin{enumerate}[(i)]
\item There exists $U_0\in (0,2e_{min}/b)$ such that
for any $U\in [-U_0,0)$ and $E\in \cE(e_{min},$ $e_{max})$ $\tau(\cdot)$
      has one and only one local minimum point in $(0,\beta_c)$. 
\item 
$$
\frac{e_{min}}{e_{max}}>\sqrt{17-12\sqrt{2}}.
$$
\end{enumerate}
\end{theorem}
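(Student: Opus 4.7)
The theorem is essentially the synthesis of Proposition \ref{prop_sufficiency}, Proposition \ref{prop_necessity_equality}, and Proposition \ref{prop_necessity_inequality}, so the plan is to unpack these three results into the equivalence and handle the quantifier bookkeeping carefully.

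The direction (ii) $\Rightarrow$ (i) is essentially Proposition \ref{prop_sufficiency}. Namely, under the assumption $e_{min}/e_{max} > \sqrt{17-12\sqrt{2}}$, the proposition supplies $U_0(b,e_{min},e_{max}) \in (0, e_{min}/(\sinh(2) b)]$ such that for every $U \in [-U_0(b,e_{min},e_{max}), 0)$ and every $E \in \cE(e_{min},e_{max})$, the function $\tau(\cdot)$ has a unique local minimum in $(0,\beta_c)$. I would only need to note that $e_{min}/(\sinh(2) b) < 2 e_{min}/b$, so this $U_0$ lies in the interval required by (i), and hence (i) holds.

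For the converse (i) $\Rightarrow$ (ii), I would argue the contrapositive and split $e_{min}/e_{max} \le \sqrt{17-12\sqrt{2}}$ into two sub-cases. If $e_{min}/e_{max} = \sqrt{17-12\sqrt{2}}$, Proposition \ref{prop_necessity_equality} gives exactly the negation of (i): for every $U_0 \in (0, 2e_{min}/b)$ there is a pair $(U,E) \in [-U_0,0) \times \cE(e_{min},e_{max})$ for which $\tau(\cdot)$ admits more than one local minimum point, so (i) fails. If $e_{min}/e_{max} < \sqrt{17-12\sqrt{2}}$, Proposition \ref{prop_necessity_inequality} furnishes a fixed $E \in \cE(e_{min},e_{max})$ and some $U_0^{**} \in (0, 2e_{min}/b)$ such that for every $U \in [-U_0^{**}, 0)$, $\tau(\cdot)$ has more than one local minimum. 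To match the quantifier structure of (i), for any candidate $U_0^* \in (0, 2e_{min}/b)$ in (i), I would pick $U = -\min(U_0^*, U_0^{**})/2$, which lies simultaneously in $[-U_0^*, 0)$ and in $[-U_0^{**}, 0)$, and pair it with the same $E$; this again negates (i).

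Since the three component propositions have already been proved, no new technical obstacle arises, and the only point requiring care is the quantifier alignment in the sub-case $e_{min}/e_{max} < \sqrt{17-12\sqrt{2}}$, where Proposition \ref{prop_necessity_inequality} produces $E$ and $U_0$ jointly rather than $E$ uniformly in $U_0$; the simple rescaling of the coupling constant above fills this gap. Combining the two implications yields the claimed equivalence between (i) and (ii).
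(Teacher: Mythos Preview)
Your proposal is correct and follows exactly the approach the paper indicates: the theorem is obtained by combining Proposition \ref{prop_sufficiency}, Proposition \ref{prop_necessity_equality}, and Proposition \ref{prop_necessity_inequality}. The paper in fact gives no further argument beyond stating that the theorem results from this combination, so your explicit quantifier bookkeeping (in particular the choice of $U$ in the sub-case $e_{min}/e_{max}<\sqrt{17-12\sqrt{2}}$) only makes the deduction more precise than the paper does.
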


\begin{remark}
According to Theorem \ref{thm_infinite_volume_limit}, we have to take
 small $U$ depending on $E\in \cE(e_{min},e_{max})$ in order to justify
 the derivation of the infinite-volume limit of the
 free energy density and the thermal expectations from the finite-volume
 lattice Fermion system. The graph $\{(\beta,\tau(\beta))\ |\ \beta\in
 (0,\beta_c)\}$ can be rigorously considered as the representative
 curve of the phase boundaries of the phase transition happening in
 our system only if the derivation of the infinite-volume limit is
 justified. Here let us summarize what we can conclude by
 combining the results obtained in this section with the sufficient condition for
 justifying the derivation.

By Proposition \ref{prop_convexity} for any $E\in \cE(e_{min},e_{max})$
 with $e_{min}/e_{max}\ge e_0$ there exists $U_0\in (0,2e_{min}/b)$ such
 that for any $U\in [-U_0,0)$ the derivation is justified and $\frac{d^2
 \tau}{d\beta^2}(\beta)>0$ for any $\beta\in (0,\beta_c)$.

By Proposition \ref{prop_sufficiency} for any $E\in
 \cE(e_{min},e_{max})$ with $e_{min}/e_{max}>\sqrt{17-12\sqrt{2}}$
there exists $U_0\in (0,2e_{min}/b)$ such that for any $U\in [-U_0,0)$
 the derivation is justified and $\tau(\cdot)$ has only one local
 minimum point in $(0,\beta_c)$.

By Proposition \ref{prop_necessity_inequality} for any $e_{min}$,
 $e_{max}\in\R_{>0}$ with $e_{min}/e_{max}<\sqrt{17-12\sqrt{2}}$ there
 exist $E\in\cE(e_{min},e_{max})$ and $U_0\in (0,2e_{min}/b)$ such that 
for any $U\in [-U_0,0)$ the derivation is justified and $\tau(\cdot)$
 has more than one local minimum points in $(0,\beta_c)$. 

However, in Proposition \ref{prop_necessity_equality} we do not have
 freedom to choose small $U$. The coupling constant $U$ was chosen
 depending on $E$ in the proof and it is not clear whether for such $U$
 the derivation is justified by Theorem
 \ref{thm_infinite_volume_limit}. Thus, strictly speaking, in the case
 $e_{min}/e_{max}=\sqrt{17-12\sqrt{2}}$ we cannot claim that
 $\tau(\cdot)$ has more than one local minimum points while justifying
 the derivation. 
\end{remark}

\begin{remark}
In view of Proposition \ref{prop_convexity}, we can propose a problem to
 find a necessary and sufficient condition in terms of $e_{min}/e_{max}$
 for that $\tau(\cdot)$ is downward convex in $(0,\beta_c)$ for any
 $E\in \cE(e_{min},e_{max})$. However, we are unable to solve the problem at
 present.
\end{remark}

\subsection{Study of specific models}\label{subsec_specific_models}

In the proofs of Proposition \ref{prop_necessity_equality} and Proposition
\ref{prop_necessity_inequality} we constructed particular examples of
$E$ $(\in \cE(e_{min},e_{max}))$ for which $\tau(\cdot)$ has more than
one local minimum points. However, these results do not tell us whether
$\tau(\cdot)$ can have more than one local minimum points when we
change the value $e_{min}/e_{max}$ within a one-particle
Hamiltonian explicitly parameterized by $e_{min}$, $e_{max}$, though we
know that $\tau(\cdot)$ must have only one local minimum point for small
$U$ when
$e_{min}/e_{max}>\sqrt{17-12\sqrt{2}}$ by Proposition
\ref{prop_sufficiency}. In this subsection we study this question for
the following two models. Let $I_n$ denote the $n\times n$ unit matrix
for $n\in\N$.
\begin{enumerate}[(1)]
\item\label{item_model_multi_orbital} 
Let $d\in \N$, $b\in \N_{\ge 2}$, $b'\in \{1,2,\cdots,b-1\}$ and
     $(\hbv_j)_{j=1}^d$ be any basis of $\R^d$. Let us define $E_b\in
     \cE(e_{min},e_{max})$ by 
\begin{align*}
&E_b(\bk):=((e_{max}1_{i\le b'}+ e_{min}1_{i> b'})1_{i=j})_{1\le i,j\le
 b}=\left(\begin{array}{cc} e_{max}I_{b'} & 0 \\
                           0 & e_{min}I_{b-b'}\end{array}\right),\\
&(\bk\in \R^d),
\end{align*}
which is a $b$-orbital model without hopping.
\item\label{item_model_one_band}
Let $d=b=1$ and $\hbv_1=1$. For $t\in \R_{\ge 0}$, $e_{min}\in \R_{>0}$
     let us define $E_1\in\cE(e_{min},2t+e_{min})$ by $E_1(k):=t(\cos
     k +1)+e_{min}$, $(k\in \R)$. The function $E_1(\cdot)$ is the
     dispersion relation of nearest-neighbor hopping free electron on
     the 1-dimensional lattice $\Z$.
\end{enumerate}

It will turn out that the uniqueness of local minimum points is
sensitive to the ratios $e_{min}/e_{max}$, $(b-b')/b'$ in the model
\eqref{item_model_multi_orbital}, while the uniqueness holds for any
$t\in \R_{\ge 0}$, $e_{min}\in \R_{>0}$ in the model
\eqref{item_model_one_band}. 

\begin{remark}
For $t$, $\mu\in \R$ let us define the function $e_1:\R\to \R$ by
$e_1(k):=t\cos k+\mu$. The function $e_1(\cdot)$ satisfying the condition
 $\inf_{k\in \R}|e_1(k)|>0$ is the most general form of a
 non-vanishing dispersion relation of nearest-neighbor hopping free
 electron on $\Z$. We can check that
\begin{align*}
&\int_0^{2\pi}dk \frac{\sinh(\beta e_1(k))}{(y+\cosh(\beta
 e_1(k)))e_1(k)}\\
&=\int_0^{2\pi}dk \frac{\sinh(\beta (|t|\cos k
 +|\mu|))}{(y+\cosh(\beta(|t|\cos k + |\mu|)))(|t|\cos k + |\mu|)},\\
&(\forall \beta \in\R_{>0},\ y\in (-1,0)).
\end{align*}
By using the above equality and the fact that
 $\inf_{k\in \R}|e_1(k)|>0$ is equivalent to $|\mu|>|t|$
we can reduce the
 problem with $e_1(\cdot)$ to that with $E_1(\cdot)$ defined in
 \eqref{item_model_one_band}. This means that the results we will obtain
 in Sub-subsection \ref{subsubsec_one_band} for $E_1(\cdot)$ also
 hold for $e_1(\cdot)$ satisfying 
 $\inf_{k\in\R}|e_1(k)|>0$.
\end{remark}

\subsubsection{The multi-orbital model without hopping}\label{subsubsec_multi_orbital}

Here let us study the profile of $\tau(\cdot)$ in the model defined in
\eqref{item_model_multi_orbital}. Our central question is when
$\tau(\cdot)$ has only one local minimum point. The answer is given in
the next proposition. 

\begin{proposition}\label{prop_multi_orbital}
Set the condition ($\star$) as follows. 
\begin{itemize}
\item[($\star$)] There exists $U_0\in (0,2 e_{min}/b)$ such that for any
		 $U\in [-U_0,0)$ $\tau(\cdot)$ has one and only one
		 local minimum point in $(0,\beta_c)$. 
\end{itemize}
Then the following statements hold. 
\begin{enumerate}[(i)]
\item\label{item_band_first}
Assume that $\frac{b-b'}{b'}\in [3-2\sqrt{2},\infty)$. Then for any
     $e_{min}$, $e_{max}\in \R_{>0}$ with $e_{min}\le e_{max}$ ($\star$)
     holds. 
\item\label{item_band_middle}
Assume that $\frac{b-b'}{b'}\in (1/8,3-2\sqrt{2})$. Then there
     exist $e_1$, $e_2\in (0,\sqrt{17-12\sqrt{2}})$ such that $e_1<e_2$
     and ($\star$) holds if $e_{min}/e_{max}\in (e_2,1]$,
 ($\star$) does not hold if $e_{min}/e_{max}\in (e_1,e_2]$,
 ($\star$) holds if $e_{min}/e_{max}\in (0,e_1]$.
\item\label{item_band_end}
Assume that $\frac{b-b'}{b'}\in (0,1/8]$. Then there
     exists $e_1\in (0,\sqrt{17-12\sqrt{2}})$ such that ($\star$) holds if $e_{min}/e_{max}\in (e_1,1]$,
 ($\star$) does not hold if $e_{min}/e_{max}\in (0,e_1]$.
\end{enumerate}
\end{proposition}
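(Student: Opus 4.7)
The plan is to exploit the fact that $E_b(\bk)$ is independent of $\bk$, so the gap equation becomes an elementary transcendental equation in $\beta$. Writing $y=\cos(t/2)$, $z=e_{min}/e_{max}$, $s=(b-b')/b'$, $\eta=z^2$, one has $g_{E_b}(\beta,t,0)=-2/|U|+(b'/e_{max})W(\beta e_{max},y,z,s)$ with $W$ as in \eqref{eq_large_function}. The identity \eqref{eq_large_small_relation} shows that counting critical points of $\beta\mapsto g_{E_b}(\beta,t,0)$ reduces to counting solutions of $w(x^{2}/2,y,\eta)=s$ in $x$, and hence, in the limit $y\searrow -1$, to studying the profile of $\tilde{w}(\cdot,-1,\eta)$ on $(0,1/\eta)$.

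By Lemma \ref{lem_extended_function_properties}, for $\eta\ge\eta_0$ the function $\tilde{w}(\cdot,-1,\eta)$ is monotone on $(1,1/\eta)$, so $\tilde{w}(\cdot,-1,\eta)=s$ has at most one solution there; for $\eta\in(0,\eta_0)$ it has a local maximum $\tilde{w}(a_-(\eta),-1,\eta)$ and local minimum $\tilde{w}(a_+(\eta),-1,\eta)$, and $\tilde{w}(\cdot,-1,\eta)=s$ has three solutions iff $s$ lies strictly between these two values. A direct computation from \eqref{eq_eta_dependent_variable} and \eqref{eq_extended_function} gives the common endpoint value $\tilde{w}(a_0,-1,\eta_0)=1/a_0=3-2\sqrt{2}$, and as $\eta\searrow 0$ one has $a_-(\eta)\to 3$ and $\eta a_+(\eta)\to 1/3$, yielding $\tilde{w}(a_-(\eta),-1,\eta)\to\tilde{w}(3,-1,0)=1/8$ and $\tilde{w}(a_+(\eta),-1,\eta)\to 0$. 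Combined with strict monotonicity of $\eta\mapsto\tilde{w}(a_\pm(\eta),-1,\eta)$ on $(0,\eta_0)$, these limits identify exactly the thresholds $s=3-2\sqrt{2}$ and $s=1/8$ that split $s=(b-b')/b'$ into the three cases of the proposition.

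Given this threshold classification, the proof assembles by adapting the three preceding propositions. When $s\ge 3-2\sqrt{2}$, one has $s\ge\tilde{w}(a_-(\eta),-1,\eta)$ for every $\eta\in(0,\eta_0]$, so $\beta\mapsto g_{E_b}(\beta,t,0)$ has a unique critical point for each $t$, and the uniqueness of the local minimum of $\tau(\cdot)$ follows from the identity-theorem argument at the end of the proof of Proposition \ref{prop_sufficiency}, which is simpler here since the $\bk$-integral is trivial. For $s\in(1/8,3-2\sqrt{2})$ and $s\in(0,1/8]$, the strict monotonicity of $\eta\mapsto\tilde{w}(a_\pm(\eta),-1,\eta)$ produces unique $\eta_1<\eta_2$ (respectively a unique $\eta_1$) in $(0,\eta_0)$ where $s$ crosses the critical curves, defining $e_1=\sqrt{\eta_1}$, $e_2=\sqrt{\eta_2}$. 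Within the uniqueness regions the Proposition \ref{prop_sufficiency} argument applies verbatim; within the remaining regions, one picks $y\in(-1,0)$ sufficiently close to $-1$ so that the three solutions of $w(x^2/2,y,\eta)=s$ from the $\tilde{w}$-analysis persist, constructs three distinct roots of $g_{E_b}(\cdot,t,0)=0$ via Lemma \ref{lem_tau_implicit_uniqueness}, and invokes the topological argument closing the proof of Proposition \ref{prop_necessity_inequality} to produce at least two local minima of $\tau(\cdot)$.

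The main obstacle is the strict monotonicity of $\eta\mapsto\tilde{w}(a_\pm(\eta),-1,\eta)$ on $(0,\eta_0)$, since without it the thresholds $e_1,e_2$ are not well defined. Because $a_\pm(\eta)$ is a critical point of $\tilde{w}(\cdot,-1,\eta)$, the envelope identity reduces this to determining the sign of $(\partial\tilde{w}/\partial\eta)(a_\pm(\eta),-1,\eta)$; a direct differentiation of \eqref{eq_extended_function} yields
\begin{align*}
\frac{\partial\tilde{w}}{\partial\eta}(x,-1,\eta)=\frac{(x-1)\,x\,(1+\eta x)(3-\eta x)}{(1+x)^2(1-\eta x)^2},
\end{align*}
and positivity then follows from the elementary bound $\eta a_\pm(\eta)\le(1+\eta_0)/3=6-4\sqrt{2}<3$. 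Although the algebraic identities \eqref{eq_eta_quadratic}-\eqref{eq_a_eta_quadratic} make the endpoint calculation at $\eta=\eta_0$ transparent, the interpolation between $\eta\to 0^+$ and $\eta\to\eta_0^-$ must be handled carefully to rule out accidental coincidences in the intermediate range and to confirm that $\eta_1,\eta_2$ as constructed lie strictly inside $(0,\eta_0)$.
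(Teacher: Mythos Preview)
Your proposal is essentially the same approach as the paper's: reduce $g_{E_b}$ to $W$ via \eqref{eq_initial_gap_large_relation}, translate critical points through \eqref{eq_large_small_relation} into level sets $w(\cdot,y,\eta)=s$, study the limiting profile $\tilde w(\cdot,-1,\eta)$ using Lemma~\ref{lem_extended_function_properties}, and prove the strict monotonicity of $\eta\mapsto\tilde w(a_\pm(\eta),-1,\eta)$ from the same partial derivative $\partial_z\tilde w(x,-1,z)=(x-1)x(1+zx)(3-zx)/((x+1)^2(1-zx)^2)$ to pin down the thresholds $3-2\sqrt2$ and $1/8$.

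The one point where the paper is more careful than your outline is the passage from $y=-1$ to $y$ close to $-1$, which it packages as Lemma~\ref{lem_internal_function_classification}. Your phrase ``three solutions iff $s$ lies strictly between'' and ``persist for $y$ near $-1$'' is adequate in the transversal interior, but the half-open intervals $(e_1,e_2]$ and $(0,e_1]$ in the statement hinge on the non-transversal boundary cases $s=\tilde w(a_+(\eta),-1,\eta)$ and $s=\tilde w(a_-(\eta),-1,\eta)$, where at $y=-1$ the level set has a double root. The paper shows, via a second-order perturbation argument on the auxiliary polynomial $\underline w$ (and for $\eta=\eta_0$ via \eqref{eq_internal_field_dependent}), that the former still yields three solutions of $w(\cdot,y,\eta)=s$ for $y$ near $-1$ (hence $(\star)$ fails at $e_2$) while the latter yields only one (hence $(\star)$ holds at $e_1$). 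Your outline flags this as needing care but does not resolve the asymmetry; filling it in is exactly the content of Lemma~\ref{lem_internal_function_classification}.
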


Again the proof of this proposition is based on some properties of the
function $w$ defined in \eqref{eq_internal_function}. Let us set two
conditions concerning the function $w$. Let $\eta\in (0,1)$, $s\in \R_{>0}$.
\begin{itemize}
\item[$(\text{i})_{\eta,s}$] There exists $y_0\in (-1,0)$ such that for
			     any $y\in (-1,y_0]$ there exists 
\begin{align*}
x_0(y)\in \left(\frac{1}{2(y+1)}(\cosh^{-1}(|y|^{-1}))^2, 
                \frac{1}{2\eta (y+1)}(\cosh^{-1}(|y|^{-1}))^2\right)
\end{align*}
such that 
\begin{align*}
&w(x,y,\eta)<s,\quad \left(\forall x\in \left(\frac{1}{2(y+1)}(\cosh^{-1}(|y|^{-1}))^2, 
                x_0(y)\right)\right),\\
&w(x_0(y),y,\eta)=s,\\
&w(x,y,\eta)>s,\quad \left(\forall x\in \left(x_0(y),\frac{1}{2\eta
 (y+1)}(\cosh^{-1}(|y|^{-1}))^2\right)\right).
\end{align*}
\item[$(\text{ii})_{\eta,s}$] 
There exists $y_0\in (-1,0)$ such that for
			     any $y\in (-1,y_0]$ there exist 
\begin{align*}
x_j(y)\in \left(\frac{1}{2(y+1)}(\cosh^{-1}(|y|^{-1}))^2, 
                \frac{1}{2\eta (y+1)}(\cosh^{-1}(|y|^{-1}))^2\right),\ (j=1,2,3)
\end{align*}
such that $x_1(y)<x_2(y)<x_3(y)$,
\begin{align*}
&w(x_j(y),y,\eta)=s,\quad (\forall j\in \{1,2,3\}),\\
&w(x,y,\eta)>s,\quad (\forall x\in (x_1(y),x_2(y)),\\
&w(x,y,\eta)<s,\quad (\forall x\in (x_2(y),x_3(y))).
\end{align*}
\end{itemize}

We summarize sufficient conditions for $(\text{i})_{\eta,s}$ (or
$(\text{ii})_{\eta,s}$) to hold in the next lemma. To understand the
statements, we should recall the inequalities
\eqref{eq_extended_function_profile}. 

\begin{lemma}\label{lem_internal_function_classification}
\begin{enumerate}[(i)]
\item\label{item_up_eta}
Assume that $\eta=17-12\sqrt{2}$. Then for any $s\in (0,\infty)$
     $(\text{i})_{\eta,s}$ holds. 
\item\label{item_down_eta}
Assume that $\eta\in (0,17-12\sqrt{2})$. Then for any $s\in
     [\tilde{w}(a_-(\eta),-1,\eta),\infty)$ $(\text{i})_{\eta,s}$
     holds. For any $s\in
     [\tilde{w}(a_+(\eta),-1,\eta),\tilde{w}(a_-(\eta),-1,\eta))$
     $(\text{ii})_{\eta,s}$ holds. For any $s\in
     (0,\tilde{w}(a_+(\eta),-1,\eta))$ $(\text{i})_{\eta,s}$ holds.    
\end{enumerate}
\end{lemma}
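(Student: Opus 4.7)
The strategy is to compare $w(\cdot,y,\eta)$ on the domain $\bigl(\tfrac{(\cosh^{-1}(|y|^{-1}))^2}{2(y+1)},\tfrac{(\cosh^{-1}(|y|^{-1}))^2}{2\eta(y+1)}\bigr)$ with its analytic limit $\tilde{w}(\cdot,-1,\eta)$ on $(0,\eta^{-1})$, whose profile is supplied by Lemma \ref{lem_extended_function_properties}. As $y\searrow -1$ these endpoints converge to $1$ and $\eta^{-1}$ respectively (using $\cosh^{-1}(|y|^{-1})^2=2(y+1)+O((y+1)^2)$), and $w(\cdot,y,\eta)\to \tilde{w}(\cdot,-1,\eta)$ in $C^k$ on compact subsets of $(0,\eta^{-1})$; at the endpoints themselves $w\to 0^+$ (lower, where $1+y\cosh(\sqrt{y+1}\sqrt{2x})=0$) and $w\to +\infty$ (upper), consistent with $\tilde{w}(1,-1,\eta)=0$ and $\lim_{x\nearrow\eta^{-1}}\tilde{w}(x,-1,\eta)=+\infty$. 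A second ingredient, extracted from \eqref{eq_extended_function_two_derivative}, is that on $(1,\eta^{-1})$ the prefactor $-\tfrac{x(1+\eta x)}{6(1-\eta x)^2(x+1)^3}$ is strictly negative and the bracket $(6+3x+x^2)(1-\eta^2x^2)+\eta(x^2-1)(6+3\eta x+\eta^2x^2)$ is a sum of two manifestly positive terms, so $\partial_y\tilde{w}(x,-1,\eta)<0$ throughout $(1,\eta^{-1})$; in particular this holds at $a_0$ and at $a_\pm(\eta)$.

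For case \eqref{item_up_eta}, Lemma \ref{lem_extended_function_properties} gives that $\tilde{w}(\cdot,-1,\eta_0)$ is strictly increasing on $(0,\eta_0^{-1})$ from $-1$ to $+\infty$ with a single stationary point at $a_0$, so $\tilde{w}(x,-1,\eta_0)=s$ has one and only one root $x^*\in(1,\eta_0^{-1})$ for every $s>0$. If $s\neq 1/a_0$ then $\partial_x\tilde{w}(x^*,-1,\eta_0)>0$ and the implicit function theorem, together with the endpoint behavior above, immediately yields $(\text{i})_{\eta_0,s}$ for $y$ close to $-1$. For the delicate case $s=1/a_0$ I use the expansion $\partial_x\tilde{w}(x,y,\eta_0)=c_1(x-a_0)^2+c_2(y+1)+\cdots$ with $c_1>0$ and $c_2=\partial_x\partial_y\tilde{w}(a_0,-1,\eta_0)=-\tfrac{16(1-\eta_0)^2}{3(1-\eta_0a_0)^3(a_0+1)^3}<0$ computed in the proof of Lemma \ref{lem_necessity_equality_lemma}. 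For $y+1>0$ small, this produces a local maximum and local minimum of $w(\cdot,y,\eta_0)$ near $a_0$ at locations $a_0\mp O((y+1)^{1/2})$ with extremal heights of the form $w(a_0,y,\eta_0)\pm O((y+1)^{3/2})$; since $\partial_y\tilde{w}(a_0,-1,\eta_0)<0$, we have $w(a_0,y,\eta_0)=1/a_0+\partial_y\tilde{w}(a_0,-1,\eta_0)(y+1)+o(y+1)$, and the $O(y+1)$ vertical shift dominates the $O((y+1)^{3/2})$ oscillation amplitude. A simple sign check handling the three sub-cases $s>1/a_0$, $s=1/a_0$, $s<1/a_0$ then shows that both extremal values lie on the same side of $s$ as $w(a_0,y,\eta_0)$ and that $w(\cdot,y,\eta_0)-s$ has a unique simple zero with the sign pattern $(-,+)$ required by $(\text{i})_{\eta_0,s}$.

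For case \eqref{item_down_eta} the profile of $\tilde{w}(\cdot,-1,\eta)$ has a genuine local maximum at $a_-(\eta)$ and local minimum at $a_+(\eta)$. On each open stratum $s>\tilde{w}(a_-(\eta),-1,\eta)$, $\tilde{w}(a_+(\eta),-1,\eta)<s<\tilde{w}(a_-(\eta),-1,\eta)$, or $s<\tilde{w}(a_+(\eta),-1,\eta)$ the limit equation $\tilde{w}(\cdot,-1,\eta)=s$ has respectively one, three and one transversal roots, each of which perturbs by the implicit function theorem to a simple root of $w(\cdot,y,\eta)=s$ for $y$ close to $-1$, yielding $(\text{i})_{\eta,s}$, $(\text{ii})_{\eta,s}$, and $(\text{i})_{\eta,s}$ respectively. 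At the two boundary values $s=\tilde{w}(a_\pm(\eta),-1,\eta)$ the limit equation has in addition a tangent root at $a_\mp(\eta)$, and the outcome is controlled by the sign of $\partial_y\tilde{w}(a_\pm(\eta),-1,\eta)$. The envelope identity gives $\tilde{w}_{\max}(y)=\tilde{w}(a_-(\eta),-1,\eta)+\partial_y\tilde{w}(a_-(\eta),-1,\eta)(y+1)+o(y+1)$ and analogously for the local minimum at $a_+(\eta)$. Since $\partial_y\tilde{w}<0$ on $(1,\eta^{-1})$ from the preliminary computation, at $s=\tilde{w}(a_-(\eta),-1,\eta)$ the local maximum of $w(\cdot,y,\eta)$ drops strictly below $s$, the tangent root vanishes, and $(\text{i})_{\eta,s}$ persists; at $s=\tilde{w}(a_+(\eta),-1,\eta)$ the local minimum drops strictly below $s$, the tangent splits into two real crossings, and $(\text{ii})_{\eta,s}$ holds.

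The main obstacle is the degenerate case $s=1/a_0$ of (i), where the limit equation has a third-order zero and standard implicit-function machinery does not apply; uniqueness is salvaged by the asymptotic ordering $y+1\gg (y+1)^{3/2}$ as $y\searrow -1$ combined with the strict negativity of $\partial_y\tilde{w}(a_0,-1,\eta_0)$. In case (ii) the analogous difficulty at $s=\tilde{w}(a_\pm(\eta),-1,\eta)$---deciding on which side of the tangent value each local extremum lies after the $y$-perturbation---is resolved uniformly by the same sign calculation for $\partial_y\tilde{w}$ on $(1,\eta^{-1})$.
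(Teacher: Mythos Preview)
Your approach is correct and follows essentially the same underlying ideas as the paper---comparison of $w(\cdot,y,\eta)$ with its analytic limit $\tilde w(\cdot,-1,\eta)$, the negativity of $\partial_y\tilde w$ on $(1,\eta^{-1})$, and the mixed-derivative computation $\partial_x\partial_y\tilde w(a_0,-1,\eta_0)<0$ from the proof of Lemma~\ref{lem_necessity_equality_lemma}---but the technical execution differs. The paper does not argue via Taylor expansion of $\partial_x\tilde w$ and perturbative location of the two critical points near $a_0$. Instead it introduces an auxiliary polynomial-type function $\underline w(x,y,\eta)$, proportional to $\partial_x w$ through a positive factor (see \eqref{eq_internal_inside_relation}), and observes that $\underline w(x,-1,\eta)=3\eta(1-\eta)(x-a_+(\eta))(x-a_-(\eta))$ with $\partial_x^2\underline w(x,-1,\eta)=6\eta(1-\eta)>0$; by continuity the strict convexity of $\underline w(\cdot,y,\eta)$ persists for $y$ near $-1$ on the relevant compact interval. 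This global convexity forces $\underline w$---hence $\partial_x w$---to change sign at most twice, which replaces your local Taylor analysis and works uniformly without tracking the scale $(y+1)^{3/2}$ against $(y+1)$. The paper also proves carefully the uniform blow-up bound $w(x,y,\eta)>s$ on $[\eta^{-1}-\eps(s,\eta),\text{upper endpoint})$ (inequality \eqref{eq_internal_permanent_lower}), which your sketch invokes but does not establish; since the upper endpoint moves with $y$, the pointwise divergence you cite is not by itself sufficient.

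What each approach buys: your perturbative argument is shorter and more conceptual once the degenerate-point heuristics are accepted, and it makes transparent why the boundary cases $s=\tilde w(a_\pm(\eta),-1,\eta)$ fall on the side they do (the envelope derivative is just $\partial_y\tilde w(a_\pm(\eta),-1,\eta)<0$). The paper's convexity device is more robust: it controls the number of critical points of $w(\cdot,y,\eta)$ globally on a fixed compact set rather than only in a shrinking neighbourhood, so the case analysis never needs to compare the two small scales $y+1$ and $(y+1)^{3/2}$. One point worth making explicit in your write-up is that for $s\neq 1/a_0$ the threshold $y_0$ must be allowed to depend on $|s-1/a_0|$ (so that both extremal values of $w$ near $a_0$ lie on the same side of $s$); this is permitted by the statement of $(\text{i})_{\eta,s}$ but is implicit in your ``simple sign check'' sentence.
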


\begin{proof}
Assume that $\eta\in (0,17-12\sqrt{2}]$. Let us prepare necessary basic
 properties related to the function $w$. 
The preparation continues until we prove the claim \eqref{eq_internal_inside_edge}.
Observe that there exists
 $y_0\in (-1,0)$ such that
 $$
\frac{1}{\sqrt{2(y+1)}}\cosh^{-1}(|y|^{-1})>1,\quad (\forall y\in
 (-1,y_0]).$$
 This claim can be proved efficiently by proving the
 equivalent statement that there exists $y_0\in (-1,0)$ such that
$|y|^{-1}>\cosh(\sqrt{2(y+1)})$ for any $y\in
 (-1,y_0]$. Moreover, by \eqref{eq_eta_function_critical_order}, 
\eqref{eq_eta_function_first_order}
there exists $y_0(\eta)\in (-1,y_0]$ such that 
$|y|a_+(\eta)>1$ for any $y\in (-1,y_0(\eta)]$. 
We can see from \eqref{eq_internal_function_continuation}
and the limit in the left-hand side of
 \eqref{eq_arc_cosh_limit_relation} 
that for any $y\in (-1,y_0(\eta)]$, $\eps
 \in (0,\eta^{-1}-a_+(\eta))$ and 
$$
x\in \left[\eta^{-1}-\eps,\frac{1}{2\eta
 (y+1)}(\cosh^{-1}(|y|^{-1}))^2\right),
$$ 
\begin{align}
&w(x,y,\eta)\ge
 \frac{|y|(\eta^{-1}-\eps)-1}{(1+y\eta(\eta^{-1}-\eps))
\left(1+\sum_{n=1}^{\infty}\frac{(y+1)^{n-1}}{(2n)!}2^n\big(\frac{1}{2\eta(y+1)}(\cosh^{-1}(|y|^{-1}))^2\big)^n\right)^2},\label{eq_internal_temporal_lower}\\
&\lim_{y\searrow -1}(\text{R.H.S of
 }\eqref{eq_internal_temporal_lower})=\frac{\eta^{-1}-\eps-1}{\eta
 \eps(1+\eta^{-1})^2}.\notag
\end{align}
Take any $s\in \R_{>0}$. Note that there exists
 $\eps(s,\eta)\in (0,\eta^{-1}-a_+(\eta))$ such that 
$$
\frac{\eta^{-1}-\eps(s,\eta)-1}{\eta \eps(s,\eta)(1+\eta^{-1})^2}\ge 2s.
$$
Then it follows from the above claims that there exists $y_1(s,\eta)\in
 (-1,y_0(\eta)]$ such that for any $y\in (-1,y_1(s,\eta)]$ 
\begin{align}
1&<\frac{1}{2
 (y+1)}(\cosh^{-1}(|y|^{-1}))^2<a_-(\eta)\le
 a_+(\eta)<\eta^{-1}-\eps(s,\eta)\label{eq_eta_dependent_y_order}\\
&<\frac{1}{2\eta
 (y+1)}(\cosh^{-1}(|y|^{-1}))^2\notag
\end{align}
and for any 
$$
x\in \left[\eta^{-1}-\eps(s,\eta),\frac{1}{2\eta
 (y+1)}(\cosh^{-1}(|y|^{-1}))^2\right),
$$ 
\begin{align}
w(x,y,\eta)>s.\label{eq_internal_permanent_lower}
\end{align}
Recall \eqref{eq_continued_domain_D}. To justify the subsequent argument,
 let us check that there exists $\delta(s,\eta)\in (0,\eps(s,\eta))$
 such that
 \begin{align*}
(1-\delta(s,\eta),\eta^{-1}-\eps(s,\eta)+\delta(s,\eta))\times
 (-1-\delta(s,\eta),-1+\delta(s,\eta))\times \{\eta\}\subset
 \tilde{D}.
\end{align*}
 We can deduce from
 \eqref{eq_extended_function_two_derivative} that $\sup_{x\in
 [1,\eta^{-1}-\eps(s,\eta)]}\frac{\partial \tilde{w}}{\partial
 y}(x,-1,\eta)<0$. For $(x,y)\in [1,\eta^{-1}-\eps(s,\eta)]\times
 (-1,-1+\delta(s,\eta)/2]$
\begin{align*}
\tilde{w}(x,y,\eta)&=\tilde{w}(x,-1,\eta)+\frac{\partial \tilde{w}}{\partial
 y}(x,-1,\eta)(y+1)+\int_{-1}^yd\xi (y-\xi)\frac{\partial^2
 \tilde{w}}{\partial y^2}(x,\xi,\eta)\\
&\le \tilde{w}(x,-1,\eta)+\sup_{\zeta\in
 [1,\eta^{-1}-\eps(s,\eta)]}\frac{\partial \tilde{w}}{\partial
 y}(\zeta,-1,\eta)(y+1)\\
&\quad +\sup_{\zeta\in [1,\eta^{-1}-\eps(s,\eta)]\atop
 \xi\in [-1,-1+\delta(s,\eta)/2]}\left|\frac{\partial^2
 \tilde{w}}{\partial y^2}(\zeta,\xi,\eta)\right|(y+1)^2.
\end{align*}
Recall \eqref{eq_domain_D}. These imply that there exists
 $y_2(s,\eta)\in (-1,y_1(s,\eta)]$ such that
 $[1,\eta^{-1}-\eps(s,\eta)]\times (-1,y_2(s,\eta)]\times
 \{\eta\}\subset D$ and 
\begin{align}
w(x,y,\eta)<\tilde{w}(x,-1,\eta),\quad (\forall (x,y)\in
 [1,\eta^{-1}-\eps(s,\eta)]\times
 (-1,y_2(s,\eta)]).\label{eq_uniform_lower_order}
\end{align}
We will also refer to the basic fact that
\begin{align}
w\left(\frac{1}{2(y+1)}(\cosh^{-1}(|y|^{-1}))^2,y,\eta
\right)=0,\quad (\forall y\in
 (-1,y_2(s,\eta)]).\label{eq_basic_vanishing}
\end{align}
Let us define the function $\underline{w}:\R^3\to \R$ by
\begin{align*}
\underline{w}(x,y,z):=&\left(
1+y\sum_{m=1}^{\infty}\frac{(y+1)^{m-1}}{(2m)!}2^mx^m
\right)
\left(
1+\sum_{n=1}^{\infty}\frac{(y+1)^{n-1}}{(2n)!}2^nz^nx^n
\right)\\
&\cdot\Bigg(
y\sum_{m=1}^{\infty}\frac{m(y+1)^{m-1}}{(2m)!}2^mz^mx^{m-1}
\left(1+\sum_{n=1}^{\infty}\frac{(y+1)^{n-1}}{(2n)!}2^nx^n
\right)\\
&\qquad+2\left(
1+y\sum_{m=1}^{\infty}\frac{(y+1)^{m-1}}{(2m)!}2^mz^mx^m
\right)\sum_{n=1}^{\infty}\frac{n(y+1)^{n-1}}{(2n)!}2^nx^{n-1}
\Bigg)\\
&-\Bigg(
y\sum_{m=1}^{\infty}\frac{m(y+1)^{m-1}}{(2m)!}2^mx^{m-1}
\left(1+\sum_{n=1}^{\infty}\frac{(y+1)^{n-1}}{(2n)!}2^nz^nx^n
\right)\\
&\qquad+2\left(
1+y\sum_{m=1}^{\infty}\frac{(y+1)^{m-1}}{(2m)!}2^mx^m
\right)\sum_{n=1}^{\infty}\frac{n(y+1)^{n-1}}{(2n)!}2^nz^nx^{n-1}
\Bigg)\\
&\quad\cdot\left(
1+y\sum_{m=1}^{\infty}\frac{(y+1)^{m-1}}{(2m)!}2^mz^mx^m
\right)
\left(
1+\sum_{n=1}^{\infty}\frac{(y+1)^{n-1}}{(2n)!}2^nx^n
\right).
\end{align*}
By differentiating \eqref{eq_internal_function_continuation} we can
 derive that
\begin{align}
&\frac{\partial w}{\partial
 x}(x,y,\eta)\label{eq_internal_inside_relation}\\
&=\frac{1+\sum_{n=1}^{\infty}\frac{(y+1)^{n-1}}{(2n)!}2^n\eta^nx^n}
{\left(1+y\sum_{m=1}^{\infty}\frac{(y+1)^{m-1}}{(2m)!}2^m\eta^mx^m\right)^2
 \left(1+\sum_{n=1}^{\infty}\frac{(y+1)^{n-1}}{(2n)!}2^nx^n\right)^3}\underline{w}(x,y,\eta),\notag\\
&(\forall (x,y)\in [1,\eta^{-1}-\eps(s,\eta)]\times
 (-1,y_2(s,\eta)]).\notag
\end{align}
Let us observe that 
\begin{align}
&\underline{w}(x,-1,\eta)
=3\eta(1-\eta)(x-a_+(\eta))(x-a_-(\eta)),\label{eq_internal_inside_minus_one}\\
&\frac{\partial^2 \underline{w}}{\partial x^2}(x,-1,\eta)=6\eta
 (1-\eta)>0.\notag
\end{align}
The above inequality implies that there exists $y_3(s,\eta)\in
 (-1,y_2(s,\eta)]$ such that 
\begin{align}
&\frac{\partial^2\underline{w}}{\partial x^2}(x,y,\eta)>0,
\quad (\forall (x,y)\in [1,\eta^{-1}-\eps(s,\eta)]\times
 (-1,y_3(s,\eta)]).\label{eq_internal_inside_convex}
\end{align}
Also, by \eqref{eq_eta_dependent_y_order} and
 \eqref{eq_internal_inside_minus_one}
$\underline{w}(\eta^{-1}-\eps(s,\eta),-1,\eta)>0$.
Thus, there exists $y_4(s,\eta)\in (-1,y_3(s,\eta)]$ such that 
\begin{align}
\underline{w}(\eta^{-1}-\eps(s,\eta),y,\eta)>0,\quad (\forall y\in
 (-1,y_4(s,\eta)]).\label{eq_internal_inside_edge}
\end{align}

As we have prepared necessary tools, let us start proving the claims of
 the lemma case by case.

\eqref{item_up_eta}: Assume that $\eta=17-12\sqrt{2}$. Recall the
 relation \eqref{eq_eta_function_critical_order}. Assume that $s\in
 (\tilde{w}(a_-(\eta),-1,\eta),\infty)$. We can deduce from
 \eqref{eq_extended_function},
 \eqref{eq_extended_function_one_derivative} that 
\begin{align}
&x\mapsto \tilde{w}(x,-1,\eta):[1,\eta^{-1})\to\R\text{ is strictly monotone
 increasing, }\label{eq_extended_function_monotonicity}\\
&\tilde{w}(1,-1,\eta)=0\text{ and }\lim_{x\nearrow
 \eta^{-1}}\tilde{w}(x,-1,\eta)=\infty.\notag
\end{align}
Thus there uniquely exists $a_1\in (a_-(\eta),\eta^{-1})$ such that 
$s=\tilde{w}(a_1,-1,\eta)$. If $a_1\in
 (\eta^{-1}-\eps(s,\eta),\eta^{-1})$, by \eqref{eq_uniform_lower_order}
 and \eqref{eq_extended_function_monotonicity} $w(x,y,\eta)<s$,
 $(\forall (x,y)\in [1,\eta^{-1}-\eps(s,\eta)]\times
 (-1,y_2(s,\eta)])$. This contradicts
 \eqref{eq_internal_permanent_lower}. Thus, $a_1\in
 (a_-(\eta),\eta^{-1}-\eps(s,\eta)]$. By \eqref{eq_uniform_lower_order}
 and \eqref{eq_extended_function_monotonicity} again 
$w(x,y,\eta)<s$ for all $(x,y)\in [1,a_1]\times (-1,y_2(s,\eta)]$. This
 property coupled with \eqref{eq_internal_permanent_lower} ensures that
 for any $y\in (-1,y_4(s,\eta)]$ 
\begin{align}
\emptyset \neq &\left\{x\in
 \left(\frac{1}{2(y+1)}(\cosh^{-1}(|y|^{-1}))^2,\frac{1}{2\eta(y+1)}(\cosh^{-1}(|y|^{-1}))^2\right)\ \Big|\
 w(x,y,\eta)=s\right\}\label{eq_level_set_inclusion}\\
&\subset
 [a_1,\eta^{-1}-\eps(s,\eta)].\notag
\end{align}
By \eqref{eq_internal_inside_minus_one} $\underline{w}(x,-1,\eta)\ge
 3\eta(1-\eta)(a_1-a_-(\eta))^2>0$ for any $x\in
 [a_1,\eta^{-1}-\eps(s,\eta)]$. Thus there exists $y_5(s,\eta)\in
 (-1,y_4(s,\eta)]$ such that for any $(x,y)\in
 [a_1,\eta^{-1}-\eps(s,\eta)]\times (-1,y_5(s,\eta)]$
 $\underline{w}(x,y,\eta)>0$ and by \eqref{eq_internal_inside_relation}
 $\frac{\partial w}{\partial x}(x,y,\eta)>0$. This property combined with
 \eqref{eq_level_set_inclusion} implies that $(\text{i})_{\eta,s}$
 holds. 

Assume that $s\in (0,\tilde{w}(a_-(\eta),-1,\eta))$. By
 \eqref{eq_extended_function_monotonicity} there uniquely exists $a_1\in
 (1,a_-(\eta))$ such that $\tilde{w}(a_1,-1,\eta)=s$.   
Since $\tilde{w}(x,-1,\eta)>s$ for any $x\in
 [a_1+\frac{1}{2}(a_-(\eta)-a_1), \eta^{-1}-\eps(s,\eta)]$, there exists
 $y_6(s,\eta)\in (-1,y_4(s,\eta)]$ such that 
\begin{align}
w(x,y,\eta)>s,\quad \left(\forall (x,y)\in
 \left[a_1+\frac{1}{2}(a_-(\eta)-a_1),\eta^{-1}-\eps(s,\eta)\right]\times
 (-1,y_6(s,\eta)]\right).\label{eq_internal_function_monotone_up}
\end{align}
By \eqref{eq_internal_inside_minus_one} $\underline{w}(x,-1,\eta)>0$ for
 any $x\in [1,a_1+\frac{1}{2}(a_-(\eta)-a_1)]$. Thus there exists
 $y_7(s,\eta)\in (-1,y_6(s,\eta)]$ such that for any $(x,y)\in
 [1,a_1+\frac{1}{2}(a_-(\eta)-a_1)]\times (-1,y_7(s,\eta)]$
 $\underline{w}(x,y,\eta)>0$ and thus $\frac{\partial w}{\partial
 x}(x,y,\eta)>0$. This property together with
 \eqref{eq_eta_dependent_y_order}, \eqref{eq_internal_permanent_lower},
 \eqref{eq_basic_vanishing}, \eqref{eq_internal_function_monotone_up}
 implies that $(\text{i})_{\eta,s}$ holds.

Assume that $s=\tilde{w}(a_-(\eta),-1,\eta)$. Since $\eta=\eta_0$,
 $a_-(\eta)=\frac{\eta_0+1}{6\eta_0}=a_0$ by \eqref{eq_a_eta_relation},
 we can apply \eqref{eq_internal_field_dependent} to ensure that there
 exists $y_8(s,\eta)\in (-1,y_4(s,\eta)]$ such that for any $y\in
 (-1,y_8(s,\eta)]$ $\frac{\partial w}{\partial
 x}(a_-(\eta),y,\eta)<0$. This combined with
 \eqref{eq_internal_inside_relation} implies that
 $\underline{w}(a_-(\eta),y,\eta)<0$ for any $y\in
 (-1,y_8(s,\eta)]$. Therefore, by \eqref{eq_internal_inside_convex},
 \eqref{eq_internal_inside_edge} for any $y\in (-1,y_8(s,\eta)]$ 
there exists $x_1(y)\in (a_-(y),\eta^{-1}-\eps(s,\eta))$ such that 
\begin{align*}
&\underline{w}(x,y,\eta)<0,\quad (\forall x\in [a_-(\eta),x_1(y))),\\
&\underline{w}(x_1(y),y,\eta)=0,\\
&\underline{w}(x,y,\eta)>0,\quad (\forall x\in
 (x_1(y),\eta^{-1}-\eps(s,\eta)]),
\end{align*}
or by \eqref{eq_internal_inside_relation}
\begin{align}
&\frac{\partial w}{\partial x}(x,y,\eta)<0,\quad (\forall x\in [a_-(\eta),x_1(y))),\label{eq_local_internal_convexity}\\
&\frac{\partial w}{\partial x}(x_1(y),y,\eta)=0,\notag\\
&\frac{\partial w}{\partial x}(x,y,\eta)>0,\quad (\forall x\in
 (x_1(y),\eta^{-1}-\eps(s,\eta)]).\notag
\end{align}
We can see from \eqref{eq_uniform_lower_order} and
 \eqref{eq_extended_function_monotonicity} that 
\begin{align}
w(x,y,\eta)<s,\quad (\forall (x,y)\in [1,a_-(\eta)]\times
 (-1,y_8(s,\eta)]).\label{eq_uniform_lower_order_specific}
\end{align}
Considering \eqref{eq_eta_dependent_y_order},
 \eqref{eq_internal_permanent_lower},
 \eqref{eq_local_internal_convexity} and
 \eqref{eq_uniform_lower_order_specific}, we can conclude that
 $(\text{i})_{\eta,s}$ holds in this case. 

\eqref{item_down_eta}: Assume that $\eta\in (0,17-12\sqrt{2})$ and $s\in
 [\tilde{w}(a_-(\eta),-1,\eta),\infty)$. The properties
\eqref{eq_eta_function_first_order}, 
 \eqref{eq_extended_function_slope},
 \eqref{eq_extended_function_profile} tell us the profile of the
 function $\tilde{w}(\cdot,-1,\eta)$, which together with
 \eqref{eq_uniform_lower_order} implies that
\begin{align}
w(x,y,\eta)<s,\quad (\forall (x,y)\in [1,a_+(\eta)]\times
 (-1,y_2(s,\eta)]).\label{eq_uniform_lower_type_second}
\end{align}
Since
 $\underline{w}(a_-(\eta)+(a_+(\eta)-a_-(\eta))/2,-1,\eta)<0$
 by \eqref{eq_internal_inside_minus_one}, there exists $y_9(s,\eta)\in
 (-1,y_4(s,\eta)]$ such that 
$\underline{w}(a_-(\eta)+(a_+(\eta)-a_-(\eta))/2,y,\eta)<0$ 
for any $y\in (-1,$ $y_9(s,\eta)]$. By taking this property,
 \eqref{eq_internal_inside_convex} and \eqref{eq_internal_inside_edge}
 into account we can prove the following statement. For any $y\in
 (-1,y_{9}(s,\eta)]$ there exists $x_2(y)\in
 (a_-(\eta)+(a_+(\eta)-a_-(\eta))/2,\eta^{-1}-\eps(s,\eta))$
 such that 
\begin{align*}
&\underline{w}(x,y,\eta)<0,\quad \left(\forall x\in
 \left(a_-(\eta)+\frac{1}{2}(a_+(\eta)-a_-(\eta)),x_2(y)\right)\right),\\
&\underline{w}(x_2(y),y,\eta)=0,\\
&\underline{w}(x,y,\eta)>0,\quad (\forall x\in
 (x_2(y),\eta^{-1}-\eps(s,\eta)]),
\end{align*}
or by \eqref{eq_internal_inside_relation}
\begin{align*}
&\frac{\partial w}{\partial x}(x,y,\eta)<0,\quad \left(\forall x\in
 \left(a_-(\eta)+\frac{1}{2}(a_+(\eta)-a_-(\eta)),x_2(y)\right)\right),\\
&\frac{\partial w}{\partial x}(x_2(y),y,\eta)=0,\\
&\frac{\partial w}{\partial x}(x,y,\eta)>0,\quad (\forall x\in
 (x_2(y),\eta^{-1}-\eps(s,\eta)]).
\end{align*}
By this property, \eqref{eq_internal_permanent_lower} and 
\eqref{eq_uniform_lower_type_second} for any $y\in (-1,y_{9}(s,\eta)]$ 
there exists $x_3(y)\in (a_+(\eta),\eta^{-1}-\eps(s,\eta))$ such that 
\begin{align*}
&w(x,y,\eta)<s,\quad (\forall x\in [1,x_3(y))),\\
&w(x_3(y),y,\eta)=s,\\
&w(x,y,\eta)>s,\quad \left(\forall x\in
 \left(x_3(y),\frac{1}{2\eta(y+1)}(\cosh^{-1}(|y|^{-1}))^2\right)\right).
\end{align*}
Thus, the property $(\text{i})_{\eta,s}$ holds.

Assume that $s\in [\tilde{w}(a_+(\eta),-1,\eta),
 \tilde{w}(a_-(\eta),-1,\eta))$. By \eqref{eq_uniform_lower_order} there
 exists $y_{10}(s,\eta)\in (-1,y_4(s,\eta)]$ such that
\begin{align}
w(a_-(\eta),y,\eta)>s>w(a_+(\eta),y,\eta),\quad (\forall y\in
 (-1,y_{10}(s,\eta)]).\label{eq_internal_winding}
\end{align}
Since $\underline{w}(1+(a_-(\eta)-1)/2,-1,\eta)>0$ and
 $\underline{w}(a_-(\eta)+(a_+(\eta)-a_-(\eta))/2,-1,\eta)<0$ by \eqref{eq_internal_inside_minus_one}, there
 exists $y_{11}(s,\eta)\in (-1,y_{10}(s,\eta)]$ such that 
$\underline{w}(1+(a_-(\eta)-1)/2,y,\eta)>0$,
 $\underline{w}(a_-(\eta)+(a_+(\eta)-a_-(\eta))/2,y,\eta)<0$ for any
 $y\in (-1,y_{11}(s,\eta)]$. This property combined with
 \eqref{eq_internal_inside_convex}, \eqref{eq_internal_inside_edge}
 implies the following statement. For any $y\in (-1,y_{11}(s,\eta)]$ there
 exist $x_4(y)\in (1+(a_-(y)-1)/2,a_-(y)+(a_+(y)-a_-(y))/2)$, $x_5(y)\in
 (a_-(y)+(a_+(y)-a_-(y))/2, \eta^{-1}-\eps(s,\eta))$ such that
\begin{align*}
&\underline{w}(x_4(y),y,\eta)=\underline{w}(x_5(y),y,\eta)=0,\\
&\underline{w}(x,y,\eta)>0,\quad (\forall x\in [1,x_4(y))),\\
&\underline{w}(x,y,\eta)<0,\quad (\forall x\in (x_4(y),x_5(y))),\\
&\underline{w}(x,y,\eta)>0,\quad (\forall x\in
 (x_5(y),\eta^{-1}-\eps(s,\eta)]),
\end{align*}
or by \eqref{eq_internal_inside_relation}
\begin{align*}
&\frac{\partial w}{\partial x}(x_4(y),y,\eta)=\frac{\partial w}{\partial x}(x_5(y),y,\eta)=0,\\
&\frac{\partial w}{\partial x}(x,y,\eta)>0,\quad (\forall x\in [1,x_4(y))),\\
&\frac{\partial w}{\partial x}(x,y,\eta)<0,\quad (\forall x\in (x_4(y),x_5(y))),\\
&\frac{\partial w}{\partial x}(x,y,\eta)>0,\quad (\forall x\in
 (x_5(y),\eta^{-1}-\eps(s,\eta)]).
\end{align*}
By considering these properties we can picture the profile of the function
 $w(\cdot,y,\eta)$.
Take any $y\in (-1,y_{11}(s,\eta)]$. Suppose that $s\ge
 w(x_4(y),y,\eta)$. Then, by the profile of $w(\cdot,y,\eta)$ in $[1,\eta^{-1}-\eps(s,\eta)]$, 
if $a\in [1,\eta^{-1}-\eps(s,\eta)]$ and
 $w(a,y,\eta)>s$, $w(a',y,\eta)>s$ for any $a'\in
 [a,\eta^{-1}-\eps(s,\eta)]$. This claim contradicts
 \eqref{eq_internal_winding}. Suppose that $s\le
 w(x_5(y),y,\eta)$. Then, if $a\in [1,\eta^{-1}-\eps(s,\eta)]$ and
 $w(a,y,\eta)>s$, $w(a',y,\eta)\ge s$ for any $a'\in
 [a,\eta^{-1}-\eps(s,\eta)]$. This claim contradicts
 \eqref{eq_internal_winding} as well. Therefore, 
$w(x_4(y),y,\eta)>s>w(x_5(y),y,\eta)$. Moreover, by
 \eqref{eq_internal_permanent_lower}, \eqref{eq_basic_vanishing}
and the profile of $w(\cdot,y,\eta)$,  
$x_4(y)>\frac{1}{2(y+1)}(\cosh^{-1}(|y|^{-1}))^2$
and there exist
\begin{align*} 
&x_6(y)\in
 \left(\frac{1}{2(y+1)}(\cosh^{-1}(|y|^{-1}))^2,x_4(y)\right),\\
&x_7(y)\in (x_4(y),x_5(y)),\\
&x_8(y)\in (x_5(y),\eta^{-1}-\eps(s,\eta))\left(
\subset \left(x_5(y),\frac{1}{2\eta(y+1)}(\cosh^{-1}(|y|^{-1}))^{2}\right)
\right)
\end{align*}
such that
\begin{align*}
&w(x_6(y),y,\eta)=w(x_7(y),y,\eta)=w(x_8(y),y,\eta)=s,\\
&w(x,y,\eta)>s,\quad (\forall x\in (x_6(y),x_7(y))),\\
&w(x,y,\eta)<s,\quad (\forall x\in (x_7(y),x_8(y))).
\end{align*}
This means that $(\text{ii})_{\eta,s}$ holds.

Finally let us assume that $s\in (0,\tilde{w}(a_+(\eta),-1,\eta))$. We can see from the
 profile of $\tilde{w}(\cdot,-1,\eta)$ that there uniquely exists
 $a_2\in (1,a_-(\eta))$ such that $s=\tilde{w}(a_2,-1,\eta)$. Moreover,
 there exists $a_3\in (a_2,a_-(\eta))$ such that
 $\tilde{w}(x,-1,\eta)\ge \tilde{w}(a_3,-1,\eta)>s$ for all $x\in
 [a_3,\eta^{-1})$. Thus we can take $y_{12}(s,\eta)\in (-1,y_4(s,\eta)]$
 so that $w(x,y,\eta)>s$ for any $(x,y)\in
 [a_3,\eta^{-1}-\eps(s,\eta)]\times (-1,y_{12}(s,\eta)]$. Since
 $\underline{w}(x,-1,\eta)\ge \underline{w}(a_3,-1,\eta)>0$ for any
 $x\in [1,a_3]$ by \eqref{eq_internal_inside_minus_one},
there exists $y_{13}(s,\eta)\in (-1,y_{12}(s,\eta)]$
 such that for any $(x,y)\in [1,a_3]\times (-1,y_{13}(s,\eta)]$
 $\underline{w}(x,y,\eta)>0$, and thus by
 \eqref{eq_internal_inside_relation} $\frac{\partial w}{\partial
 x}(x,y,\eta)>0$. These combined with \eqref{eq_eta_dependent_y_order},
 \eqref{eq_basic_vanishing} imply that for any $y\in
 (-1,y_{13}(s,\eta)]$ $a_3>\frac{1}{2(y+1)}(\cosh^{-1}(|y|^{-1}))^2$ and 
there exists $x_9(y)\in
 (\frac{1}{2(y+1)}(\cosh^{-1}(|y|^{-1}))^2,a_3)$ such that 
\begin{align*}
&w(x,y,\eta)<s,\quad \left(\forall x\in
 \left(\frac{1}{2(y+1)}(\cosh^{-1}(|y|^{-1}))^2, x_9(y)
\right)\right),\\
&w(x_9(y),y,\eta)=s,\\
&w(x,y,\eta)>s,\quad (\forall x\in (x_9(y),\eta^{-1}-\eps(s,\eta)]).
\end{align*}
Taking into account \eqref{eq_internal_permanent_lower}, we conclude that
 $(\text{i})_{\eta,s}$ holds.
\end{proof}

By applying Lemma \ref{lem_internal_function_classification} we can
prove Proposition \ref{prop_multi_orbital}. 

\begin{proof}[Proof of Proposition \ref{prop_multi_orbital}]
Recalling \eqref{eq_large_function}, we see that 
\begin{align}
g_{E_b}(x,t,0)=\frac{b'}{e_{max}}\left(
-\frac{2e_{max}}{b'|U|}+W\left(e_{max}x,\cos\left(\frac{t}{2}\right),
 \frac{e_{min}}{e_{max}},\frac{b-b'}{b'}\right)\right).\label{eq_initial_gap_large_relation}
\end{align}
Theorem \ref{thm_boundary_characterization} implies that if
 $e_{min}/e_{max}>\sqrt{17-12\sqrt{2}}$, for any $b\in \N_{\ge 2}$,
 $b'\in \{1,\cdots,b-1\}$ ($\star$) holds. Let us assume that 
$e_{min}/e_{max}\le \sqrt{17-12\sqrt{2}}$ in the following. For
 $\eta=(e_{min}/e_{max})^2$, $s=(b-b')/b'$ let us prove the following
 statements. 
\begin{itemize}
\item If the condition $(\text{i})_{\eta,s}$ holds, ($\star$) holds.
\item If the condition $(\text{ii})_{\eta,s}$ holds, ($\star$) does not
      hold.
\end{itemize}

Assume that $(\text{i})_{\eta,s}$ holds. Then by
 \eqref{eq_large_small_relation} there exists $y_0\in (-1,0)$ such that
 for any $y\in (-1,y_0]$ there exists 
\begin{align*}
&x_0(y)\in \left(\frac{1}{2(y+1)}(\cosh^{-1}(|y|^{-1}))^2, \frac{1}{2\eta(y+1)}(\cosh^{-1}(|y|^{-1}))^2\right)
\end{align*}
such that 
\begin{align*}
&\frac{\partial W}{\partial x}(\sqrt{y+1}\cdot
 x,y,\sqrt{\eta},s)>0,\quad \left(\forall x\in
 \left(\frac{1}{\sqrt{y+1}}\cosh^{-1}(|y|^{-1}), \sqrt{2x_0(y)}
\right)
\right),\\
&\frac{\partial W}{\partial x}(\sqrt{2(y+1)x_0(y)},y,\sqrt{\eta},s)=0,\\
&\frac{\partial W}{\partial x}(\sqrt{y+1}\cdot
 x,y,\sqrt{\eta},s)<0,\quad \left(\forall x\in
 \left(\sqrt{2x_0(y)},\frac{1}{\sqrt{\eta(y+1)}}\cosh^{-1}(|y|^{-1})\right)
\right).
\end{align*}
Let $\cos^{-1}:[-1,1]\to [0,\pi]$ denote the inverse function of
 $\cos|_{[0,\pi]}$. The above property and
 \eqref{eq_initial_gap_large_relation} ensure that 
for any $t\in [2\cos^{-1}y_0,2\pi)(\subset (\pi,2\pi))$ there exists 
\begin{align*}
\hat{x}(t)\in \left(
\frac{1}{e_{max}}\cosh^{-1}\left(\left|\cos\left(\frac{t}{2}\right)
\right|^{-1}\right), 
\frac{1}{e_{min}}\cosh^{-1}\left(\left|\cos\left(\frac{t}{2}\right)
\right|^{-1}\right)\right)
\end{align*}
such that
\begin{align*}
&\frac{\partial g_{E_b}}{\partial x}(x,t,0)>0,\quad \left(\forall x\in 
 \left(
\frac{1}{e_{max}}\cosh^{-1}\left(\left|\cos\left(\frac{t}{2}\right)
\right|^{-1}\right), \hat{x}(t)\right)\right),\\
&\frac{\partial g_{E_b}}{\partial x}(\hat{x}(t),t,0)=0,\\
&\frac{\partial g_{E_b}}{\partial x}(x,t,0)<0,\quad \left(\forall x\in 
 \left(\hat{x}(t),
\frac{1}{e_{min}}\cosh^{-1}\left(\left|\cos\left(\frac{t}{2}\right)
\right|^{-1}\right)\right)\right),
\end{align*}
or by taking into account \eqref{eq_temporal_function_derivative}
\begin{align}
&\frac{\partial g_{E_b}}{\partial x}(x,t,0)>0,\quad (\forall x\in
 (0,\hat{x}(t))),\label{eq_unique_mountain_peak}\\
&\frac{\partial g_{E_b}}{\partial x}(\hat{x}(t),t,0)=0,\notag\\
&\frac{\partial g_{E_b}}{\partial x}(x,t,0)<0,\quad (\forall x\in
 (\hat{x}(t),\infty)).\notag
\end{align}
Suppose that ($\star$) does not hold. Then for any $U_0\in
 (0,2e_{min}/b)$ there exists $U\in [-U_0,0)$ such that $\tau(\cdot)$
 has more than one local minimum points in $(0,\beta_c)$. By
 \eqref{eq_boundary_gap_equation_direct}
 $\cos(\tau(\beta)/2)+1\le \frac{\sinh(2)b}{2e_{min}}U_0$. Thus
 if we take $U_0$ sufficiently small, $\tau(\beta)\in
 [2\cos^{-1}y_0,2\pi)$ for any $\beta\in (0,\beta_c)$. Now there are
 $\beta_j\in (0,\beta_c)$ $(j=1,2,3)$ such that
 $\beta_1<\beta_2<\beta_3$ and
 $\tau(\beta_1)=\tau(\beta_2)=\tau(\beta_3)$. Thus,
 $g_{E_b}(\beta_j,\tau(\beta_1),0)=0$ for $j\in \{1,2,3\}$, which 
implies that there exist $x_1\in (\beta_1,\beta_2)$, $x_2\in
 (\beta_2,\beta_3)$ such that 
$\frac{\partial g_{E_b}}{\partial x}(x_j,\tau(\beta_1),0)=0$ for $j\in \{1,2\}$. This contradicts
 \eqref{eq_unique_mountain_peak} with $t=\tau(\beta_1)$. Therefore,
 ($\star$) must hold.

Assume that $(\text{ii})_{\eta,s}$ holds. Take any $U_0\in
 (0,2e_{min}/b)$. The limit in the left-hand side of \eqref{eq_arc_cosh_limit_relation} tells
 us that there exists $\eps\in \R_{>0}$ such that 
\begin{align*}
\left(
\frac{1}{\sqrt{y+1}}\cosh^{-1}(|y|^{-1}),
\frac{1}{\sqrt{\eta(y+1)}}\cosh^{-1}(|y|^{-1})
\right)\subset \left[1,\frac{2}{\sqrt{\eta}}\right]
\end{align*}
for any $y\in (-1,-1+\eps)$. Thus, 
\begin{align*}
&W(\sqrt{y+1}\cdot x,y,\sqrt{\eta},s)\ge \inf_{\xi\in
 [1,\frac{2}{\sqrt{\eta}}]}\frac{\sinh(\sqrt{y+1}\cdot\xi)}{y+\cosh(\sqrt{y+1}\cdot
 \xi)},\\
&\left(\forall y\in (-1,-1+\eps),\ x\in
 \left(\frac{1}{\sqrt{y+1}}\cosh^{-1}(|y|^{-1}),
 \frac{1}{\sqrt{\eta(y+1)}}\cosh^{-1}(|y|^{-1})\right)\right).
\end{align*}
Since the right-hand side of the above inequality diverges to $+\infty$
 as $y\searrow -1$, there exists $y_1\in (-1,0)$ such that 
\begin{align}
&W(\sqrt{y+1}\cdot x,y,\sqrt{\eta},s)>
 \frac{2e_{max}}{b'U_0},\label{eq_large_lower_bound_specific}\\
&\left(\forall y\in (-1,y_1],\ x\in
 \left(\frac{1}{\sqrt{y+1}}\cosh^{-1}(|y|^{-1}),
 \frac{1}{\sqrt{\eta(y+1)}}\cosh^{-1}(|y|^{-1})\right)\right).\notag
\end{align}
By the assumption and \eqref{eq_large_small_relation} there exist
 $y\in (-1,y_1]$, 
\begin{align*}
x_j(y)\in \left(
\frac{1}{2(y+1)}(\cosh^{-1}(|y|^{-1}))^2,
\frac{1}{2\eta(y+1)}(\cosh^{-1}(|y|^{-1}))^2
\right),\quad (j=1,2,3)
\end{align*}
such that $x_1(y)<x_2(y)<x_3(y)$, 
\begin{align*}
&\frac{\partial W}{\partial
 x}\big(\sqrt{2(y+1)x_j(y)},y,\sqrt{\eta},s\big)=0,\quad (\forall j\in
 \{1,2,3\}),\\
&\frac{\partial W}{\partial
 x}(\sqrt{y+1}\cdot x,y,\sqrt{\eta},s)<0,\quad (\forall x\in
 (\sqrt{2x_1(y)},\sqrt{2x_2(y)})),\\
&\frac{\partial W}{\partial
 x}(\sqrt{y+1}\cdot x,y,\sqrt{\eta},s)>0,\quad (\forall x\in
 (\sqrt{2x_2(y)},\sqrt{2x_3(y)})).
\end{align*}
By combining this with \eqref{eq_large_lower_bound_specific} we have that
\begin{align*}
\min_{j\in \{1,3\}}W\big(\sqrt{2(y+1)x_j(y)},y,\sqrt{\eta},s\big)>
W(\sqrt{2(y+1)x_2(y)},y,\sqrt{\eta},s)>\frac{2e_{max}}{b'U_0}.
\end{align*}
Thus, there exists $U\in [-U_0,0)$ such that
\begin{align*}
\min_{j\in
 \{1,3\}}W(\sqrt{2(y+1)x_j(y)},y,\sqrt{\eta},s)>\frac{2e_{max}}{b'|U|}>
W(\sqrt{2(y+1)x_2(y)},y,\sqrt{\eta},s).
\end{align*}
Therefore, there exist
\begin{align*}
&\hat{\beta}_1\in\left(0,\frac{1}{e_{max}}\sqrt{2(y+1)x_1(y)}\right),\\ 
&\hat{\beta}_2\in \left(\frac{1}{e_{max}}\sqrt{2(y+1)x_1(y)},
 \frac{1}{e_{max}}\sqrt{2(y+1)x_2(y)}\right),\\
&\hat{\beta}_3\in \left(\frac{1}{e_{max}}\sqrt{2(y+1)x_2(y)},
 \frac{1}{e_{max}}\sqrt{2(y+1)x_3(y)}\right)
\end{align*}
such that
 $W(e_{max}\hat{\beta}_j,y,\sqrt{\eta},s)=\frac{2e_{max}}{b'|U|}$ for
 $j\in \{1,2,3\}$,
or by \eqref{eq_initial_gap_large_relation} and Lemma \ref{lem_tau_implicit_uniqueness}
\begin{align*}
\hat{\beta}_j\in (0,\beta_c),\quad
\cos\Big(\frac{\tau(\hat{\beta}_j)}{2}\Big)=y,\quad (\forall j\in \{1,2,3\}).
\end{align*}
Then by repeating the same argument as the final part of the proof of
 Proposition \ref{prop_necessity_equality} we can reach the conclusion
 that $\tau(\cdot)$ has more than one local minimum points. This means
 that ($\star$) does not hold. 

Now we know that it suffices to determine for which  $(\eta,s)$
 $(\text{i})_{\eta,s}$ (or $(\text{ii})_{\eta,s}$) holds. In fact for
 this purpose we have prepared Lemma
 \ref{lem_internal_function_classification}. We still need more
 information about how the function $\tilde{w}(\cdot,-1,\eta)$ behaves
 when $\eta$ varies. We can derive from \eqref{eq_extended_function}
 that for $z\in \R_{>0}$, $x\in (0,z^{-1})$
\begin{align*}
\frac{\partial \tilde{w}}{\partial
 z}(x,-1,z)=\frac{(x-1)x(1+zx)(3-zx)}{(x+1)^2(1-zx)^2}.
\end{align*}
Then we can see from this equality and \eqref{eq_eta_function_first_order}
 that for $\eta\in (0,17-12\sqrt{2})$, $\delta \in \{+,-\}$,
 $\frac{\partial \tilde{w}}{\partial z}(a_{\delta}(\eta),-1,\eta)>0$. 
Combination of this inequality and \eqref{eq_extended_function_slope}
 implies that for $\eta\in (0,17-12\sqrt{2})$, $\delta\in \{+,-\}$ 
\begin{align*}
&\frac{d}{d\eta}\tilde{w}(a_{\delta}(\eta),-1,\eta)=\frac{\partial
 \tilde{w}}{\partial x}(a_{\delta}(\eta),-1,\eta)\frac{d
 a_{\delta}}{d\eta}(\eta) + \frac{\partial \tilde{w}}{\partial
 z}(a_{\delta}(\eta),-1,\eta)>0.
\end{align*}
By this and \eqref{eq_extended_function_slope} again
we can understand that both the local maximum and the local minimum of 
the function $x\mapsto \tilde{w}(x,-1,\eta)$ $:(0,\eta^{-1})\to\R$ are
 strictly monotone increasing with $\eta\in
 (0,17-12\sqrt{2})$. Moreover, by \eqref{eq_extended_function},
\eqref{eq_extended_function_a},
\eqref{eq_eta_dependent_variable}, \eqref{eq_eta_function_critical_order} 
\begin{align*}
&\lim_{\eta\nearrow 17-12\sqrt{2}}a_+(\eta)=\lim_{\eta\nearrow 17-12\sqrt{2}}a_-(\eta)=3+2\sqrt{2},\\
&\lim_{\eta\nearrow 17-12\sqrt{2}}\tilde{w}(a_+(\eta),-1,\eta)=
\lim_{\eta\nearrow
 17-12\sqrt{2}}\tilde{w}(a_-(\eta),-1,\eta)=3-2\sqrt{2},\\
&\lim_{\eta\searrow 0}a_-(\eta)=3,\quad \lim_{\eta\searrow
 0}a_+(\eta)=+\infty,\\
&\lim_{\eta\searrow 0}\tilde{w}(a_-(\eta),-1,\eta)=\frac{1}{8},\quad 
  \lim_{\eta\searrow 0}\tilde{w}(a_+(\eta),-1,\eta)=0.
\end{align*}

In the following we let $\eta =(e_{min}/e_{max})^2$, $s=(b-b')/b'$.
If $e_{min}/e_{max}=\sqrt{17-12\sqrt{2}}$, by Lemma
 \ref{lem_internal_function_classification} \eqref{item_up_eta} for
 any $b\in \N_{\ge 2}$, $b'\in \{1,\cdots,b-1\}$ the condition
 $(\text{i})_{\eta,s}$ holds and thus ($\star$) holds. 

Assume that $e_{min}/e_{max}\in (0,\sqrt{17-12\sqrt{2}})$. In this
 situation Lemma \ref{lem_internal_function_classification}
 \eqref{item_down_eta} is applicable. If $\frac{b-b'}{b'}\in
 [3-2\sqrt{2},\infty)$,  $\frac{b-b'}{b'}\in
 [\tilde{w}(a_-(\eta),-1,\eta),\infty)$ and thus $(\text{i})_{\eta,s}$
 holds. Thus ($\star$) holds. If $\frac{b-b'}{b'}\in
 (1/8,3-2\sqrt{2})$, there exist $e_1$, $e_2\in
 (0,\sqrt{17-12\sqrt{2}})$ such that $e_1<e_2$, 
\begin{align*}
&\frac{b-b'}{b'}\in (0, \tilde{w}(a_+(\eta),-1,\eta))\ \text{if
 }\frac{e_{min}}{e_{max}}\in \left(e_2,\sqrt{17-12\sqrt{2}}\right),\\
&\frac{b-b'}{b'}\in [\tilde{w}(a_+(\eta),-1,\eta), \tilde{w}(a_-(\eta),-1,\eta))
\ \text{if
 }\frac{e_{min}}{e_{max}}\in (e_1,e_2],\\
&\frac{b-b'}{b'}\in [\tilde{w}(a_-(\eta),-1,\eta), \infty)
\ \text{if
 }\frac{e_{min}}{e_{max}}\in (0,e_1],
\end{align*}
or
\begin{align*}
&(\text{i})_{\eta,s}\text{ holds and thus }(\star)\text{ holds if
 }\frac{e_{min}}{e_{max}}\in \left(e_2,\sqrt{17-12\sqrt{2}}\right),\\
&(\text{ii})_{\eta,s}\text{ holds and thus }(\star)\text{ does not hold if
 }\frac{e_{min}}{e_{max}}\in (e_1,e_2],\\
&(\text{i})_{\eta,s}\text{ holds and thus }(\star)\text{ holds if
 }\frac{e_{min}}{e_{max}}\in (0,e_1].
\end{align*}
If $\frac{b-b'}{b'}\in (0,1/8]$, there exists $e_1\in
 (0,\sqrt{17-12\sqrt{2}})$ such that 
\begin{align*}
&\frac{b-b'}{b'}\in (0, \tilde{w}(a_+(\eta),-1,\eta))\ \text{if
 }\frac{e_{min}}{e_{max}}\in \left(e_1,\sqrt{17-12\sqrt{2}}\right),\\
&\frac{b-b'}{b'}\in [\tilde{w}(a_+(\eta),-1,\eta), \tilde{w}(a_-(\eta),-1,\eta))
\ \text{if
 }\frac{e_{min}}{e_{max}}\in (0,e_1],
\end{align*}
or 
\begin{align*}
&(\text{i})_{\eta,s}\text{ holds and thus }(\star)\text{ holds if
 }\frac{e_{min}}{e_{max}}\in \left(e_1,\sqrt{17-12\sqrt{2}}\right),\\
&(\text{ii})_{\eta,s}\text{ holds and thus }(\star)\text{ does not hold if
 }\frac{e_{min}}{e_{max}}\in (0,e_1].
\end{align*}
These can be summarized as in the statements of the proposition.
\end{proof}

In fact in this model $\tau(\beta)$ can be exactly computed. Remind us 
that $\cos^{-1}:[-1,1]\to [0,\pi]$ denotes the inverse function
of $\cos|_{[0,\pi]}$. 

\begin{proposition}\label{prop_exact_solution}
Set 
\begin{align*}
&D_0:=\cosh(\beta e_{max})\cosh(\beta e_{min})\\
&\qquad\quad -\frac{|U|}{2}\left(
\frac{b'}{e_{max}}\sinh(\beta e_{max})\cosh(\beta e_{min})+
 \frac{b-b'}{e_{min}}\cosh(\beta e_{max})\sinh(\beta e_{min})\right),\\
&D_1:=\cosh(\beta e_{max})+\cosh(\beta e_{min})
-\frac{|U|}{2}\left(
\frac{b'}{e_{max}}\sinh(\beta e_{max})+
 \frac{b-b'}{e_{min}}\sinh(\beta e_{min})\right).
\end{align*}
Assume that $U\in (-2e_{min}/b,0)$. Then for any $\beta\in
 (0,\beta_c)$, $D_1^2-4D_0>0$, $\frac{1}{2}(-D_1+\sqrt{D_1^2-4D_0})\in
 (-1,0)$ and 
$$\tau(\beta)=2\cos^{-1}\left(\frac{-D_1+\sqrt{D_1^2-4D_0}}{2}\right).$$ 
\end{proposition}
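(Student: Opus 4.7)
The plan is to exploit the fact that $E_b(\bk)$ is constant in $\bk$ with eigenvalues $e_{max}$ (multiplicity $b'$) and $e_{min}$ (multiplicity $b-b'$), so that the momentum integral in the definition of $g_{E_b}$ disappears and the gap equation $g_{E_b}(\beta,\tau(\beta),0)=0$ collapses to a purely algebraic identity in $y:=\cos(\tau(\beta)/2)$:
\begin{align*}
\frac{2}{|U|}=\frac{b'\sinh(\beta e_{max})}{(y+\cosh(\beta e_{max}))e_{max}}+\frac{(b-b')\sinh(\beta e_{min})}{(y+\cosh(\beta e_{min}))e_{min}}.
\end{align*}
Clearing denominators and collecting powers of $y$ turns this into the quadratic $f(y):=y^2+D_1 y+D_0=0$, with exactly the $D_0$ and $D_1$ specified in the proposition. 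By Lemma \ref{lem_tau_implicit_uniqueness} this quadratic has $\cos(\tau(\beta)/2)\in(-1,0)$ as one of its roots, so the remaining task is to show that $D_1^2-4D_0>0$ and that the root in $(-1,0)$ is the larger root $\frac{1}{2}(-D_1+\sqrt{D_1^2-4D_0})$.

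First I would verify that $f(0)=D_0>0$. Factoring out $\cosh(\beta e_{max})\cosh(\beta e_{min})$ exposes the bracket
\begin{align*}
1-\frac{|U|}{2}\left(\frac{b'\tanh(\beta e_{max})}{e_{max}}+\frac{(b-b')\tanh(\beta e_{min})}{e_{min}}\right),
\end{align*}
which is strictly positive since $|U|<2e_{min}/b$ and $\tanh<1$. The main algebraic step is then to evaluate $f(-1)=1-D_1+D_0$. Collecting terms and using the identity $\sinh(x)/(\cosh(x)-1)=\coth(x/2)$ yields
\begin{align*}
f(-1)=(\cosh(\beta e_{max})-1)(\cosh(\beta e_{min})-1)\left(1-\frac{|U|}{2}\left(\frac{b'\coth(\beta e_{max}/2)}{e_{max}}+\frac{(b-b')\coth(\beta e_{min}/2)}{e_{min}}\right)\right),
\end{align*}
and the inner bracket is precisely $-\frac{|U|}{2}g_{E_b}(\beta,2\pi,0)$ by direct comparison with the defining formula for $g_{E_b}(\beta,2\pi,0)$. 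Since $g_{E_b}(\beta,2\pi,0)>0$ on $(0,\beta_c)$ by Lemma \ref{lem_critical_temperature} \eqref{item_critical_positive}, we obtain $f(-1)<0$.

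Having $f(-1)<0<f(0)$ forces $f$ to have two distinct real roots (so $D_1^2-4D_0>0$), with $-1$ strictly between them. Because $D_0>0$ the roots share a sign, and since one of them lies in $(-1,0)$ both must be negative; consequently the larger root $\frac{1}{2}(-D_1+\sqrt{D_1^2-4D_0})$ is the one in $(-1,0)$. Invoking Lemma \ref{lem_tau_implicit_uniqueness} identifies this root with $\cos(\tau(\beta)/2)$, and since $\tau(\beta)/2\in(\pi/2,\pi)$ lies in the range of $\cos^{-1}$, applying $2\cos^{-1}$ yields the claimed formula. The only delicate step is the rewriting of $f(-1)$: everything hinges on recognising that the bracket above is exactly proportional to $g_{E_b}(\beta,2\pi,0)$, which is what converts the hypothesis $\beta<\beta_c$ into a strict negativity statement; the remaining work is routine quadratic bookkeeping.
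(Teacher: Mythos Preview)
Your proof is correct and takes a genuinely different route from the paper's. Both arguments first reduce the equation $g_{E_b}(\beta,\tau(\beta),0)=0$ to the quadratic $f(y)=y^2+D_1y+D_0=0$ in $y=\cos(\tau(\beta)/2)$, but then diverge.

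The paper establishes discriminant positivity by a direct algebraic identity: with $X_j=\cosh(\beta e_j)$ and $Y_1=\frac{|U|b'}{2e_{max}}\sinh(\beta e_{max})$, $Y_2=\frac{|U|(b-b')}{2e_{min}}\sinh(\beta e_{min})$, one computes $D_1^2-4D_0=(X_1-X_2-Y_1+Y_2)^2+4Y_1Y_2>0$. It then identifies the correct root by contradiction, using that the rational equation has a unique solution in $(-1,0)$ and none in $[0,\infty)$: if $y_+\ge 0$ one gets a nonnegative solution, and if $y_->-1$ one gets two solutions in $(-1,0)$. Your approach instead evaluates $f$ at the endpoints: $f(0)=D_0>0$ from the bound $|U|<2e_{min}/b$, and $f(-1)=-\frac{|U|}{2}(\cosh(\beta e_{max})-1)(\cosh(\beta e_{min})-1)\,g_{E_b}(\beta,2\pi,0)<0$ directly from the hypothesis $\beta<\beta_c$ via Lemma~\ref{lem_critical_temperature}\,\eqref{item_critical_positive}. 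The sign change forces two real roots with $-1$ strictly between them, and Vieta's formula ($D_0>0$) pins the larger root to $(-1,0)$. Your route is more conceptual in that it makes the role of $\beta<\beta_c$ completely explicit through the identity for $f(-1)$; the paper's route gives an unconditional formula for the discriminant valid for all $\beta>0$. Note incidentally that your appeal to Lemma~\ref{lem_tau_implicit_uniqueness} at the end is not needed: you already know $\cos(\tau(\beta)/2)\in(-1,0)$ is a root of $f$, and you have shown $f$ has exactly one root in $(-1,0)$, namely $y_+$.
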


\begin{proof} 
The statements of Lemma \ref{lem_critical_temperature}
 \eqref{item_critical_basic},\eqref{item_critical_positive} imply the
 following basic fact. On the assumption $|U|<2e_{min}/b$ for any
 $\beta\in (0,\beta_c)$ there uniquely exists $y\in (-1,0)$ such that 
\begin{align}
&-\frac{2}{|U|}+b'\frac{\sinh(\beta e_{max})}{(y+\cosh(\beta
 e_{max}))e_{max}}+(b-b')\frac{\sinh(\beta e_{min})}{(y+\cosh(\beta
 e_{min}))e_{min}}=0.\label{eq_basic_rational_equation}
\end{align}
Moreover, for $y\in [0,\infty)$ \eqref{eq_basic_rational_equation} does
 not hold. Observe that $y\in (-1,0)$ and $y$ solves
 \eqref{eq_basic_rational_equation} if and only if $y\in (-1,0)$ and $y$
 solves $y^2+D_1y+D_0=0$. Setting 
\begin{align*}
&X_1:=\cosh(\beta e_{max}),\quad X_2:=\cosh(\beta e_{min}),\\
&Y_1:=\frac{|U|b'}{2e_{max}}\sinh(\beta e_{max}),\quad 
Y_2:=\frac{|U|(b-b')}{2e_{min}}\sinh(\beta e_{min}),
\end{align*}
we can derive that
\begin{align*}
D_1^2-4D_0=(X_1-X_2-Y_1+Y_2)^2+4Y_1Y_2>0.
\end{align*}
Set $y_+:=\frac{1}{2}(-D_1+\sqrt{D_1^2-4D_0})$, 
   $y_-:=\frac{1}{2}(-D_1-\sqrt{D_1^2-4D_0})$. These are the roots of
 $y^2+D_1y+D_0$. The unique solution to
 \eqref{eq_basic_rational_equation} in $(-1,0)$ must be one of them.
If $y_+\ge 0$, \eqref{eq_basic_rational_equation} has a
 non-negative solution, which is a contradiction. Thus $y_+<0$. If
 $y_->-1$, \eqref{eq_basic_rational_equation} has the 2 different
 solutions $y_+$, $y_-\in (-1,0)$, which is again a contradiction. Thus
 $y_-\le -1$. Therefore the solution to
 \eqref{eq_basic_rational_equation} in $(-1,0)$ must be $y_+$, and thus the claims follow.
\end{proof}

Let $b=8$, $b'=7$, $e_{min}=1$. In this case
$\frac{b-b'}{b'}=1/7\in (1/8,3-2\sqrt{2})$. Proposition
\ref{prop_multi_orbital} \eqref{item_band_middle} implies that there
exist $U\in (-2e_{min}/b,0)$ ($=(-1/4,0)$) and
$e_{max,1}$, $e_{max,2}$, $e_{max,3}\in
(1/\sqrt{17-12\sqrt{2}},\infty)$ $(\approx (5.83,\infty))$ such that
$e_{max,1}<e_{max,2}<e_{max,3}$ and for $U$ $\tau(\cdot)$ has
only one local minimum point if $e_{max}=e_{max,1}$, $\tau(\cdot)$ has
more than one local minimum points if $e_{max}=e_{max,2}$,   
$\tau(\cdot)$ has
only one local minimum point if $e_{max}=e_{max,3}$.
Figure \ref{fig_exact_solution} shows the graph $\{(\beta,\tau(\beta))\
|\ \beta\in (0,\beta_c) \}$ for $U=-1/8$, $e_{max}=6,7,9$.
In these cases $U\in (-2e_{min}/b,0)$, 
$e_{max}\in (1/\sqrt{17-12\sqrt{2}},\infty)$. 
The figure demonstrates the properties described above. The graph was drawn by
implementing the exact solution obtained in Proposition
\ref{prop_exact_solution}. 

\begin{figure}
\begin{center}
\begin{tabular}{c}
\includegraphics[width=7cm]{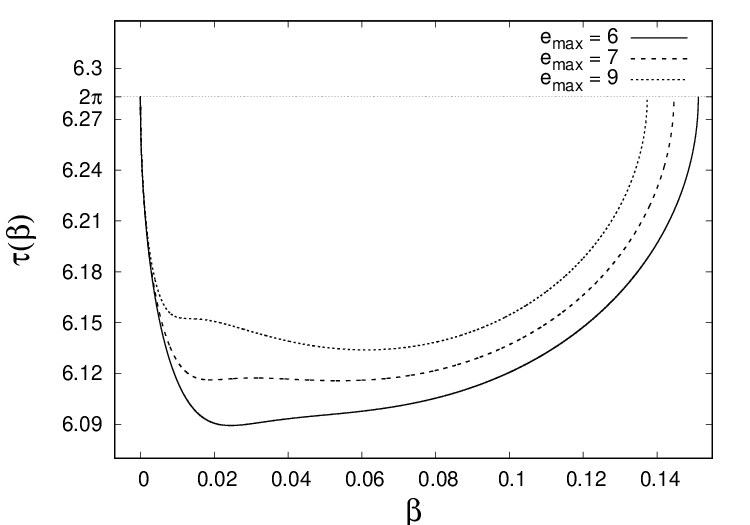}\\
(a)\\
\includegraphics[width=7cm]{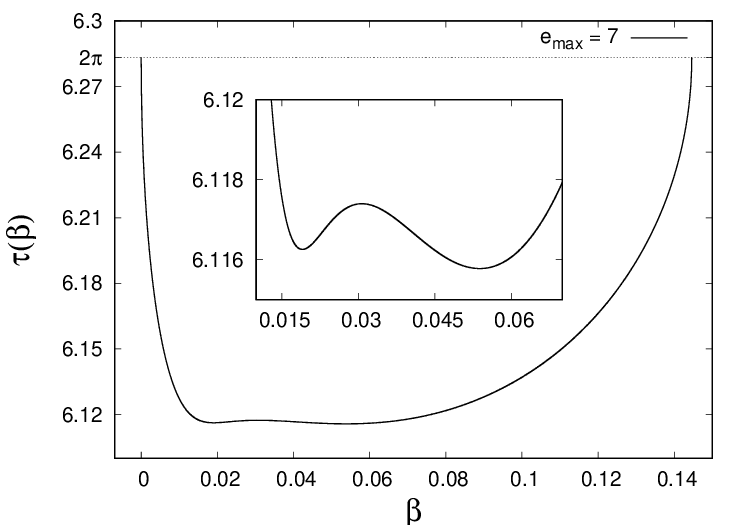}\ 
\includegraphics[width=7cm]{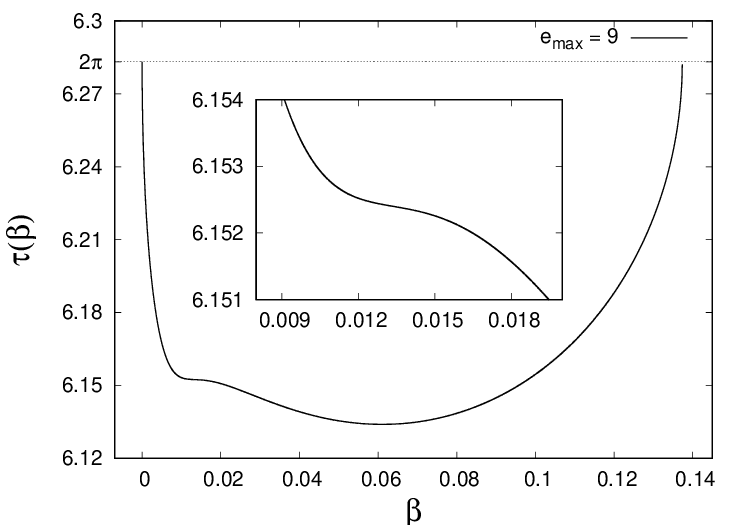}\\
(b)\qquad\qquad\qquad\qquad\qquad\qquad\qquad\qquad (c)
\end{tabular}
\caption{The graph $\{(\beta,\tau(\beta))\ |\ \beta\in (0,\beta_c)\}$
 drawn by implementing the exact solution for $b=8$, $b'=7$,
 $U=-1/8$, $e_{min}=1$ and $e_{max}=6,7,9$. Picture (a) shows the graphs
 for $e_{max}=6,7,9$. We can see that $\tau(\cdot)$ has only one
 local minimum point when $e_{max}=6$. Picture (b) shows the graph for $e_{max}=7$.
By magnifying we can see that $\tau(\cdot)$ has two local
 minimum points. Picture (c) shows the graph for $e_{max}=9$. By
 magnifying we can see that $\tau(\cdot)$ has only one local minimum point.}\label{fig_exact_solution}
\end{center}
\end{figure}

\subsubsection{The one-dimensional model with nearest-neighbor
   hopping}\label{subsubsec_one_band}

As for the model defined in \eqref{item_model_one_band}, we find a
simpler result as follows.

\begin{proposition}\label{prop_one_band_result}
For any $t\in \R_{\ge 0}$, $e_{min}\in\R_{>0}$ there exists $U_0\in
 (0,2e_{min})$ such that for any $U\in [-U_0,0)$ $\tau(\cdot)$ has one
 and only one local minimum point in $(0,\beta_c)$.
\end{proposition}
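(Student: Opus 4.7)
The plan is to split the analysis into two cases according to whether the ratio $e_{min}/(2t+e_{min})$ lies above or below the critical value $\sqrt{17-12\sqrt{2}}$ identified in Theorem~\ref{thm_boundary_characterization}. In the first case, when $e_{min}/(2t+e_{min})>\sqrt{17-12\sqrt{2}}$, the claim is an immediate consequence of Proposition~\ref{prop_sufficiency}: one takes $U_0=U_0(1,e_{min},2t+e_{min})$, which by the remark following that proposition lies in $(0,\frac{e_{min}}{\sinh(2)}]\subset(0,2e_{min})$, and the uniqueness and existence of a local minimum of $\tau(\cdot)$ is then asserted directly. This sub-case in particular covers all sufficiently small values of $t$.

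For the complementary regime $t\ge e_{min}(1/\sqrt{17-12\sqrt{2}}-1)/2$, the universal bound in Proposition~\ref{prop_sufficiency} cannot be invoked and a model-specific argument is required. Following the same high-level strategy as in the proofs of Proposition~\ref{prop_sufficiency} and Proposition~\ref{prop_multi_orbital}, I would reduce the question to the following claim: if $|U|$ is small enough so that the inequality \eqref{eq_boundary_gap_equation_direct} forces $y:=\cos(\tau(\beta)/2)$ into a small left neighborhood of $-1$ uniformly for $\beta\in(0,\beta_c)$, then the function $x\mapsto F_\infty(x,y)$ from \eqref{eq_infinite_volume_trace} admits exactly one critical point on $\R_{>0}$, which is then necessarily a global maximum. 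Once this uniqueness is in place, the contradiction argument at the end of the proof of Proposition~\ref{prop_sufficiency} applies verbatim to rule out more than one local minimum of $\tau(\cdot)$, while the existence of at least one such local minimum follows from Lemma~\ref{lem_boundary_function}~\eqref{item_boundary_function_boundary} and continuity.

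To verify uniqueness of the critical point of $F_\infty(\cdot,y)$ for $y$ near $-1$, I would substitute $\epsilon=E_1(k)=t(\cos k+1)+e_{min}$ to rewrite the integral as a weighted one-dimensional integral with respect to the density of states
\begin{equation*}
\rho(\epsilon)=\frac{1}{\pi\sqrt{(2t+e_{min}-\epsilon)(\epsilon-e_{min})}},\quad \epsilon\in(e_{min},2t+e_{min}).
\end{equation*}
The principal technical ingredient is the closed-form evaluation provided in Appendix~\ref{app_integral}, which I would apply to express $(\partial F_\infty/\partial x)(x,y)$ (or an equivalent surrogate, such as the rescaled derivative appearing after the substitution $x=\sqrt{y+1}\,\tilde x$ used in the proof of Proposition~\ref{prop_necessity_inequality}) in a form whose sign can be read off directly as a function of $x$.

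The main obstacle will be to demonstrate that this closed-form expression, for $y$ sufficiently close to $-1$, crosses zero exactly once and transversely from positive to negative values as $x$ traverses $\R_{>0}$; this is precisely the step where the specific square-root singular structure of the one-dimensional density of states must intervene to preclude the oscillatory behavior that, by Proposition~\ref{prop_necessity_inequality}, would otherwise be permissible in general when $e_{min}/e_{max}<\sqrt{17-12\sqrt{2}}$. Once this monotonicity of the sign is established, the concluding contradiction argument transplanted from Proposition~\ref{prop_sufficiency} finishes the proof.
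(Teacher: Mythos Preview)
Your high-level plan is correct and matches the paper's route: reduce to showing that $x\mapsto F_\infty(x,y)$ has a unique critical point for $y$ sufficiently close to $-1$, invoke Lemma~\ref{lem_definite_integral} for a closed form, and finish via the contradiction argument after \eqref{eq_unique_mountain_peak}. The preliminary case split on $e_{min}/e_{max}$ is harmless but unnecessary; the paper's argument covers all $t\ge0$ uniformly.

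However, the step you label ``the main obstacle'' is where the entire substance of the proof lies, and you have not supplied a mechanism for it. The paper's device is as follows. Normalize $e_{min}=1$, rescale, and set $P(x,y):=\sqrt{y+1}\,F_\infty(\sqrt{y+1}\,x,y)$, which extends analytically across $y=-1$. Lemma~\ref{lem_definite_integral} then gives $P(x,-1)^2$ as an explicit algebraic function; differentiate to obtain $\tfrac{d}{dx}P(x,-1)^2=(P_1(x)-P_2(x))/P_3(x)$ with explicit $P_1,P_2,P_3$. Any zero $\hat x$ of $\partial_x P(\cdot,y)$ lies in a fixed compact interval by \eqref{eq_temporal_function_derivative} and, after moving the $(y+1)$-correction to one side and squaring to remove the radical, satisfies $Q(\hat x,y)=0$, where $Q(x,-1)=P_1(x)^2-P_2(x)^2=\sum_{j=1}^4 a_j(x^2/2)^j$ with $a_1,a_2>0$ and $a_4<0$. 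The key trick: at a zero $x_0$ of $Q(\cdot,y)$, compute $x_0\,\partial_xQ(x_0,y)=\sum_j 2j\,a_j(x_0^2/2)^j+O(y+1)$ and subtract $6Q(x_0,y)=0$ to eliminate the uncontrolled $a_3$ term; what remains is $\sum_{j\in\{1,2,4\}}(2j-6)a_j(x_0^2/2)^j+O(y+1)$, which is strictly negative for $y$ close to $-1$. Hence every zero of $Q(\cdot,y)$ in the relevant interval has negative slope, so there is at most one, and therefore at most one critical point of $P(\cdot,y)$. Without some algebraic manoeuvre of this kind---squaring to rationalize, then using the vanishing of $Q$ to perform a degree reduction---the raw closed form from Lemma~\ref{lem_definite_integral} does not directly yield the sign information you need.
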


\begin{proof}
Let us assume that $e_{min}=1$ for the moment. We will see that the
 other case can be deduced from this special case. It follows that
 $e_{max}=2t+1$. Define the open set $\cO$ of $\R^2$ by
\begin{align*}
\cO:=\left\{(x,y)\in\R^2\ \Big|\
 \sum_{n=2}^{\infty}\frac{x^{2n}}{(2n)!}|y+1|^{n-1}e_{max}^{2n}<1\text{
 or }y> -1\right\}.
\end{align*}
We define the function $P:\cO\to\R$ as follows. 
\begin{align*}
P(x,y):=\frac{1}{2\pi}\int_0^{2\pi}dk\frac{x+\sum_{n=1}^{\infty}\frac{x^{2n+1}}{(2n+1)!}(y+1)^nE_1(k)^{2n}}{1+\frac{x^2}{2}E_1(k)^2+\sum_{n=2}^{\infty}\frac{x^{2n}}{(2n)!}(y+1)^{n-1}E_1(k)^{2n}}.
\end{align*}
The function $P$ is real analytic in $\cO$. Let us observe that for
 $(x,y)\in \R_{>0}\times (-1,\infty)$
\begin{align}
P(x,y)=\frac{\sqrt{y+1}}{2\pi}\int_0^{2\pi}dk\frac{\sinh(\sqrt{y+1}\cdot
 x E_1(k))}{(y+\cosh(\sqrt{y+1}\cdot x
 E_1(k)))E_1(k)}.\label{eq_scaling_explicit_relation}
\end{align}
We can apply Lemma \ref{lem_definite_integral} proved in Appendix
 \ref{app_integral} to derive that for any $x\in \R$
\begin{align*}
P(x,-1)^2=\frac{e_{max}^{-1}\left(\frac{x^2}{2}\right)}{\left(\frac{x^2}{2}+1
\right)\left(\frac{x^2}{2}+e_{max}^{-2}\right)}
\left(\left(\frac{x^2}{2}+1
\right)^{\frac{1}{2}}
\left(\frac{x^2}{2}+e_{max}^{-2}
\right)^{\frac{1}{2}}-\frac{x^2}{2}+e_{max}^{-1}\right).
\end{align*}
To facilitate the derivation of the above equality from Lemma \ref{lem_definite_integral}, let us add that we multiplied both the
 numerator and the denominator of $P(x,-1)^2$ by
$$
\left(\frac{x^2}{2}+1
\right)^{\frac{1}{2}}
\left(e_{max}^{2}\frac{x^2}{2}+1\right)^{\frac{1}{2}}
-e_{max}\frac{x^2}{2}-1
$$
at the beginning.
Moreover, setting 
\begin{align*}
&P_1(x):=\left(\frac{x^2}{2}+1
\right)^{\frac{1}{2}}
\left(\frac{x^2}{2}+e_{max}^{-2}
\right)^{\frac{1}{2}}\left((1+e_{max}^{-2})\frac{x^2}{2}+2e_{max}^{-2}\right),\\
&P_2(x):=2(e_{max}^{-2}+e_{max}^{-1}+1)\left(\frac{x^2}{2}\right)^2+4e_{max}^{-2}\left(\frac{x^2}{2}\right)-2e_{max}^{-3},\\
&P_3(x):=\frac{2e_{max}}{x}\left(\frac{x^2}{2}+1
\right)^{2}\left(\frac{x^2}{2}+e_{max}^{-2}
\right)^{2},
\end{align*}
we see that for any $x\in \R_{>0}$
$$\frac{d}{dx}P(x,-1)^2=\frac{P_1(x)-P_2(x)}{P_3(x)}.$$
 If we assume that
 $\hat{x}\in \R_{>0}$, $y\in (-1,-\frac{1}{2}]$ and $\frac{\partial
 P}{\partial x}(\hat{x},y)=0$, 
it follows from \eqref{eq_temporal_function_derivative} that 
\begin{align*}
&\hat{x}\in
 \left[\frac{1}{e_{max}\sqrt{y+1}}\cosh^{-1}(|y|^{-1}),\frac{1}{\sqrt{y+1}}\cosh^{-1}(|y|^{-1})\right],\\
&\frac{d}{dx}P(x,-1)^2\Big|_{x=\hat{x}}+\frac{\partial}{\partial
 x}(P(x,y)^2-P(x,-1)^2)\Big|_{x=\hat{x}}=0.
\end{align*}
Let us recall the definition \eqref{eq_definition_c_max} of
 $c_{max}$. We can also deduce from \eqref{eq_arc_cosh} that
if we set 
$$
c_{min}:=\inf_{y\in (-1,-\frac{1}{2}]}\frac{\cosh^{-1}(|y|^{-1})}{\sqrt{y+1}},
$$ 
$0<c_{min}<\infty$. Then the above properties lead to that
\begin{align*}
&\hat{x}\in \left[\frac{c_{min}}{e_{max}},c_{max}\right],\\
&P_1(\hat{x})^2-P_2(\hat{x})^2+2P_2(\hat{x})P_3(\hat{x})\frac{\partial}{\partial
 x}(P(x,y)^2-P(x,-1)^2)\Big|_{x=\hat{x}}\\
& -P_3(\hat{x})^2\left(
\frac{\partial}{\partial x}(P(x,y)^2-P(x,-1)^2)\Big|_{x=\hat{x}}
\right)^2=0.
\end{align*}

Let us define the function $Q:\R_{>0}\times (-1,\infty)\to \R$ by 
\begin{align*}
Q(x,y):=&P_1(x)^2-P_2(x)^2+2P_2(x)P_3(x)\frac{\partial}{\partial
 x}(P(x,y)^2-P(x,-1)^2)\\
&-P_3(x)^2\left(\frac{\partial}{\partial
 x}(P(x,y)^2-P(x,-1)^2)\right)^2.
\end{align*}
We will prove the following statement. 
\begin{align}
&\text{There exists }y_0(e_{max})\in
 \left(-1,-\frac{1}{2}\right]\text{ depending only on }e_{max}\text{
 such that}\label{eq_logical_target}\\
&\text{if for }y\in
 (-1,y_0(e_{max})]\text{ a solution to }Q(x,y)=0\text{ exists in
 }\left[\frac{c_{min}}{e_{max}},c_{max}\right],\notag\\
&\text{then it is
 unique.}\notag
\end{align}
We can expand $P_1(x)^2-P_2(x)^2$ as
 follows. 
\begin{align*}
P_1(x)^2-P_2(x)^2=\sum_{j=1}^4a_j(e_{max})\left(\frac{x^2}{2}\right)^j,
\end{align*}
where $a_j(e_{max})$ $(j=1,\cdots,4)$ are real coefficients depending
 only on $e_{max}$. We can check that 
\begin{align}
a_1(e_{max})>0,\quad a_2(e_{max})>0,\quad
 a_4(e_{max})<0.\label{eq_key_coefficient_sign}
\end{align}
We do not need to deal with $a_3(e_{max})$, since the term involving
 $a_3(e_{max})$ will be subsequently canceled. Though it is not
 essential to make explicit, $a_2(e_{max})$ is computed as
 follows.
 $a_2(e_{max})=5e_{max}^{-6}+8e_{max}^{-5}+6e_{max}^{-4}+8e_{max}^{-3}+5
 e_{max}^{-2}$. Assume that $(x_0,y)\in
 [c_{min}/e_{max},c_{max}]\times (-1,-1/2]$
 and $Q(x_0,y)=0$. We can derive that 
\begin{align*}
&x_0\frac{\partial Q}{\partial x}(x_0,y)\\
&=\sum_{j=1}^42ja_j(e_{max})\left(\frac{x_0^2}{2}\right)^j\\
&\quad 
+x_0\frac{\partial}{\partial
 x}\Bigg(2P_2(x)P_3(x)\frac{\partial}{\partial
 x}(P(x,y)^2-P(x,-1)^2)\\
&\qquad\qquad\quad -P_3(x)^2\left(\frac{\partial}{\partial x}(P(x,y)^2-P(x,-1)^2)
\right)^2\Bigg)\Bigg|_{x=x_0}\\
&=\sum_{j\in\{1,2,4\}}(2j-6)a_j(e_{max})\left(\frac{x_0^2}{2}\right)^j\\
&\quad -12P_2(x_0)P_3(x_0)
\frac{\partial}{\partial
 x}(P(x,y)^2-P(x,-1)^2)\Big|_{x=x_0}\\
&\quad +6P_3(x_0)^2\left(\frac{\partial}{\partial x}(P(x,y)^2-P(x,-1)^2)\Big|_{x=x_0}
\right)^2\\
&\quad 
+x_0\frac{\partial}{\partial
 x}\Bigg(2P_2(x)P_3(x)\frac{\partial}{\partial
 x}(P(x,y)^2-P(x,-1)^2)\\
&\qquad\qquad\quad -P_3(x)^2\left(\frac{\partial}{\partial x}(P(x,y)^2-P(x,-1)^2)
\right)^2\Bigg)\Bigg|_{x=x_0}\\
&\le -2a_2(e_{max})\left(\frac{c_{min}^2}{2e_{max}^2}\right)^2\\
&\quad +c\sup_{x\in
 [\frac{c_{min}}{e_{max}},c_{max}]}\Bigg((1+c_{max})|P_2(x)P_3(x)|+
 (1+c_{max})|P_3(x)^2|\\
&\qquad\qquad\qquad +c_{max}\left|\frac{dP_2}{dx}(x)P_3(x)\right|+c_{max}\left|P_2(x)\frac{dP_3}{dx}(x)\right|+c_{max}\left|\frac{dP_3}{dx}(x)P_3(x)\right|
\Bigg)\\
&\quad\cdot \left(1+\sum_{i,j\in \{0,1,2\}}1_{1\le i+j\le 2}\sup_{x\in
 [\frac{c_{min}}{e_{max}},c_{max}]}\sup_{\eta\in
 [-1,-\frac{1}{2}]}\left|\frac{\partial^i P}{\partial
 x^i}(x,\eta)\frac{\partial^{j+1} P}{\partial x^j\partial y}(x,\eta)
\right|\right)^2(y+1),
\end{align*}
where $c$ is a positive constant independent of any parameter. In the
 second equality we used the equality $Q(x_0,y)=0$ to erase the term
 $a_3(e_{max})(x_0^2/2)^3$. In the last inequality we took
 \eqref{eq_key_coefficient_sign} into account. The above inequality
 implies that there exists $y_0(e_{max})\in (-1,-\frac{1}{2}]$ depending
 only on $e_{max}$ such that if $y\in (-1,y_0(e_{max})]$,
 $\frac{\partial Q}{\partial x}(x_0,y)<0$. We can sum up the above
 arguments to conclude that if $(x_0,y)\in
 [c_{min}/e_{max},c_{max}]\times (-1,y_0(e_{max})]$
 and $Q(x_0,y)=0$, then $\frac{\partial Q}{\partial x}(x_0,y)<0$. This
 ensures that the claim \eqref{eq_logical_target} holds true. 

If for $y\in (-1,y_0(e_{max})]$ $\hat{x}$ is a solution to
 $\frac{\partial P}{\partial x}(x,y)=0$ in $\R_{>0}$, then
 $\hat{x}\in[c_{min}/e_{max},c_{max}]$ and $Q(\hat{x},y)=0$ and
 thus it must be unique by \eqref{eq_logical_target}. We can deduce from
 \eqref{eq_temporal_function_derivative} that
\begin{align*}
&\frac{\partial P}{\partial x}(x,y)>0,\quad \left(\forall x\in
 \left(0,\frac{1}{e_{max}\sqrt{y+1}}\cosh^{-1}(|y|^{-1})\right)\right),\\
&\frac{\partial P}{\partial x}(x,y)<0,\quad \left(\forall x\in
 \left(\frac{1}{\sqrt{y+1}}\cosh^{-1}(|y|^{-1}),\infty\right)\right),
\end{align*}
which means that a solution to $\frac{\partial P}{\partial x}(x,y)=0$
 actually exists in $\R_{>0}$. Thus we have proved that for any $y\in
 (-1,y_0(e_{max})]$ a solution to $\frac{\partial P}{\partial x}(x,y)=0$
 uniquely exists in $\R_{>0}$. Therefore, by
 \eqref{eq_scaling_explicit_relation} for any $y\in (-1,y_0(e_{max})]$
 there uniquely exists $\tilde{x}\in \R_{>0}$ such that 
\begin{align}
&\frac{d}{dx}\left(\frac{1}{2\pi}\int_0^{2\pi}dk\frac{\sinh(xE_1(k))}{(y+\cosh(xE_1(k)))E_1(k)}\right)\Bigg|_{x=\tilde{x}}=0.\label{eq_full_integral_zero}
\end{align}
Now let us lift the condition $e_{min}=1$. Since
 $E_1(k)=e_{min}(\frac{t}{e_{min}}(\cos k+1)+1)$, the above result
 implies that there exists $y_0(t/e_{min})\in (-1,-1/2]$
 depending only on $t/e_{min}$ such that for any $y\in (-1,
 y_0(t/e_{min})]$ there uniquely exists $\tilde{x}\in \R_{>0}$
 such that \eqref{eq_full_integral_zero} with this $E_1$ holds. This
 further implies that for any $y\in
 [2\cos^{-1}(y_0(t/e_{min})),2\pi)$ there exists
 $\hat{x}(y)\in \R_{>0}$ such that 
\begin{align*}
&\frac{\partial g_{E_1}}{\partial x}(x,y,0)>0,\quad (\forall x\in (0,\hat{x}(y))),\\
&\frac{\partial g_{E_1}}{\partial x}(\hat{x}(y),y,0)=0,\\
&\frac{\partial g_{E_1}}{\partial x}(x,y,0)<0,\quad (\forall x\in
 (\hat{x}(y),\infty)).
\end{align*}
Then by repeating the same proof by contradiction as that after
 \eqref{eq_unique_mountain_peak} in the proof of Proposition
 \ref{prop_multi_orbital} we can conclude that the claim holds true.
\end{proof}

\begin{remark}
One natural question is whether the same result holds for the model in
 higher spatial dimensions 
\begin{align}
E(\bk)=t\left(\sum_{j=1}^d\cos k_j+d\right)+e_{min},\quad (t,e_{min}\in
 \R_{>0},\ d\in \N).\label{eq_model_general_dimension}
\end{align}
In the above proof we relied on the exact formula Lemma
 \ref{lem_definite_integral}. Since we do not have a useful formula of the
 definite integral for the model \eqref{eq_model_general_dimension} with
 $d\ge 2$, we cannot find an answer to this question by this approach at
 present.
\end{remark}

\section{Derivation of the infinite-volume limit}\label{sec_derivation}

In this section we will prove Theorem
\ref{thm_infinite_volume_limit}. As in the previous work \cite{K_BCS_I},
\cite{K_BCS_II}, the proof is based on multi-scale analysis of Grassmann
integral formulations of the free energy density and the thermal
expectations. In this approach qualitative bound properties of the
covariance matrices are the essential ingredients. This time we decide to
prepare them in the first subsection (Subsection
\ref{subsec_covariances}). 
The focus
of this part is to find optimal upper bounds on norms of the covariances
with respect to dependency on the inverse temperature $\beta$ and the
magnitude of the imaginary magnetic field $\theta$. Then in Subsections
\ref{subsec_general_estimation}-\ref{subsec_double_scale_integration} we
will develop a general double-scale integration scheme by assuming only
generic bounds of the covariances. In Subsection
\ref{subsec_infinite_volume} we combine the proved bound properties of
the real covariances with the general integration scheme to complete the
proof of Theorem \ref{thm_infinite_volume_limit}. The index set of the
finite-dimensional Grassmann algebra is exactly same as that in
\cite{K_BCS_II}. Accordingly, concerning the Grassmann integration, we
can use the same notations as in \cite{K_BCS_II}. We will sometimes
refer to the definitions presented in \cite{K_BCS_II} or \cite{K_BCS_I}
instead of restating them in order not to lengthen the paper. We will
also skip proofs of lemmas if they straightforwardly follow from lemmas
presented in \cite{K_BCS_I}, \cite{K_BCS_II}. To support the
readers, we illustrate the dependencies between the following subsections
and the previous constructions in Figure \ref{fig_dependency}.

\begin{figure}
\begin{center}
\begin{picture}(355,230)(0,0)

\put(20,210){\line(1,0){100}}
\put(20,230){\line(1,0){100}}
\put(20,210){\line(0,1){20}}
\put(120,210){\line(0,1){20}}
\put(33,216){Subsection \ref{subsec_covariances}}

\put(20,220){\line(-1,0){20}}
\put(0,220){\line(0,-1){210}}
\put(0,10){\vector(1,0){20}}

\put(150,210){\line(1,0){95}}
\put(150,230){\line(1,0){95}}
\put(150,210){\line(0,1){20}}
\put(245,210){\line(0,1){20}}
\put(150,220){\vector(-1,0){30}}
\put(160,218){\scriptsize \cite{K_BCS_II} Lemma 3.5 (iii)}

\put(150,180){\line(1,0){85}}
\put(150,200){\line(1,0){85}}
\put(150,180){\line(0,1){20}}
\put(235,180){\line(0,1){20}}
\put(150,190){\vector(-4,3){30}}
\put(160,188){\scriptsize \cite{PS} Theorem 1.3}
\put(192.5,200){\vector(0,1){10}} 

\put(20,160){\line(1,0){100}}
\put(20,140){\line(1,0){100}}
\put(20,140){\line(0,1){20}}
\put(120,140){\line(0,1){20}}
\put(70,140){\vector(0,-1){50}}
\put(33,145){Subsection \ref{subsec_general_estimation}}

\put(150,160){\line(1,0){165}}
\put(150,115){\line(1,0){165}}
\put(150,115){\line(0,1){45}}
\put(315,115){\line(0,1){45}}
\put(150,137){\vector(-1,-2){30}}
\put(150,137){\vector(-3,2){30}}
\put(155,135){\scriptsize \cite{K_BCS_I} $\left\{\begin{array}{l} 
\text{Lemma } 3.1,\\
\text{Lemma } 3.2,\\
\text{Lemma } 3.3,
\end{array}\right.$ 
\cite{K_BCS_II} $\left\{\begin{array}{l} 
\text{Lemma } 4.1,\\
\text{Lemma } 4.2,\\
\text{Lemma } 4.4
\end{array}\right.$ }

\put(20,90){\line(1,0){100}}
\put(20,70){\line(1,0){100}}
\put(20,70){\line(0,1){20}}
\put(120,70){\line(0,1){20}}
\put(70,70){\vector(0,-1){50}}
\put(33,76){Subsection \ref{subsec_double_scale_integration}}

\put(20,0){\line(1,0){100}}
\put(20,20){\line(1,0){100}}
\put(20,0){\line(0,1){20}}
\put(120,0){\line(0,1){20}}

\put(33,6){Subsection \ref{subsec_infinite_volume}}

\put(150,0){\line(1,0){205}}
\put(150,105){\line(1,0){205}}
\put(150,0){\line(0,1){105}}
\put(355,0){\line(0,1){105}}
\put(150,57){\vector(-2,-3){30}}
\put(155,51){\scriptsize \cite{K_BCS_I} $\left\{\begin{array}{l} 
\text{Lemma } 4.13,\\
\text{Proposition } 4.16,
\end{array}\right.$
\cite{K_BCS_II} $\left\{\begin{array}{l} 
\text{Lemma } 3.1,\\
\text{Lemma } 3.2,\\
\text{Lemma } 3.6,\\
\text{Proposition } 5.9,\\
\text{Proposition } 5.10,\\
\text{Lemma } 5.11,\\
\text{Lemma } A.1,\\
\text{Lemma } A.2,\\
\text{Lemma } A.3,\\
\text{Lemma } A.4
\end{array}
\right.$}

\end{picture}
 \caption{Dependencies between Subsections
 \ref{subsec_covariances}-\ref{subsec_infinite_volume}, results of
 \cite{K_BCS_I}, \cite{K_BCS_II} and \cite[\mbox{Theorem 1.3}]{PS}.}
\label{fig_dependency}
\end{center}
\end{figure}

One important difference from the previous construction is that here the
parameter $\theta$ is allowed to take any real value thanks to the
gapped property of band spectra \eqref{eq_one_particle_lowest_band},
while it could not belong to $\frac{2\pi}{\beta}(2\Z+1)$ in
\cite{K_BCS_I}, \cite{K_BCS_II}. This affects the allowed value of
$\theta(\beta)$ as well. To make clear, we should state the definition of
$\theta(\beta)$ here. For any $\beta\in \R_{>0}$, $\theta\in\R$ there
uniquely exists $\theta'\in (-2\pi/\beta,2\pi/\beta]$
such that $\theta=\theta'$ (mod $4\pi/\beta$). We define the
number $\theta(\beta)\in [0,2\pi/\beta]$ by $\theta(\beta):=|\theta'|$.

\subsection{Properties of covariances}\label{subsec_covariances}

With the artificial parameter $h\in \frac{2}{\beta}\N$, we set 
$[0,\beta)_{h}:=\left\{0,1/h,2/h,\cdots,\beta-1/h\right\}$
as already stated in Subsection \ref{subsec_introduction}. Define the sets
$I_0$, $I$ by
\begin{align*}
I_0:=\{1,2\}\times \cB \times \G \times [0,\beta)_h,\quad I:=I_0\times
 \{1,-1\}.
\end{align*}
As we have seen in \cite[\mbox{Section 3}]{K_BCS_II}, our many-electron
system is formulated into the (imaginary) time-continuum limit $h\to \infty$
of the Grassmann Gaussian integral, which has the covariance $C(\phi):I^2_0\to
\C$ $(\phi\in \C)$ defined by 
\begin{align*}
&C(\phi)(\orho\rho \bx s, \oeta \eta \by t)\\
&:= 
\frac{1}{\beta L^d}\sum_{\bk\in \G^*}\sum_{\o\in
 \cM_h}e^{i\<\bk,\bx-\by\>+i\o(s-t)}\\
&\quad\cdot h^{-1}(I_{2b}-e^{-\frac{i}{h}(\o-\frac{\theta(\beta)}{2})I_{2b}+\frac{1}{h}E(\phi)(\bk)})^{-1}((\orho-1)b+\rho,(\oeta-1)b+\eta).
\end{align*}
Here $\cM_h$ is the set of the Matsubara frequencies with cut-off
$$
\left\{\o\in \frac{\pi}{\beta}(2\Z+1)\ \Big|\ |\o|<\pi h
\right\}
$$
and 
\begin{align*}
E(\phi)(\bk):=\left(\begin{array}{cc} E(\bk) & \overline{\phi}I_b \\
                                      \phi I_b & -E(\bk) \end{array}
\right)\in \Mat(2b,\C)
\end{align*}
for $\phi\in \C$. 
In fact $C(\phi)$ was originally defined as the free 2-point correlation
function in \cite[\mbox{Section 3}]{K_BCS_II} and was rewritten in the
above form in \cite[\mbox{Lemma 5.1}]{K_BCS_II}. 
As explained in Remark \ref{rem_free_band_spectra}, the symmetry
\eqref{eq_time_reversal_symmetry} was used in the derivation of
$C(\phi)$. Apart from the necessity to adopt the previous derivation, 
we do not use the symmetry \eqref{eq_time_reversal_symmetry} in this
paper. Our double-scale integration
regime is based on the following decomposition of the covariance. 
\begin{align}
&e^{-i\frac{\pi}{\beta}(s-t)}C(\phi)(\orho\rho\bx s, \oeta \eta \by t)=
C_0(\orho\rho \bx s, \oeta \eta \by t)+ C_1(\orho\rho \bx s, \oeta \eta \by t),\label{eq_covariance_double_decomposition}\\
&((\orho,\rho,\bx,s),(\oeta,\eta,\by,t)\in I_0,\ \phi\in \C),\notag
\end{align}
where the covariances $C_0$, $C_1:I_0^2\to \C$ are defined by 
\begin{align*}
&C_0(\orho\rho \bx s, \oeta \eta \by t)\\
&:= 
\frac{1}{\beta L^d}\sum_{\bk\in \G^*}e^{i\<\bk,\bx-\by\>}\\
&\quad\cdot
 h^{-1}(I_{2b}-e^{-\frac{i}{h}(\frac{\pi}{\beta}-\frac{\theta(\beta)}{2})I_{2b}+\frac{1}{h}E(\phi)(\bk)})^{-1}((\orho-1)b+\rho,(\oeta-1)b+\eta),\\
&C_1(\orho\rho \bx s, \oeta \eta \by t)\\
&:= 
\frac{1}{\beta L^d}\sum_{\bk\in \G^*}\sum_{\o\in
 \cM_h\backslash \{\frac{\pi}{\beta}\}}e^{i\<\bk,\bx-\by\>+i(\o-\frac{\pi}{\beta})(s-t)}\\
&\quad\cdot
 h^{-1}(I_{2b}-e^{-\frac{i}{h}(\o-\frac{\theta(\beta)}{2})I_{2b}+\frac{1}{h}E(\phi)(\bk)})^{-1}((\orho-1)b+\rho,(\oeta-1)b+\eta).
\end{align*}
Our aim here is to establish necessary bound properties of $C(\phi)$,
$C_0$, $C_1$. The bounds must be so sharp that the resulting multi-scale
analysis does not require any $(\beta,\theta)$-dependent condition on
the coupling constant $U$. First let us present bound properties which
can be proved by standard arguments. In the following we use the norms
$\|\cdot\|_{1,\infty}$, $\|\cdot\|_{1,\infty}'$ defined in 
\cite[\mbox{Subsection 4.1}]{K_BCS_II}. Let $\<\cdot,\cdot\>_{\C^m}$
denote the canonical inner product of $\C^m$. 
More precisely, for $\bu=(u_1,\cdots,u_m)$, $\bv=(v_1,\cdots,v_m)\in
\C^m$ $\<\bu,\bv\>_{\C^m}:=\sum_{j=1}^m\overline{u_j}v_j$.
Moreover, for any
$f:I_0^2\to \C$ let $\tilde{f}:I^2\to \C$ denote the anti-symmetric
extension of $f$ defined by 
\begin{align}
&\tilde{f}((X,\xi),(Y,\zeta)):=\frac{1}{2}(1_{(\xi,\zeta)=(1,-1)}f(X,Y)-1_{(\xi,\zeta)=(-1,1)}f(Y,X)),\label{eq_anti_symmetric_extension}\\
&(\forall X,Y\in I_0,\ \xi,\zeta\in \{1,-1\}).\notag
\end{align}
From here for any objects $\alpha_1,\cdots,\alpha_m$ we let
$c(\alpha_1,\cdots,\alpha_m)$ denote a positive constant depending only
on $\alpha_1,\cdots,\alpha_m$.

\begin{lemma}\label{lem_standard_covariance_bounds}
Assume that 
\begin{align}
h\ge \max\{\sqrt{e_{max}^2+|\phi|^2},1\}.\label{eq_h_domination}
\end{align}
Then there exists $c(d,b,(\hat{\bv}_j)_{j=1}^d,c_E)\in \R_{>0}$
 depending only on $d,b,(\hat{\bv}_j)_{j=1}^d,c_E$ such that the
 following statements hold. 
\begin{enumerate}[(i)]
\item\label{item_full_determinant}
\begin{align}
&|\det(\<\bu_i,\bw_j\>_{\C^m}C(\phi)(X_i,Y_j))_{1\le i,j\le n}|\le
 (c(d,b,(\hat{\bv}_j)_{j=1}^d,c_E)(1+\beta^{-1}e_{min}^{-1}))^n,\label{eq_full_determinant_bound}\\
&(\forall m,n\in \N,\ \bu_i,\bw_i\in \C^m\text{ with
 }\|\bu_i\|_{\C^m},\|\bw_i\|_{\C^m}\le 1,\ X_i,Y_i\in I_0\
 (i=1,\cdots,n)).\notag
\end{align}
\item\label{item_zero_determinant}
\begin{align*}
&|\det(\<\bu_i,\bw_j\>_{\C^m}C_0(X_i,Y_j))_{1\le i,j\le n}|\le
 (c(d,b,(\hat{\bv}_j)_{j=1}^d,c_E)\beta^{-1}e_{min}^{-1})^n,\\
&(\forall m,n\in \N,\ \bu_i,\bw_i\in \C^m\text{ with
 }\|\bu_i\|_{\C^m},\|\bw_i\|_{\C^m}\le 1,\ X_i,Y_i\in I_0\
 (i=1,\cdots,n)).\notag
\end{align*}
\item\label{item_zero_decay_bound}
\begin{align*}
&\|\tilde{C}_0\|_{1,\infty}\le  c(d,b,(\hat{\bv}_j)_{j=1}^d,c_E)
 \max\{e_{min}^{-1}, e_{min}^{-d-1}\},\\
&\|\tilde{C}_0\|_{1,\infty}'\le  c(d,b,(\hat{\bv}_j)_{j=1}^d,c_E)
 \beta^{-1}\max\{e_{min}^{-1}, e_{min}^{-d-1}\}.
\end{align*}
\item\label{item_one_decay_bound}
\begin{align*}
&\|\tilde{C}_1\|_{1,\infty}\le  c(d,b,(\hat{\bv}_j)_{j=1}^d,c_E)
 \max\{e_{min}^{-1}, e_{min}^{-d-1}\},\\
&\|\tilde{C}_1\|_{1,\infty}'\le  c(d,b,(\hat{\bv}_j)_{j=1}^d,c_E)
(e_{min}+\beta^{-1}+\beta^{-1}e_{min}^{-1}+1)
 \max\{e_{min}^{-1}, e_{min}^{-d-1}\}.
\end{align*}
\end{enumerate}
\end{lemma}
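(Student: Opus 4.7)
The plan is to reduce all four bounds to a spectral-plus-Gram analysis of the $2b\times 2b$ matrix
\[
 E(\phi)(\bk)=\begin{pmatrix} E(\bk) & \overline{\phi}I_b \\ \phi I_b & -E(\bk)\end{pmatrix}.
\]
Using \eqref{eq_one_particle_hermitian} I would first diagonalize $E(\bk)=U(\bk)\mathrm{diag}(e_1(\bk),\dots,e_b(\bk))U(\bk)^*$, and then check that $E(\phi)(\bk)$ has eigenvalues $\pm\sqrt{e_j(\bk)^2+|\phi|^2}$ with explicit orthogonal rank-$2$ projectors $P_j^{\pm}(\bk,\phi)$. Substituting the spectral resolution into the propagator $h^{-1}(I_{2b}-e^{-\frac{i}{h}(\o-\theta(\beta)/2)I_{2b}+\frac{1}{h}E(\phi)(\bk)})^{-1}$ reduces the whole problem to scalar kernels of the form
\[
 K_{\pm,j}(s-t,\bk,\phi):=\frac{1}{\beta}\sum_{\o\in\cM_h} e^{i\o(s-t)}\, h^{-1}\bigl(1-e^{-\frac{i}{h}(\o-\theta(\beta)/2)\pm\frac{1}{h}\sqrt{e_j^2+|\phi|^2}}\bigr)^{-1},
\]
evaluated by the standard geometric-series/Poisson trick under the hypothesis \eqref{eq_h_domination}. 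The gap \eqref{eq_one_particle_lowest_band} guarantees that every denominator stays bounded away from $0$ uniformly in $h$, yielding the closed-form Matsubara expressions in terms of $\sinh,\cosh$ of $\beta\sqrt{e_j^2+|\phi|^2}$ and $\cos(\beta\theta(\beta)/2)$.

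For the determinant bounds (i), (ii), I would cast each scalar kernel as a Gram product $K(X,Y)=\<f_X,g_Y\>_{\ell^2(\G^*\times\cM_h)}$ with explicit Gram vectors built from the projectors $P_j^{\pm}$ and the momentum exponentials $e^{i\<\bk,\bx\>+i\o s}$; the standard Hadamard/Gram inequality then gives $|\det(\<\bu_i,\bw_j\>_{\C^m}K(X_i,Y_j))|\le \prod_i\|\bu_i\|\|\bw_i\|\|f_{X_i}\|\|g_{Y_i}\|$. For $C(\phi)$, splitting the Matsubara sum into the single frequency $\o=\pi/\beta$ (which gives a contribution of size at most $c\beta^{-1}e_{min}^{-1}$ thanks to the $\beta^{-1}$ prefactor and the scalar bound $|\cosh x|^{-1}\le ce_{min}^{-1}$ after the spectral inversion) plus the remainder $\o\neq\pi/\beta$ (which is bounded uniformly in $\beta$ via the gap between neighbouring Matsubara modes and the standard $\tanh$-type identities) produces the factor $1+\beta^{-1}e_{min}^{-1}$ in \eqref{eq_full_determinant_bound}. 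For $C_0$ only the single frequency $\pi/\beta$ survives, so the Gram norm of $f_X$ is controlled by $\sqrt{\beta^{-1}}\cdot ce_{min}^{-1/2}$, which squares to yield the claimed $\beta^{-1}e_{min}^{-1}$.

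For the decay bounds (iii), (iv), I would apply integration by parts in $\bk$ against the exponential $e^{i\<\bk,\bx-\by\>}$, using the regularity of $E(\bk)$ encoded in $c_E$ (up to order $d+2$ by \eqref{eq_band_spectra_total_derivative}). Each derivative costs a factor of $e_{min}^{-1}$ because of the denominators $\sqrt{e_j(\bk)^2+|\phi|^2}\ge e_{min}$ appearing in the scalar kernels, and performing $d+1$ integrations by parts converts the momentum integral into a spatially summable kernel with decay $(1+\|\bx-\by\|_{\R^d})^{-d-1}$. This provides the factor $\max\{e_{min}^{-1},e_{min}^{-d-1}\}$ in the $\|\cdot\|_{1,\infty}$ bound. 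For the time-weighted norm $\|\cdot\|_{1,\infty}'$, one additionally sums the scalar kernel over $s\in[0,\beta)_h$; for $C_1$ this sum is absorbed by a second integration by parts in the time variable using the gap $|e^{\pm\beta\sqrt{e_j^2+|\phi|^2}}|\ge 1+ce_{min}\beta$ and the non-degeneracy of the Matsubara denominators, producing the factor $(e_{min}+\beta^{-1}+\beta^{-1}e_{min}^{-1}+1)$; for $C_0$ the single frequency makes the $s$-dependence trivial and the $\beta^{-1}$ factor survives unchanged.

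The main obstacle is the sharpness of the $\beta$-dependence separation between $C_0$ and $C_1$: one must verify that after removing the ``near-resonant'' Matsubara mode $\o=\pi/\beta$, the residual sum over $\cM_h\setminus\{\pi/\beta\}$ produces a Gram norm bounded independently of $\beta$ and $\theta(\beta)$. This requires exploiting the fact that for $\o\in\cM_h\setminus\{\pi/\beta\}$ one has $|\o-\pi/\beta|\ge 2\pi/\beta$, combined with the gap \eqref{eq_one_particle_lowest_band}, so that the denominators in the scalar kernels satisfy a uniform lower bound of order $e_{min}+|\o-\pi/\beta|$, making the Matsubara sum absolutely convergent with $\beta$-independent constant. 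Getting these constants to depend only on $d,b,(\hat\bv_j)_{j=1}^d,c_E$ and not on $\beta,\theta,\phi$ is the delicate part and will consume most of the computation; the rest is bookkeeping of the Gram norms.
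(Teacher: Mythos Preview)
Your overall architecture (Gram representations for the determinant bounds, integration by parts in $\bk$ for the $\|\cdot\|_{1,\infty}$ bounds) is the right one, and part (ii) is handled exactly as the paper does. But there are two genuine gaps.

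For (i) you propose to split $C(\phi)$ into the single Matsubara mode $\o=\pi/\beta$ plus the remainder and to bound the remainder by a $\beta$-independent Gram constant ``via the gap between neighbouring Matsubara modes and standard $\tanh$-type identities.'' That $\beta$-independent determinant bound on $C_1$ is \emph{not} part of the present lemma; it is the content of the separate and substantially harder Lemma~\ref{lem_covariance_determinant_bound}, whose proof in the paper requires a contour-integral rewriting of the Matsubara sum together with the Pedra--Salmhofer extended Gram inequality. A naive Gram representation of $C_1$ in $\ell^2(\G^*\times\cM_h)$ produces norms that diverge with $h$ (the sum $\beta^{-1}\sum_{\o}\|B(\o,\bk)^{-1}\|$ behaves like $\log h$), so the ``gap + $\tanh$'' heuristic does not close. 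The paper avoids this entirely: it does \emph{not} split $C(\phi)$ for (i) but cites \cite[\mbox{Lemma 3.5 (iii)}]{K_BCS_II}, a Pedra--Salmhofer bound applied directly to the summed propagator, and then estimates the resulting trace $(1+2\cos(\beta\theta(\beta)/2)e^{-\beta\sqrt{E^2+|\phi|^2}}+e^{-2\beta\sqrt{E^2+|\phi|^2}})^{-1/2}$ by $c(1+\beta^{-1}e_{min}^{-1})$.

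For (iv), ``a second integration by parts in the time variable'' is too vague to produce the stated $\|\tilde C_1\|_{1,\infty}$ and $\|\tilde C_1\|'_{1,\infty}$ bounds with the correct constants. The paper's argument is a genuine Littlewood--Paley decomposition: one introduces smooth cutoffs $\chi_l$ in the Matsubara variable on dyadic scales $2^{N_0},\dots,2^{N_h}$ (with $2^{N_0}\sim\max\{e_{min},\beta^{-1}\}$), writes $C(\phi)=\sum_l C_l'$, and estimates each slice by $(d+2)$-fold integration by parts in $\o$ and $(d+1)$- or $(d+2)$-fold integration by parts in $\bk$, using the support size \eqref{eq_support_size_cut_off} and the derivative bounds \eqref{eq_convenient_core_inequality}. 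Summing the slices gives bounds on $\|\tilde C(\phi)\|_{1,\infty}$ and $\|\tilde C(\phi)\|'_{1,\infty}$; the bounds on $\tilde C_1$ then follow from $\tilde C_1=\tilde C(\phi)-\tilde C_0$ (up to the phase) together with part (iii). The careful scale-by-scale bookkeeping is what produces the specific prefactor $e_{min}+\beta^{-1}+\beta^{-1}e_{min}^{-1}+1$ in $\|\tilde C_1\|'_{1,\infty}$, and your sketch does not indicate how to obtain it.
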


\begin{remark}
The bound \eqref{eq_full_determinant_bound} is not directly used in our
 multi-scale integration process, so its dependency on $\beta$ does not
 affect the magnitude of the coupling constant. The upper bounds on
 $\|\tilde{C}_0\|_{1,\infty}'$, $\|\tilde{C}_1\|_{1,\infty}'$ depend on
 $\beta$. However, they are to be multiplied by $L^{-d}$ during the
 multi-scale integration and thus do not yield a $\beta$-dependent
 condition on the coupling constant. Our essential problem is to prevent 
the $\beta$-dependent determinant bound of $C_0$ 
from affecting the
 magnitude of the coupling constant. Solving this problem is the main
 novelty of the present double-scale integration scheme.
\end{remark}
 
\begin{proof}[Proof of Lemma \ref{lem_standard_covariance_bounds}]
We fix $\phi\in \C$ during the proof. Resulting bounds will be
 independent of $\phi$, mainly due to the assumption
 \eqref{eq_h_domination}. First of all let us list useful estimates. 
For $(\o,\bk)\in \R^{d+1}$, set  
$$
B(\o,\bk):=h(I_{2b}-e^{-\frac{i}{h}(\o-\frac{\theta(\beta)}{2})I_{2b}+\frac{1}{h}E(\phi)(\bk)}).$$
We should recall the definition \eqref{eq_band_spectra_total_derivative}
 of $c_E$ beforehand.
\begin{align}
&\inf_{\bk\in \R^d}\inf_{\bu\in \C^{2b}\atop\text{with
 }\|\bu\|_{\C^{2b}}=1}\|E(\phi)(\bk)\bu\|_{\C^{2b}}=\sqrt{e_{min}^2+|\phi|^2},\label{eq_extended_big_band}\\
&\sup_{\bk\in \R^d}\|E(\phi)(\bk)\|_{2b\times
 2b}=\sqrt{e_{max}^2+|\phi|^2},\label{eq_extended_small_band}\\
&\|B(\o,\bk)^{-1}\|_{2b\times 2b}\le c\left(
h^2\sin^2\left(\frac{1}{2h}\left(\o-\frac{\theta(\beta)}{2}
\right)\right)+e_{min}^2\right)^{-\frac{1}{2}},\label{eq_matrix_inverse}\\
&\left\|
\left(\frac{\partial}{\partial \o}\right)^mB(\o,\bk)
\right\|_{2b\times 2b}\le ch^{-m+1},\label{eq_matrix_derivative}\\
&\left\|
\left(\frac{\partial}{\partial \hat{k}_j}\right)^mB\left(\o,\sum_{i=1}^d\hat{k}_i\hat{\bv}_i\right)
\right\|_{2b\times 2b}\le
 c(d,(\hat{\bv}_j)_{j=1}^d,c_E),\label{eq_matrix_derivative_space}\\
&(\forall m\in \{1,\cdots,d+2\},\ j\in \{1,\cdots,d\},\ \o\in \R,\
 \bk,\hat{\bk}\in \R^d).\notag
\end{align}
In the derivation of \eqref{eq_matrix_inverse},
 \eqref{eq_matrix_derivative}, \eqref{eq_matrix_derivative_space} we use
 \eqref{eq_h_domination}, \eqref{eq_extended_big_band},
 \eqref{eq_extended_small_band}. Also, to derive
 \eqref{eq_matrix_derivative_space}, one can repeatedly use the formula
\begin{align*}
&\frac{\partial}{\partial
 k_j}e^{\frac{1}{h}E(\phi)(\bk)}=\frac{1}{h}\int_0^1ds
 e^{\frac{s}{h}E(\phi)(\bk)}\frac{\partial}{\partial k_j}E(\phi)(\bk)
 e^{\frac{1-s}{h}E(\phi)(\bk)},\quad (j\in \{1,\cdots,d\}).
\end{align*}

\eqref{item_full_determinant}: It was proved in \cite[\mbox{Lemma 3.5
 (iii)}]{K_BCS_II}, which is based on the general determinant bound
 \cite[\mbox{Theorem 1.3}]{PS}, that 
\begin{align*}
&|\det(\<\bu_i,\bw_j\>_{\C^m}C(\phi)(X_i,Y_j))_{1\le i,j\le n}|\\
&\le\left(\frac{2^4b}{L^d}\sum_{\bk\in \G^*}\Tr\left(
1+2\cos\left(\frac{\beta\theta(\beta)}{2}\right)e^{-\beta\sqrt{E(\bk)^2+|\phi|^2}}
+e^{-2\beta\sqrt{E(\bk)^2+|\phi|^2}}
\right)^{-\frac{1}{2}}\right)^n,\\
&(\forall m,n\in \N,\ \bu_i,\bw_i\in \C^m\text{ with
 }\|\bu_i\|_{\C^m},\|\bw_i\|_{\C^m}\le 1,\ X_i,Y_i\in I_0\
 (i=1,\cdots,n)).
\end{align*}
Observe that
\begin{align*}
&\Tr\left(
1+2\cos\left(\frac{\beta\theta(\beta)}{2}\right)e^{-\beta\sqrt{E(\bk)^2+|\phi|^2}}
+e^{-2\beta\sqrt{E(\bk)^2+|\phi|^2}}
\right)^{-\frac{1}{2}}\\
&\le b(1-e^{-\beta e_{min}})^{-1}\le cb(1+\beta^{-1}e_{min}^{-1}).
\end{align*}
Thus the claimed bound holds.

\eqref{item_zero_determinant}: Let $L^2(\{1,2\}\times \cB\times \G^*
 \times \cM_h)$ be the Hilbert space whose inner product is defined by 
\begin{align*}
&\<f,g\>_{L^2}:=\frac{1}{\beta L^d}\sum_{K\in \{1,2\}\times \cB\times
 \G^*\times \cM_h}\overline{f(K)}g(K).
\end{align*}
We derive the claimed bound by applying the Gram inequality in the
 Hilbert space $\C^m\otimes L^2(\{1,2\}\times \cB\times \G^*\times
 \cM_h)$. Let us define the vectors $f_X$, $g_X\in L^2(\{1,2\}\times
 \cB\times \G^*\times \cM_h)$ $(X\in I_0)$ by
\begin{align*}
&f_{\orho\rho\bx
 s}(\otau,\tau,\bk,\o):=e^{-i\<\bk,\bx\>}1_{\o=\frac{\pi}{\beta}}1_{(\orho,\rho)=(\otau,\tau)}e_{min}^{-\frac{1}{2}},\\
&g_{\orho\rho\bx
 s}(\otau,\tau,\bk,\o):=e^{-i\<\bk,\bx\>}1_{\o=\frac{\pi}{\beta}}e_{min}^{\frac{1}{2}}B\left(\frac{\pi}{\beta},\bk\right)^{-1}((\otau-1)b+\tau,(\orho-1)b+\rho).
\end{align*}
It follows that $C_0(X,Y)=\<f_X,g_Y\>_{L^2}$ for any $X$, $Y\in
 I_0$. We can apply \eqref{eq_matrix_inverse} to verify that
\begin{align*}
\|f_X\|^2_{L^2}\le \beta^{-1}e_{min}^{-1},\quad 
\|g_X\|^2_{L^2}\le c(b)\beta^{-1}e_{min}^{-1},\quad (\forall X\in I_0).
\end{align*}
Therefore by the Gram inequality
\begin{align*}
&|\det(\<\bu_i,\bw_j\>_{\C^m}C_0(X_i,Y_j))_{1\le i,j\le n}|\le
 \prod_{i=1}^n\|\bu_i\|_{\C^m}\|\bw_i\|_{\C^m}\|f_{X_i}\|_{L^2}
\|g_{Y_i}\|_{L^2}\\
&\qquad\qquad\qquad\qquad\qquad\qquad\qquad\ \ \le (c(b)\beta^{-1}e_{min}^{-1})^n,\\
&(\forall m,n\in \N,\ \bu_i,\bw_i\in \C^m\text{ with
 }\|\bu_i\|_{\C^m},\|\bw_i\|_{\C^m}\le 1,\ X_i,Y_i\in I_0\
 (i=1,\cdots,n)).
\end{align*}

\eqref{item_zero_decay_bound}:
By applying e.g. the formula \cite[\mbox{(C.1)}]{K_RG} we can derive the
 following inequality.
\begin{align*}
&\left\|\left(\frac{\partial}{\partial \hat{k}_j}\right)^n
B\left(\o,\sum_{i=1}^d\hat{k}_i\hat{\bv}_i\right)^{-1}\right\|_{2b\times
 2b}\\
&\le
 c(d)\sum_{m=1}^n\prod_{u=1}^m\left(\sum_{l_u=1}^n\right)1_{\sum_{u=1}^ml_u=n}\\
&\quad\cdot \prod_{p=1}^m\left\|
B\left(\o,\sum_{i=1}^d\hat{k}_i\hat{\bv}_i
\right)^{-1}\left(\frac{\partial}{\partial \hat{k}_j}\right)^{l_p}B\left(\o,\sum_{i=1}^d\hat{k}_i\hat{\bv}_i\right)
\right\|_{2b\times 2b}\\
&\quad\cdot \left\|
B\left(\o,\sum_{i=1}^d\hat{k}_i\hat{\bv}_i
\right)^{-1}\right\|_{2b\times 2b},\\
&(\forall n\in \{1,\cdots,d+2\},\ j\in\{0,\cdots,d\},\ \o\in \R,\
 \hat{\bk}\in\R^d),
\end{align*}
where $\frac{\partial}{\partial \hat{k}_0}$ denotes
 $\frac{\partial}{\partial \o}$. 
Combination of this inequality and \eqref{eq_matrix_inverse},
 \eqref{eq_matrix_derivative}, \eqref{eq_matrix_derivative_space} yields
 that
\begin{align}
&\left\|\left(\frac{\partial}{\partial \hat{k}_j}\right)^n
B\left(\o,\sum_{i=1}^d\hat{k}_i\hat{\bv}_i\right)^{-1}\right\|_{2b\times
 2b}\label{eq_convenient_core_inequality}\\
&\le c(d,(\hat{\bv}_j)_{j=1}^d,c_E)\notag\\
&\quad\cdot \sum_{m=1}^n\left(
h^2\sin^2\left(\frac{1}{2h}\left(\o-\frac{\theta(\beta)}{2}\right)\right)+e_{min}^2\right)^{-\frac{m+1}{2}}(1_{j=0}h^{-n+m}+1_{j\ge
 1}),\notag\\
&(\forall n\in \{1,\cdots,d+2\},\ j\in\{0,\cdots,d\},\ \o\in \R,\
 \hat{\bk}\in\R^d).\notag
\end{align}
By periodicity we can perform integration by parts to derive that for
 any $\bx$, $\by\in \G$, $s,t\in [0,\beta)_h$, $j\in \{1,\cdots,d\}$
\begin{align*}
&\left(\frac{L}{2\pi}(e^{-i\frac{2\pi}{L}\<\bx-\by,\hat{\bv}_j\>}-1)\right)^{d+1}C_0(\cdot\bx
 s,\cdot \by t)\\
&=\frac{1}{\beta L^d}\sum_{\bk\in
 \G^*}e^{i\<\bk,\bx-\by\>}\\
&\quad \cdot\prod_{m=1}^{d+1}\left(\frac{L}{2\pi}\int_0^{\frac{2\pi}{L}}dp_m\right)
\left(\frac{\partial}{\partial
 \hat{k}_j}\right)^{d+1}B\left(\frac{\pi}{\beta},\bk+\hat{k}_j\hat{\bv}_j\right)^{-1}\Big|_{\hat{k}_j=\sum_{m=1}^{d+1}p_m}.
\end{align*}
Substitution of \eqref{eq_matrix_inverse},
 \eqref{eq_convenient_core_inequality} gives that
\begin{align*}
&\left|\left(\frac{L}{2\pi}(e^{-i\frac{2\pi}{L}\<\bx-\by,\hat{\bv}_j\>}-1)\right)^{d+1}\right|\|C_0(\cdot\bx
 s,\cdot \by t)\|_{2b\times 2b}\\
&\le
 c(d,(\hat{\bv}_j)_{j=1}^d,c_E)\beta^{-1}\sum_{m=1}^{d+1}e_{min}^{-m-1}\le
 c(d,(\hat{\bv}_j)_{j=1}^d,c_E)\beta^{-1}\max\{e_{min}^{-2},e_{min}^{-d-2}\},\\
&\|C_0(\cdot\bx
 s,\cdot \by t)\|_{2b\times 2b}\le c\beta^{-1}e_{min}^{-1},\\
&(\forall \bx,\by\in \G,\ s,t\in [0,\beta)_h).
\end{align*}
These bounds lead to that
\begin{align*}
\|\tilde{C}_0\|_{1,\infty}&\le\sum_{\bx\in\G}\frac{c(d,b,(\hat{\bv}_j)_{j=1}^d,c_E)e_{min}^{-1}}{1+(\max\{e_{min}^{-1},e_{min}^{-d-1}\})^{-1}\sum_{j=1}^d|\frac{L}{2\pi}(e^{i\frac{2\pi}{L}\<\bx,\hat{\bv}_j\>}-1)|^{d+1}}\\
&\le c(d,b,(\hat{\bv}_j)_{j=1}^d,c_E)e_{min}^{-1}
\Bigg(\sum_{\bx\in \G}\frac{1_{e_{min}\ge
 1}}{1+\sum_{j=1}^d|\frac{L}{2\pi}(e^{i\frac{2\pi}{L}\<\bx,\hat{\bv}_j\>}-1)|^{d+1}}\\
&\qquad\qquad\qquad\qquad\qquad\quad +
\sum_{\bx\in \G}\frac{1_{e_{min}<
 1}}{1+e_{min}^{d+1}\sum_{j=1}^d|\frac{L}{2\pi}(e^{i\frac{2\pi}{L}\<\bx,\hat{\bv}_j\>}-1)|^{d+1}}\Bigg)\\
&\le
 c(d,b,(\hat{\bv}_j)_{j=1}^d,c_E)\max\{e_{min}^{-1},e_{min}^{-d-1}\}.
\end{align*}
The claimed bound on $\|\tilde{C}_0\|_{1,\infty}'$ is proved in the same
 way. 

\eqref{item_one_decay_bound}: Let us apply a standard method of slicing
 the covariance. Let us take a function $\chi\in C^{\infty}(\R,\R)$
 satisfying that 
\begin{align*}
&\chi(x)=1,\quad (\forall x\in (-\infty,1]),\\
&\chi(x)\in (0,1),\quad (\forall x\in (1,2)),\\
&\chi(x)=0,\quad (\forall x\in [2,\infty)),\\
&\frac{d}{dx}\chi(x)\le 0,\quad (\forall x\in \R).
\end{align*}
Set 
\begin{align*}
N_h:=\left\lfloor \frac{\log h}{\log 2}\right\rfloor +1,\quad N_0:=
\left\lfloor \frac{\log (\max\{e_{min},\beta^{-1}\})}{\log
 2}\right\rfloor,
\end{align*}
where $\lfloor x \rfloor$ denotes the largest integer less than or equal
 to $x$ for $x\in \R$. By \eqref{eq_h_domination} and the definition of
 $h$, $h\ge \max\{e_{min},\beta^{-1}\}$ and thus $N_0<N_h$. Then we
 define the functions $\chi_l\in C^{\infty}(\R)$
 $(l=N_0,N_0+1,\cdots,N_h)$ by 
\begin{align*}
&\chi_{N_0}(\o):=\chi\left(
2^{-N_0}h\left|
\sin\left(\frac{\o - \theta(\beta)/2}{2h}\right)
\right|\right),\\
&\chi_{l}(\o):=\chi\left(
2^{-l}h\left|
\sin\left(\frac{\o - \theta(\beta)/2}{2h}\right)
\right|\right)
-\chi\left(
2^{-(l-1)}h\left|
\sin\left(\frac{\o - \theta(\beta)/2}{2h}\right)
\right|\right),\\
&(l=N_0+1,\cdots,N_h).
\end{align*}
These functions behave as follows.
\begin{align}
&\chi_{N_0}(\o)=\left\{\begin{array}{ll} 1 & \text{if }h\left|
\sin\left(\frac{\o - \theta(\beta)/2}{2h}\right)
\right|\le 2^{N_0},\\
\in (0,1) & \text{if } 2^{N_0}< h\left|
\sin\left(\frac{\o - \theta(\beta)/2}{2h}\right)
\right| < 2^{N_0+1},\\
0 & \text{if }h\left|
\sin\left(\frac{\o - \theta(\beta)/2}{2h}\right)
\right|\ge 2^{N_0+1},\end{array}
\right.\label{eq_cut_off_profile}\\
&\chi_{l}(\o)=\left\{\begin{array}{ll} 0 & \text{if }h\left|
\sin\left(\frac{\o - \theta(\beta)/2}{2h}\right)
\right|\le 2^{l-1},\\
\in (0,1] & \text{if } 2^{l-1}< h\left|
\sin\left(\frac{\o - \theta(\beta)/2}{2h}\right)
\right| < 2^{l+1},\\
0 & \text{if }h\left|
\sin\left(\frac{\o - \theta(\beta)/2}{2h}\right)
\right|\ge 2^{l+1},\end{array}
\right.\notag\\
&(l=N_0+1,\cdots,N_h).\notag
\end{align}
Moreover, there exists $c(d,\chi)\in \R_{>0}$ depending only on $d$,
 $\chi$ such that the following statements hold.
\begin{itemize}
\item 
\begin{align}
\sum_{l=N_0}^{N_h}\chi_l(\o)=1,\quad (\forall \o\in
 \R).\label{eq_sum_cut_off}
\end{align}
\item 
\begin{align}
&\left|
\left(\frac{\partial}{\partial \o}\right)^n\chi_l(\o)\right|\le
 c(d,\chi)2^{-nl},\label{eq_derivative_cut_off}\\
&(\forall n\in \{1,\cdots,d+2\},\ l\in
 \{N_0,\cdots,N_h\},\ \o\in \R).\notag
\end{align}
\item 
\begin{align}
\frac{1}{\beta}\sup_{x\in \R}\sum_{\o\in \cM_h}1_{\chi_l(\o+x)\neq 0}\le
 c(d,\chi)2^l,\quad(\forall l\in
 \{N_0,\cdots,N_h\}).\label{eq_support_size_cut_off}
\end{align}
\end{itemize}
To prove \eqref{eq_sum_cut_off}, \eqref{eq_derivative_cut_off}, we use
 that 
\begin{align}
2^{N_h-1}\le h\le 2^{N_h}.\label{eq_h_smallest}
\end{align}
 To prove
 \eqref{eq_support_size_cut_off}, we use that $\beta^{-1}\le c
 2^{N_0}$. Then let us define the covariances $C_l':I_0^2\to \C$
 $(l=N_0,N_0+1,\cdots,N_h)$ by 
\begin{align*}
C_l'(\cdot \bx s, \cdot \by t):=\frac{1}{\beta L^d}\sum_{\bk\in
 \G^*}\sum_{\o\in
 \cM_h}e^{i\<\bk,\bx-\by\>+i\o(s-t)}\chi_l(\o)B(\o,\bk)^{-1}.
\end{align*}
It follows from \eqref{eq_sum_cut_off} that 
\begin{align}
&\sum_{l=N_0}^{N_h}C_l'(\cdot \bx s, \cdot \by t)=C(\phi)(\cdot\bx
 s,\cdot \by t),\quad (\forall \bx,\by \in\G,\ s,t\in [0,\beta)_h).\label{eq_covariance_multi_decomposition}
\end{align}

Our strategy is as follows. We first find upper bounds on
 $\|\tilde{C}(\phi)\|_{1,\infty}$,  $\|\tilde{C}(\phi)\|_{1,\infty}'$ by
 estimating each $C_l'$ and summing up them. Then we derive the claimed
 bounds on $\|\tilde{C}_1\|_{1,\infty}$,  $\|\tilde{C}_1\|_{1,\infty}'$
 by using the relation \eqref{eq_covariance_double_decomposition} and
 the results of \eqref{item_zero_decay_bound}. 
By \eqref{eq_matrix_inverse}, \eqref{eq_cut_off_profile}, \eqref{eq_support_size_cut_off}
\begin{align}
&\|C_l'(\cdot\bx s,\cdot \by t)\|_{2b\times 2b} \le
 c(d,\chi)2^l(1_{l=N_0}e_{min}^{-1}+1_{l\ge
 N_0+1}(2^l+e_{min})^{-1}),\label{eq_sliced_full_bound}\\
&(\forall l\in \{N_0,\cdots,N_h\},\ \bx,\by\in \G,\ s,t\in
 [0,\beta)_h).\notag
\end{align}
Integrating by parts based on periodicity yields that 
\begin{align}
&\left(\frac{\beta}{2\pi}(e^{-i\frac{2\pi}{\beta}(s-t)}-1)\right)^{n}C_l'(\cdot\bx
 s,\cdot \by t)\label{eq_integration_by_parts_time}\\
&=\frac{1}{\beta L^d}\sum_{\bk\in
 \G^*}\sum_{\o\in \cM_h}
e^{i\<\bk,\bx-\by\>+i\o(s-t)}\notag\\
&\quad \cdot\prod_{m=1}^{n}\left(\frac{\beta}{2\pi}\int_0^{\frac{2\pi}{\beta}}dr_m\right)
\left(\frac{\partial}{\partial
 r}\right)^{n}\chi_l(r)B(r,\bk)^{-1}\Big|_{r=\o+\sum_{m=1}^nr_m},\notag\\
&\left(\frac{L}{2\pi}(e^{-i\frac{2\pi}{L}\<\bx-\by,\hat{\bv}_j\>}-1)\right)^{n}C_l'(\cdot\bx s,\cdot \by t)\label{eq_integration_by_parts_full}\\
&=\frac{1}{\beta L^d}\sum_{\bk\in
 \G^*}\sum_{\o\in \cM_h}
e^{i\<\bk,\bx-\by\>+i\o(s-t)}\notag\\
&\quad \cdot\prod_{m=1}^{n}\left(\frac{L}{2\pi}\int_0^{\frac{2\pi}{L}}dp_m\right)
\chi_l(\o)\left(\frac{\partial}{\partial
 \hat{k}_j}\right)^{n}B(\o,\bk+\hat{k}_j\hat{\bv}_j)^{-1}\Big|_{\hat{k}_j=\sum_{m=1}^{n}p_m},\notag\\
&(\forall l\in \{N_0,\cdots,N_h\},\ \bx,\by\in \G,\ s,t\in
 [0,\beta)_h,\ j\in\{1,\cdots,d\},\ n\in \{1,\cdots,d+2\}).\notag
\end{align}

Assume that $l\ge N_0+1$. By \eqref{eq_matrix_inverse},
 \eqref{eq_convenient_core_inequality}, \eqref{eq_cut_off_profile},
 \eqref{eq_derivative_cut_off}, \eqref{eq_support_size_cut_off} and
 \eqref{eq_integration_by_parts_time}
\begin{align}
&\left|\frac{\beta}{2\pi}(e^{-i\frac{2\pi}{\beta}(s-t)}-1)\right|^{d+2}\|C_l'(\cdot\bx
 s,\cdot \by t)\|_{2b\times 2b}\label{eq_sliced_time_up}\\
&\le
 \prod_{m=1}^{d+2}\left(\frac{\beta}{2\pi}\int_{0}^{\frac{2\pi}{\beta}}dr_m\right)\notag\\
&\quad\cdot \frac{1}{\beta
 L^d}\sum_{\bk\in \G^*}\sum_{\o\in
 \cM_h}1_{\chi_l(\o+\sum_{m=1}^nr_m)\neq 0} \sup_{r\in [-\pi h,\pi
 h]}\left\|\left(\frac{\partial}{\partial
 r}\right)^{d+2}\chi_l(r)B(r,\bk)^{-1}\right\|_{2b\times 2b}\notag\\
&\le c(d,(\hat{\bv}_j)_{j=1}^d,c_E,\chi)2^l\notag\\
&\quad \cdot\Bigg(
\sum_{p=0}^{d+1}2^{-pl}\sum_{m=1}^{d+2-p}h^{-(d+2-p)+m}(2^{2l}+e_{min}^2)^{-\frac{m+1}{2}}
+2^{-(d+2)l}(2^{2l}+e_{min}^2)^{-\frac{1}{2}}\Bigg)\notag\\
&\le c(d,(\hat{\bv}_j)_{j=1}^d,c_E,\chi) 2^{-(d+2)l}.\notag
\end{align}
In the last inequality we also used \eqref{eq_h_smallest}.
On the other hand, by \eqref{eq_convenient_core_inequality},
 \eqref{eq_cut_off_profile}, \eqref{eq_support_size_cut_off} and
 \eqref{eq_integration_by_parts_full} for $j\in \{1,\cdots,d\}$, $n\in
 \{1,\cdots,d+2\}$
\begin{align}
&\left|\frac{L}{2\pi}(e^{-i\frac{2\pi}{L}\<\bx-\by,\hat{\bv}_j\>}-1)\right|^n
\|C_l'(\cdot \bx s,\cdot \by t)\|_{2b\times
 2b}\label{eq_sliced_space_up}\\
&\le c(d,(\hat{\bv}_j)_{j=1}^d,c_E,\chi) 2^l\sum_{m=1}^n(2^{2l}+e_{min}^2)^{-\frac{m+1}{2}}\notag\\
&\le
 c(d,(\hat{\bv}_j)_{j=1}^d,c_E,\chi)\max\{e_{min}^{-1},e_{min}^{-n}\}.\notag
\end{align}
By combining \eqref{eq_sliced_full_bound},
 \eqref{eq_sliced_time_up} and \eqref{eq_sliced_space_up} for $n=d+2$
\begin{align*}
&\|C_l'(\cdot\bx s,\cdot \by t)\|_{2b\times 2b}\\
&\le c(d,(\hat{\bv}_j)_{j=1}^d,c_E,\chi)\Bigg/\Bigg(1+2^{(d+2)l}\left|\frac{\beta}{2\pi}(e^{i\frac{2\pi}{\beta}(s-t)}-1)\right|^{d+2}\\
&\qquad\qquad\qquad\qquad\qquad\quad +(\max\{e_{min}^{-1},e_{min}^{-d-2}\})^{-1}\sum_{j=1}^d\left|\frac{L}{2\pi}(e^{i\frac{2\pi}{L}\<\bx-\by,\hat{\bv}_j\>}-1)\right|^{d+2}\Bigg),\\
&(\forall \bx,\by\in\G,\ s,t\in [0,\beta)_h),
\end{align*}
which together with \eqref{eq_h_smallest} implies that
\begin{align}
\|\tilde{C}_l'\|_{1,\infty}&\le
 c(d,b,(\hat{\bv}_j)_{j=1}^d,c_E,\chi)2^{-l}(1_{e_{min}\ge
 1}+1_{e_{min}<1}e_{min}^{-d})\label{eq_sliced_norm_up}\\
&\le
 c(d,b,(\hat{\bv}_j)_{j=1}^d,c_E,\chi)2^{-l}\max\{1,e_{min}^{-d}\}.\notag
\end{align}
Also by \eqref{eq_sliced_space_up} for $n=d+1$ 
\begin{align}
&\sum_{\bx\in \G\atop \bx\neq \b0}(\|C_l'(\cdot \bx s,\cdot \b0
 t)\|_{2b\times 2b}+
\|C_l'(\cdot \b0 t,\cdot \bx
 s)\|_{2b\times 2b})\label{eq_modified_norm_up}\\
&\le
 c(d,(\hat{\bv}_j)_{j=1}^d,c_E,\chi)\max\{2^{-l},2^{-(d+1)l}\}\sum_{\bx\in\G\atop
 \bx\neq
 \b0}\frac{1}{\sum_{j=1}^d|\frac{L}{2\pi}(e^{i\frac{2\pi}{L}\<\bx,\hat{\bv}_j\>}-1)|^{d+1}}\notag\\
&\le
 c(d,(\hat{\bv}_j)_{j=1}^d,c_E,\chi)\max\{2^{-l},2^{-(d+1)l}\},\quad
 (\forall s,t\in [0,\beta)_h).\notag
\end{align}

Let us derive necessary bounds for $l=N_0$. By
 \eqref{eq_convenient_core_inequality}, \eqref{eq_support_size_cut_off},
 \eqref{eq_integration_by_parts_full}
\begin{align}
&\left|\frac{L}{2\pi}(e^{-i\frac{2\pi}{L}\<\bx-\by,\hat{\bv}_j\>}-1)\right|^n\|C_{N_0}'(\cdot\bx
 s,\cdot\by t)\|_{2b\times 2b}\label{eq_sliced_space_down}\\
&\le  c(d,(\hat{\bv}_j)_{j=1}^d,c_E,\chi)
 2^{N_0}\max\{e_{min}^{-2},e_{min}^{-n-1}\},\notag\\
&(\forall \bx,\by\in\G,\ s,t\in [0,\beta)_h,\ j\in \{1,\cdots,d\},\ n\in
 \{1,\cdots,d+2\}).\notag
\end{align}
Assume that $e_{min}\le \beta^{-1}$. It follows from
 \eqref{eq_sliced_full_bound}, \eqref{eq_sliced_space_down} for $n=d+1$
 that
\begin{align*}
&\|C_{N_0}'(\cdot\bx s,\cdot \by t)\|_{2b\times 2b}\le
 \frac{c(d,(\hat{\bv}_j)_{j=1}^d,c_E,\chi)2^{N_0}e_{min}^{-1}}{1+(\max\{e_{min}^{-1},e_{min}^{-d-1}\})^{-1}\sum_{j=1}^d|\frac{L}{2\pi}(e^{i\frac{2\pi}{L}\<\bx-\by,\hat{\bv}_j\>}-1)|^{d+1}},\\
&(\forall \bx,\by\in \G,\ s,t\in [0,\beta)_h),
\end{align*}
and thus 
\begin{align*}
\|\tilde{C}_{N_0}'\|_{1,\infty}&\le
 c(d,b,(\hat{\bv}_j)_{j=1}^d,c_E,\chi)\beta
 2^{N_0}e_{min}^{-1}(1_{e_{min}\ge 1}+1_{e_{min}<1}e_{min}^{-d})\\
&\le
 c(d,b,(\hat{\bv}_j)_{j=1}^d,c_E,\chi)\max\{e_{min}^{-1},e_{min}^{-d-1}\},
\end{align*}
where we used that $2^{N_0}\le \beta^{-1}$. On the other hand, let us
 assume that $e_{min}>\beta^{-1}$. By \eqref{eq_matrix_inverse}, 
 \eqref{eq_convenient_core_inequality}, \eqref{eq_derivative_cut_off},
 \eqref{eq_support_size_cut_off} and
 \eqref{eq_integration_by_parts_time}
\begin{align}
&\left|\frac{\beta}{2\pi}(e^{-i\frac{2\pi}{\beta}(s-t)}-1)\right|^{d+2}
\|C_{N_0}'(\cdot\bx s,\cdot \by t)\|_{2b\times 2b}\label{eq_sliced_time_down}
\\
&\le c(d,(\hat{\bv}_j)_{j=1}^d,c_E,\chi)2^{N_0}\left(
\sum_{p=0}^{d+1}2^{-pN_0}\sum_{m=1}^{d+2-p}h^{-(d+2-p)+m}e_{min}^{-m-1}+2^{-(d+2)N_0}e_{min}^{-1}\right)\notag\\
&\le c(d,(\hat{\bv}_j)_{j=1}^d,c_E,\chi)2^{N_0}\left(
\sum_{p=0}^{d+1}\sum_{m=1}^{d+2-p}2^{-N_0(d+2-m)}e_{min}^{-m-1}+2^{-(d+2)N_0}e_{min}^{-1}\right)\notag\\
&\le
 c(d,(\hat{\bv}_j)_{j=1}^d,c_E,\chi)2^{-(d+1)N_0}e_{min}^{-1},\notag\\
&(\forall \bx,\by\in \G,\ s,t\in [0,\beta)_h).\notag
\end{align}
In the second inequality we used \eqref{eq_h_smallest}. In the last
 inequality we used that $2^{N_0}\le e_{min}$. By using
 \eqref{eq_sliced_full_bound}, \eqref{eq_sliced_space_down} for $n=d+2$
 and \eqref{eq_sliced_time_down} we have that
\begin{align*}
&\|C_{N_0}'(\cdot\bx s,\cdot \by t)\|_{2b\times 2b}\\
&\le c(d,(\hat{\bv}_j)_{j=1}^d,c_E,\chi)2^{N_0}e_{min}^{-1}\Bigg/\Bigg(1+2^{(d+2)N_0}\left|\frac{\beta}{2\pi}(e^{i\frac{2\pi}{\beta}(s-t)}-1)\right|^{d+2}\\
&\qquad\qquad\qquad\qquad\qquad\quad +(\max\{e_{min}^{-1},e_{min}^{-d-2}\})^{-1}\sum_{j=1}^d\left|\frac{L}{2\pi}(e^{i\frac{2\pi}{L}\<\bx-\by,\hat{\bv}_j\>}-1)\right|^{d+2}\Bigg),
\end{align*}
and thus by using \eqref{eq_h_smallest} 
\begin{align*}
\|\tilde{C}_{N_0}'\|_{1,\infty}&\le
 c(d,b,(\hat{\bv}_j)_{j=1}^d,c_E,\chi)e_{min}^{-1}(1_{e_{min}\ge
 1}+1_{e_{min}<1}e_{min}^{-d})\\
&\le
 c(d,b,(\hat{\bv}_j)_{j=1}^d,c_E,\chi)\max\{e_{min}^{-1},e_{min}^{-d-1}\}.
\end{align*}
In both cases we have derived that
\begin{align}
\|\tilde{C}_{N_0}'\|_{1,\infty}\le
 c(d,b,(\hat{\bv}_j)_{j=1}^d,c_E,\chi)\max\{e_{min}^{-1},e_{min}^{-d-1}\}.\label{eq_sliced_norm_down}
\end{align}
Moreover, it follows from \eqref{eq_sliced_space_down} for $n=d+1$ that 
\begin{align}
&\sum_{\bx\in \G\atop \bx\neq \b0}(\|C_{N_0}'(\cdot\bx s,\cdot \b0
 t)\|_{2b\times 2b}+\|C_{N_0}'(\cdot\b0 t,\cdot \bx
 s)\|_{2b\times 2b})\label{eq_modified_norm_down}\\
&\le
 c(d,(\hat{\bv}_j)_{j=1}^d,c_E,\chi)2^{N_0}\max\{e_{min}^{-2},e_{min}^{-d-2}\},\quad
 (\forall s,t\in [0,\beta)_h).\notag
\end{align}

Let us sum up the above estimates. By
 \eqref{eq_covariance_multi_decomposition}, \eqref{eq_sliced_norm_up} and \eqref{eq_sliced_norm_down}
\begin{align}
\|\tilde{C}(\phi)\|_{1,\infty}&\le
 \sum_{l=N_0}^{N_h}\|\tilde{C}_l'\|_{1,\infty}\label{eq_full_covariance_norm}\\
&\le
 c(d,b,(\hat{\bv}_j)_{j=1}^d,c_E,\chi)(\max\{e_{min}^{-1},e_{min}^{-d-1}\}+2^{-N_0}\max\{1,e_{min}^{-d}\})\notag\\
&\le
 c(d,b,(\hat{\bv}_j)_{j=1}^d,c_E,\chi)\max\{e_{min}^{-1},e_{min}^{-d-1}\}.
\notag
\end{align}
Also, we can apply \eqref{eq_full_determinant_bound},
 \eqref{eq_covariance_multi_decomposition}, \eqref{eq_modified_norm_up}
 and \eqref{eq_modified_norm_down} to deduce that
\begin{align}
&\|\tilde{C}(\phi)\|_{1,\infty}'\label{eq_full_covariance_modified_norm}\\
&\le c(b)\sup_{s,t\in
 [0,\beta)_h}\|C(\phi)(\cdot \b0 s,\cdot \b0 t)\|_{2b\times 2b}\notag\\
&\quad 
+c(b)\sup_{s,t\in
 [0,\beta)_h}\sum_{l=N_0}^{N_h}\sum_{\bx\in \G\atop \bx\neq \b0}(
\|C_l'(\cdot \bx s,\cdot \b0 t)\|_{2b\times 2b}+
\|C_l'(\cdot \b0 t,\cdot \bx s)\|_{2b\times
 2b})\notag\\
&\le
 c(d,b,(\hat{\bv}_j)_{j=1}^d,c_E,\chi)\notag\\
&\quad\cdot \left(1+\beta^{-1}e_{min}^{-1}+2^{N_0}\max\{e_{min}^{-2},e_{min}^{-d-2}\}+\sum_{l=N_0+1}^{N_h}\max\{2^{-l},2^{-(d+1)l}\}
\right)\notag\\
&\le  c(d,b,(\hat{\bv}_j)_{j=1}^d,c_E,\chi)\notag\\
&\quad\cdot (1+\beta^{-1}e_{min}^{-1}+(e_{min}+\beta^{-1})\max\{e_{min}^{-2},
 e_{min}^{-d-2}\} + e_{min}^{-1}+ e_{min}^{-d-1})\notag\\
&\le  c(d,b,(\hat{\bv}_j)_{j=1}^d,c_E,\chi)
 (e_{min}+\beta^{-1}+\beta^{-1}e_{min}^{-1}+1)\max\{e_{min}^{-1},
 e_{min}^{-d-1}\}.\notag
\end{align}
Observe that by \eqref{eq_covariance_double_decomposition} 
\begin{align*}
\|\tilde{C}_1\|_{1,\infty}\le
 \|\tilde{C}(\phi)\|_{1,\infty}+\|\tilde{C}_0\|_{1,\infty},\quad 
\|\tilde{C}_1\|_{1,\infty}'\le
 \|\tilde{C}(\phi)\|_{1,\infty}'+\|\tilde{C}_0\|_{1,\infty}'.
\end{align*}
Then, substitution of \eqref{eq_full_covariance_norm},
 \eqref{eq_full_covariance_modified_norm} and the results of 
 \eqref{item_zero_decay_bound} yields the claimed inequalities. 
\end{proof}

\begin{remark}\label{rem_decay_for_zero_limit}
Assume that $\beta\ge e_{min}^{-1}$. Then it follows from 
\eqref{eq_covariance_multi_decomposition}, 
\eqref{eq_sliced_space_up},
\eqref{eq_sliced_space_down} that 
\begin{align*}
&\sum_{j=1}^d\left|
\frac{L}{2\pi}(e^{i\frac{2\pi}{L}\<\bx-\by,\hbv_j\>}-1)
\right|\|C(\phi)(\cdot\bx 0,\cdot \by 0)\|_{2b\times 2b}\le
 c(d,(\hbv_j)_{j=1}^d,c_E,\chi)e_{min}^{-1},\\
&(\forall \bx,\by\in\G,\ \phi\in \C).
\end{align*}
The above inequality holds for any $\phi\in\C$ due to 
the fact that $C(\phi)(\cdot \bx 0,\cdot \by 0)$ is
 independent of $h$ (see \cite[\mbox{(3.2)}]{K_BCS_II}). 
As explained in Remark \ref{rem_zero_temperature_limit}, the above
 spatial decay property can be used to study the zero-temperature limit
 of the 4-point correlation function.
\end{remark}

Lemma \ref{lem_standard_covariance_bounds} does not include a
determinant bound of $C_1$, which crucially affects the possible
magnitude of the coupling constant in our double-scale integration
scheme. A determinant bound of $C_1$ can be useful only if it is optimal
with respect to the dependency on $(\beta,\theta)$. Let us derive a
desirable bound in the next lemma. Again we will essentially apply not
only the general bound \cite[\mbox{Theorem 1.3}]{PS} but the
representation techniques presented in \cite[\mbox{Subsection 4.1}]{PS}
by de Siqueira Pedra and Salmhofer as in our previous derivation of
determinant bound \cite[\mbox{Proposition 4.2}]{K_BCS_I}. 
We should remark more
specifically that the decompositions \eqref{eq_P_S_type_decomposition}, 
\eqref{eq_P_S_type_decomposition_simpler} below are influenced by the
techniques of \cite[\mbox{Subsection 4.1}]{PS}. However, the choice of
the Hilbert space, which will be denoted by $\cH$, and
the construction of necessary vectors belonging to the Hilbert space
 are much more complicated than the corresponding parts of the previous
 papers. 
 The essential idea here 
is to replace the sum over $\cM_h\backslash\{\pi/\beta\}$ by a
contour integral plus an extra term by means of the residue theorem. 

\begin{lemma}\label{lem_covariance_determinant_bound}
Assume that 
\begin{align}\label{eq_h_new_domination}
h\ge \sqrt{e_{max}^2+|\phi|^2}+\frac{1}{\beta}(3\pi+2).
\end{align}
Then there exists $c(b)\in\R_{>0}$ depending only on $b$ such that
\begin{align*}
&|\det(\<\bu_i,\bw_j\>_{\C^m}C_1(X_i,Y_j))_{1\le i,j\le n}|\le c(b)^n,\\
&(\forall m,n\in \N,\ \bu_i,\bw_i\in \C^m\text{ with
 }\|\bu_i\|_{\C^m},\|\bw_i\|_{\C^m}\le 1,\ X_i,Y_i\in I_0\
 (i=1,\cdots,n)).
\end{align*}
\end{lemma}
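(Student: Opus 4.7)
The plan is to construct a variant of the Gram representation used in the proof of Lemma \ref{lem_standard_covariance_bounds}\eqref{item_full_determinant} (which rested on \cite[Theorem 1.3]{PS}) and then invoke the same general determinant bound of de Siqueira Pedra--Salmhofer, exploiting the removal of the Matsubara frequency $\o = \pi/\beta$ to produce a constant depending only on $b$. The key observation is that the $\beta^{-1}e_{min}^{-1}$ factor in the full-covariance bound originates from exactly the $\o = \pi/\beta$ contribution, which in the Fermi--Dirac-type prefactor $(1 + 2\cos(\beta\theta(\beta)/2)e^{-\beta\sqrt{E^2+|\phi|^2}}+e^{-2\beta\sqrt{E^2+|\phi|^2}})^{-1/2}$ of the Pedra--Salmhofer bound corresponds to the zero of $(1 - e^{-\beta\sqrt{E^2+|\phi|^2}+i\beta\theta(\beta)/2})$ at the critical parameter $\beta\theta(\beta) = 2\pi$; after excising this frequency, the remaining sum admits a regular density-matrix representation.

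The structural first step records that $E(\phi)(\bk)^2 = (E(\bk)^2+|\phi|^2 I_b)\oplus (E(\bk)^2+|\phi|^2 I_b)$ by direct computation (using that $\bar\phi$ is a scalar), so $E(\phi)(\bk)$ is Hermitian with spectrum $\{\pm\sqrt{\mu^2+|\phi|^2}:\mu\in\mathrm{spec}(E(\bk))\}$ bounded below in magnitude by $\sqrt{e_{min}^2+|\phi|^2}$; consequently $B(\o,\bk)$ is normal and its polar decomposition is accessible. The reinforced cut-off condition \eqref{eq_h_new_domination} (differing from \eqref{eq_h_domination} by the additive piece $(3\pi+2)/\beta$) keeps $\sqrt{e_{max}^2+|\phi|^2}/h$ bounded away from $1$, ensuring that the matrix exponential is linearly equivalent to its argument on all relevant Matsubara modes, and it guarantees that for every $\o\in\cM_h\setminus\{\pi/\beta\}$ the phase $(\o-\theta(\beta)/2)/h$ is separated from $0$ modulo $2\pi$ by at least $\pi/(\beta h)$---a point that uses $\theta(\beta)\in [0,2\pi/\beta]$ together with the fermionic spacing $2\pi/\beta$ of $\cM_h$.

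The main technical step is the uniform estimate
\begin{align*}
\frac{1}{\beta L^d}\sum_{\bk\in\G^*}\sum_{\o\in\cM_h\setminus\{\pi/\beta\}}\Tr\!\bigl(B(\o,\bk)^{-1}B(\o,\bk)^{-*}\bigr)\;\leq\;c(b),
\end{align*}
which is the Hilbert--Schmidt ingredient required for the Gram bound. Writing $\o = (2n+1)\pi/\beta$ and $\alpha := \beta\theta(\beta)/(2\pi)\in[0,1]$, the excluded index is precisely $n = 0$, and the elementary estimate $h^2|1-e^{-i\xi/h+\epsilon/h}|^2 \geq c(\xi^2+\epsilon^2)$ (valid in the region fixed by \eqref{eq_h_new_domination}) reduces the sum, per eigenvalue $\epsilon$ of $E(\phi)(\bk)$, to $(\beta/\pi^2)\sum_{n\neq 0}((2n+1-\alpha)^2+(\beta|\epsilon|/\pi)^2)^{-1}$; a case analysis on $\beta|\epsilon|$ large versus small bounds this by $c/|\epsilon|\leq c/e_{min}$, uniformly in $\alpha\in[0,1]$, $\beta$, $L$ and $h$. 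The excision of $\o = \pi/\beta$ is essential since $n = 0$ is the only term whose denominator can vanish, at the critical configuration $\alpha = 1$, $\epsilon = 0$.

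With this bound in hand, the final step is to install the estimate inside the Pedra--Salmhofer machinery: one writes $C_1(X,Y) = \Tr_{\cH}(A_X^* D B_Y)$ for bounded operators $A_X, B_Y$ and a density-type operator $D$ on an auxiliary Hilbert space $\cH$ (analogous to the one used for the full covariance), chosen so that the trace of $D$ is controlled by the Hilbert--Schmidt sum above, then \cite[Theorem 1.3]{PS} yields $|\det (C_1(X_i,Y_j))| \leq c(b)^n$. The extra weights $\langle\bu_i,\bw_j\rangle_{\C^m}$ in the lemma statement are accommodated by enlarging $\cH$ to $\C^m\otimes\cH$ as in the proof of Lemma \ref{lem_standard_covariance_bounds}\eqref{item_zero_determinant}. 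The main obstacle is the construction of the operators $A_X, D, B_Y$ so that all three inherit the uniformity: a naive symmetric splitting would produce norms of order $\sum\Tr(|B|^{-1})$, which diverges logarithmically in $h$, and it is only by working at the Hilbert--Schmidt scale $|B|^{-2}$ and by using the asymmetry between operator and trace-norm bounds built into \cite[Theorem 1.3]{PS} that the $\beta$-independent bound is achieved.
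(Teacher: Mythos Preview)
Your proposal has a genuine gap: the Hilbert--Schmidt estimate you aim for does \emph{not} give a constant depending only on $b$. You compute the sum
\[
\frac{1}{\beta L^d}\sum_{\bk\in\G^*}\sum_{\o\in\cM_h\setminus\{\pi/\beta\}}\Tr\bigl(B(\o,\bk)^{-1}B(\o,\bk)^{-*}\bigr)
\]
and bound it, per eigenvalue $\epsilon$, by $c/|\epsilon|\le c/e_{min}$. But the lemma asserts $c(b)$ depending \emph{only} on $b$, and this is essential downstream: in Lemma~\ref{lem_practical_double_scale_result} the constant $\hat c_0$ must depend only on $d,b,(\hbv_j),c_E$, with the $e_{min}$-dependence isolated in the explicit factor $\hat B=\max\{e_{min}^{-1},e_{min}^{-d-1}\}$. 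If your determinant bound carried an extra $e_{min}^{-1}$, the coupling-constant window in Theorem~\ref{thm_infinite_volume_limit} would shrink by an uncontrolled power of $e_{min}$. Moreover, no rearrangement of the Matsubara sum can remove this: $\frac{1}{\beta}\sum_{\o}\|B(\o,\bk)^{-1}\|^2\sim \frac{1}{2|\epsilon|}$ as a Riemann sum for $\int\frac{d\xi}{\xi^2+\epsilon^2}$, and excising a single frequency does not change this asymptotic.

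The second gap is the final step. You invoke an ``asymmetry between operator and trace-norm bounds'' in \cite[Theorem~1.3]{PS}, but that theorem is about Gram representations compatible with time-ordering, namely $C(Xs,Yt)=1_{s\ge t}\langle f^{\ge}_X,g^{\ge}_Y\rangle-1_{s<t}\langle f^{<}_X,g^{<}_Y\rangle$ with $\sup_X\|f^{\bullet}_X\|,\sup_Y\|g^{\bullet}_Y\|$ bounded; it has no trace-class/bounded-operator input of the form $\Tr_{\cH}(A_X^*DB_Y)$ that would let you trade a divergent $\sup$ for a finite trace. Working directly in the Matsubara representation, the phase $e^{i(\o-\pi/\beta)(s-t)}$ is purely oscillatory, so there is no natural splitting into $f^{\ge},f^{<}$ with decaying norms.

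The paper's proof takes a completely different route: it applies the residue theorem to rewrite $\frac{1}{\beta}\sum_{\o\in\cM_h\setminus\{\pi/\beta\}}$ as a contour integral over $P_1\cup P_2(\bk)$ (a large circle of radius $\pi h$ together with a small clockwise rectangle excising $\o=\pi/\beta$, placed using the pigeonhole choice of $x_{\bk}\in[1/\beta,2/\beta]$ away from $\s(E(\phi)(\bk))$), plus explicit residues at those eigenvalues $\pm\hat e_\rho(\bk)$ lying outside the rectangle. This produces the decomposition $e^{i\pi(s-t)/\beta}C_1=C_{1-1}+C_{1-2}$. On the contour the factor $e^{z(s-t)}$ genuinely decays according to the sign of $\Re z$, and the identity $e^{-tD}=\frac{D}{\pi}\int_{\R}\frac{e^{itu}}{u^2+D^2}\,du$ turns this into an honest $L^2$ inner product. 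The resulting vector norms are controlled by the contour estimates \eqref{eq_absolute_path}, \eqref{eq_absolute_one_path}, \eqref{eq_absolute_two_path}, all of which are bounded by constants depending only on $b$---the spectrum of $E(\phi)(\bk)$ enters only through the location of poles relative to the contour, never through $e_{min}^{-1}$. This residue-theorem step is the missing idea in your proposal.
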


\begin{proof}
Let us fix $\phi\in \C$ throughout the proof. We will need to assume
 that $h$ is large depending on $\phi$ on several occasions. We will
 eventually see
 that the assumption \eqref{eq_h_new_domination} is sufficient. 
Let $\s(E(\bk))$, $\s(E(\phi)(\bk))$ denote the set of
 eigenvalues of $E(\bk)$, $E(\phi)(\bk)$ respectively. For any $\bk\in
 \G^*$ there exist $e_{\rho}(\bk)\in\R$ $(\rho=1,\cdots,b)$ such that
$e_1(\bk)\le e_2(\bk)\le \cdots \le e_b(\bk)$ and
 $\s(E(\bk))=\{e_{\rho}(\bk)\}_{\rho\in \cB}$. Set 
$$
\hat{e}_{\rho}(\bk):=\sqrt{e_{\rho}(\bk)^2+|\phi|^2}
$$
for $\rho\in\cB$. Observe that $\s(E(\phi)(\bk))=\{\pm
 \hat{e}_{\rho}(\bk)\}_{\rho\in \cB}$. For any $\bk\in \G^*$ there
 exists $x_{\bk}\in [1/\beta,2/\beta]$ such that 
\begin{align}
\left[x_{\bk}-\frac{1}{2(b+1)\beta},x_{\bk}+\frac{1}{2(b+1)\beta}\right)\cap
 \s(E(\phi)(\bk))=\emptyset.\label{eq_interval_intersection_empty}
\end{align}
This claim can be proved as follows. Suppose that 
\begin{align*}
\left[\frac{1}{\beta}+\frac{2m+1}{2(b+1)\beta},\frac{1}{\beta}+\frac{2m+3}{2(b+1)\beta}
\right)\cap \s(E(\phi)(\bk))\neq \emptyset
\end{align*}
for any $m\in \{0,1,\cdots,b\}$. Since these $b+1$ intervals are
 disjoint, it implies that $\sharp \s(E(\phi)(\bk))\cap \R_{\ge 0}\ge
 b+1$. However, $\sharp \s(E(\phi)(\bk))\cap \R_{\ge 0}\le b$, which is a
 contradiction. Thus the claim holds with some $x_{\bk}\in
 \{\frac{1}{\beta}+\frac{m+1}{(b+1)\beta}\}_{m=0}^b$. Fix such
 $\{x_{\bk}\}_{\bk\in\G^*}$. For $\bk\in \G^*$ let us set 
\begin{align*}
&\cB(\bk):=\left\{\rho\in \cB\ \Big|\ \hat{e}_{\rho}(\bk)\ge
 x_{\bk}+\frac{1}{2(b+1)\beta}\right\},\\
&P_1:=\{z\in \C\ |\ |z|=\pi h\},\\
&P_2(\bk):=\left\{x+i\frac{2\pi}{\beta}\ \Big|\ -x_{\bk}\le x\le
 x_{\bk}\right\}\cup \left\{x_{\bk}+iy\ \Big|\ -\frac{\pi}{2\beta}\le
 y\le \frac{2\pi}{\beta}\right\}\\
&\qquad\qquad \cup \left\{x-i\frac{\pi}{2\beta}\ \Big|\ -x_{\bk}\le x\le
 x_{\bk}\right\}\cup \left\{-x_{\bk}+iy\ \Big|\ -\frac{\pi}{2\beta}\le
 y\le \frac{2\pi}{\beta}\right\}.
\end{align*}
By the assumption \eqref{eq_h_new_domination},
 $\sqrt{x_{\bk}^2+(2\pi/\beta)^2}<\pi h$. This implies that $P_1\cap
 P_2(\bk)=\emptyset$. We consider $P_1$ as a contour oriented
 counter-clockwise and $P_2(\bk)$ as a contour oriented clockwise. 
Let us admit a convention that for $A$, $B\in \Mat(b,\C)$ $A\oplus B$
 denotes the $2b\times 2b$ matrix 
$$\left(\begin{array}{cc} A & 0 \\ 0 & B \end{array}
\right).$$
For any $\bk\in \G^*$ there exists a $2b\times 2b$ unitary matrix
 $U(\bk)$ such that 
\begin{align}
U(\bk)^*E(\phi)(\bk)U(\bk)=(\delta_{\rho,\eta}\hat{e}_{\rho}(\bk))_{1\le
 \rho,\eta\le b}\oplus (-\delta_{\rho,\eta}\hat{e}_{\rho}(\bk))_{1\le
 \rho,\eta\le b}.\label{eq_energy_matrix_diagonalization}
\end{align}
It follows that 
\begin{align}
&C_1(\cdot\bx s,\cdot \by t)\label{eq_one_covariance_diagonalization}\\
&=\frac{1}{L^d}\sum_{\bk\in
 \G^*}e^{i\<\bk,\bx-\by\>-i\frac{\pi}{\beta}(s-t)}\notag\\
&\quad \cdot U(\bk)\left(
\frac{\delta_{\rho,\eta}}{\beta}\sum_{\o\in \cM_h\backslash
 \{\frac{\pi}{\beta}\}}e^{i\o(s-t)}h^{-1}(1-e^{-\frac{i}{h}(\o-\frac{\theta(\beta)}{2})+\frac{1}{h}\hat{e}_{\rho}(\bk)})^{-1}\right)_{1\le \rho,\eta\le b}\notag\\
&\qquad \oplus \left(
\frac{\delta_{\rho,\eta}}{\beta}\sum_{\o\in \cM_h\backslash
 \{\frac{\pi}{\beta}\}}e^{i\o(s-t)}h^{-1}(1-e^{-\frac{i}{h}(\o-\frac{\theta(\beta)}{2})-\frac{1}{h}\hat{e}_{\rho}(\bk)})^{-1}\right)_{1\le \rho,\eta\le b}U(\bk)^*,\notag\\
&(\forall \bx,\by\in \G,\ s,t\in [0,\beta)_h).\notag
\end{align}
The assumption \eqref{eq_h_new_domination} implies that
$|i\theta(\beta)/2+\delta \hat{e}_{\rho}(\bk)|<\pi h$ for any
 $\bk\in \G^*$, $\rho\in \cB$, $\delta\in \{1,-1\}$. Based on this fact
 and the property \eqref{eq_interval_intersection_empty}, the residue
 theorem ensures that for any $r\in \R$, $\bk\in \G^*$, $\rho\in \cB$,
 $\delta\in \{1,-1\}$ 
\begin{align}
&\frac{1}{2\pi i}\oint_{P_1\cup P_2(\bk)}dz \frac{e^{zr}}{1+e^{\beta
 z}}h^{-1}(1-e^{-\frac{1}{h}(z-i\frac{\theta(\beta)}{2})+\frac{\delta}{h}\hat{e}_{\rho}(\bk)})^{-1}\label{eq_residue_theorem}\\
&=-\frac{1}{\beta}\sum_{\o\in \cM_h\backslash \{\frac{\pi}{\beta}\}}
e^{i\o
 r}h^{-1}(1-e^{-\frac{i}{h}(\o-\frac{\theta(\beta)}{2})+\frac{\delta}{h}\hat{e}_{\rho}(\bk)})^{-1}+1_{\rho\in
 \cB(\bk)}\frac{e^{(i\frac{\theta(\beta)}{2}+\delta \hat{e}_{\rho}(\bk))r}}{1+e^{\beta(i\frac{\theta(\beta)}{2}+\delta \hat{e}_{\rho}(\bk))}}.\notag
\end{align}
Let us define the functions $C_{1-1}^{\ge}$, $C_{1-2}^{\ge}$,
 $C_{1-1}^{<}$, $C_{1-2}^{<}$, $C_{1-1}$,
 $C_{1-2}:(\{1,2\}\times\cB\times \G\times [0,\beta))^2\to \C$ as
 follows. 
\begin{align*}
&C_{1-1}^{\ge}(\cdot \bx s,\cdot \by t)\\
&:=\frac{1}{L^d}\sum_{\bk\in
 \G^*}e^{i\<\bk,\bx-\by\>}\frac{-1}{2\pi i} \oint_{P_1\cup P_2(\bk)}dz \frac{e^{z(s-t)}}{1+e^{\beta
 z}}h^{-1}(I_{2b}-e^{-\frac{1}{h}(z-i\frac{\theta(\beta)}{2})I_{2b}+\frac{1}{h}E(\phi)(\bk)})^{-1},\\
&C_{1-2}^{\ge}(\cdot\bx s,\cdot \by t)\\
&:=\frac{1}{L^d}\sum_{\bk\in
 \G^*}e^{i\<\bk,\bx-\by\>}
 U(\bk)\left(\frac{\delta_{\rho,\eta}1_{\rho\in\cB(\bk)}e^{(i\frac{\theta(\beta)}{2}+\hat{e}_{\rho}(\bk))(s-t)}}{1+e^{\beta(i\frac{\theta(\beta)}{2}+\hat{e}_{\rho}(\bk))}}\right)_{1\le \rho,\eta\le b}\\
&\qquad\qquad\qquad\qquad \oplus 
\left(\frac{\delta_{\rho,\eta}1_{\rho\in\cB(\bk)}e^{(i\frac{\theta(\beta)}{2}-\hat{e}_{\rho}(\bk))(s-t)}}{1+e^{\beta(i\frac{\theta(\beta)}{2}-\hat{e}_{\rho}(\bk))}}\right)_{1\le
 \rho,\eta\le b}U(\bk)^*,\\
&C_{1-1}^{<}(\cdot\bx s,\cdot \by t)\\
&:=\frac{1}{L^d}\sum_{\bk\in
 \G^*}e^{i\<\bk,\bx-\by\>}\frac{-1}{2\pi i} \oint_{P_1\cup P_2(\bk)}dz \frac{e^{z(s-t+\beta)}}{1+e^{\beta
 z}}h^{-1}(I_{2b}-e^{-\frac{1}{h}(z-i\frac{\theta(\beta)}{2})I_{2b}+\frac{1}{h}E(\phi)(\bk)})^{-1},\\
&C_{1-2}^{<}(\cdot\bx s,\cdot \by t)\\
&:=\frac{1}{L^d}\sum_{\bk\in
 \G^*}e^{i\<\bk,\bx-\by\>}
 U(\bk)\left(\frac{\delta_{\rho,\eta}1_{\rho\in\cB(\bk)}e^{(i\frac{\theta(\beta)}{2}+\hat{e}_{\rho}(\bk))(s-t+\beta)}}{1+e^{\beta(i\frac{\theta(\beta)}{2}+\hat{e}_{\rho}(\bk))}}\right)_{1\le \rho,\eta\le b}\\
&\qquad\qquad\qquad\qquad \oplus 
\left(\frac{\delta_{\rho,\eta}1_{\rho\in\cB(\bk)}e^{(i\frac{\theta(\beta)}{2}-\hat{e}_{\rho}(\bk))(s-t+\beta)}}{1+e^{\beta(i\frac{\theta(\beta)}{2}-\hat{e}_{\rho}(\bk))}}\right)_{1\le
 \rho,\eta\le b}U(\bk)^*,\\
&C_{1-1}(\cdot \bx s,\cdot \by t):=1_{s\ge t}C_{1-1}^{\ge}(\cdot\bx
 s,\cdot \by t)-1_{s< t}C_{1-1}^{<}(\cdot\bx
 s,\cdot \by t),\\
&C_{1-2}(\cdot \bx s,\cdot \by t):=1_{s\ge t}C_{1-2}^{\ge}(\cdot\bx
 s,\cdot \by t)-1_{s< t}C_{1-2}^{<}(\cdot\bx
 s,\cdot \by t),\\
&(\forall \bx,\by\in \G,\ s,t\in [0,\beta)).\notag
\end{align*}
By combining these with \eqref{eq_energy_matrix_diagonalization},
\eqref{eq_one_covariance_diagonalization},
 \eqref{eq_residue_theorem} we have that 
\begin{align}
e^{i\frac{\pi}{\beta}(s-t)}C_1(\cdot \bx s,\cdot \by t)=C_{1-1}(\cdot
 \bx s, \cdot \by t)+C_{1-2}(\cdot
 \bx s, \cdot \by t),\quad (\forall \bx,\by\in \G,\ s,t\in [0,\beta)_h).
\label{eq_one_covariance_bare_decomposition}
\end{align}
Let us find suitable determinant bounds of $C_{1-1}$, $C_{1-2}$ so that
 the claimed determinant
 bound of $C_1$ can be derived from them. 

Let us consider $C_{1-1}$ first. Let $\cH$ denote the Hilbert space 
$L^2(\{1,2\}\times \cB\times \G^*\times \R\times [0,1]\times
 \{1,2,3,4,5\})$ whose inner product is given by 
\begin{align*}
&\<f,g\>_{\cH}:=\sum_{(\otau,\tau)\in \{1,2\}\times
 \cB}\frac{1}{L^d}\sum_{\bk\in \G^*}\int_{-\infty}^{\infty}du \int^1_0dv
 \sum_{j=1}^5\overline{f(\otau,\tau,\bk,u,v,j)}g(\otau,\tau,\bk,u,v,j).
\end{align*}
Let us define the vectors $f_{X}^a$, $g_{X}^a\in \cH$ $(X\in
 \{1,2\}\times\cB\times \G\times \R,\ a\in \{1,-1\})$ in the following
 arguments.  
For $(\orho,\rho,\bx,s)\in \{1,2\}\times \cB\times \G\times \R$, $a\in
 \{1,-1\}$, $(\otau,\tau,\bk,u)\in \{1,2\}\times \cB\times \G^*\times
 \R$, $z\in P_1\cup P_2(\bk)$, set
\begin{align*}
&f_{\orho\rho\bx s}^a(\otau,\tau,\bk,u)(z)\\
&:=\frac{1}{\sqrt{2\pi}}1_{a\Re z >0}e^{-i\<\bk,\bx\>-is (a\Im z
 -u)}1_{(\otau,\tau)=(\orho,\rho)}\frac{1+e^{-\beta a z}}{|1+e^{-\beta a
 z}|^{\frac{3}{2}}}\\
&\quad\cdot\sqrt{\frac{|\Re z|}{\pi}}\frac{\|h^{-1}(I_{2b}-e^{-\frac{1}{h}(z-i\frac{\theta(\beta)}{2})I_{2b}+\frac{1}{h}E(\phi)(\bk)})^{-1}\|_{2b\times
 2b}^{\frac{1}{2}}}{iu +\Re z},\\
&g_{\orho\rho\bx s}^a(\otau,\tau,\bk,u)(z)\\
&:=\frac{1}{\sqrt{2\pi}i}1_{a\Re z >0}e^{-i\<\bk,\bx\>-is (a\Im z
 -u)}\frac{1}{|1+e^{-\beta a
 z}|^{\frac{1}{2}}}\\
&\quad\cdot \sqrt{\frac{|\Re z|}{\pi}}\frac{\|h^{-1}(I_{2b}-e^{-\frac{1}{h}(z-i\frac{\theta(\beta)}{2})I_{2b}+\frac{1}{h}E(\phi)(\bk)})^{-1}\|_{2b\times
 2b}^{-\frac{1}{2}}}{iu +\Re z}\\
&\quad\cdot
 h^{-1}(I_{2b}-e^{-\frac{1}{h}(z-i\frac{\theta(\beta)}{2})I_{2b}+\frac{1}{h}E(\phi)(\bk)})^{-1}((\otau-1)b+\tau,(\orho-1)b+\rho).
\end{align*}
Then, for $(\otau,\tau,\bk,u,v,j)\in \{1,2\}\times \cB\times \G^*\times
 \R\times [0,1]\times \{1,2,3,4,5\}$, set 
\begin{align*}
f_{\orho\rho \bx s}^a(\otau,\tau,\bk, u,v,j)
&:=1_{j=1}\sqrt{2h}\pi f_{\orho\rho\bx s}^a(\otau,\tau,\bk,u)(\pi h
 e^{i2\pi v})\\
&\quad\ +1_{j=2}\sqrt{2x_{\bk}} f_{\orho\rho\bx
 s}^a(\otau,\tau,\bk,u)\left(2x_{\bk}v-x_{\bk}+i\frac{2\pi}{\beta}\right)\\
&\quad\ +1_{j=3}\sqrt{\frac{5\pi}{2\beta}}
 f_{\orho\rho\bx
 s}^a(\otau,\tau,\bk,u)\left(x_{\bk}-i\frac{5\pi}{2\beta}v+i\frac{2\pi}{\beta}\right)\\
&\quad\ +1_{j=4}\sqrt{2x_{\bk}}
 f_{\orho\rho\bx
 s}^a(\otau,\tau,\bk,u)\left(-2x_{\bk}v+x_{\bk}-i\frac{\pi}{2\beta}\right)\\
&\quad\ +1_{j=5}\sqrt{\frac{5\pi}{2\beta}}
 f_{\orho\rho\bx
 s}^a(\otau,\tau,\bk,u)\left(-x_{\bk}+i\frac{5\pi}{2\beta}v-i\frac{\pi}{2\beta}\right),\\
g_{\orho\rho \bx s}^a(\otau,\tau,\bk, u,v,j)
&:=1_{j=1}i\sqrt{2h}\pi e^{i2\pi v}g_{\orho\rho\bx s}^a(\otau,\tau,\bk,u)(\pi h
 e^{i2\pi v})\\
&\quad\ +1_{j=2}\sqrt{2x_{\bk}} g_{\orho\rho\bx
 s}^a(\otau,\tau,\bk,u)\left(2x_{\bk}v-x_{\bk}+i\frac{2\pi}{\beta}\right)\\
&\quad\ +1_{j=3}\left(-i\sqrt{\frac{5\pi}{2\beta}}\right)
 g_{\orho\rho\bx
 s}^a(\otau,\tau,\bk,u)\left(x_{\bk}-i\frac{5\pi}{2\beta}v+i\frac{2\pi}{\beta}\right)\\
&\quad\ +1_{j=4}(-\sqrt{2x_{\bk}})
 g_{\orho\rho\bx
 s}^a(\otau,\tau,\bk,u)\left(-2x_{\bk}v+x_{\bk}-i\frac{\pi}{2\beta}\right)\\
&\quad\ +1_{j=5}i\sqrt{\frac{5\pi}{2\beta}}
 g_{\orho\rho\bx
 s}^a(\otau,\tau,\bk,u)\left(-x_{\bk}+i\frac{5\pi}{2\beta}v-i\frac{\pi}{2\beta}\right).
\end{align*}
Moreover, using the vectors $f_X^1$, $f_{X}^{-1}$, $g_X^1$, $g_X^{-1}\in
 \cH$ defined above, 
we define the vectors $f_X^{\ge}$, $f_X^{<}$, $g_X^{\ge}$,
 $g_X^{<}\in \cH$ $(X\in \{1,2\}\times\cB\times \G\times \R)$ as
 follows. For $(\orho,\rho,\bx,s)\in \{1,2\}\times\cB\times\G\times \R$
\begin{align}
&f_{\orho\rho\bx s}^{\ge}=f_{\orho \rho \bx s}^{<}:=f_{\orho\rho \bx
 s}^1+f_{\orho\rho \bx (-s)}^{-1},\label{eq_P_S_type_decomposition}\\
&g_{\orho\rho \bx s}^{\ge}:=-g_{\orho\rho\bx (\beta+s)}^1-g_{\orho\rho\bx
 (-s)}^{-1},\quad g_{\orho\rho \bx s}^{<}:=-g_{\orho\rho\bx s}^1-g_{\orho\rho\bx
 (\beta-s)}^{-1}.\notag
\end{align}
By using the formula 
\begin{align}
e^{-tD}=\frac{D}{\pi}\int_{-\infty}^{\infty}du\frac{e^{itu}}{u^2+D^2},\quad
 (\forall t\in \R_{\ge 0},\ D\in
 \R_{>0})\label{eq_exponential_integral_formula}
\end{align}
one can verify that for any $(\orho,\rho,\bx,s)$,
 $(\oeta,\eta,\by,t)\in\{1,2\}\times \cB\times \G\times [0,\beta)$
\begin{align*}
1_{s\ge t}\<f_{\orho\rho\bx s}^{\ge},g_{\oeta\eta\by t}^{\ge}\>_{\cH}
&=-1_{s\ge t}(\<f_{\orho\rho\bx s}^1,g_{\oeta\eta\by
 (\beta+t)}^1\>_{\cH}+\<f_{\orho\rho\bx (-s)}^{-1},g_{\oeta\eta\by
 (-t)}^{-1}\>_{\cH})\\
&=-1_{s\ge t}\sum_{(\otau,\tau)\in \{1,2\}\times
 \cB}\frac{1}{L^d}\sum_{\bk\in \G^*}\int_{-\infty}^{\infty}du
 \oint_{P_1\cup P_2(\bk)}dz\\
&\qquad\cdot (\overline{f_{\orho\rho \bx
 s}^1(\otau,\tau,\bk,u)(z)}g_{\oeta\eta \by (\beta + t)}^1(\otau,\tau,\bk,u)(z)\\
&\qquad\quad+ \overline{f_{\orho\rho \bx
 (-s)}^{-1}(\otau,\tau,\bk,u)(z)}g_{\oeta\eta \by
 (-t)}^{-1}(\otau,\tau,\bk,u)(z))\\
&=1_{s\ge t}C_{1-1}^{\ge}(\orho\rho\bx s,\oeta \eta \by t),\\
1_{s< t}\<f_{\orho\rho\bx s}^{<},g_{\oeta\eta\by t}^{<}\>_{\cH}
&=-1_{s< t}(\<f_{\orho\rho\bx s}^1,g_{\oeta\eta\by
 t}^1\>_{\cH}+\<f_{\orho\rho\bx (-s)}^{-1},g_{\oeta\eta\by
 (\beta-t)}^{-1}\>_{\cH})\\ 
&=1_{s< t}C_{1-1}^{<}(\orho\rho\bx s,\oeta \eta \by t),
\end{align*}
and thus
\begin{align}
&C_{1-1}(\orho\rho\bx s, \oeta \eta \by t)=1_{s\ge t}\<f_{\orho\rho \bx
 s}^{\ge}, g_{\oeta\eta \by t}^{\ge}\>_{\cH}-1_{s< t}\<f_{\orho\rho \bx
 s}^{<}, g_{\oeta\eta \by
 t}^{<}\>_{\cH}.\label{eq_covariance_residue_gram}
\end{align}

To apply \cite[\mbox{Theorem 1.3}]{PS}, we need to estimate
 $\|f_{X}^{\ge }\|_{\cH}$,  $\|g_{X}^{\ge }\|_{\cH}$,
$\|f_X^{<}\|_{\cH}$, $\|g_X^{<}\|_{\cH}$, $(X\in \{1,2\}\times\cB\times
 \G\times [0,\beta))$. These can be expanded as follows. 
For any $(\orho,\rho,\bx,s)\in \{1,2\}\times\cB\times
 \G\times [0,\beta)$ and $A\in \{f,g\}$
\begin{align}
\|A_{\orho\rho \bx s}^{\ge}\|_{\cH}^2 = \|A_{\orho\rho \bx
 s}^{<}\|_{\cH}^2
=&\sum_{(\otau,\tau)\in \{1,2\}\times \cB}\frac{1}{L^d}\sum_{\bk\in
 \G^*}\int_{-\infty}^{\infty}du\int_0^1dv\sum_{a\in \{1,-1\}}\label{eq_core_function_bare_expansion}\\
&\quad\cdot \Bigg(
2h\pi^2 |A_{\orho\rho\b0 0}^a(\otau,\tau,\bk,u)(\pi h
 e^{i2\pi v})|^2\notag\\
&\qquad+2x_{\bk} \left|A_{\orho\rho\b0
 0}^a(\otau,\tau,\bk,u)\left(2x_{\bk}v-x_{\bk}+i\frac{2\pi}{\beta}\right)\right|^2\notag\\
&\qquad+\frac{5\pi}{2\beta}\left|A_{\orho\rho\b0
 0}^a(\otau,\tau,\bk,u)\left(x_{\bk}-i\frac{5\pi}{2\beta}v+i\frac{2\pi}{\beta}\right)\right|^2\notag\\
&\qquad+2x_{\bk}\left|A_{\orho\rho\b0
 0}^a(\otau,\tau,\bk,u)\left(-2x_{\bk}v+x_{\bk}-i\frac{\pi}{2\beta}\right)\right|^2\notag\\
&\qquad+\frac{5\pi}{2\beta}\left|A_{\orho\rho\b0
 0}^a(\otau,\tau,\bk,u)\left(-x_{\bk}+i\frac{5\pi}{2\beta}v-i\frac{\pi}{2\beta}\right)\right|^2\Bigg).\notag
\end{align}

As the next step, let us fix $\bk\in \G^*$ and estimate
\begin{align*}
\inf_{z\in P_1\cup P_2(\bk)}|1+e^{\beta z}|,\quad 
\inf_{z\in P_1\cup P_2(\bk)}|1+e^{-\beta z}|.
\end{align*}
For $z\in P_1$ there exists $t\in [-1,1]$ such that
\begin{align*}
|1+e^{\beta z}|^2=1+2\cos(\pi\beta h\sqrt{1-t^2})e^{\pi\beta h
 t}+e^{2\pi\beta h t}.
\end{align*}
There exists $m\in \N$ such that $h=2m/\beta$. Then there exist
 $n\in \{0,1,\cdots,m-1\}$, $\theta\in [0,2\pi]$ such that 
$\pi\beta h\sqrt{1-t^2}=\theta+2n\pi$. If $\theta\in
 [0,\pi/2]\cup [3\pi/2,2\pi]$, $|1+e^{\beta z}|^2\ge 1+
 e^{2\pi \beta h t}\ge 1$. If $\theta\in
 (\pi/2,3\pi/2)$, $(\pi\beta h
 t)^2=(2m\pi-\theta-2n\pi)(2m\pi+\theta+2n\pi)\ge \pi^2/4$, and
 thus $|1+e^{\beta z}|^2\ge (1-e^{\pi\beta h t})^2\ge
 (1-e^{-\frac{\pi}{2}})^2$. We have proved that 
$$
\inf_{z\in P_1}|1+e^{\beta z}|=\inf_{z\in P_1}|1+e^{-\beta z}|\ge 1- e^{-\frac{\pi}{2}}.
$$
If $z\in \{x_{\bk}+iy,-x_{\bk}+iy\ |\ -\frac{\pi}{2\beta}\le y\le
 \frac{2\pi}{\beta}\}$, 
$\min\{|1+e^{\beta z}|,|1+e^{-\beta z}|\}\ge 1-e^{-\beta x_{\bk}}\ge
 1-e^{-1}$, where we used that $x_{\bk}\ge 1/\beta$. 
If $z\in \{x+i\frac{2\pi}{\beta}, x-i\frac{\pi}{2\beta}\ |\ -x_{\bk}\le
 x\le x_{\bk}\}$, $\min\{|1+e^{\beta z}|, |1+e^{-\beta z}|\}\ge 1$. Thus 
$$
\inf_{z\in P_2(\bk)}\min \{|1+e^{\beta z}|,\ |1+e^{-\beta z}|\}\ge 1-e^{-1}.
$$
Now we can see that
\begin{align}
\inf_{z\in P_1\cup P_2(\bk)}|1+e^{a\beta z}|\ge 1-e^{-1},\quad (\forall
 a\in \{1,-1\}).\label{eq_absolute_path}
\end{align}
We also need to find upper bounds on 
\begin{align*}
&\sup_{z\in
 P_1}\|h^{-1}(I_{2b}-e^{-\frac{1}{h}(z-i\frac{\theta(\beta)}{2})I_{2b}+\frac{1}{h}E(\phi)(\bk)})^{-1}\|_{2b\times
 2b},\\
&\sup_{z\in
 P_2(\bk)}\|h^{-1}(I_{2b}-e^{-\frac{1}{h}(z-i\frac{\theta(\beta)}{2})I_{2b}+\frac{1}{h}E(\phi)(\bk)})^{-1}\|_{2b\times
 2b}.
\end{align*}
On the assumption \eqref{eq_h_new_domination}
\begin{align*}
&\sup_{z\in P_1}\sup_{\alpha\in \s(E(\phi)(\bk))}\left|
-\frac{1}{h}\left(z-i\frac{\theta(\beta)}{2}\right)+\frac{1}{h}\alpha
\right|\le \pi
 +\frac{1}{h}\left(\frac{\pi}{\beta}+\sqrt{e_{max}^2+|\phi|^2}\right)\le \frac{3\pi}{2},\\
&\inf_{z\in P_1}\inf_{\alpha\in \s(E(\phi)(\bk))}\left|
-\frac{1}{h}\left(z-i\frac{\theta(\beta)}{2}\right)+\frac{1}{h}\alpha
\right|\ge \pi
 -\frac{1}{h}\left(\frac{\pi}{\beta}+\sqrt{e_{max}^2+|\phi|^2}\right)\ge
 \frac{\pi}{2},
\end{align*}
which imply that
\begin{align*}
&\inf_{z\in P_1}\inf_{\alpha \in
 \s(E(\phi)(\bk))}|1-e^{-\frac{1}{h}(z-i\frac{\theta(\beta)}{2})+\frac{1}{h}\alpha}|\ge
 \inf_{z\in \C\atop\text{with }\frac{\pi}{2}\le |z|\le
 \frac{3\pi}{2}}|1-e^{z}|>0,
\end{align*}
and thus
\begin{align}
\sup_{z\in
 P_1}\|h^{-1}(I_{2b}-e^{-\frac{1}{h}(z-i\frac{\theta(\beta)}{2})I_{2b}+\frac{1}{h}E(\phi)(\bk)})^{-1}\|_{2b\times
 2b}\le ch^{-1}.\label{eq_absolute_one_path}
\end{align}
On the other hand, since $x_{\bk}\in [1/\beta,2/\beta]$, 
\begin{align*}
&\sup_{z\in P_2(\bk)}\sup_{\alpha \in \s(E(\phi)(\bk))}\left|
\Re\left(-\frac{z}{h}+i\frac{\theta(\beta)}{2h}+\frac{\alpha}{h}
\right)
\right|\le
 \frac{1}{h}\left(\frac{2}{\beta}+\sqrt{e_{max}^2+|\phi|^2}\right),\\
&\sup_{z\in P_2(\bk)}\sup_{\alpha \in \s(E(\phi)(\bk))}\left|
\Im\left(-\frac{z}{h}+i\frac{\theta(\beta)}{2h}+\frac{\alpha}{h}
\right)
\right|\le \frac{3\pi}{\beta h}.
\end{align*}
By the assumption \eqref{eq_h_new_domination} 
\begin{align*}
\sup_{z\in P_2(\bk)}\sup_{\alpha \in \s(E(\phi)(\bk))}\left|
-\frac{z}{h}+i\frac{\theta(\beta)}{2h}+\frac{\alpha}{h}
\right|\le 1,
\end{align*}
and thus for any $z\in P_2(\bk)$, $\alpha \in \s(E(\phi)(\bk))$
\begin{align*}
|1-e^{-\frac{z}{h}+i\frac{\theta(\beta)}{2h}+\frac{\alpha}{h}}|\ge \left(1-\sum_{n=2}^{\infty}\frac{1}{n!}\right)\left|
-\frac{z}{h}+i\frac{\theta(\beta)}{2h}+\frac{\alpha}{h}
\right|.
\end{align*}
Therefore
\begin{align}
&\sup_{z\in P_2(\bk)}\|h^{-1}(I_{2b}-e^{-\frac{1}{h}(z-i\frac{\theta(\beta)}{2})I_{2b}+\frac{1}{h}E(\phi)(\bk)})^{-1}\|_{2b\times
 2b}\label{eq_absolute_two_path}\\
&\le c\sup_{z\in P_2(\bk)}\sup_{\alpha \in
 \s(E(\phi)(\bk))}\left(|\Re z-\alpha|+\left|\Im
 z-\frac{\theta(\beta)}{2}\right|\right)^{-1}\notag\\
&\le c\max\left\{\sup_{\alpha\in
 \s(E(\phi)(\bk))}|x_{\bk}-\alpha|^{-1},\
 \left|\frac{2\pi}{\beta}-\frac{\theta(\beta)}{2}\right|^{-1},\ \left|
\frac{\pi}{2\beta}+\frac{\theta(\beta)}{2}\right|^{-1}\right\}\le
 c(b)\beta,\notag
\end{align}
where we used that 
$$\inf_{\alpha\in \s(E(\phi)(\bk))}|x_{\bk}-\alpha|\ge
 \frac{1}{2(b+1)\beta},$$ 
which is ensured by
 \eqref{eq_interval_intersection_empty}.

By substituting \eqref{eq_absolute_path}, \eqref{eq_absolute_one_path},
 \eqref{eq_absolute_two_path} into
 \eqref{eq_core_function_bare_expansion} and using
 \eqref{eq_exponential_integral_formula} and $x_{\bk}\le
 2/\beta$ $(\forall \bk\in\G^*)$ we observe that for any $X\in \{1,2\}\times \cB\times
 \G\times [0,\beta)$ and $A\in \{f,g\}$ 
\begin{align*}
&\|A_X^{\ge}\|_{\cH}^2=\|A_X^{<}\|_{\cH}^2\\
&\le \frac{c(b)}{L^d}\sum_{\bk\in\G^*}\int_{-\infty}^{\infty}du\int_0^1dv\sum_{a\in \{1,-1\}}\\
&\quad \cdot\Bigg(
\frac{|\pi h\cos(2\pi v)|1_{a\pi h\cos(2\pi v)>0}}{u^2+(\pi h \cos(2\pi
 v))^2}
+\frac{|2x_{\bk}v-x_{\bk}|1_{a(2x_{\bk}v-x_{\bk})>0}}{u^2+(2x_{\bk}v-x_{\bk})^2}
+\frac{x_{\bk}1_{a x_{\bk}>0}}{u^2+x_{\bk}^2}\Bigg)\\
&\le c(b).
\end{align*}
Now we can apply the extended Gram inequality \cite[\mbox{Theorem
 1.3}]{PS} in the representation \eqref{eq_covariance_residue_gram} to
 derive that
\begin{align}
&|\det(\<\bu_i,\bw_j\>_{\C^m}C_{1-1}(X_i,Y_j))_{1\le i,j\le n}|\le
 c(b)^n,\label{eq_determinant_bound_residue_part}\\
&(\forall m,n\in \N,\ \bu_i,\bw_i\in \C^m\text{ with
 }\|\bu_i\|_{\C^m},\|\bw_i\|_{\C^m}\le 1,\ X_i,Y_i\in I_0\
 (i=1,\cdots,n)).\notag
\end{align}
The readers can refer to \cite[\mbox{Remark 5.2}]{K_2010} for a minor
 necessary modification of  \cite[\mbox{Theorem
 1.3}]{PS} concerning the factor $\<\bu_i,\bw_j\>_{\C^m}$
 $(i,j=1,\cdots,n)$, as it was originally claimed only for $m=n$ in 
 \cite[\mbox{Theorem 1.3}]{PS}.

Let us treat $C_{1-2}$. In fact the procedure to find a determinant
 bound on $C_{1-2}$ is simpler than that on $C_{1-1}$. Let
 $\widehat{\cH}$ denote the Hilbert space $L^2(\{1,2\}\times \cB\times
 \G^*\times \R)$ whose inner product is defined by 
\begin{align*}
&\<f,g\>_{\widehat{\cH}}:=\sum_{(\otau,\tau)\in \{1,2\}\times
 \cB}\frac{1}{L^d}\sum_{\bk\in \G^*}\int_{-\infty}^{\infty}du
 \overline{f(\otau,\tau,\bk,u)}g(\otau,\tau,\bk,u).
\end{align*}
Define the vectors $\hat{f}_X^{\bar{a}}$, $\hat{g}_X^{\bar{a}}\in
 \widehat{\cH}$ $(X\in \{1,2\}\times \cB\times \G\times \R,$ $\bar{a}\in
 \{1,2\})$ as follows. 
\begin{align*}
&\hat{f}_{\orho\rho\bx
 s}^{\bar{a}}(\otau,\tau,\bk,u)\\
&:=1_{\otau=\bar{a}}e^{-i\<\bk,\bx\>-is((-1)^{\bar{a}+1}\frac{\theta(\beta)}{2}-u)}1_{\tau\in
 \cB(\bk)}\overline{U(\bk)((\orho-1)b+\rho,(\otau-1)b+\tau)}\\
&\quad\cdot\frac{1+e^{-\beta
 ((-1)^{\bar{a}+1}i\frac{\theta(\beta)}{2}+\hat{e}_{\tau}(\bk))}}
{|1+e^{-\beta
 ((-1)^{\bar{a}+1}i\frac{\theta(\beta)}{2}+\hat{e}_{\tau}(\bk))}|^{\frac{3}{2}}}\sqrt{\frac{\hat{e}_{\tau}(\bk)}{\pi}}\frac{1}{iu
 + \hat{e}_{\tau}(\bk)},\\
&\hat{g}_{\orho\rho\bx
 s}^{\bar{a}}(\otau,\tau,\bk,u)\\
&:=1_{\otau=\bar{a}}e^{-i\<\bk,\bx\>-is((-1)^{\bar{a}+1}\frac{\theta(\beta)}{2}-u)}1_{\tau\in
 \cB(\bk)}U(\bk)^*((\otau-1)b+\tau,(\orho-1)b+\rho)\\
&\quad\cdot \frac{1}
{|1+e^{-\beta
 ((-1)^{\bar{a}+1}i\frac{\theta(\beta)}{2}+\hat{e}_{\tau}(\bk))}|^{\frac{1}{2}}}\sqrt{\frac{\hat{e}_{\tau}(\bk)}{\pi}}\frac{1}{iu
 + \hat{e}_{\tau}(\bk)}.
\end{align*}
Then let us define $\hat{f}_{X}^{\ge}$, $\hat{f}_{X}^{<}$,
 $\hat{g}_{X}^{\ge}$, $\hat{g}_{X}^{<}\in \widehat{\cH}$ $(X\in
 \{1,2\}\times \cB\times \G\times [0,\beta))$ by
\begin{align}
&\hat{f}_{\orho\rho \bx s}^{\ge}=\hat{f}_{\orho\rho \bx
 s}^{<}:=\hat{f}_{\orho\rho \bx s}^1+\hat{f}_{\orho\rho \bx
 (-s)}^2,\label{eq_P_S_type_decomposition_simpler}\\
&\hat{g}_{\orho\rho \bx s}^{\ge}:= \hat{g}_{\orho\rho \bx
 (\beta+s)}^1+\hat{g}_{\orho\rho \bx(-s)}^2,\quad   
\hat{g}_{\orho\rho \bx s}^{<}:= \hat{g}_{\orho\rho \bx
 s}^1+\hat{g}_{\orho\rho \bx(\beta-s)}^2.\notag
\end{align}
By applying \eqref{eq_exponential_integral_formula} repeatedly we can
 confirm that for any $(\orho,\rho,\bx,s)$, $(\oeta,\eta,\by,t)\in
 \{1,2\}\times \cB\times \G\times [0,\beta)$
\begin{align*}
1_{s\ge t}\<\hat{f}_{\orho\rho\bx s}^{\ge}, \hat{g}_{\oeta\eta\by
 t}^{\ge} \>_{\widehat{\cH}}&=1_{s\ge t}(\<\hat{f}_{\orho\rho\bx s}^{1}, \hat{g}_{\oeta\eta\by
 (\beta+t)}^{1} \>_{\widehat{\cH}}+
\<\hat{f}_{\orho\rho\bx (-s)}^{2}, \hat{g}_{\oeta\eta\by
 (-t)}^{2}\>_{\widehat{\cH}})\\
&=1_{s\ge t}C_{1-2}^{\ge}(\orho\rho\bx s,\oeta\eta \by t),\\
1_{s< t}\<\hat{f}_{\orho\rho\bx s}^{<}, \hat{g}_{\oeta\eta\by
 t}^{<} \>_{\widehat{\cH}}&=1_{s< t}(\<\hat{f}_{\orho\rho\bx s}^{1}, \hat{g}_{\oeta\eta\by
 t}^{1} \>_{\widehat{\cH}}+
\<\hat{f}_{\orho\rho\bx (-s)}^{2}, \hat{g}_{\oeta\eta\by
 (\beta-t)}^{2}\>_{\widehat{\cH}})\\
&=1_{s< t}C_{1-2}^{<}(\orho\rho\bx s,\oeta\eta \by t),
\end{align*}
and thus
\begin{align}
C_{1-2}(\orho\rho\bx s,\oeta \eta \by t)=1_{s\ge t} \<\hat{f}_{\orho\rho\bx s}^{\ge}, \hat{g}_{\oeta\eta\by
 t}^{\ge} \>_{\widehat{\cH}}
-1_{s< t}\<\hat{f}_{\orho\rho\bx s}^{<}, \hat{g}_{\oeta\eta\by
 t}^{<} \>_{\widehat{\cH}}.\label{eq_covariance_rest_gram}
\end{align}
To estimate the norms of $\hat{f}_{X}^{\ge}$, $\hat{f}_{X}^{<}$,
 $\hat{g}_{X}^{\ge}$, $\hat{g}_{X}^{<}$ $(X\in \{1,2\}\times \cB\times
 \G\times [0,\beta))$, let us observe that
for $\bk\in \G^*$, $\tau\in \cB(\bk)$, $\bar{a}\in \{1,2\}$
\begin{align}
&|1+e^{-\beta
 ((-1)^{\bar{a}+1}i\frac{\theta(\beta)}{2}+\hat{e}_{\tau}(\bk))}|^2\ge 
(1-e^{-\beta \hat{e}_{\tau}(\bk)})^2\ge
(1-e^{-1})^2,\label{eq_core_condition_assisted_bound}
\end{align}
where we used the fact that $\hat{e}_{\tau}(\bk)\ge
 x_{\bk}+\frac{1}{2(b+1)\beta}\ge 1/\beta$.   
Taking into account \eqref{eq_core_condition_assisted_bound} and the
 unitary property of $U(\bk)$ and using
 \eqref{eq_exponential_integral_formula}, we can derive that for any
 $(\orho,\rho,\bx,s)\in \{1,2\}\times \cB\times \G\times [0,\beta)$ and
 $A\in \{f,g\}$
\begin{align*}
\|\hat{A}_{\orho\rho \bx
 s}^{\ge}\|_{\widehat{\cH}}^2&=\|\hat{A}_{\orho\rho \bx
 s}^{<}\|_{\widehat{\cH}}^2=\|\hat{A}^1_{\orho \rho \b0 0}\|^2_{\widehat{\cH}}+ 
\|\hat{A}^2_{\orho \rho \b0 0}\|^2_{\widehat{\cH}}\\
&\le \frac{c}{L^d}\sum_{(\otau,\tau)\in \{1,2\}\times\cB}\sum_{\bk\in
 \G^*}|U(\bk)((\orho-1)b+\rho,(\otau-1)b+\tau))|^2=c.
\end{align*}
With these bounds we can apply \cite[\mbox{Theorem
 1.3}]{PS} in \eqref{eq_covariance_rest_gram} and conclude that
\begin{align}
&|\det(\<\bu_i,\bw_j\>_{\C^m}C_{1-2}(X_i,Y_j))_{1\le i,j\le n}|\le
 c^n,\label{eq_determinant_bound_rest_part}\\
&(\forall m,n\in \N,\ \bu_i,\bw_i\in \C^m\text{ with
 }\|\bu_i\|_{\C^m},\|\bw_i\|_{\C^m}\le 1,\ X_i,Y_i\in I_0\
 (i=1,\cdots,n)).\notag
\end{align}

Since we have \eqref{eq_determinant_bound_residue_part} and 
\eqref{eq_determinant_bound_rest_part}, we can apply the Cauchy-Binet
 formula in a standard way (see e.g. \cite[\mbox{Lemma
 A.1}]{K_BCS_I}) in \eqref{eq_one_covariance_bare_decomposition} to
 obtain the claimed determinant bound.
\end{proof}

\subsection{General estimation}\label{subsec_general_estimation}

Let $\cV$ denote the complex vector space spanned by the abstract basis
$\{\psi_{X}\}_{X\in I}$. Then let $\bigwedge \cV$ be the Grassmann 
algebra generated by $\{\psi_{X}\}_{X\in I}$ and $\bigwedge_{even} \cV$
be the subspace of $\bigwedge \cV$ spanned by even monomials. These
Grassmann algebras are exactly same as those defined in \cite{K_BCS_II}. The
grand canonical partition function and the thermal expectations are
formulated into a hybrid of Gaussian integral with real variables and
Grassmann Gaussian integral over $\bigwedge \cV$ in the same way as 
\cite[\mbox{Lemma 3.6}]{K_BCS_II}. As in the previous papers, the proof of Theorem
\ref{thm_infinite_volume_limit} relies on analysis of the Grassmann
Gaussian integral appearing in the hybrid formulation. The aim of this
subsection is to summarize necessary estimates of the output of the Grassmann
Gaussian integral in a generalized setting. Here we do not introduce
concrete model-dependent Grassmann polynomials or covariances. We only
assume generic properties of Grassmann polynomials and a covariance. The
estimates can be used as tools to analyze the Grassmann integral
formulation if the real Grassmann polynomials and the real covariances
stemming from the model are
substituted. In fact all the inequalities claimed below are
straightforward variants of the results of \cite[\mbox{Subsection
3.2}]{K_BCS_I}, \cite[\mbox{Subsection 4.2}]{K_BCS_II}. We only provide
minimum sketches of the proofs rather than fully repeat parallel
arguments. However, the resulting inequalities themselves will be stated
without omission. We will see that seemingly subtle changes from the
previous estimates constitute the essence of the proof of Theorem
\ref{thm_infinite_volume_limit}.  

In this subsection we assume that the covariance $\cC:I_0^2\to\C$
satisfies with a constant $D\in \R_{>0}$ that 
\begin{align}
&\cC(\orho\rho\bx s,\oeta\eta\by t)=\cC(\orho\rho\bx 0, \oeta\eta \by
 0),\quad (\forall (\orho,\rho,\bx,s),(\oeta,\eta,\by,t)\in
 I_0),\label{eq_covariance_time_independence}\\
&|\det(\<\bu_i,\bw_j\>_{\C^m}\cC(X_i,Y_j))_{1\le i,j\le n}|\le
 D^n,\notag\\
&(\forall m,n\in \N,\ \bu_i,\bw_i\in \C^m\text{ with
 }\|\bu_i\|_{\C^m},\|\bw_i\|_{\C^m}\le 1,\ X_i,Y_i\in I_0\
 (i=1,\cdots,n)).\notag
\end{align}
A common property satisfied by kernels of Grassmann polynomials in the
following analysis is the invariance 
\begin{align}
F(\cR_{\beta}(\bX+s))=F(\bX),\quad \left(\forall \bX\in I^m,\ s\in
 \frac{1}{h}\Z\right),\label{eq_time_translation_invariance}
\end{align}
where $F:I^m\to \C$. Let us refer to \cite[\mbox{Subsection
4.2}]{K_BCS_II} for the definition of the map $\cR_{\beta}$. Also, the
meaning of the notation $\bX+s$ is explained in \cite[\mbox{Subsection
3.1}]{K_BCS_I} in a parallel situation. The property
\eqref{eq_covariance_time_independence} implies that its extension
$\tilde{\cC}:I^2\to\C$ defined as in \eqref{eq_anti_symmetric_extension}
satisfies \eqref{eq_time_translation_invariance}. In the following we
assume that $F^j(\psi)$ $(j\in \N)$, $F(\psi)\in \bigwedge_{even}\cV$
and the anti-symmetric kernels $F_m^j:I^m\to\C$, $F_m:I^m\to \C$
$(m=2,4,\cdots,N)$ satisfy \eqref{eq_time_translation_invariance}. Here
$N$ denotes $4b\beta h L^d$, the cardinality of $I$. We use these
Grassmann polynomials as input to the tree expansions. As another input,
we take $G\in \bigwedge_{even}\cV$ having the form 
\begin{align*}
G(\psi)=\sum_{p,q=2}^N1_{p,q\in
 2\N}\left(\frac{1}{h}\right)^{p+q}\sum_{\bX\in I^p\atop \bY\in
 I^q}G_{p,q}(\bX,\bY)\psi_{\bX}\psi_{\bY}
\end{align*}
with the bi-anti-symmetric kernels $G_{p,q}:I^p\times I^q\to \C$
$(p,q=2,4,\cdots,N)$ satisfying \eqref{eq_time_translation_invariance}
and the vanishing property
\begin{align}
&\sum_{(s_1,\cdots,s_p)\in
 [0,\beta)_h^p}G_{p,q}((\orho_1\rho_1\bx_1s_1\xi_1,\cdots,\orho_p\rho_p\bx_ps_p\xi_p),\bY)f(s_1,\cdots,s_p)=0,\label{eq_time_vanishing_property}\\
&(\forall
 (\orho_1,\rho_1,\bx_1,\xi_1),\cdots,(\orho_p,\rho_p,\bx_p,\xi_p)\in\{1,2\}\times\cB\times\G\times\{1,-1\},\
 \bY\in I^q),\notag\\
&\sum_{(t_1,\cdots,t_q)\in
 [0,\beta)_h^q}G_{p,q}(\bX,(\oeta_1\eta_1\by_1t_1\zeta_1,\cdots,\oeta_q\eta_q\by_q
 t_q\zeta_q))g(t_1,\cdots,t_q)=0,\notag\\
&(\forall \bX\in I^p,\ 
 (\oeta_1,\eta_1,\by_1,\zeta_1),\cdots,(\oeta_q,\eta_q,\by_q,\zeta_q)\in\{1,2\}\times\cB\times\G\times\{1,-1\}),\notag
\end{align}
for any $f:[0,\beta)_h^p\to \C$, $g:[0,\beta)_h^q\to \C$ satisfying that
\begin{align*}
&f(r_{\beta}(s_1+s),\cdots,r_{\beta}(s_p+s))=f(s_1,\cdots,s_p),\  \left(\forall (s_1,\cdots,s_p)\in [0,\beta)_h^p,\ s\in
 \frac{1}{h}\Z\right),\\
&g(r_{\beta}(s_1+s),\cdots,r_{\beta}(s_q+s))=g(s_1,\cdots,s_q),\ 
 \left(\forall (s_1,\cdots,s_q)\in [0,\beta)_h^q,\ s\in
 \frac{1}{h}\Z\right).
\end{align*}
Recall that for any $s\in \frac{1}{h}\Z$, $r_{\beta}(s)\in [0,\beta)_h$
and $r_{\beta}(s)=s$ in $\frac{1}{h}\Z/\beta\Z$. The definition of
the map $r_{\beta}:\frac{1}{h}\Z\to [0,\beta)_h$ was originally given in 
\cite[\mbox{Subsection 3.2}]{K_BCS_I}. We also introduce $G^j\in
\bigwedge_{even}\cV$ $(j\in \N)$, assuming that $G^j$ has the
bi-anti-symmetric kernels $G_{p,q}^j:I^p\times I^q\to \C$ 
$(p,q=2,4,\cdots,N)$ 
satisfying
\eqref{eq_time_translation_invariance} and
\eqref{eq_time_vanishing_property}.

For $n\in \N_{\ge 2}$, $l\in \{0,1,\cdots,n\}$ we define
$A^{(n,l)}(\psi)\in \bigwedge_{even}\cV$ by
\begin{align*}
&A^{(n,l)}(\psi):=Tree(\{1,2,\cdots,n\},\cC)\prod_{j=1}^lF^j(\psi^j+\psi)\prod_{k=l+1}^nG^k(\psi^k+\psi)\Bigg|_{\psi^i=0\atop
 (\forall i\in \{1,\cdots,n\})}.
\end{align*}
The definition of the operator ``$Tree(\{1,\cdots,n\},\cC)$'' is
written in \cite[\mbox{Subsection 3.1}]{K_BCS_I}. It applies to the
present case if we add the set $\cB$ to the index set ``$I$''
of \cite{K_BCS_I}. In fact the current version of
$Tree(\{1,\cdots,n\},\cC)$ is exactly same as that used in 
\cite[\mbox{Subsection 4.2}]{K_BCS_II}. 
In the first lemma we summarize necessary bound properties of the
anti-symmetric kernels of $A^{(n,l)}(\psi)$. Let us refer to 
\cite[\mbox{Subsection 4.1}]{K_BCS_II} for the definition of the norm
$\|\cdot\|_1$.

\begin{lemma}\label{lem_general_estimation_direct}
For any $m\in \{2,4,\cdots,N\}$, $n\in \N_{\ge 2}$, $l\in
 \{0,1,\cdots,n\}$ the anti-symmetric kernel $A_m^{(n,l)}(\cdot)$
 satisfies \eqref{eq_time_translation_invariance}. Moreover, the
 following inequalities hold for any $m\in \{0,2,\cdots,N\}$, $n\in
 \N_{\ge 2}$, $l\in \{0,1,\cdots,n\}$, $l'\in \{1,2,\cdots,n\}$. 
\begin{align}
&\|A_m^{(n,l)}\|_{1,\infty}\le \left(\frac{N}{h}\right)^{1_{m=0}}
(n-2)!D^{-n+1-\frac{m}{2}}2^{-2m}\|\tilde{\cC}\|_{1,\infty}^{n-1}\label{eq_general_estimation_direct}\\
&\qquad\qquad\qquad\cdot\prod_{j=1}^l\left(\sum_{p_j=2}^N2^{3p_j}D^{\frac{p_j}{2}}\|F_{p_j}^j\|_{1,\infty}
\right)\prod_{k=l+1}^n\left(\sum_{p_k=4}^N2^{3p_k}D^{\frac{p_k}{2}}\|G_{p_k}^k\|_{1,\infty}
\right)\notag\\
&\qquad\qquad\qquad\cdot 1_{\sum_{j=1}^np_j-2(n-1)\ge m\ge 2(n-l)}.\notag\\
&\|A_m^{(n,l')}\|_{1}\le (n-2)!
 D^{-n+1-\frac{m}{2}}2^{-2m}\|\tilde{\cC}\|_{1,\infty}^{n-1}
\sum_{p_1=2}^N2^{3p_1}D^{\frac{p_1}{2}}\|F_{p_1}^1\|_{1}\label{eq_general_estimation_direct_one}\\
&\qquad\qquad\quad\cdot\prod_{j=2}^{l'}\left(\sum_{p_j=2}^N2^{3p_j}D^{\frac{p_j}{2}}\|F_{p_j}^j\|_{1,\infty}
\right)
\prod_{k=l'+1}^n\left(\sum_{p_k=4}^N2^{3p_k}D^{\frac{p_k}{2}}\|G_{p_k}^k\|_{1,\infty}
\right)\notag\\
&\qquad\qquad\quad\cdot 1_{\sum_{j=1}^np_j-2(n-1)\ge m\ge 2(n-l')}.\notag
\end{align}
\end{lemma}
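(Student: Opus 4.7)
The plan is to derive both bounds and the time-translation invariance by a tree-expansion argument that closely parallels those in \cite[Subsection 3.2]{K_BCS_I} and \cite[Subsection 4.2]{K_BCS_II}, with the key new ingredient being the vanishing property \eqref{eq_time_vanishing_property} of the $G^k$ kernels. The time-translation invariance of $A_m^{(n,l)}(\cdot)$ follows inductively from the time-independence of $\cC$ (and hence of $\tilde{\cC}$) together with the invariance of each input kernel $F^j_m$ and $G^k_{p,q}$, since $Tree(\{1,\ldots,n\},\cC)$ acts by sums of $\tilde{\cC}$-contractions that preserve the additive time-shift structure of the remaining variables.

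For the norm bound \eqref{eq_general_estimation_direct}, I would apply the standard forest/tree expansion to represent $A^{(n,l)}$ as a sum over spanning trees $T$ on $\{1,\ldots,n\}$ of amplitudes in which the $n-1$ tree edges carry factors of $\tilde{\cC}$ and the remaining pairings inside each vertex are packaged into a single determinant of entries of $\cC$. Bounding each tree edge by $\|\tilde{\cC}\|_{1,\infty}$ after fixing a root and using Gram's inequality (supplied by the determinant hypothesis on $\cC$) on the loop determinant give, respectively, the factor $\|\tilde{\cC}\|_{1,\infty}^{n-1}$ and a factor $D^{k}$ with $k=(\sum_j p_j - m - 2(n-1))/2$ equal to the number of loop contractions; combined with the factors $D^{p_j/2}$ sitting next to each $\|F^j_{p_j}\|_{1,\infty}$ and $\|G^k_{p_k}\|_{1,\infty}$, this reproduces the net $D^{-n+1-m/2}$. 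The factor $(n-2)!$ comes from a Cayley-type enumeration of spanning trees, and the per-vertex $2^{3p_j}$ absorbs the choice of which legs of each vertex are tree-attached, loop-paired, or external, together with sign and anti-symmetrization factors; the extra $(N/h)^{1_{m=0}}$ in the case $m=0$ accounts for the difference between the degree-$0$ norm and the sum over $I$. The leg-conservation constraint $\sum p_j - 2(n-1)\ge m$ is automatic. Inequality \eqref{eq_general_estimation_direct_one} then follows by rooting the spanning tree at vertex $1$, which removes the need to take a supremum over a distinguished variable of $F^1$ and allows $\|F^1_{p_1}\|_1$ in place of $\|F^1_{p_1}\|_{1,\infty}$.

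The extra constraint $m\ge 2(n-l)$, which is the main departure from the previous estimates, is supplied by \eqref{eq_time_vanishing_property}. Each $G^k$ ($l<k\le n$) has the bi-anti-symmetric form $G^k_{p,q}(\bX,\bY)\psi_{\bX}\psi_{\bY}$ with $p,q\ge 2$; if in some contraction pattern all $p$ legs of the $\bX$-group of a given $G^k$ were internal, i.e.\ contracted through tree edges or loop pairings involving only $\tilde{\cC}$ and the time-translation-invariant kernels of the other vertices, then the induced partial sum over the $\bX$-times of that $G^k$ would be of the form occurring in \eqref{eq_time_vanishing_property} with a time-translation-invariant weight and would vanish; the same argument handles the $\bY$-group. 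Hence each of the $n-l$ vertices of type $G^k$ must keep at least one leg external in each of its two groups, giving at least $2(n-l)$ external legs in total. The principal technical obstacle will be implementing this vanishing argument cleanly alongside the Gram bound on the loop determinant: one must organize the sum over contraction patterns so that the cancellation of fully-internal $G^k$-groups occurs before taking the Gram bound, which amounts to reorganizing the tree-and-loop sum by which $G^k$ groups are fully contracted into the rest of the graph and verifying that those contributions vanish term by term before any absolute values are taken.
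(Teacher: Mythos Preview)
Your proposal is correct in substance and identifies the right mechanism --- the vanishing property \eqref{eq_time_vanishing_property} forces each $G^k$ vertex to contribute at least two external legs --- but it takes a harder route than the paper does, and the ``principal technical obstacle'' you flag is one the paper sidesteps entirely.

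The paper does not weave the vanishing argument into the tree expansion. Instead it exploits the full strength of \eqref{eq_covariance_time_independence} (time \emph{independence} of $\cC$, not mere time-translation invariance): inside $Tree(\{1,\ldots,n\},\cC)$ every internal Grassmann variable $\psi^j_{\bY_j}$ may be replaced by $\psi^j_{P_0(\bY_j)}$, where $P_0$ sends all time components to $0$, without changing the value. After this replacement, for each $G^k$ vertex one simply looks at the factor $\sum_{\bY_k} G_{n_k}^k(\bY_k,\bX_k)\psi^k_{P_0(\bY_k)}$; since $\psi^k_{P_0(\bY_k)}$ is literally constant in the $\bY_k$-times, the hypothesis of \eqref{eq_time_vanishing_property} is met trivially, and one reads off that the factor vanishes whenever $m_k\le 1$ (because at least one of the two groups of $G^k_{p,q}$ lies entirely inside $\bY_k$). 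This yields the identity $A_m^{(n,l)}(\psi)=1_{m\ge 2(n-l)}A_m^{(n,l)}(\psi)$ at the level of the Grassmann polynomial, before any norm or Gram bound is taken. The inequalities \eqref{eq_general_estimation_direct} and \eqref{eq_general_estimation_direct_one} then follow by citing the already-proved bounds ``(3.16)'', ``(3.17)'' of \cite[Lemma~3.1]{K_BCS_I} (equivalently ``(4.8)'', ``(4.9)'' of \cite[Lemma~4.1]{K_BCS_II}) and multiplying by the indicator.

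So the difference is modularity: you propose to re-derive the tree bound and perform the cancellation inside it, which forces you to reorganize the contraction sum so that fully-internal $G^k$-groups cancel before absolute values are taken; the paper instead proves a clean algebraic identity first (via the $P_0$ substitution) and then invokes the existing bound unchanged. Your route works --- the time-independence of $\cC$ makes the loop determinant constant in all time variables, so the cancellation really does survive --- but the $P_0$ trick makes the whole argument a few lines long and avoids touching the internals of the tree formula at all.
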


\begin{proof}
The statement concerning the property
 \eqref{eq_time_translation_invariance}
is essentially implied by 
\cite[\mbox{Lemma 3.1}]{K_BCS_I}. Let us define the map $P_0:I^m\to I^m$
 by 
\begin{align*}
&P_0((\orho_1,\rho_1,\bx_1,s_1,\xi_1),\cdots,(\orho_m,\rho_m,\bx_m,s_m,\xi_m))\\
&:=((\orho_1,\rho_1,\bx_1,0,\xi_1),\cdots,(\orho_m,\rho_m,\bx_m,0,\xi_m)),\\
&(\forall (\orho_j,\rho_j,\bx_j,s_j,\xi_j)\in I\ (j=1,\cdots,m)).
\end{align*}
Let us use the notation $P_0$ for different $m$ for simplicity. 
Then by taking into account anti-symmetry and the time-independent property
 \eqref{eq_covariance_time_independence} we observe that for $m\in
 \{0,2,\cdots,N\}$, $n\in \N_{\ge 2}$, $l\in \{0,1,\cdots,n\}$ 
\begin{align*}
&A_m^{(n,l)}(\psi)\\
&=Tree(\{1,\cdots,n\},\cC)\\
&\quad\cdot\prod_{j=1}^l\Bigg(
\sum_{n_j=2}^N\sum_{m_j=0}^{n_j-1}\left(\begin{array}{c}n_j \\
					m_j\end{array}\right)
\left(\frac{1}{h}\right)^{n_j}\sum_{\bX_j\in I^{m_j}}\sum_{\bY_j\in
 I^{n_j-m_j}}F_{n_j}^j(\bY_j,\bX_j)\psi_{P_0(\bY_j)}^j\psi_{\bX_j}\Bigg)\\
&\quad\cdot\prod_{k=l+1}^n\Bigg(
\sum_{n_k=4}^N\sum_{m_k=0}^{n_k-1}\left(\begin{array}{c}n_k \\
					m_k\end{array}\right)
\left(\frac{1}{h}\right)^{n_k}\sum_{\bX_k\in I^{m_k}}\sum_{\bY_k\in
 I^{n_k-m_k}}G_{n_k}^k(\bY_k,\bX_k)\psi_{P_0(\bY_k)}^k\psi_{\bX_k}\Bigg)\\
&\quad\cdot \Bigg|_{\psi^i=0\atop (\forall i\in \{1,\cdots,n\})}1_{\sum_{j=1}^nm_j=m}.
\end{align*}
By the uniqueness of an anti-symmetric kernel, for any
 $\bX=(X_1,\cdots,X_{n_k})\in I^{n_k}$ 
\begin{align*}
G_{n_k}^k(\bX)=&\sum_{p,q=2}^N1_{p,q\in
 2\N}1_{p+q=n_k}\\
&\cdot\frac{1}{n_k!}\sum_{\s\in
 \S_{n_k}}\sgn(\s)G_{p,q}^k((X_{\s(1)},\cdots,X_{\s(p)}),(X_{\s(p+1)},\cdots,X_{\s(p+q)})),
\end{align*}
where $\S_{n_k}$ is the set of permutations of $\{1,\cdots,n_k\}$ and
 $\sgn(\s)$ is the sign of $\s\in \S_{n_k}$. If $m_k\le 1$, the property
 \eqref{eq_time_vanishing_property} implies that
\begin{align*}
\sum_{\bY_k\in I^{n_k-m_k}}G_{n_k}^k(\bY_k,\bX_k)\psi_{P_0(\bY_k)}^k=0
\end{align*}
for any $\bX_k\in I^{m_k}$. Therefore 
\begin{align*}
&A_m^{(n,l)}(\psi)\\
&=Tree(\{1,\cdots,n\},\cC)\\
&\quad\cdot\prod_{j=1}^l\Bigg(
\sum_{n_j=2}^N\sum_{m_j=0}^{n_j-1}\left(\begin{array}{c}n_j \\
					m_j\end{array}\right)
\left(\frac{1}{h}\right)^{n_j}\sum_{\bX_j\in I^{m_j}}\sum_{\bY_j\in
 I^{n_j-m_j}}F_{n_j}^j(\bY_j,\bX_j)\psi_{\bY_j}^j\psi_{\bX_j}\Bigg)\\
&\quad\cdot\prod_{k=l+1}^n\Bigg(
\sum_{n_k=4}^N\sum_{m_k=0}^{n_k-1}\left(\begin{array}{c}n_k \\
					m_k\end{array}\right)
\left(\frac{1}{h}\right)^{n_k}\sum_{\bX_k\in I^{m_k}}\sum_{\bY_k\in
 I^{n_k-m_k}}G_{n_k}^k(\bY_k,\bX_k)\psi_{\bY_k}^k\psi_{\bX_k}\Bigg)\\
&\quad\cdot \Bigg|_{\psi^i=0\atop (\forall i\in
 \{1,\cdots,n\})}1_{\sum_{j=1}^nm_j=m\ge 2(n-l)}\\
&=1_{m\ge 2(n-l)}A_m^{(n,l)}(\psi).
\end{align*}
We can apply the inequality ``(3.16)'' of \cite[\mbox{Lemma
 3.1}]{K_BCS_I} or ``(4.8)'' of 
\cite[\mbox{Lemma
 4.1}]{K_BCS_II} to estimate the anti-symmetric kernel of
 $A_m^{(n,l)}(\psi)$. Multiplying the result by $1_{m\ge 2(n-l)}$ yields
 \eqref{eq_general_estimation_direct}. Now we have 
$A_m^{(n,l')}(\psi)=1_{m\ge 2(n-l')}A_m^{(n,l')}(\psi)$.
We can apply ``(3.17)'' of \cite[\mbox{Lemma
 3.1}]{K_BCS_I} or ``(4.9)'' of \cite[\mbox{Lemma
 4.1}]{K_BCS_II} to bound $\|A_m^{(n,l')}\|_1$ and multiply the result
 by $1_{m\ge 2(n-l')}$ to obtain \eqref{eq_general_estimation_direct_one}.
\end{proof}

Next we consider the Grassmann polynomials $B^{(n)}(\psi)$,
$\widehat{B}^{(n')}(\psi)\in \bigwedge_{even}\cV$ $(n\in \N,\ n'\in
\N_{\ge 2})$ defined as below. 
\begin{align*}
B^{(n)}(\psi)
:=&\sum_{p,q=2}^{N}1_{p,q\in 2\N}\frah^{p+q}\sum_{\bX\in I^p\atop\bY\in
 I^q}G_{p,q}(\bX,\bY)Tree(\{1,\cdots,n+1\},\cC)\\
&\cdot(\psi^1+\psi)_{\bX}(\psi^2+\psi)_{\bY}\prod_{j=3}^{n+1}G^j(\psi^j+\psi)\Bigg|_{\psi^i=0\atop(\forall
 i\in\{1,\cdots,n+1\})},\\
\widehat{B}^{(n')}(\psi)
:=&\sum_{p,q=2}^{N}1_{p,q\in 2\N}\frah^{p+q}\sum_{\bX\in I^p\atop\bY\in
 I^q}G_{p,q}(\bX,\bY)Tree(\{1,\cdots,n'+1\},\cC)\\
&\cdot(\psi^1+\psi)_{\bX}(\psi^2+\psi)_{\bY}\prod_{j=3}^{n'}G^j(\psi^j+\psi)
F(\psi^{n'+1}+\psi)
\Bigg|_{\psi^i=0\atop(\forall
 i\in\{1,\cdots,n'+1\})}.
\end{align*}
The anti-symmetric kernels of these polynomials can be estimated as
follows. See \cite[\mbox{Subsection 4.1}]{K_BCS_II} for the definition of
the measurement $[\cdot,\cdot]_{1,\infty}$.

\begin{lemma}\label{lem_general_estimation_double}
For any $m\in \{2,4,\cdots,N\}$, $n\in\N$, $n'\in \N_{\ge 2}$ the anti-symmetric
 kernels $B_m^{(n)}(\cdot)$, $\widehat{B}_m^{(n')}(\cdot)$ satisfy
 \eqref{eq_time_translation_invariance}. Moreover, 
the following inequalities hold for any 
 $m\in\{0,2,\cdots,N\}$, $n\in \N_{\ge 2}$.
\begin{align}
&\|B_m^{(1)}\|_{1,\infty}
\le
 D^{-1-\frac{m}{2}}\sum_{p_1,p_2=2}^{N}1_{p_1,p_2\in
 2\N}2^{2p_1+2p_2}D^{\frac{p_1+p_2}{2}}[G_{p_1,p_2},\tilde{\cC}]_{1,\infty}1_{p_1+p_2-2\ge
 m\ge 2}.\label{eq_general_estimation_double_one_field}\\
&\|B_m^{(n)}\|_{1,\infty}\le 
(n-1)!D^{-n-\frac{m}{2}}2^{-2m}\|\tilde{\cC}\|_{1,\infty}^{n-1}\label{eq_general_estimation_double}\\
&\qquad\qquad\quad \cdot\sum_{p_1,p_2=2}^{N}1_{p_1,p_2\in2\N}2^{3p_1+3p_2}D^{\frac{p_1+p_2}{2}}[G_{p_1,p_2},\tilde{\cC}]_{1,\infty}
\prod_{j=3}^{n+1}\left(\sum_{p_j=4}^N2^{3p_j}D^{\frac{p_j}{2}}\|G_{p_j}^j\|_{1,\infty}
\right)\notag\\
&\qquad\qquad\quad\cdot 1_{\sum_{j=1}^{n+1}p_j-2n\ge m\ge 2n}.\notag\\
&\|\widehat{B}_m^{(n)}\|_{1}\le 
(n-1)!D^{-n-\frac{m}{2}}2^{-2m}\|\tilde{\cC}\|_{1,\infty}^{n-1}\label{eq_general_estimation_double_one_norm}\\
&\qquad\qquad\ \cdot\sum_{p_1,p_2=2}^{N}1_{p_1,p_2\in2\N}2^{3p_1+3p_2}D^{\frac{p_1+p_2}{2}}[G_{p_1,p_2},\tilde{\cC}]_{1,\infty}
\prod_{j=3}^{n}\left(\sum_{p_j=4}^N2^{3p_j}D^{\frac{p_j}{2}}\|G_{p_j}^j\|_{1,\infty}
\right)\notag\\
&\qquad\qquad\ \cdot
\sum_{p_{n+1}=2}^N2^{3p_{n+1}}D^{\frac{p_{n+1}}{2}}\|F_{p_{n+1}}^{n+1}\|_1
1_{\sum_{j=1}^{n+1}p_j-2n\ge m\ge 2n-2}.\notag
\end{align}
\end{lemma}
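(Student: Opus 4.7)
The plan is to proceed in complete parallel to the proof of Lemma \ref{lem_general_estimation_direct}, but applied to the polynomials $B^{(n)}(\psi)$ and $\widehat{B}^{(n')}(\psi)$. Three things need to be established for each: time-translation invariance of the anti-symmetric kernels, a support restriction on the kernel degree $m$, and a quantitative norm bound. The time-translation invariance of $B^{(n)}_m$ and $\widehat{B}^{(n')}_m$ follows at once from the invariance of the input kernels $G_{p,q}$, $G^j$, $F$, $F^{n'+1}$ and of the extended covariance $\tilde{\cC}$ (which is invariant because $\cC$ is time-independent by \eqref{eq_covariance_time_independence}); this inheritance is the same tree-expansion mechanism recorded in \cite[\mbox{Lemma 3.1}]{K_BCS_I} and \cite[\mbox{Lemma 4.1}]{K_BCS_II}, so I would simply cite those statements rather than re-derive them.

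For the support restriction on $m$, I would first rewrite $B^{(n)}(\psi)$ as in the proof of Lemma \ref{lem_general_estimation_direct}: for each vertex $i \in \{1,\dots,n+1\}$ split the field $(\psi^i+\psi)$ into the tree-field part and the external-field part, labelling by $n_j$ the total kernel size and by $m_j$ the number of external indices. Using \eqref{eq_covariance_time_independence} the internal time variables can be collapsed to $0$ through the map $P_0$, and the outcome is that every time variable on a given kernel leg which is not external appears only inside a time-independent contraction, hence as a time-translation-invariant function of the leg's summed variables. At vertex $1$ and vertex $2$ the kernel is $G_{p_1,p_2}$, so the vanishing property \eqref{eq_time_vanishing_property} applied leg-by-leg forces the contribution to vanish unless $m_1\ge 2$ and $m_2\ge 2$; at the $G^k$ vertices ($k\ge 3$) the same argument using \eqref{eq_time_vanishing_property} forces $m_k\ge 2$; at the $F^{n'+1}$ vertex in $\widehat{B}^{(n')}$ only the weaker constraint from the kernel size is used. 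Combining these with the tree identity $\sum_j d_j = 2n$ yields the required supports $\sum_j p_j-2n \ge m \ge 2n$ in \eqref{eq_general_estimation_double}, $p_1+p_2-2\ge m\ge 2$ in \eqref{eq_general_estimation_double_one_field}, and $\sum_j p_j-2n\ge m\ge 2n-2$ in \eqref{eq_general_estimation_double_one_norm}.

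For the quantitative bounds, with the previous decomposition in place I would invoke the tree expansion estimates \cite[\mbox{(3.16)}, \mbox{(3.17)}]{K_BCS_I} together with \cite[\mbox{Lemma 4.2}]{K_BCS_II}. The new ingredient is that vertices $1$ and $2$ carry the bi-anti-symmetric kernel $G_{p_1,p_2}$ connected by the initial covariance line of the tree; this is exactly what is captured by the double-index measurement $[\,\cdot\,,\,\cdot\,]_{1,\infty}$ recalled in \cite[\mbox{Subsection 4.1}]{K_BCS_II}, and it replaces one factor of $\|\cdot\|_{1,\infty}\|\tilde{\cC}\|_{1,\infty}$ in the previous lemma by $[G_{p_1,p_2},\tilde{\cC}]_{1,\infty}$. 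This produces the bound \eqref{eq_general_estimation_double_one_field} for $n=1$ (where no further tree edges are present) and \eqref{eq_general_estimation_double} for $n\ge 2$. For $\widehat{B}^{(n')}$, one additionally replaces the $\|\cdot\|_{1,\infty}$ estimate at the $F^{n'+1}$ vertex by $\|\cdot\|_1$ using \cite[\mbox{(3.17)}]{K_BCS_I}/\cite[\mbox{(4.9)}]{K_BCS_II}, which is the same trick used in \eqref{eq_general_estimation_direct_one}; this yields \eqref{eq_general_estimation_double_one_norm}. Multiplying by the indicator of the support condition from the second step completes the proof.

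The main obstacle is purely organizational: confirming that the vanishing property \eqref{eq_time_vanishing_property} may be invoked leg-by-leg on the bi-anti-symmetric kernel $G_{p_1,p_2}$ after anti-symmetrization and after the $P_0$-reduction, and matching the combinatorial factors between the tree expansion of a single anti-symmetric kernel (as in \cite{K_BCS_I}) and the tree expansion where two of the vertices come from a common bi-anti-symmetric kernel. Apart from this bookkeeping, the argument is a straightforward adaptation of the already-proven Lemma \ref{lem_general_estimation_direct}.
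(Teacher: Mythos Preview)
Your overall strategy matches the paper's: derive the new lower support restriction on $m$ from the time-independence of $\cC$ and the vanishing property \eqref{eq_time_vanishing_property}, then multiply the existing bounds from \cite{K_BCS_I}, \cite{K_BCS_II} by the resulting indicator. However, there is one real error and one misattributed citation.

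The error is in your support analysis at vertices $1$ and $2$. You write that the vanishing property ``applied leg-by-leg forces $m_1\ge 2$ and $m_2\ge 2$''. This is not correct: vertex $1$ carries only the first leg $\bX$ of $G_{p_1,p_2}$ and vertex $2$ only the second leg $\bY$. The vanishing property \eqref{eq_time_vanishing_property} on the first leg eliminates the case $m_1=0$ (all first-leg time variables summed against a constant), giving only $m_1\ge 1$; similarly $m_2\ge 1$. The situation is different from the $G^k$-vertices, where the anti-symmetric kernel $G^k_{n_k}$ is a mixture of bi-anti-symmetric pieces $G^k_{p,q}$ and a single external index must land in one leg, so the \emph{other} leg is fully internal and vanishes---that is why one gets $m_k\ge 2$ there. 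With the correct constraints $m_1\ge 1$, $m_2\ge 1$, $m_j\ge 2$ for $j=3,\dots,n+1$ (and $m_{n+1}\ge 0$ for $\widehat B$), the sums $1+1+2(n-1)=2n$, $1+1=2$, and $1+1+2(n-2)=2n-2$ reproduce the indicators in \eqref{eq_general_estimation_double}, \eqref{eq_general_estimation_double_one_field}, \eqref{eq_general_estimation_double_one_norm}. Your stated final supports are right, but they do not follow from your intermediate claim $m_1,m_2\ge 2$, which would instead give $m\ge 2n+2$ and $m\ge 4$.

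The citation slip: the ready-made norm bounds for the $B$-type tree polynomials are in \cite[\mbox{Lemma 3.2}]{K_BCS_I} (equations (3.24), (3.26), (3.27)) and \cite[\mbox{Lemma 4.2}]{K_BCS_II} (equations (4.11), (4.13), (4.14)), not in \cite[\mbox{Lemma 3.1}]{K_BCS_I}; likewise the time-translation invariance for $B^{(n)}$, $\widehat B^{(n')}$ is recorded in \cite[\mbox{Lemma 3.2}]{K_BCS_I}. Once you correct the vertex-$1$/vertex-$2$ constraint to $m_1,m_2\ge 1$ and invoke the right equations, your argument coincides with the paper's.
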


\begin{proof}
The first statement of the lemma is essentially proved in 
\cite[\mbox{Lemma 3.2}]{K_BCS_I}. By the same consideration based on
 anti-symmetry and the properties
 \eqref{eq_covariance_time_independence},
 \eqref{eq_time_vanishing_property} as in the proof of Lemma
 \ref{lem_general_estimation_direct} we can deduce that for any $n\in
 \N_{\ge 2}$, $m\in \{0,2,\cdots,N\}$ 
\begin{align*}
&\widehat{B}_m^{(n)}(\psi)\\
&=\sum_{p_1,p_2=2}^N\left(\frac{1}{h}\right)^{p_1+p_2}\sum_{m_1=0}^{p_1-1}\sum_{m_2=0}^{p_2-1}\left(\begin{array}{c}p_1\\
															     m_1\end{array}\right)\left(\begin{array}{c}p_2\\
																			m_2\end{array}\right)\\
&\quad\cdot \sum_{\bX_1\in I^{m_1}\atop \bY_1\in I^{p_1-m_1}}
\sum_{\bX_2\in I^{m_2}\atop \bY_2\in
 I^{p_2-m_2}}G_{p_1,p_2}((\bY_1,\bX_1),(\bY_2,\bX_2))\\
&\quad\cdot Tree(\{1,\cdots,n+1\},\cC)\psi_{\bY_1}^1\psi_{\bX_1}\psi_{\bY_2}^2\psi_{\bX_2}\\
&\quad\cdot \prod_{j=3}^n\Bigg(\sum_{p_j=4}^N\left(
\frac{1}{h}\right)^{p_j}\sum_{m_j=0}^{p_j-1}\left(\begin{array}{c} p_j\\
					    m_j \end{array}\right)
\sum_{\bX_j\in I^{m_j}\atop \bY_j\in
 I^{p_j-m_j}}G_{p_j}^j(\bY_j,\bX_j)\psi_{\bY_j}^j\psi_{\bX_j}
\Bigg)\\
&\quad\cdot \sum_{p_{n+1}=2}^N\left(
\frac{1}{h}\right)^{p_{n+1}}\sum_{m_{n+1}=0}^{p_{n+1}-1}\left(\begin{array}{c} p_{n+1}\\
					    m_{n+1} \end{array}\right)
\sum_{\bX_{n+1}\in I^{m_{n+1}}\atop \bY_{n+1}\in
 I^{p_{n+1}-m_{n+1}}}F_{p_{n+1}}(\bY_{n+1},\bX_{n+1})\psi_{\bY_{n+1}}^{n+1}\psi_{\bX_{n+1}}\\
&\quad\cdot \Bigg|_{\psi^i=0\atop(\forall
 i\in\{1,\cdots,n+1\})}1_{\sum_{j=1}^{n+1}m_j=m\ge 2n-2}\\
&=1_{m\ge 2n-2}\widehat{B}_m^{(n)}(\psi).
\end{align*}
In the first equality we took into account the constraints $m_1\ge 1$,
 $m_2\ge 1$, $m_j\ge 2$ $(j=3,\cdots,n)$. Then we can apply ``(3.27)''
 of \cite[\mbox{Lemma 3.2}]{K_BCS_I} or ``(4.14)''
 of \cite[\mbox{Lemma 4.2}]{K_BCS_II} to derive
 \eqref{eq_general_estimation_double_one_norm}. In the same way as above
 we have that for any $m\in \{0,2,\cdots,N\}$, $n\in \N_{\ge 2}$
$B_m^{(1)}(\psi)=1_{m\ge 2}B_m^{(1)}(\psi)$, $B_m^{(n)}(\psi)=1_{m\ge
 2n}B_m^{(n)}(\psi)$. Then we can apply  ``(3.24)''
 of \cite[\mbox{Lemma 3.2}]{K_BCS_I} or ``(4.11)''
 of \cite[\mbox{Lemma 4.2}]{K_BCS_II} to derive
 \eqref{eq_general_estimation_double_one_field} and ``(3.26)''
 of \cite[\mbox{Lemma 3.2}]{K_BCS_I} or ``(4.13)''
 of \cite[\mbox{Lemma 4.2}]{K_BCS_II} to derive
 \eqref{eq_general_estimation_double}.
\end{proof}

Assume that $n\in \N$, $m\in \{0,1,\cdots,n-1\}$, 
\begin{align*}
&1=s_1<s_2<\cdots <s_{m+1}\le n,\quad 1=t_1<t_2<\cdots <t_{n-m}\le n,\\
&\{s_j\}_{j=2}^{m+1}\cup \{t_k\}_{k=2}^{n-m}=\{2,3,\cdots,n\},\quad
 \{s_j\}_{j=2}^{m+1}\cap \{t_k\}_{k=2}^{n-m}=\emptyset.
\end{align*}
Finally let us study the Grassmann polynomials $E^{(n)}(\psi)$,
$\widehat{E}^{(n)}(\psi)\in \bigwedge_{even}\cV$ defined as follows. 
\begin{align*}
E^{(n)}(\psi)
&:=\sum_{p,q=2}^{N}1_{p,q\in 2\N}\frah^{p+q}\sum_{\bX\in I^p\atop
 \bY\in I^q}G_{p,q}(\bX,\bY)\\
&\quad\cdot
 Tree(\{s_j\}_{j=1}^{m+1},\cC)(\psi^1+\psi)_{\bX}\prod_{j=2}^{m+1}G^{s_j}(\psi^{s_j}+\psi)\Bigg|_{\psi^{s_j}=0\atop(\forall
 j\in\{1,2,\cdots,m+1\})}\\
&\quad\cdot
 Tree(\{t_k\}_{k=1}^{n-m},\cC)(\psi^1+\psi)_{\bY}\prod_{k=2}^{n-m}G^{t_k}(\psi^{t_k}+\psi)\Bigg|_{\psi^{t_k}=0\atop(\forall
 k\in\{1,2,\cdots,n-m\})},\\
\widehat{E}^{(n)}(\psi)
&:=\sum_{p,q=2}^{N}1_{p,q\in 2\N}\frah^{p+q}\sum_{\bX\in I^p\atop
 \bY\in I^q}G_{p,q}(\bX,\bY)\\
&\quad\cdot
 Tree(\{s_j\}_{j=1}^{m+1},\cC)(\psi^1+\psi)_{\bX}\\
&\quad\cdot \prod_{j=2}^{m+1}
(1_{s_j\neq n}G^{s_j}(\psi^{s_j}+\psi)+ 1_{s_j= n}F(\psi^{s_j}+\psi))
\Bigg|_{\psi^{s_j}=0\atop(\forall
 j\in\{1,2,\cdots,m+1\})}\\
&\quad\cdot
 Tree(\{t_k\}_{k=1}^{n-m},\cC)(\psi^1+\psi)_{\bY}\\
&\quad\cdot \prod_{k=2}^{n-m}(1_{t_k\neq n}
G^{t_k}(\psi^{t_k}+\psi)+1_{t_k=n}F(\psi^{t_k}+\psi))\Bigg|_{\psi^{t_k}=0\atop(\forall
 k\in\{1,2,\cdots,n-m\})}.
\end{align*}
These Grassmann polynomials are special examples of those studied in 
\cite[\mbox{Lemma 4.4}]{K_BCS_II} and also close to those studied in
\cite[\mbox{Lemma 3.3}]{K_BCS_I}. The properties we need for later
application are summarized in the next lemma. 
The definition of the measurement $[\cdot,\cdot]_1$ is found in 
\cite[\mbox{Subsection 4.1}]{K_BCS_II}.

\begin{lemma}\label{lem_general_estimation_divided}
For any $n\in \N$, $a,b\in \{2,4,\cdots,N\}$ there exist functions 
$E_{a,b}^{(n)}$, $\widehat{E}_{a,b}^{(n)}:I^a\times I^b\to \C$ such that
 they are bi-anti-symmetric, satisfy
 \eqref{eq_time_translation_invariance},
 \eqref{eq_time_vanishing_property} and 
\begin{align*}
&E^{(n)}(\psi)=\sum_{a,b=2}^N1_{a,b\in
 2\N}\left(\frac{1}{h}\right)^{a+b}\sum_{\bX\in I^a\atop \bY\in
 I^b}E_{a,b}^{(n)}(\bX,\bY)\psi_{\bX}\psi_{\bY},\\
&\widehat{E}^{(n)}(\psi)=\sum_{a,b=2}^N1_{a,b\in
 2\N}\left(\frac{1}{h}\right)^{a+b}\sum_{\bX\in I^a\atop \bY\in
 I^b}\widehat{E}_{a,b}^{(n)}(\bX,\bY)\psi_{\bX}\psi_{\bY}.
\end{align*}
Moreover, the following inequalities hold for any $a,b\in
 \{2,4,\cdots,N\}$, $n\in \N_{\ge 2}$ and anti-symmetric function
 $g:I^2\to\C$. 
\begin{align}
&\|E_{a,b}^{(1)}\|_{1,\infty}\le
 \sum_{p=a}^{N}\sum_{q=b}^{N}1_{p,q\in 2\N}
\left(\begin{array}{c} p \\ a \end{array}\right)
\left(\begin{array}{c} q \\ b \end{array}\right)
D^{\frac{1}{2}(p+q-a-b)}\|G_{p,q}\|_{1,\infty}.\label{eq_general_estimation_divided_single}\\&[E_{a,b}^{(1)},g]_{1,\infty}\le
 \sum_{p=a}^{N}\sum_{q=b}^{N}1_{p,q\in 2\N}
\left(\begin{array}{c} p \\ a \end{array}\right)
\left(\begin{array}{c} q \\ b \end{array}\right)
D^{\frac{1}{2}(p+q-a-b)}[G_{p,q},g]_{1,\infty}.\label{eq_general_estimation_divided_single_coupled}\\
&\|E_{a,b}^{(n)}\|_{1,\infty}\label{eq_general_estimation_divided}\\
&\le (1_{m\neq 0}(m-1)!+1_{m=0})(1_{m\neq
 n-1}(n-m-2)!+1_{m=n-1})\notag\\
&\quad\cdot 2^{-2a-2b}D^{-n+1-\frac{1}{2}(a+b)}\|\tilde{\cC}\|_{1,\infty}^{n-1}\sum_{p_1,q_1=2}^{N}1_{p_1,q_1\in
 2\N}2^{3p_1+3q_1}D^{\frac{p_1+q_1}{2}}\|G_{p_1,q_1}\|_{1,\infty}\notag\\
&\quad\cdot\prod_{j=2}^{m+1}\Bigg(
\sum_{p_j=4}^N2^{3p_j}D^{\frac{p_j}{2}}\|G_{p_j}^{s_j}\|_{1,\infty}\Bigg)
\prod_{k=2}^{n-m}\Bigg(
\sum_{q_k=4}^N2^{3q_k}D^{\frac{q_k}{2}}
\|G_{q_k}^{t_k}\|_{1,\infty}\Bigg)\notag\\
&\quad\cdot 1_{\sum_{j=1}^{m+1}p_j-2m\ge a\ge 2m+2}
1_{\sum_{k=1}^{n-m}q_k-2(n-m-1)\ge b\ge 2(n-m)}.\notag\\
&[E_{a,b}^{(n)},g]_{1,\infty}\label{eq_general_estimation_divided_coupled}\\
&\le (1_{m\neq 0}(m-1)!+1_{m=0})(1_{m\neq
 n-1}(n-m-2)!+1_{m=n-1})\notag\\
&\quad\cdot
 2^{-2a-2b}D^{-n+1-\frac{1}{2}(a+b)}\|\tilde{\cC}\|_{1,\infty}^{n-2}\notag\\
&\quad\cdot \sum_{p_1,q_1=2}^{N}1_{p_1,q_1\in
 2\N}2^{3p_1+3q_1}D^{\frac{p_1+q_1}{2}}([G_{p_1,q_1},g]_{1,\infty}\|\tilde{\cC}\|_{1,\infty}+[G_{p_1,q_1},\tilde{\cC}]_{1,\infty}\|g\|_{1,\infty})\notag\\
&\quad\cdot\prod_{j=2}^{m+1}\Bigg(
\sum_{p_j=4}^N2^{3p_j}D^{\frac{p_j}{2}}\|G_{p_j}^{s_j}\|_{1,\infty}\Bigg)
\prod_{k=2}^{n-m}\Bigg(
\sum_{q_k=4}^N2^{3q_k}D^{\frac{q_k}{2}}\|G_{q_k}^{t_k}\|_{1,\infty}\Bigg)\notag\\
&\quad\cdot 1_{\sum_{j=1}^{m+1}p_j-2m\ge a\ge 2m+2}
1_{\sum_{k=1}^{n-m}q_k-2(n-m-1)\ge b\ge 2(n-m)}.\notag\\
&\|\widehat{E}_{a,b}^{(n)}\|_{1}\label{eq_general_estimation_divided_one}\\
&\le (1_{m\neq 0}(m-1)!+1_{m=0})(1_{m\neq
 n-1}(n-m-2)!+1_{m=n-1})\notag\\
&\quad\cdot2^{-2a-2b}D^{-n+1-\frac{1}{2}(a+b)}\|\tilde{\cC}\|_{1,\infty}^{n-1} \sum_{p_1,q_1=2}^{N}1_{p_1,q_1\in
 2\N}2^{3p_1+3q_1}D^{\frac{p_1+q_1}{2}}\|G_{p_1,q_1}\|_{1,\infty}\notag\\
&\quad\cdot\prod_{j=2}^{m+1}\Bigg(
\sum_{p_j=2}^N2^{3p_j}D^{\frac{p_j}{2}}
(1_{s_j\neq
 n}\|G_{p_j}^{s_j}\|_{1,\infty}+1_{s_j=n}\|F_{p_j}\|_1)\Bigg)\notag\\
&\quad\cdot\prod_{k=2}^{n-m}\Bigg(
\sum_{q_k=2}^N2^{3q_k}D^{\frac{q_k}{2}}
(1_{t_k\neq
 n}\|G_{q_k}^{t_k}\|_{1,\infty}+1_{t_k=n}\|F_{q_k}\|_1)\Bigg)\notag\\
&\quad\cdot 1_{\sum_{j=1}^{m+1}p_j-2m\ge a\ge 2m+2-21_{n\in \{s_j\}_{j=2}^{m+1}}}
1_{\sum_{k=1}^{n-m}q_k-2(n-m-1)\ge b\ge 2(n-m)-21_{n\in
 \{t_k\}_{k=2}^{n-m}}}.\notag\\
&[\widehat{E}_{a,b}^{(n)},g]_{1}\label{eq_general_estimation_divided_coupled_one}\\
&\le (1_{m\neq 0}(m-1)!+1_{m=0})(1_{m\neq
 n-1}(n-m-2)!+1_{m=n-1})\notag\\
&\quad\cdot
 2^{-2a-2b}D^{-n+1-\frac{1}{2}(a+b)}\|\tilde{\cC}\|_{1,\infty}^{n-2}\notag\\
&\quad\cdot \sum_{p_1,q_1=2}^{N}1_{p_1,q_1\in
 2\N}2^{3p_1+3q_1}D^{\frac{p_1+q_1}{2}}([G_{p_1,q_1},g]_{1,\infty}\|\tilde{\cC}\|_{1,\infty}+[G_{p_1,q_1},\tilde{\cC}]_{1,\infty}\|g\|_{1,\infty})\notag\\
&\quad\cdot\prod_{j=2}^{m+1}\Bigg(
\sum_{p_j=2}^N2^{3p_j}D^{\frac{p_j}{2}}
(1_{s_j\neq
 n}\|G_{p_j}^{s_j}\|_{1,\infty}+1_{s_j=n}\|F_{p_j}\|_1)\Bigg)\notag\\
&\quad\cdot\prod_{k=2}^{n-m}\Bigg(
\sum_{q_k=2}^N2^{3q_k}D^{\frac{q_k}{2}}
(1_{t_k\neq
 n}\|G_{q_k}^{t_k}\|_{1,\infty}+1_{t_k=n}\|F_{q_k}\|_1)\Bigg)\notag\\
&\quad\cdot 1_{\sum_{j=1}^{m+1}p_j-2m\ge a\ge 2m+2-21_{n\in \{s_j\}_{j=2}^{m+1}}}
1_{\sum_{k=1}^{n-m}q_k-2(n-m-1)\ge b\ge 2(n-m)-21_{n\in \{t_k\}_{k=2}^{n-m}}}.\notag
\end{align}
\end{lemma}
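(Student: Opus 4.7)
The plan is to argue in close parallel to the proofs of Lemma~\ref{lem_general_estimation_direct} and Lemma~\ref{lem_general_estimation_double}, appealing to \cite[\mbox{Lemma 3.3}]{K_BCS_I} and \cite[\mbox{Lemma 4.4}]{K_BCS_II} for the raw quantitative tree bounds, and then using anti-symmetry together with \eqref{eq_covariance_time_independence} and \eqref{eq_time_vanishing_property} to force the extra indicator factors that appear on the right-hand sides. The bi-anti-symmetry of $E_{a,b}^{(n)}$, $\widehat{E}_{a,b}^{(n)}$ is built into the extraction of kernels from $\bigwedge_{even}\cV$, and the invariance \eqref{eq_time_translation_invariance} is inherited from the invariance of the inputs $G_{p,q}$, $G^j$, $F$ and of $\tilde{\cC}$, exactly as in \cite[\mbox{Lemma 3.2}]{K_BCS_I}. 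The property \eqref{eq_time_vanishing_property} for $E_{a,b}^{(n)}$, $\widehat{E}_{a,b}^{(n)}$ follows because they are built from tree-contractions of $G_{p_1,q_1}$, which already satisfies \eqref{eq_time_vanishing_property}, and time-independence of the contractions leaves this property invariant.

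To see the lower bounds on $a$ and $b$ that appear in \eqref{eq_general_estimation_divided}--\eqref{eq_general_estimation_divided_coupled}, I would expand $E^{(n)}(\psi)$ as a multinomial sum indexed by integers $m_1,\ldots,m_{n+1}$ counting how many legs of each vertex remain external after the contractions prescribed by $Tree(\{s_j\}_{j=1}^{m+1},\cC)$ and $Tree(\{t_k\}_{k=1}^{n-m},\cC)$. By the same argument used in the proofs of Lemmas~\ref{lem_general_estimation_direct} and~\ref{lem_general_estimation_double}, anti-symmetry combined with \eqref{eq_covariance_time_independence} allows each contracted external variable $\psi_{P_0(\bY)}^j$ to be replaced by its time-zero projection; then \eqref{eq_time_vanishing_property} applied to $G_{p_1,q_1}$ in each group of external $\bX$- and $\bY$-legs forces the surviving number on each side to be at least $2$, and the 4-leg lower bound on $G^j$ forces $m_j\ge 2$ for $j\ge 2$. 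Summing the $m_j$ over indices $s_j$ yields $a\ge 2m+2$, and summing over $t_k$ yields $b\ge 2(n-m)$. One then invokes the raw quantitative tree bound of \cite[\mbox{Lemma 4.4}]{K_BCS_II}, multiplied by these indicator factors, to get \eqref{eq_general_estimation_divided}; the coupled version \eqref{eq_general_estimation_divided_coupled} is produced by the same argument using the coupled tree bound (pairing one $\tilde{\cC}$ with $g$), which again comes from the cited reference.

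For $\widehat{E}^{(n)}$ the only change is that one vertex among the $G^{s_j}$ or the $G^{t_k}$ is replaced by $F$, which need only satisfy $p\ge 2$ rather than $p\ge 4$. This weakens the lower bound on the corresponding side by $2$, giving the factor $1_{n\in\{s_j\}_{j=2}^{m+1}}$ or $1_{n\in\{t_k\}_{k=2}^{n-m}}$ in \eqref{eq_general_estimation_divided_one} and \eqref{eq_general_estimation_divided_coupled_one}. Moreover, since the kernel $F_{p_{n+1}}$ enters only at one vertex, the standard switch in the tree bound from $\|\cdot\|_{1,\infty}$ to $\|\cdot\|_{1}$ (as carried out in \eqref{eq_general_estimation_direct_one} and \eqref{eq_general_estimation_double_one_norm}) replaces $\|\widehat{E}_{a,b}^{(n)}\|_{1,\infty}$ with $\|\widehat{E}_{a,b}^{(n)}\|_{1}$ on the left-hand side; I would choose the position of this $\|\cdot\|_1$-norm to coincide with the vertex carrying $F$, exactly as in the proof of \cite[\mbox{Lemma 4.4}]{K_BCS_II}. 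The single-node case \eqref{eq_general_estimation_divided_single} and its coupled counterpart \eqref{eq_general_estimation_divided_single_coupled} are direct consequences of the definition $E^{(1)}(\psi) = \sum_{p,q}\frah^{p+q}\sum_{\bX,\bY}G_{p,q}(\bX,\bY)(\psi^1+\psi)_{\bX}(\psi^1+\psi)_{\bY}$ combined with binomial expansion and the $\|\cdot\|_{1,\infty}$-, $[\cdot,\cdot]_{1,\infty}$-estimates on summing over free indices.

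The main obstacle is the combinatorial bookkeeping: keeping track which of the $n$ vertices corresponds to which group $\{s_j\}_{j=2}^{m+1}$ or $\{t_k\}_{k=2}^{n-m}$, and correctly identifying when the $F$-vertex belongs to either group in the $\widehat{E}^{(n)}$ case so as to produce the indicator $1_{n\in\{s_j\}}$ or $1_{n\in\{t_k\}}$ in the right place. The factorial prefactors $(1_{m\ne 0}(m-1)! + 1_{m=0})(1_{m\ne n-1}(n-m-2)! + 1_{m=n-1})$ come from the count of labeled trees on $m+1$ and $n-m$ vertices (with convention that trees on one vertex contribute $1$), and are inherited from the corresponding counts inside the proofs of \cite[\mbox{Lemma 3.3}]{K_BCS_I}, \cite[\mbox{Lemma 4.4}]{K_BCS_II}. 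Once this bookkeeping is organized, all the claimed inequalities follow by routine substitution.
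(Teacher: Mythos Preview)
Your overall strategy matches the paper's: quote the explicit kernels and raw tree bounds from \cite[\mbox{Lemma 3.3}]{K_BCS_I} and \cite[\mbox{Lemma 4.4}]{K_BCS_II}, then use time-independence of $\cC$ together with \eqref{eq_time_vanishing_property} to manufacture the extra indicator constraints, and finish by invoking parity of $a,b$. However, your attribution of the lower-bound constraints is garbled in a way that would not survive a detailed write-up. The vanishing property applied to $G_{p_1,q_1}$ only yields $u_1\ge 1$ and $v_1\ge 1$, not $\ge 2$: if $u_1=0$ all $p_1$ time variables on the $\bX$-side are freely summed (weight $1$ is translation-invariant) and \eqref{eq_time_vanishing_property} kills the term, but $u_1=1$ does not trigger it because only one side of the bi-kernel is involved. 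Conversely, the constraint $u_j\ge 2$ for $j\ge 2$ does \emph{not} come from the ``4-leg lower bound'' $p_j\ge 4$ on $G^{s_j}$ --- that bounds the total degree $p_j$, not the number $u_j$ of external legs --- it comes from the vanishing property of the bi-anti-symmetric kernels inside $G^{s_j}$, by the same mechanism used in the proof of Lemma~\ref{lem_general_estimation_direct}. Summing then gives $a\ge 1+2m$, and parity of $a$ lifts this to $a\ge 2m+2$. In the $\widehat E^{(n)}$ case the $F$-vertex carries no vanishing constraint (since $F$ is not assumed to satisfy \eqref{eq_time_vanishing_property}), so its $u_j\ge 0$, whence $a\ge 2m-1$ and by parity $a\ge 2m=2m+2-2$. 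Your final indicators are correct, but the mechanism producing them needs this correction.
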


\begin{remark}
In fact the inequalities \eqref{eq_general_estimation_divided_single},
 \eqref{eq_general_estimation_divided_single_coupled} are same as 
``(4.16)'', ``(4.18)'' of \cite[\mbox{Lemma 4.4}]{K_BCS_II}
 respectively. However, we present them for convenience in the subsequent
 application.
\end{remark}

\begin{proof}
The existence of the bi-anti-symmetric kernels satisfying the claimed
 properties is essentially implied by \cite[\mbox{Lemma
 3.3}]{K_BCS_I}. In fact the kernels are explicitly given in
 \cite[\mbox{(4.15)}]{K_BCS_II} in a more general setting. 
To make clear, let us present the kernel
 $\widehat{E}_{a,b}^{(n)}:I^a\times I^b\to \C$ for $n\in \N$, $a,b\in
 \{2,4,\cdots,N\}$. For $\bX=(X_1,\cdots,X_a)\in I^a$, $\bY=(Y_1,\cdots,Y_b)\in I^b$
\begin{align*}
&\widehat{E}_{a,b}^{(n)}(\bX,\bY)\\
&=\sum_{p_1,q_1=2}^N1_{p_1,q_1\in 2\N}
\sum_{u_1=0}^{p_1}(1_{m=0}+1_{m\neq
 0}1_{u_1\le p_1-1})\left(\begin{array}{c} p_1\\ u_1\end{array}\right)\\
&\quad\cdot \sum_{v_1=0}^{q_1}(1_{m=n-1}+1_{m\neq n-1}1_{v_1\le
	      q_1-1})\left(\begin{array}{c} q_1\\ v_1\end{array}\right)\\
&\quad\cdot 
\left(\frac{1}{h}\right)^{p_1+q_1-u_1-v_1}
\sum_{\bW_1\in I^{p_1-u_1}}
\sum_{\bZ_1\in I^{q_1-v_1}}
G_{p_1,q_1}((\bW_1,\bX_1'),(\bZ_1,\bY_1'))\\
&\quad\cdot \prod_{j=2}^{m+1}\Bigg(1_{s_j\neq
 n}\sum_{p_j=4}^N
\sum_{u_j=0}^{p_j-1}
\left(\begin{array}{c} p_j\\ u_j\end{array}\right)
\left(\frac{1}{h}\right)^{p_j-u_j}
\sum_{\bW_j\in
 I^{p_j-u_j}}G_{p_j}^{s_j}(\bW_j,\bX_j')\\
&\qquad + 1_{s_j=
 n}\sum_{p_j=2}^N\sum_{u_j=0}^{p_j-1}\left(\begin{array}{c} p_j\\ u_j\end{array}\right)\left(\frac{1}{h}\right)^{p_j-u_j}
\sum_{\bW_j\in
 I^{p_j-u_j}}F_{p_j}(\bW_j,\bX_j')\Bigg)\\
&\quad\cdot \prod_{k=2}^{n-m}\Bigg(1_{t_k\neq
 n}\sum_{q_k=4}^N
\sum_{v_k=0}^{q_k-1}
\left(\begin{array}{c} q_k\\ v_k\end{array}\right)
\left(\frac{1}{h}\right)^{q_k-v_k}
\sum_{\bZ_k\in
 I^{q_k-v_k}}G_{q_k}^{t_k}(\bZ_k,\bY_k')\\
&\qquad + 1_{t_k=
 n}\sum_{q_k=2}^N\sum_{v_k=0}^{q_k-1}\left(\begin{array}{c} q_k\\ v_k\end{array}\right)\left(\frac{1}{h}\right)^{q_k-v_k}
\sum_{\bZ_k\in
 I^{q_k-v_k}}F_{q_k}(\bZ_k,\bY_k')\Bigg)\\
&\quad\cdot
 Tree(\{s_j\}_{j=1}^{m+1},\cC)\prod_{j=1}^{m+1}\psi_{\bW_j}^{s_j}
\Bigg|_{\psi^{s_j}=0\atop
 (\forall j\in \{1,\cdots,m+1\})}
Tree(\{t_k\}_{k=1}^{n-m},\cC)\prod_{k=1}^{n-m}\psi_{\bZ_k}^{t_k}
\Bigg|_{\psi^{t_k}=0\atop
 (\forall k\in \{1,\cdots,n-m\})}\\
&\quad \cdot (-1)^{\sum_{j=1}^mu_j\sum_{i=j+1}^{m+1}(p_i-u_i)+
\sum_{k=1}^{n-m-1}v_k\sum_{i=k+1}^{n-m}(q_i-v_i)}1_{\sum_{j=1}^{m+1}u_j=a}1_{\sum_{k=1}^{n-m}v_k=b}
\\
&\quad\cdot \frac{1}{a!b!}\sum_{\s\in \S_a\atop \tau\in \S_b}\sgn(\s)\sgn(\tau)
1_{(\bX_1',\cdots,\bX_{m+1}')=(X_{\s(1)},\cdots,X_{\s(a)})}
1_{(\bY_1',\cdots,\bY_{n-m}')=(Y_{\tau(1)},\cdots,Y_{\tau(b)})}.
\end{align*}
By considering \eqref{eq_covariance_time_independence},
 \eqref{eq_time_vanishing_property}
we can substitute the constraints
\begin{align*}
&u_1\ge 1,\quad u_j\ge 21_{s_j\neq n},\quad (\forall j\in
 \{2,\cdots,m+1\}),\\
&v_1\ge 1,\quad v_k\ge 21_{t_k\neq n},\quad (\forall k\in
 \{2,\cdots,n-m\}).
\end{align*}
Moreover, by using the fact that $a$, $b$ must be even
we have that 
\begin{align*}
\widehat{E}_{a,b}^{(n)}(\bX,\bY)=
1_{a\ge 2m+2-21_{n\in \{s_j\}_{j=2}^{m+1}}}
1_{b\ge  2(n-m)-21_{n\in \{t_k\}_{k=2}^{n-m}}}
\widehat{E}_{a,b}^{(n)}(\bX,\bY),
\end{align*}
and thus
\begin{align*}
&\|\widehat{E}_{a,b}^{(n)}\|_1=1_{a\ge 2m+2-21_{n\in \{s_j\}_{j=2}^{m+1}}}
1_{b\ge  2(n-m)-21_{n\in \{t_k\}_{k=2}^{n-m}}}
\|\widehat{E}_{a,b}^{(n)}\|_1,\\
&[\widehat{E}_{a,b}^{(n)},g]_1=1_{a\ge 2m+2-21_{n\in \{s_j\}_{j=2}^{m+1}}}
1_{b\ge  2(n-m)-21_{n\in \{t_k\}_{k=2}^{n-m}}}
[\widehat{E}_{a,b}^{(n)},g]_1
\end{align*}
for any anti-symmetric function $g:I^2\to \C$. Then we can apply
 ``(3.37)'' of \cite[\mbox{Lemma 3.3}]{K_BCS_I} (or ``(4.21)'' of
 \cite[\mbox{Lemma 4.4}]{K_BCS_II}), ``(4.23)'' of
 \cite[\mbox{Lemma 4.4}]{K_BCS_II}  to obtain
 \eqref{eq_general_estimation_divided_one},
 \eqref{eq_general_estimation_divided_coupled_one} respectively. 
By the same consideration based on
 \eqref{eq_covariance_time_independence},
 \eqref{eq_time_vanishing_property} and the parity of $a$, $b$ we
 see that 
\begin{align*}
&\|E_{a,b}^{(n)}\|_{1,\infty}=1_{a\ge 2m+2}
1_{b\ge  2(n-m)}
\|E_{a,b}^{(n)}\|_{1,\infty},\\
&[E_{a,b}^{(n)},g]_{1,\infty}=1_{a\ge 2m+2}
1_{b\ge  2(n-m)}
[E_{a,b}^{(n)},g]_{1,\infty}
\end{align*}
for any anti-symmetric function $g:I^2\to \C$. Then combination with
 ``(3.36)'' of \cite[\mbox{Lemma 3.3}]{K_BCS_I} 
(or ``(4.20)'' of \cite[\mbox{Lemma 4.4}]{K_BCS_II}), 
``(4.22)'' of \cite[\mbox{Lemma 4.4}]{K_BCS_II} leads to 
\eqref{eq_general_estimation_divided},
 \eqref{eq_general_estimation_divided_coupled} respectively.
\end{proof}
 
\subsection{Double-scale
  integration}\label{subsec_double_scale_integration}

In this subsection we construct a double-scale integration scheme based
on some general properties of a couple of covariances.  
With $c_0\in \R_{\ge 1}$, $A$, $B\in \R_{>0}$ the
covariances $\cC_0$, $\cC_1:I_0^2\to \C$ are assumed to satisfy the following
conditions.
\begin{itemize}
\item 
\begin{align}
\cC_0(\orho\rho\bx s,\oeta\eta \by t)=\cC_0(\orho\rho\bx 0, \oeta\eta
 \by 0),\quad (\forall (\orho,\rho,\bx,s),\ (\oeta,\eta,\by,t)\in I_0).
\label{eq_first_covariance_time_independence}
\end{align}
\item
\begin{align}
\cC_1(\cR_{\beta}(\bX+s))=\cC_1(\bX),\quad \left(\forall \bX\in I_0^2,\
 s\in \frac{1}{h}\Z\right).
\label{eq_covariance_time_translation}
\end{align}
\item 
\begin{align}
&|\det(\<\bu_i,\bw_j\>_{\C^m}\cC_a(X_i,Y_j))_{1\le i,j\le n}|\le
 (c_0(1_{a=0}A+1_{a=1}))^n,\label{eq_double_determinant_bound}\\
&(\forall m,n\in \N,\ \bu_i,\bw_i\in \C^m\text{ with
 }\|\bu_i\|_{\C^m},\|\bw_i\|_{\C^m}\le 1,\ X_i,Y_i\in I_0\
 (i=1,\cdots,n),\notag\\
&\quad a\in \{0,1\}).\notag
\end{align}
\item 
\begin{align}
\|\tilde{\cC}_a\|_{1,\infty}\le c_0B,\quad (\forall a\in
 \{0,1\}).\label{eq_double_norm_bound}
\end{align}
\item 
\begin{align}
\|\tilde{\cC}_a\|_{1,\infty}'\le c_0A,\quad (\forall a\in
 \{0,1\}).\label{eq_double_norm_bound_dash}
\end{align}
\end{itemize}

\noindent
We should think of them as generalizations of the covariances
$C_0$, $C_1$ introduced in Subsection \ref{subsec_covariances}. It
is efficient to define the covariances by abstracting 
the dependency on the
physical parameters at this stage. On the contrary, we explicitly 
define the input Grassmann polynomials to the double-scale
integration process as follows.
\begin{align*}
&V^{0-1,0}(u)(\psi):=\left(\frac{1}{h}\right)^2\sum_{\bX\in
 I^2}V^{0-1,0}_2(u)(\bX)\psi_{\bX},\\
&V^{0-2,0}(u)(\psi):=\left(\frac{1}{h}\right)^4\sum_{\bX,\bY\in
 I^2}V^{0-2,0}_{2,2}(u)(\bX,\bY)\psi_{\bX}\psi_{\bY},
\end{align*}
where the anti-symmetric kernel $V^{0-1,0}_2(u):I^2\to \C$ and the
bi-anti-symmetric kernel $V^{0-2,0}_{2,2}(u):I^2\times I^2\to \C$ are
defined by
\begin{align*}
&V_2^{0-1,0}(u)(\orho_1\rho_1\bx_1s_1\xi_1,\orho_2\rho_2\bx_2s_2\xi_2)\\
&:=-\frac{1}{2}uL^{-d}h1_{(\orho_1,\rho_1,\bx_1,s_1)=(\orho_2,\rho_2,\bx_2,s_2)}1_{\orho_1=1}(1_{(\xi_1,\xi_2)=(1,-1)}-
 1_{(\xi_1,\xi_2)=(-1,1)}),\\
&V_{2,2}^{0-2,0}(u)(\orho_1\rho_1\bx_1s_1\xi_1,\orho_2\rho_2\bx_2s_2\xi_2,
                    \oeta_1\eta_1\by_1t_1\zeta_1,\oeta_2\eta_2\by_2t_2\zeta_2)\\
&:=-\frac{1}{4}uL^{-d}h^2(h1_{s_1=t_1}-\beta^{-1})1_{(\rho_1,\bx_1,s_1,\eta_1,\by_1,t_1)=
(\rho_2,\bx_2,s_2,\eta_2,\by_2,t_2)}\\
&\qquad\cdot \sum_{\s,\tau\in
 \S_2}\sgn(\s)\sgn(\tau)1_{(\orho_{\s(1)},\orho_{\s(2)},\oeta_{\tau(1)},
\oeta_{\tau(2)})=(1,2,2,1)}
1_{(\xi_{\s(1)},\xi_{\s(2)},\zeta_{\tau(1)},
\zeta_{\tau(2)})=(1,-1,1,-1)}.
\end{align*}
Here $u$ is a complex parameter and 
should be considered as 
an extension of the coupling constant $U$. 
Though the definitions seem complicated, they can be
simply rewritten as follows. 
\begin{align}
V^{0-1,0}(u)(\psi)&=-\frac{u}{L^dh}\sum_{(\rho,\bx)\in \cB\times
 \G}\sum_{s\in [0,\beta)_h}\opsi_{1\rho\bx s}\psi_{1\rho\bx s},\notag\\
V^{0-2,0}(u)(\psi)&=-\frac{u}{L^dh}\sum_{(\rho,\bx),(\eta,\by)\in \cB\times
 \G}\sum_{s\in [0,\beta)_h}\opsi_{1\rho\bx s}\psi_{2\rho\bx s}
\opsi_{2\eta\by s}\psi_{1\eta\by s}\label{eq_0_1_2_reduction}\\
&\quad +
\frac{u}{\beta L^dh^2}\sum_{(\rho,\bx),(\eta,\by)\in \cB\times
 \G}\sum_{s,t\in [0,\beta)_h}\opsi_{1\rho\bx s}\psi_{2\rho\bx s}
\opsi_{2\eta\by t}\psi_{1\eta\by t}.\notag
\end{align}

\noindent
We adopt \cite[\mbox{Lemma 3.6}]{K_BCS_II} as the formulation of our
system. We can see from \eqref{eq_0_1_2_reduction} and \cite[\mbox{Lemma
3.6}]{K_BCS_II} that the Grassmann polynomial
$V^{0-1,0}(U)(\psi)+V^{0-2,0}(U)(\psi)$ appears in the Grassmann
integral formulation as the effective interaction. Our first goal in
this subsection is to construct an analytic continuation of the
$\bigwedge_{even}\cV$-valued function 
\begin{align*}
u\mapsto \log\Bigg(
\int e^{V^{0-1,0}(u)(\psi^0+\psi)+V^{0-2,0}(u)(\psi^0+\psi)}d\mu_{\cC_0}(\psi^0)
\Bigg) 
\end{align*}
in a neighborhood of the origin. Let us remark that we integrate
with the time-independent covariance $\cC_0$ as the first step, while
the integration with the time-independent covariance was performed in
the last step of the multi-scale integrations in \cite{K_BCS_I},
\cite{K_BCS_II}. The determinant bound on $\cC_0$ is the main
problematic contribution from the sliced covariances, while the 
$\|\cdot\|_{1,\infty}$-norm
bound on the time-independent covariance was so in \cite{K_BCS_I},
\cite{K_BCS_II}. We integrate with the covariance $\cC_0$ first in order
to remove the main burden on the possible magnitude of
$u$. The output of the integration with $\cC_0$ will be integrated with
$\cC_1$ in the second step. 

It will help us to organize our analysis if we prepare some sets of
$\bigwedge_{even}\cV$-valued functions in advance. For $r\in \R_{>0}$,
set $D(r):=\{z\in \C\ |\ |z|<r\}$. In the following $\alpha$ denotes a
parameter belonging to $\R_{\ge 1}$. Admitting the convention concerning
choice of a norm of $\bigwedge_{even}\cV$ explained in the beginning of 
\cite[\mbox{Subsection 4.4}]{K_BCS_II}, for any domain $D$ of $\C^n$ we
let $C(\overline{D},\bigwedge_{even}\cV)$,
$C^{\o}(D,\bigwedge_{even}\cV)$ denote the set of continuous
maps from $\overline{D}$ to $\bigwedge_{even}\cV$, 
the set of analytic
maps from $D$ to $\bigwedge_{even}\cV$
respectively. Let us also refer to the beginning of
\cite[\mbox{Subsection 4.4}]{K_BCS_II} for the definitions of the norm 
$\|\cdot\|_{1,\infty,r}$ of $C(\overline{D(r)},\C)$ and 
$C(\overline{D(r)},\Map(I^m,\C))$ and the measurement
$[\cdot,\cdot]_{1,\infty,r}$ for a coupling between a function
belonging to $C(\overline{D(r)},\Map(I^m,\C))$ and an anti-symmetric
function on $I^2$. For $r\in \R_{>0}$ we define the subsets $\cQ(r)$,
$\cR(r)$ of $\Map(\overline{D(r)},\bigwedge_{even}\cV)$ as follows.

$f\in \cQ(r)$ if and only if 
\begin{itemize}
\item 
$$
f\in C\Bigg(\overline{D(r)}, \bigwedge_{even}\cV\Bigg)\cap C^{\o}\Bigg(D(r),\bigwedge_{even}\cV\Bigg).
$$
\item 
For any $u\in \overline{D(r)}$ the anti-symmetric kernels $f(u)_m:I^m\to
      \C$ $(m=2,4,\cdots,N)$ satisfy
      \eqref{eq_time_translation_invariance} and 
\begin{align}
&\frac{h}{N}\|f_0\|_{1,\infty,r}\le \alpha^{-1}AB^{-1}L^{-d},\label{eq_grassmann_set_inverse_volume}\\
&\sum_{m=2}^Nc_0^{\frac{m}{2}}\alpha^m\|f_m\|_{1,\infty,r}\le
 (A+1)B^{-1}L^{-d}.\notag
\end{align}
\end{itemize}

$f\in \cR(r)$ if and only if
\begin{itemize}
\item 
$$
f\in C\Bigg(\overline{D(r)}, \bigwedge_{even}\cV\Bigg)\cap C^{\o}\Bigg(D(r),\bigwedge_{even}\cV\Bigg).
$$
\item There exist $f_{p,q}\in C(\overline{D(r)},\Map(I^p\times I^q,\C))$
      $(p,q\in \{2,4,\cdots,N\})$ such that for any $u\in
      \overline{D(r)}$, $p,q\in \{2,4,\cdots,N\}$ $f_{p,q}(u):I^p\times I^q\to
      \C$ is bi-anti-symmetric, satisfies
      \eqref{eq_time_translation_invariance},
      \eqref{eq_time_vanishing_property}, 
\begin{align*}
&f(u)(\psi)=\sum_{p,q=2}^N1_{p,q\in
 2\N}\left(\frac{1}{h}\right)^{p+q}\sum_{\bX\in I^p\atop \bY\in
 I^q}f_{p,q}(u)(\bX,\bY)\psi_{\bX}\psi_{\bY}
\end{align*}
and 
\begin{align}
&\sum_{p,q=2}^Nc_0^{\frac{p+q}{2}}\alpha^{p+q}\|f_{p,q}\|_{1,\infty,r}\le
 B^{-1},\label{eq_grassmann_set_divided}\\
&\sum_{p,q=2}^Nc_0^{\frac{p+q}{2}}\alpha^{p+q}[f_{p,q},g]_{1,\infty,r}\le
 B^{-1}(\|g\|_{1,\infty}'+AB^{-1}\|g\|_{1,\infty})L^{-d}
\label{eq_grassmann_set_divided_coupled}
\end{align}
for any anti-symmetric function $g:I^2\to \C$.
\end{itemize}

Next we arrange the Grassmann polynomials 
\begin{align}
\frac{1}{n!}\left(\frac{d}{dz}\right)^n\log\Bigg(
\int e^{zV^{0-1,0}(u)(\psi^0+\psi)+zV^{0-2,0}(u)(\psi^0+\psi)}d\mu_{\cC_0}(\psi^0)
\Bigg)\Bigg|_{z=0}\label{eq_taylor_one_term}
\end{align}
$(n\in \N)$ in the same way as in \cite[\mbox{Subsection 3.4}]{K_BCS_I}.
One apparent difference is that here we have the covariance $\cC_0$ rather than $\cC_1$. 
The difference in the index of the covariances results in the difference
in the second superscript of the Grassmann polynomials. Let us remark
that here the input polynomials have 0 and the output polynomials have 1
in the second superscript. In 
\cite[\mbox{Subsection 3.4}]{K_BCS_I} the Grassmann polynomials had the
opposite numbers in the second superscript. For $n\in \N$ we define 
$V^{0-1-1,1,(n)}$, $V^{0-1-2,1,(n)}$, $V^{0-2,1,(n)}\in
\Map(\C,\bigwedge_{even}\cV)$ as follows.
\begin{align*}
&V^{0-1-1,1,(n)}(u)(\psi)\\
&:=\frac{1}{n!}Tree(\{1,\cdots,n\},\cC_{0})
\prod_{j=1}^n\Bigg(\sum_{b_j\in\{1,2\}}V^{0-b_j,0}(u)(\psi^j+\psi)\Bigg)\Bigg|_{\psi^{j}=0\atop(\forall
 j\in\{1,\cdots,n\})}1_{\exists
 j(b_j=1)},\notag\\
&V^{0-1-2,1,(n)}(u)(\psi)\notag\\
&:=\frah^{4}\sum_{\bX,\bY\in
 I^2}V_{2,2}^{0-2,0}(u)(\bX,\bY)
 \frac{1}{n!}Tree(\{1,\cdots,n+1\},\cC_{0})\notag\\
&\quad\cdot (\psi^1+\psi)_{\bX}(\psi^2+\psi)_{\bY}
\prod_{j=3}^{n+1}V^{0-2,0}(u)(\psi^j+\psi)\Bigg|_{\psi^{j}=0\atop(\forall
 j\in\{1,\cdots,n+1\})},\notag\\
&V^{0-2,1,(n)}(u)(\psi)\notag\\
&:=\frac{1}{n!}\sum_{m=0}^{n-1}\sum_{(\{s_j\}_{j=1}^{m+1},
 \{t_k\}_{k=1}^{n-m})\in S(n,m)}
\frah^{4}\sum_{\bX,\bY\in
 I^2}V_{2,2}^{0-2,0}(u)(\bX,\bY)\notag\\
&\quad\cdot Tree(\{s_j\}_{j=1}^{m+1},\cC_{0})(\psi^{s_1}+\psi)_{\bX}\prod_{j=2}^{m+1}V^{0-2,0}(u)(\psi^{s_j}+\psi)
\Bigg|_{\psi^{s_j}=0\atop(\forall
 j\in\{1,\cdots,m+1\})}\notag\\
&\quad\cdot Tree(\{t_k\}_{k=1}^{n-m},\cC_{0})(\psi^{t_1}+\psi)_{\bY}\prod_{k=2}^{n-m}V^{0-2,0}(u)(\psi^{t_k}+\psi)
\Bigg|_{\psi^{t_k}=0\atop(\forall
 k\in\{1,\cdots,n-m\})},\notag
\end{align*}
where 
\begin{align*}
S(n,m):=\left\{(\{s_j\}_{j=1}^{m+1},
 \{t_k\}_{k=1}^{n-m})\ \Bigg|\
 \begin{array}{l}1=s_1<s_2<\cdots<s_{m+1}\le n,\\
                 1=t_1<t_2<\cdots<t_{n-m}\le n,\\
                \{s_j\}_{j=2}^{m+1}\cup
		 \{t_k\}_{k=2}^{n-m}=\{2,3,\cdots,n\},\\  
               \{s_j\}_{j=2}^{m+1}\cap \{t_k\}_{k=2}^{n-m}=\emptyset.
\end{array}
\right\}.
\end{align*}
The following equality is structurally same as
\cite[\mbox{(3.56)}]{K_BCS_I}, \cite[\mbox{(4.41)}]{K_BCS_II} and
originates from \cite[\mbox{(3.38)}]{M}, \cite[\mbox{(IV.15)}]{L}.
\begin{align}
&(\text{The Grassmann polynomial
 }\eqref{eq_taylor_one_term})\label{eq_decomposition_technique}\\
&= V^{0-1-1,1,(n)}(u)(\psi)+
 V^{0-1-2,1,(n)}(u)(\psi)+
V^{0-2,1,(n)}(u)(\psi).\notag
\end{align}
Moreover, we set 
\begin{align*}
&V^{0-1-j,1}(u)(\psi):=\sum_{n=1}^{\infty}V^{0-1-j,1,(n)}(u)(\psi),\quad
 (j=1,2),\\
&V^{0-1,1}(u)(\psi):=\sum_{j=1}^2V^{0-1-j,1}(u)(\psi),\quad
V^{0-2,1}(u)(\psi):=\sum_{n=1}^{\infty}V^{0-2,1,(n)}(u)(\psi),
\end{align*} 
if they converge in $\bigwedge_{even}\cV$. Bearing in mind that the
constant $A$ will
be $\beta$-dependent in practice, we want to prove the analyticity of
$u\mapsto V^{0-1,1}(u)$, $u\mapsto V^{0-2,1}(u)$ in an A-independent neighborhood of
the origin. The machinery which
essentially enables us to achieve this goal is the general estimations
summarized in Subsection \ref{subsec_general_estimation}.  
They are applicable in the proof below, mainly because
$V_2^{0-1,0}(u):I^2\to\C$, $V_{2,2}^{0-2,0}(u):I^2\times I^2\to \C$
satisfy \eqref{eq_time_translation_invariance},
$V_{2,2}^{0-2,0}(u)(\cdot)$ satisfies \eqref{eq_time_vanishing_property}
and the covariance $\cC_0$ satisfies
\eqref{eq_first_covariance_time_independence}. 

\begin{lemma}\label{lem_continuation_natural_parameter}
There exists $c\in \R_{>0}$ independent of any parameter such that if
 $\alpha \ge c$, 
\begin{align*}
V^{0-1,1}\in \cQ(c_0^{-2}\alpha^{-5}b^{-1}B^{-1}),\quad 
V^{0-2,1}\in \cR(c_0^{-2}\alpha^{-5}b^{-1}B^{-1}).
\end{align*}
\end{lemma}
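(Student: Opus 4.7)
The plan is to apply the general estimates of Subsection \ref{subsec_general_estimation} to each summand of the decomposition \eqref{eq_decomposition_technique}, with $\cC=\cC_0$, $F^j=V^{0-1,0}(u)$ and $G^k=V^{0-2,0}(u)$. First I record elementary bounds on the input kernels. Directly from the formulas one obtains $\|V_2^{0-1,0}(u)\|_{1,\infty,r}\le c\,b\,r\,L^{-d}$ and $\|V_{2,2}^{0-2,0}(u)\|_{1,\infty,r}\le c\,b\,r\,L^{-d}$. The identity $\sum_{t\in[0,\beta)_h}(h\,1_{s=t}-\beta^{-1})=0$ shows that $V_{2,2}^{0-2,0}(u)$ satisfies the vanishing property \eqref{eq_time_vanishing_property}, so that $V^{0-2,0}(u)$ may legitimately play the role of $G$ in the general lemmas. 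The same structural feature provides the coupled estimate
\begin{align*}
[V_{2,2}^{0-2,0}(u),g]_{1,\infty,r}\le c\,b\,r\,L^{-d}(\|g\|_{1,\infty}'+AB^{-1}\|g\|_{1,\infty})
\end{align*}
for every anti-symmetric $g:I^2\to\C$, which is the ingredient needed to verify \eqref{eq_grassmann_set_divided_coupled}.

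Next I insert $D=c_0 A$ and $\|\tilde{\cC}_0\|_{1,\infty}\le c_0 B$ into Lemma \ref{lem_general_estimation_direct} for $V^{0-1-1,1,(n)}$, into Lemma \ref{lem_general_estimation_double} for $V^{0-1-2,1,(n)}$, and into Lemma \ref{lem_general_estimation_divided} for $V^{0-2,1,(n)}$; the last lemma also produces the bi-anti-symmetric representation that $\cR(r)$-membership requires. Each quadratic vertex contributes $2^{3\cdot 2}D\cdot\|V_2^{0-1,0}(u)\|_{1,\infty,r}$ and each quartic vertex contributes $2^{3\cdot 4}D^2\cdot\|V_{2,2}^{0-2,0}(u)\|_{1,\infty,r}$; combined with the propagator product $\|\tilde{\cC}_0\|_{1,\infty}^{n-1}\le(c_0 B)^{n-1}$, the prefactor $D^{-n+1-m/2}$, the combinatorial factor $(n-2)!$ or $(n-1)!$ produced by the tree expansion, and the $1/n!$ built into each $V^{0-\bullet,1,(n)}$, the $n$-th contribution to each anti-symmetric kernel of degree $m$ (resp.\ bi-anti-symmetric kernel of degrees $(p,q)$) is bounded by a geometric-series ratio of the form $c\,c_0^{2}\,\alpha^{5}\,b\,B\,r$ times a single-vertex bound of order $A B^{-1} L^{-d}$ (resp.\ $B^{-1}$). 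The extra weight $\alpha^m$ or $\alpha^{p+q}$ present in the definitions of $\cQ(r)$, $\cR(r)$ is absorbed at each vertex by the inverse power supplied by the sums $\sum_{p}2^{3p}D^{p/2}\|\cdot\|_{1,\infty,r}$ for $\alpha$ large enough.

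Summing over $n$, the choice $r=c_0^{-2}\alpha^{-5}b^{-1}B^{-1}$ together with $\alpha$ larger than a universal constant makes the geometric ratio strictly less than $1/2$, so the series defining $V^{0-1-1,1}$, $V^{0-1-2,1}$ and $V^{0-2,1}$ all converge absolutely on $\overline{D(r)}$ to elements of $C(\overline{D(r)},\bigwedge_{even}\cV)\cap C^{\omega}(D(r),\bigwedge_{even}\cV)$ satisfying the quantitative inequalities \eqref{eq_grassmann_set_inverse_volume}, \eqref{eq_grassmann_set_divided} and \eqref{eq_grassmann_set_divided_coupled}. The constant part $V_0^{0-1,1}$ picks up an additional factor $N/h=4b\beta L^d$ from Lemma \ref{lem_general_estimation_direct}, which after division by $N/h$ in \eqref{eq_grassmann_set_inverse_volume} matches precisely the $\alpha^{-1}AB^{-1}L^{-d}$ target. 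Standard arguments \emph{à la} \cite[\mbox{Lemma 3.1}]{K_BCS_I} show that each anti-symmetric kernel satisfies \eqref{eq_time_translation_invariance}, and Lemma \ref{lem_general_estimation_divided} shows that the bi-anti-symmetric kernels of $V^{0-2,1}$ inherit \eqref{eq_time_vanishing_property}.

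The main obstacle is that the determinant bound $D=c_0 A$ grows with $A$, which in the intended application blows up as $\beta^{-1}$, so the factor $D^{p_j/2}$ at each vertex could naively force $r$ to shrink with $A$. The balancing mechanism, which is the technical heart of the present double-scale scheme, combines two facts: the input kernels $V^{0-1,0}(u)$ and $V^{0-2,0}(u)$ carry only an overall $L^{-d}$ (and not $A^{-1}$), while the vanishing property \eqref{eq_time_vanishing_property} of $V_{2,2}^{0-2,0}(u)$, enforced by the subtraction $h\,1_{s_1=t_1}-\beta^{-1}$, so tightly restricts the external leg counts in Lemmas \ref{lem_general_estimation_direct}--\ref{lem_general_estimation_divided} that the net $A$-power after cancellation with $D^{-n+1-m/2}$ is at most one per diagram. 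This single surviving power of $A$ matches the $A B^{-1} L^{-d}$ tolerance built into the definition of $\cQ(r)$, while $V^{0-2,1}$ is entirely $A$-free except through the coupled measure, so the radius $r$ of analyticity depends only on $b$, $B$, $c_0$ and $\alpha$, exactly as asserted.
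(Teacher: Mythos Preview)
Your overall strategy is exactly the paper's: apply Lemmas \ref{lem_general_estimation_direct}--\ref{lem_general_estimation_divided} with $\cC=\cC_0$, $F^j=V^{0-1,0}$, $G^k=V^{0-2,0}$, and use the vanishing property of $V_{2,2}^{0-2,0}$ to obtain the lower constraints on external legs that cancel the $A$-powers. However, one of your input bounds is false. The kernel $V_{2,2}^{0-2,0}(u)$ carries an explicit $L^{-d}$, but in the $\|\cdot\|_{1,\infty}$ norm the free spatial sum over $\by\in\G$ restores a factor $L^d$, so $\|V_{2,2}^{0-2,0}\|_{1,\infty,r}\le br$ with \emph{no} $L^{-d}$ (compare \eqref{eq_norm_upper_double_two}). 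This matters for your accounting: each quartic vertex does not contribute an extra $L^{-d}$. The $L^{-d}$ required by \eqref{eq_grassmann_set_inverse_volume} must instead come from the single quadratic vertex $V^{0-1,0}$ in $V^{0-1-1,1}$ and from the coupled estimate $[V_{2,2}^{0-2,0},\tilde{\cC}_0]_{1,\infty,r}\le 2rL^{-d}\|\tilde{\cC}_0\|_{1,\infty}'$ in $V^{0-1-2,1}$; for $V^{0-2,1}\in\cR(r)$ no $L^{-d}$ is needed at all. Your stated coupled bound on $V_{2,2}^{0-2,0}$ is also off: the subtraction $h1_{s_1=t_1}-\beta^{-1}$ produces $\beta^{-1}\|g\|_{1,\infty}$, not $AB^{-1}\|g\|_{1,\infty}$; the $A$ enters later through $\|\tilde{\cC}_0\|_{1,\infty}'\le c_0A$ in the cross term of \eqref{eq_general_estimation_divided_coupled}.

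A related imprecision: your ``geometric-series ratio'' $c\,c_0^{2}\alpha^{5}bBr$ evaluates to a fixed constant once you substitute $r=c_0^{-2}\alpha^{-5}b^{-1}B^{-1}$, so enlarging $\alpha$ would not make it small. What actually happens is that after weighting by $\alpha^{m}$ (or $\alpha^{p+q}$) and invoking the constraints $m\ge2(n-l)$, $a+b=2n+2$ supplied by the general lemmas, the effective per-vertex ratio becomes of order $2^{8}\alpha^{-3}$ or $2^{9}\alpha^{-3}$; it is this negative power of $\alpha$ that drives the convergence and supplies the $\alpha^{-1}$ slack needed for \eqref{eq_grassmann_set_inverse_volume} and \eqref{eq_grassmann_set_divided}. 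Your final paragraph identifies the cancellation mechanism correctly in spirit, but the bookkeeping in the middle paragraphs must be redone with the correct $\|V_{2,2}^{0-2,0}\|_{1,\infty,r}$ bound and an explicit tracking of the $\alpha$-power.
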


\begin{proof}
We set $r:=c_0^{-2}\alpha^{-5}b^{-1}B^{-1}$. Let us
 begin by listing necessary bounds on the input. It follows from the
 definitions that 
\begin{align}
&\|V_2^{0-1,0}\|_{1,\infty,r}\le r L^{-d},\label{eq_norm_upper_two}\\
&\|V_{2,2}^{0-2,0}\|_{1,\infty,r}\le b
 r,\label{eq_norm_upper_double_two}\\
&\|V_4^{0-2,0}\|_{1,\infty,r}\le \|V_{2,2}^{0-2,0}\|_{1,\infty,r}\le
 br,\label{eq_norm_upper_double_four}\\
&[V_{2,2}^{0-2,0},g]_{1,\infty,r}\le
 rL^{-d}(\|g\|_{1,\infty}'+\beta^{-1}\|g\|_{1,\infty})\le 2r
 L^{-d}\|g\|_{1,\infty}'.
\label{eq_norm_upper_double_two_couple}
\end{align}
 
First let us consider $V^{0-1-1,1,(n)}$.
By ``(3.14)'' of \cite[\mbox{Lemma 3.1}]{K_BCS_I} or ``(4.6)'' of 
\cite[\mbox{Lemma 4.1}]{K_BCS_II}, \eqref{eq_double_determinant_bound}
 and \eqref{eq_norm_upper_two}, for $m\in \{0,2,\cdots,N\}$ 
\begin{align*}
\|V_m^{0-1-1,1,(1)}\|_{1,\infty,r}&\le
 \left(\frac{N}{h}\right)^{1_{m=0}}(c_0A)^{\frac{2-m}{2}}rL^{-d}1_{m\le 2}\\
&\le \left(\frac{N}{h}\right)^{1_{m=0}}c_0^{-\frac{m}{2}}
A^{1-\frac{m}{2}}\alpha^{-5}B^{-1}L^{-d}1_{m\le 2},
\end{align*}
where we also used that $c_0^{-1}\le 1$. Moreover, by
 \eqref{eq_general_estimation_direct},
 \eqref{eq_double_determinant_bound}, \eqref{eq_double_norm_bound}, 
\eqref{eq_norm_upper_two} and \eqref{eq_norm_upper_double_four} for any
$n\in \N_{\ge 2}$, $m\in \{0,2,\cdots,N\}$
\begin{align*}
\|V_m^{0-1-1,1,(n)}\|_{1,\infty,r}\le
 &\left(\frac{N}{h}\right)^{1_{m=0}}\sum_{l=1}^n\left(
\begin{array}{c} n \\
l\end{array}\right)(c_0A)^{-n+1-\frac{m}{2}}2^{-2m}(c_0B)^{n-1}\\
&\cdot(2^6c_0ArL^{-d})^l 
(2^{12}c_0^2A^2br)^{n-l}1_{2(n-l)+2\ge m\ge 2(n-l)}.
\end{align*}
Here we remark that when $m=0$, only the term with $l=n$ remains
 in the right-hand side of the above inequality. It follows that
\begin{align}
&\|V_0^{0-1-1,1,(1)}\|_{1,\infty,r}\le
 \frac{N}{h}AB^{-1}L^{-d}\alpha^{-5},\label{eq_0_1_1_1_0}\\
&\|V_0^{0-1-1,1,(n)}\|_{1,\infty,r}\le
 \frac{N}{h}A^{-n+1}B^{n-1}(2^6c_0ArL^{-d})^n\le
 \frac{N}{h}AB^{-1}L^{-d}(2^6\alpha^{-5})^n,\label{eq_0_1_1_n_0}\\
&\sum_{m=2}^Nc_0^{\frac{m}{2}}\alpha^m\|V_m^{0-1-1,1,(1)}\|_{1,\infty,r}\le
c_0\alpha^2r L^{-d}\le 
 \alpha^{-3}B^{-1}L^{-d},\label{eq_0_1_1_1_m}\\
&\sum_{m=2}^Nc_0^{\frac{m}{2}}\alpha^m\|V_m^{0-1-1,1,(n)}\|_{1,\infty,r}\label{eq_0_1_1_n_m}\\
&\le 
c\sum_{l=1}^n\left(\begin{array}{c} n \\ l
		   \end{array}\right)A^{-n+1}B^{n-1}(2^6c_0Ar L^{-d})^l
(2^{12}c_0^2A^2br)^{n-l}2^{-4(n-l)}A^{-(n-l)}\alpha^{2(n-l)}\notag\\
&\quad\cdot (1+A^{-1}\alpha^2)\notag\\
&\le c AB^{-1}(1+A^{-1}\alpha^2)
\sum_{l=1}^n\left(\begin{array}{c} n \\ l
		   \end{array}\right)
(2^6c_0Br L^{-d})^l
(2^8c_0^2B\alpha^2br)^{n-l}\notag\\
&\le c A B^{-1}(1+A^{-1}\alpha^2)\sum_{l=1}^n\left(\begin{array}{c} n \\ l
		   \end{array}\right)(2^6\alpha^{-5}L^{-d})^l(2^{8}\alpha^{-3})^{n-l}\notag\\
&\le c B^{-1}(A+\alpha^2)L^{-d}(2^9\alpha^{-3})^n.\notag
\end{align}

Next let us study $V^{0-1-2,1,(n)}$. We can apply
 \eqref{eq_general_estimation_double_one_field}, 
 \eqref{eq_double_determinant_bound}, \eqref{eq_double_norm_bound_dash}, 
\eqref{eq_norm_upper_double_two_couple} to derive that for $m\in
 \{0,2,\cdots,N\}$ 
\begin{align*}
&\|V_m^{0-1-2,1,(1)}\|_{1,\infty,r}\le c(c_0A)^{-2}(c_0A)^{2}rL^{-d}c_0A
 1_{m=2}\le c c_0^{-1}\alpha^{-5}AB^{-1}L^{-d}1_{m=2}.
\end{align*}
For $n\in \N_{\ge 2}$ we use \eqref{eq_general_estimation_double}
 instead of \eqref{eq_general_estimation_double_one_field} and
 \eqref{eq_norm_upper_double_four} together with
 \eqref{eq_norm_upper_double_two_couple} to derive that
\begin{align*}
&\|V_m^{0-1-2,1,(n)}\|_{1,\infty,r}\\
&\le c(c_0A)^{-n-\frac{m}{2}}2^{-2m}
(c_0B)^{n-1}(c_0A)^2r L^{-d}c_0A(2^{12}(c_0A)^2br)^{n-1}1_{m=2n}\\
&\le 
c c_0^{-\frac{m}{2}}AB^{-1}L^{-d}(2^8\alpha^{-5})^n1_{m=2n}.
\end{align*}
Thus
\begin{align}
&\sum_{m=2}^Nc_0^{\frac{m}{2}}\alpha^m\|V_m^{0-1-2,1,(1)}\|_{1,\infty,r}\le
 c\alpha^{-3}AB^{-1}L^{-d},\label{eq_0_1_2_1_m}\\
&\sum_{m=2}^Nc_0^{\frac{m}{2}}\alpha^m\|V_m^{0-1-2,1,(n)}\|_{1,\infty,r}\le
 cAB^{-1}L^{-d}(2^{8}\alpha^{-3})^n.\label{eq_0_1_2_n_m}
\end{align}

Assume that $\alpha^3\ge 2^{10}$. Then by \eqref{eq_0_1_1_1_0},
 \eqref{eq_0_1_1_n_0}, \eqref{eq_0_1_1_1_m}, \eqref{eq_0_1_1_n_m},
\eqref{eq_0_1_2_1_m} and \eqref{eq_0_1_2_n_m}
\begin{align*}
&\frac{h}{N}\sum_{n=1}^{\infty}\|V_0^{0-1-1,1,(n)}\|_{1,\infty,r}\le c
 \alpha^{-5}AB^{-1}L^{-d},\\
&\sum_{m=2}^Nc_0^{\frac{m}{2}}\alpha^m\sum_{n=1}^{\infty}(\|V_m^{0-1-1,1,(n)}\|_{1,\infty,r}+
 \|V_m^{0-1-2,1,(n)}\|_{1,\infty,r})\le c\alpha^{-3}(A+1)B^{-1}L^{-d}.
\end{align*}
These uniform convergent properties imply the well-definedness of
 $V^{0-1,1}$ and the claimed regularity with $u$. It follows from the
 statements of \cite[\mbox{Lemma 3.1}]{K_BCS_I} (or \cite[\mbox{Lemma
 4.1}]{K_BCS_II}), Lemma \ref{lem_general_estimation_direct}, 
Lemma \ref{lem_general_estimation_double} that the kernels of
 $V^{0-1,1}$ satisfy \eqref{eq_time_translation_invariance}. Moreover,
 the above inequalities ensure that if $\alpha \ge c$, $V^{0-1,1}$
 satisfies \eqref{eq_grassmann_set_inverse_volume}. Therefore,
 $V^{0-1,1}\in\cQ(r)$ on the assumption $\alpha\ge c$.

Let us treat $V^{0-2,1,(n)}$. By Lemma
 \ref{lem_general_estimation_divided}
(or more originally by \cite[\mbox{Lemma 3.3}]{K_BCS_I}, \cite[\mbox{Lemma
 4.4}]{K_BCS_II}) there are bi-anti-symmetric functions
 $V_{a,b}^{0-2,1,(n)}(u):I^a\times I^b\to \C$ $(n\in \N,$ $a,b\in
 \{2,4,\cdots,N\}$, $u\in \C)$ satisfying
 \eqref{eq_time_translation_invariance},
 \eqref{eq_time_vanishing_property} such that 
\begin{align*}
&V^{0-2,1,(n)}(u)(\psi)=\sum_{a,b=2}^N1_{a,b\in
 2\N}\left(\frac{1}{h}\right)^{a+b}\sum_{\bX\in I^a\atop \bY\in
 I^b}V_{a,b}^{0-2,1,(n)}(u)(\bX,\bY)\psi_{\bX}\psi_{\bY},\\
&(\forall
 n\in \N,\ u\in\C).
\end{align*}
By \eqref{eq_general_estimation_divided_single} and
 \eqref{eq_norm_upper_double_two}, for $a,b\in \{2,4,\cdots,N\}$ 
\begin{align*}
\|V_{a,b}^{0-2,1,(1)}\|_{1,\infty,r}\le
 \|V_{a,b}^{0-2,0}\|_{1,\infty,r}1_{a=b=2}\le
 c_0^{-2}\alpha^{-5}B^{-1}1_{a=b=2},
\end{align*}
and thus
\begin{align}
\sum_{a,b=2}^N1_{a,b\in
 2\N}c_0^{\frac{a+b}{2}}\alpha^{a+b}\|V_{a,b}^{0-2,1,(1)}\|_{1,\infty,r}\le
 \alpha^{-1}B^{-1}.\label{eq_0_2_1}
\end{align}
For $n\in \N_{\ge 2}$, $a,b\in \{2,4,\cdots,N\}$ the inequalities
 \eqref{eq_general_estimation_divided},
 \eqref{eq_double_determinant_bound}, \eqref{eq_double_norm_bound},
 \eqref{eq_norm_upper_double_two} and \eqref{eq_norm_upper_double_four}
 yield that 
\begin{align*}
&\|V_{a,b}^{0-2,1,(n)}\|_{1,\infty,r}\\
&\le
 \frac{1}{n!}\sum_{m=0}^{n-1}\sum_{(\{s_j\}_{j=1}^{m+1},\{t_k\}_{k=1}^{n-m})\in S(n,m)}\\
&\quad\cdot (1_{m\neq 0}(m-1)!+1_{m=0})(1_{m\neq
 n-1}(n-m-2)!+1_{m=n-1})\\
&\quad\cdot 2^{-2a-2b}(c_0A)^{-n+1-\frac{1}{2}(a+b)}(c_0B)^{n-1}(2^{12}c_0^2A^2br)^n1_{a=2m+2}1_{b=2(n-m)}\\
&=\frac{1}{n!}\sum_{m=0}^{n-1}\left(
\begin{array}{c}n-1 \\ m \end{array}
\right)(1_{m\neq 0}(m-1)!+1_{m=0})(1_{m\neq
 n-1}(n-m-2)!+1_{m=n-1})\\
&\quad\cdot 2^{8n-4}c_0^{-\frac{a+b}{2}}B^{-1}\alpha^{-5n}1_{a=2m+2}1_{b=2(n-m)}.
\end{align*}
Therefore, 
\begin{align}
\sum_{a,b=2}^N1_{a,b\in
 2\N}c_0^{\frac{a+b}{2}}\alpha^{a+b}\|V_{a,b}^{0-2,1,(n)}\|_{1,\infty,
 r}\le c \alpha^2B^{-1}(2^{8}\alpha^{-3})^n.\label{eq_0_2_n}
\end{align}
On the other hand, let us take an anti-symmetric function $g:I^2\to
 \C$. By \eqref{eq_general_estimation_divided_single_coupled} and
 \eqref{eq_norm_upper_double_two_couple}, for any $a,b\in
 \{2,4,\cdots,N\}$ 
\begin{align*}
&[V_{a,b}^{0-2,1,(1)},g]_{1,\infty,r}\le
 [V_{2,2}^{0-2,0},g]_{1,\infty,r}1_{a=b=2}\le 2 c_0^{-2}
 \alpha^{-5}B^{-1}L^{-d}\|g\|_{1,\infty}'1_{a=b=2}.
\end{align*}
Thus 
\begin{align}
\sum_{a,b=2}^N1_{a,b\in
 2\N}c_0^{\frac{a+b}{2}}\alpha^{a+b}[V_{a,b}^{0-2,1,(1)},g]_{1,\infty,r}\le
 2\alpha^{-1}B^{-1}L^{-d}\|g\|_{1,\infty}'.\label{eq_0_2_1_coupled}
\end{align}
For $n\in \N_{\ge 2}$, $a,b\in \{2,4,\cdots,N\}$ we can apply
 \eqref{eq_general_estimation_divided_coupled},
 \eqref{eq_double_determinant_bound}, \eqref{eq_double_norm_bound},
 \eqref{eq_double_norm_bound_dash}, \eqref{eq_norm_upper_double_four}
 and \eqref{eq_norm_upper_double_two_couple} to deduce that 
\begin{align*}
&[V_{a,b}^{0-2,1,(n)},g]_{1,\infty,r}\\
&\le
 \frac{c}{n!}\sum_{m=0}^{n-1}\sum_{(\{s_j\}_{j=1}^{m+1},\{t_k\}_{k=1}^{n-m})\in S(n,m)}\\
&\quad\cdot (1_{m\neq 0}(m-1)!+1_{m=0})(1_{m\neq
 n-1}(n-m-2)!+1_{m=n-1})\\
&\quad\cdot 2^{-2a-2b}(c_0A)^{-n+1-\frac{1}{2}(a+b)}(c_0B)^{n-2}
c_0^2A^2(rL^{-d}\|g\|_{1,\infty}'c_0B+ rL^{-d}c_0A\|g\|_{1,\infty})\\
&\quad\cdot (2^{12}c_0^2A^2br)^{n-1}1_{a=2m+2}1_{b=2(n-m)}\\
&\le \frac{c}{n!}\sum_{m=0}^{n-1}\left(
\begin{array}{c}n-1 \\ m \end{array}
\right)(1_{m\neq 0}(m-1)!+1_{m=0})(1_{m\neq
 n-1}(n-m-2)!+1_{m=n-1})\\
&\quad\cdot 2^{8n}c_0^{-\frac{a+b}{2}}B^{-1}\alpha^{-5n}L^{-d}
(\|g\|_{1,\infty}'+ AB^{-1}\|g\|_{1,\infty})
1_{a=2m+2}1_{b=2(n-m)},
\end{align*}
and thus
\begin{align}
&\sum_{a,b=2}^N1_{a,b\in 2\N}c_0^{\frac{a+b}{2}}\alpha^{a+b}
[V_{a,b}^{0-2,1,(n)},g]_{1,\infty,r}\label{eq_0_2_n_coupled}\\
&\le c\alpha^2B^{-1}L^{-d}(\|g\|_{1,\infty}'+
 AB^{-1}\|g\|_{1,\infty})(2^{8}\alpha^{-3})^n.\notag
\end{align}
Assume that $\alpha^3\ge 2^9$. By summing up \eqref{eq_0_2_1},
 \eqref{eq_0_2_n}, \eqref{eq_0_2_1_coupled}, \eqref{eq_0_2_n_coupled} we
 observe that 
\begin{align}
&\sum_{a,b=2}^N1_{a,b\in 2\N}c_0^{\frac{a+b}{2}}\alpha^{a+b}
\sum_{n=1}^{\infty}\|V_{a,b}^{0-2,1,(n)}\|_{1,\infty,r}\le (\alpha^{-1}+c\alpha^{-4})B^{-1},\label{eq_0_2_uniform_convergence}\\
&\sum_{a,b=2}^N1_{a,b\in 2\N}c_0^{\frac{a+b}{2}}\alpha^{a+b}
\sum_{n=1}^{\infty}[V_{a,b}^{0-2,1,(n)},g]_{1,\infty,r}\notag\\
&\le 
(2\alpha^{-1}+c\alpha^{-4})B^{-1}(\|g\|_{1,\infty}'+AB^{-1}\|g\|_{1,\infty})L^{-d}.\notag
\end{align}
The uniform convergence property \eqref{eq_0_2_uniform_convergence}
 ensures the well-definedness of $V^{0-2,1}$ and the claimed  regularity
 with $u$. On the assumption $\alpha\ge c$ we can conclude from
 the above inequalities that $V^{0-2,1}\in \cR(r)$.
\end{proof}

Lemma \ref{lem_continuation_natural_parameter} will support us in the
 derivation of the free energy density. In order to derive the thermal
expectations, on the other hand, we need to add an artificial
term to the input Grassmann polynomials and construct the
double-scale integration process by clarifying how the artificial term
 affects the output. 
Let us fix $(\hrho,\hbx)$, $(\heta,\hby)\in\cB\times \G_{\infty}$, which
are to represent the sites where the Cooper pair density is measured. 
The artificial Grassmann polynomial $V^{1,0}(\bla)(\psi)\in
\bigwedge_{even}\cV$ parameterized by the artificial parameter $\bla
=(\la_1,\la_2)\in \C^2$ is defined as follows.
\begin{align*}
V^{1,0}(\bla)(\psi):=\sum_{m\in
 \{2,4\}}\left(\frac{1}{h}\right)^m\sum_{\bX\in
 I^m}V_m^{1,0}(\bla)(\bX)\psi_{\bX}
\end{align*}
with the anti-symmetric kernels $V_m^{1,0}(\bla):I^m\to \C$ $(m=2,4)$
defined by 
\begin{align*}
&V_{2}^{1,0}(\bla)(\orho_1\rho_1\bx_1s_1\xi_1,\orho_2\rho_2\bx_2s_2\xi_2)\\
&:=-\frac{h}{2}1_{s_1=s_2}\sum_{\s\in \S_2}\sgn(\s)\Big(
\la_11_{((\orho_{\s(1)},\rho_{\s(1)}, \bx_{\s(1)}, \xi_{\s(1)}),
(\orho_{\s(2)},\rho_{\s(2)}, \bx_{\s(2)},
 \xi_{\s(2)}))\atop=((1,\hat{\rho},r_L(\hbx),1),(2,\hat{\rho},r_L(\hbx),-1))}\\
&\qquad\qquad\qquad\qquad+\la_21_{((\orho_{\s(1)},\rho_{\s(1)}, \bx_{\s(1)}, \xi_{\s(1)}),
(\orho_{\s(2)},\rho_{\s(2)}, \bx_{\s(2)},
 \xi_{\s(2)}))\atop=((1,\hat{\rho},r_L(\hbx),1),(1,\hat{\rho},r_L(\hbx),-1))}1_{(\hat{\rho},r_L(\hbx))=(\hat{\eta},r_L(\hby))}\Big),\\
&V_{4}^{1,0}(\bla)(\orho_1\rho_1\bx_1s_1\xi_1,\orho_2\rho_2\bx_2s_2\xi_2,
                  \orho_3\rho_3\bx_3s_3\xi_3,\orho_4\rho_4\bx_4s_4\xi_4)\\
&:=-\frac{h^3}{4!}\la_21_{s_1=s_2=s_3=s_4}\sum_{\s\in \S_4}\sgn(\s)\\
&\qquad\cdot 
1_{((\orho_{\s(1)},\rho_{\s(1)}, \bx_{\s(1)}, \xi_{\s(1)}),
(\orho_{\s(2)},\rho_{\s(2)}, \bx_{\s(2)}, \xi_{\s(2)}),
(\orho_{\s(3)},\rho_{\s(3)}, \bx_{\s(3)}, \xi_{\s(3)}),
(\orho_{\s(4)},\rho_{\s(4)}, \bx_{\s(4)}, \xi_{\s(4)}))\atop
=((1,\hat{\rho},r_L(\hbx),1),(2,\hat{\rho},r_L(\hbx),-1),
  (2,\hat{\eta},r_L(\hby),1),(1,\hat{\eta},r_L(\hby),-1))}.
\end{align*}
Remind us that the map $r_L:\G_{\infty}\to \G$ was defined just before
the statement of Theorem \ref{thm_infinite_volume_limit} in Subsection
\ref{subsec_main_results}. We can confirm that
\begin{align}
&V^{1,0}(\bla)(\psi)\label{eq_1_reduction}\\
&=-\frac{\la_1}{h}\sum_{s\in
 [0,\beta)_h}\opsi_{1\hat{\rho}r_L(\hbx)s}\psi_{2\hat{\rho}r_L(\hbx)s}
-1_{(\hat{\rho},r_L(\hbx))=(\hat{\eta},r_L(\hby))}
\frac{\la_2}{h}\sum_{s\in
 [0,\beta)_h}\opsi_{1\hat{\rho}r_L(\hbx)s}\psi_{1\hat{\rho}r_L(\hbx)s}\notag\\
&\quad-\frac{\la_2}{h}\sum_{s\in
 [0,\beta)_h}\opsi_{1\hat{\rho}r_L(\hbx)s}\psi_{2\hat{\rho}r_L(\hbx)s}
\opsi_{2\hat{\eta}r_L(\hby)s}\psi_{1\hat{\eta}r_L(\hby)s}.\notag
\end{align}
As the second goal of this subsection we construct an analytic
 continuation of the $\bigwedge_{even}\cV$-valued function
\begin{align*}
(u,\bla)\mapsto \log\Bigg(
\int e^{V^{0-1,0}(u)(\psi^0+\psi)+V^{0-2,0}(u)(\psi^0+\psi)+V^{1,0}(\bla)(\psi^0+\psi)}d\mu_{\cC_0}(\psi^0)
\Bigg) 
\end{align*}
in a neighborhood of the origin. The mission is seemingly close to that
in \cite[\mbox{Subsection 3.5}]{K_BCS_I}. However, the fact that the covariance
is independent of the time variables makes non-trivial
differences in analysis. Let us introduce sets of
$\bigwedge_{even}\cV$-valued functions in order to concisely
describe properties of the output of this single-scale integration. 
Let $r,r'\in \R_{>0}$. We use the norm $\|\cdot\|_{1,r,r'}$ on 
$C(\overline{D(r)}\times\overline{D(r')}^2,\C)$ and
$C(\overline{D(r)}\times\overline{D(r')}^2,\Map(I^m,\C))$ and 
the measurement $[\cdot,\cdot]_{1,r,r'}$ for a coupling between a
function belonging to
$C(\overline{D(r)}\times\overline{D(r')}^2,\Map(I^m,\C))$
and an anti-symmetric function on $I^2$. The definition of these notions
is found in \cite[\mbox{Subsection 4.5}]{K_BCS_II}. We define the
subset $\cQ'(r,r')$ of
$\Map(\overline{D(r)}\times\C^2,\bigwedge_{even}\cV)$ as follows.

$f\in \cQ'(r,r')$ if and only if 
\begin{itemize}
\item
\begin{align*}
f\in C\Bigg(\overline{D(r)}\times \C^2,\bigwedge_{even}\cV\Bigg)\cap 
    C^{\o}\Bigg(D(r)\times \C^2,\bigwedge_{even}\cV\Bigg).
\end{align*}
\item For any $u\in \overline{D(r)}$, $\bla\mapsto f(u,\bla)(\psi)$
     $:\C^2\to \bigwedge_{even}\cV$ is linear. 
\item For any $(u,\bla)\in \overline{D(r)}\times\C^2$ the anti-symmetric
     kernels $f(u,\bla)_m$ $:I^m\to \C$ $(m=2,4,\cdots,N)$ satisfy
     \eqref{eq_time_translation_invariance} and 
\begin{align}
\|f_0\|_{1,r,r'}\le
 \alpha^{-1}L^{-d},\quad
\sum_{m=2}^Nc_0^{\frac{m}{2}}\alpha^m\|f_m\|_{1,r,r'}\le L^{-d}.\label{eq_grassmann_set_inverse_volume_one}
\end{align}
\end{itemize}

We also need a set of $\bigwedge_{even}\cV$-valued functions with
bi-anti-symmetric kernels. Let us define the set $\cR'(r,r')$ as
follows.

$f\in \cR'(r,r')$ if and only if 
\begin{itemize}
\item 
\begin{align*}
f\in C\Bigg(\overline{D(r)}\times \C^2,\bigwedge_{even}\cV\Bigg)\cap 
    C^{\o}\Bigg(D(r)\times \C^2,\bigwedge_{even}\cV\Bigg).
\end{align*}
\item
For any $u\in \overline{D(r)}$, $\bla\mapsto f(u,\bla)(\psi)$ $:\C^2\to
      \bigwedge_{even}\cV$ is linear.
\item There exist $f_{p,q}\in C(\overline{D(r)}\times
      \C^2,\Map(I^p\times I^q,\C))$ $(p,q=2,4,\cdots,N)$ such that for
      any $(u,\bla)\in \overline{D(r)}\times \C^2$, $p,q\in
      \{2,4,\cdots,N\}$ $f_{p,q}(u,\bla):I^p\times I^q\to\C$ is
      bi-anti-symmetric, satisfies
      \eqref{eq_time_translation_invariance},
      \eqref{eq_time_vanishing_property}, 
\begin{align*}
f(u,\bla)(\psi)=\sum_{p,q=2}^N1_{p,q\in
 2\N}\left(\frac{1}{h}\right)^{p+q}\sum_{\bX\in I^p\atop \bY\in
 I^q}f_{p,q}(u,\bla)(\bX,\bY)\psi_{\bX}\psi_{\bY}
\end{align*}
and 
\begin{align}
&\sum_{p,q=2}^Nc_0^{\frac{p+q}{2}}\alpha^{p+q}\|f_{p,q}\|_{1,r,r'}\le
 1,\label{eq_grassmann_set_divided_one}\\
&\sum_{p,q=2}^Nc_0^{\frac{p+q}{2}}\alpha^{p+q}[f_{p,q},g]_{1,r,r'}\le 
(\|g\|_{1,\infty}'+AB^{-1}\|g\|_{1,\infty})L^{-d}\label{eq_grassmann_set_divided_one_coupled}
\end{align}
for any anti-symmetric function $g:I^2\to \C$.
\end{itemize}

We must prepare a set which can contain the direct descent from
$V^{1,0}$. The definition is as below. 

$f\in \cS(r,r')$ if and only if 
\begin{itemize}
\item 
\begin{align*}
f\in C\Bigg(\overline{D(r)}\times \C^2,\bigwedge_{even}\cV\Bigg)\cap 
    C^{\o}\Bigg(D(r)\times \C^2,\bigwedge_{even}\cV\Bigg).
\end{align*}
\item For any $u\in \overline{D(r)}$, $\bla\mapsto f(u,\bla)(\psi)$ $:\C^2\to
      \bigwedge_{even}\cV$ is linear.
\item For any $(u,\bla)\in \overline{D(r)}\times\C^2$ the anti-symmetric
     kernels $f(u,\bla)_m$ $:I^m\to \C$ $(m=2,4,\cdots,N)$ satisfy
     \eqref{eq_time_translation_invariance} and 
\begin{align}
\|f_0\|_{1,r,r'}\le
 \alpha^{-1},\quad
\sum_{m=2}^Nc_0^{\frac{m}{2}}\alpha^m\|f_m\|_{1,r,r'}\le 1.
\label{eq_grassmann_set_descendant}
\end{align}
\end{itemize}

Finally we define a set of $\bigwedge_{even}\cV$-valued functions
depending on $\bla$ at least quadratically. 

$f\in \cW(r,r')$ if and only if 
\begin{itemize}
\item 
\begin{align*}
f\in C\Bigg(\overline{D(r)}\times \overline{D(r')}^2,\bigwedge_{even}\cV\Bigg)\cap 
    C^{\o}\Bigg(D(r)\times D(r')^2,\bigwedge_{even}\cV\Bigg).
\end{align*}
\item For any $u\in D(r)$, $j\in \{1,2\}$
      $f(u,\b0)(\psi)=\frac{\partial}{\partial \la_j}f(u,\b0)(\psi)=0$.
\item For any $(u,\bla)\in \overline{D(r)}\times\overline{D(r')}^2$ the anti-symmetric
     kernels $f(u,\bla)_m$ $:I^m\to \C$ $(m=2,4,\cdots,N)$ satisfy
     \eqref{eq_time_translation_invariance} and 
\begin{align}
\|f_0\|_{1,r,r'}\le
 \alpha^{-1},\quad
\sum_{m=2}^Nc_0^{\frac{m}{2}}\alpha^m\|f_m\|_{1,r,r'}\le 1.
\label{eq_grassmann_set_quadratic}
\end{align}
\end{itemize}

Let us organize the Grassmann polynomials 
\begin{align}
\frac{1}{n!}\left(\frac{d}{dz}\right)^n
\log\Bigg(
\int e^{zV^{0-1,0}(u)(\psi^0+\psi)+zV^{0-2,0}(u)(\psi^0+\psi)+zV^{1,0}(\bla)(\psi^0+\psi)}d\mu_{\cC_0}(\psi^0)
\Bigg)\Bigg|_{z=0}\label{eq_taylor_one_term_artificial} 
\end{align}
in the same way as in \cite[\mbox{Subsection 3.5}]{K_BCS_I}.
The only difference from the previous work is that
here the second superscript of the input polynomials is 0 and that of
the output polynomials is 1. This is in accordance with
the index of the covariances. Define 
$V^{0,0}$, $V^{0,1,(n)}\in \Map(\C,\bigwedge_{even}\cV)$ $(n\in \N)$, 
$V^{1-3,1}\in \Map(\C\times \C^2,\bigwedge_{even}\cV)$ by
\begin{align*}
&V^{0,0}(u)(\psi):=V^{0-1,0}(u)(\psi)+V^{0-2,0}(u)(\psi),\\ 
&V^{0,1,(n)}(u)(\psi):=\frac{1}{n!}Tree(\{1,\cdots,n\},\cC_0)\prod_{j=1}^nV^{0,0}(u)(\psi^j+\psi)\Bigg|_{\psi^j=0\atop
 (\forall j\in \{1,\cdots,n\})},\\
&V^{1-3,1}(u,\bla)(\psi):=Tree(\{1\},\cC_0)V^{1,0}(\bla)(\psi^1+\psi)\Big|_{\psi^1=0}.
\end{align*}
Apparently $V^{1-3,1}$ is independent of $u$. However, by defining as
if it depends on $(u,\bla)$ we can estimate $V^{1-3,1}$ with the norm
$\|\cdot\|_{1,r,r'}$. This saves us introducing another norm. For $n\in \N_{\ge 2}$ we define $V^{1-1-1,1,(n)}$, $V^{1-1-2,1,(n)}$,
$V^{1-2,1,(n)}$, $V^{2,1,(n)}\in \Map(\C\times\C^2,\bigwedge_{even}\cV)$
as follows.  
\begin{align*}
&V^{1-1-1,1,(n)}(u,\bla)(\psi)\\
&:=\frac{1}{(n-1)!}Tree(\{1,\cdots,n\},\cC_{0})\\
&\quad\cdot \prod_{j=1}^{n-1}\Bigg(\sum_{b_j\in\{1,2\}}V^{0-b_j,0}(u)(\psi^j+\psi)\Bigg)
V^{1,0}(\bla)(\psi^n+\psi)\Bigg|_{\psi^{j}=0\atop(\forall
 j\in\{1,\cdots,n\})}
1_{\exists
 j(b_j=1)},\notag\\
&V^{1-1-2,1,(n)}(u,\bla)(\psi)\notag\\
&:=\frah^{4}\sum_{\bX,\bY\in
 I^2}V_{2,2}^{0-2,0}(u)(\bX,\bY)
 \frac{1}{(n-1)!}Tree(\{1,\cdots,n+1\},\cC_{0})\notag\\
&\quad\cdot (\psi^1+\psi)_{\bX}(\psi^2+\psi)_{\bY}
\prod_{j=3}^{n}V^{0-2,0}(u)(\psi^j+\psi)V^{1,0}(\bla)(\psi^{n+1}+\psi)
\Bigg|_{\psi^{j}=0\atop(\forall
 j\in\{1,\cdots,n+1\})},\notag\\
&V^{1-2,1,(n)}(u,\bla)(\psi)\notag\\
&:=\frac{1}{(n-1)!}\sum_{m=0}^{n-1}\sum_{(\{s_j\}_{j=1}^{m+1},
 \{t_k\}_{k=1}^{n-m})\in S(n,m)}
\frah^{4}\sum_{\bX,\bY\in
 I^2}V_{2,2}^{0-2,0}(u)(\bX,\bY)\notag\\
&\quad\cdot
 Tree(\{s_j\}_{j=1}^{m+1},\cC_{0})(\psi^{s_1}+\psi)_{\bX}\\
&\quad\cdot\prod_{j=2}^{m+1}
(1_{s_j\neq n}V^{0-2,0}(u)(\psi^{s_j}+\psi)+1_{s_j= n}V^{1,0}(\bla)(\psi^{s_j}+\psi))\Bigg|_{\psi^{s_j}=0\atop(\forall
 j\in\{1,\cdots,m+1\})}\notag\\
&\quad\cdot
 Tree(\{t_k\}_{k=1}^{n-m},\cC_{0})(\psi^{t_1}+\psi)_{\bY}\\
&\quad\cdot \prod_{k=2}^{n-m}
(1_{t_k\neq n}V^{0-2,0}(u)(\psi^{t_k}+\psi)+1_{t_k=n}V^{1,0}(\bla)(\psi^{t_k}+\psi))\Bigg|_{\psi^{t_k}=0\atop(\forall
 k\in\{1,\cdots,n-m\})},\notag\\
&V^{2,1,(n)}(u,\bla)(\psi)\\
&:=\frac{1}{n!}Tree(\{1,\cdots,n\},\cC_{0})
\prod_{j=1}^{n}\Bigg(\sum_{b_j\in\{0,1\}}V^{b_j,0}(\psi^j+\psi)\Bigg)
\Bigg|_{\psi^{j}=0\atop(\forall
 j\in\{1,\cdots,n\})}
1_{\sum_{j=1}^nb_j\ge 2}.
\end{align*}
Then, the following equality holds.
\begin{align*}
&(\text{The Grassmann polynomial }\eqref{eq_taylor_one_term_artificial})\\
&=V^{0,1,(n)}(u)(\psi)+1_{n=1}V^{1-3,1}(u,\bla)(\psi)\\
&\quad +1_{n\ge
 2}(V^{1-1-1,1,(n)}(u,\bla)(\psi)+V^{1-1-2,1,(n)}(u,\bla)(\psi)+V^{1-2,1,(n)}(u,\bla)(\psi)\\
&\qquad\qquad\ +V^{2,1,(n)}(u,\bla)(\psi)).
\end{align*}
We should remark that the above decomposition is essentially same as
that presented in \cite[\mbox{Subsection 3.5}]{K_BCS_I}.
Assuming their convergence, we set
\begin{align*}
&V^{0,1}(u)(\psi):=\sum_{n=1}^{\infty}V^{0,1,(n)}(u)(\psi),\\
&V^{1-1-j,1}(u,\bla)(\psi):=\sum_{n=2}^{\infty}V^{1-1-j,1,(n)}(u,\bla)(\psi),\quad(\forall
 j\in \{1,2\}),\\
&V^{1-1,1}(u,\bla)(\psi):=\sum_{j=1}^{2}V^{1-1-j,1}(u,\bla)(\psi),\\
&V^{1-2,1}(u,\bla)(\psi):=\sum_{n=2}^{\infty}V^{1-2,1,(n)}(u,\bla)(\psi),\quad
V^{2,1}(u,\bla)(\psi):=\sum_{n=2}^{\infty}V^{2,1,(n)}(u,\bla)(\psi).
\end{align*}
We want to prove that these $\bigwedge_{even}\cV$-valued functions are
analytic with $(u,\bla)$ in a neighborhood of the origin. In particular
the analyticity with $u$ must be ensured independently of $A$. We have
developed the general estimates \eqref{eq_general_estimation_direct_one}, 
\eqref{eq_general_estimation_double_one_norm},
\eqref{eq_general_estimation_divided_one},
\eqref{eq_general_estimation_divided_coupled_one} for this particular
purpose.

\begin{lemma}\label{lem_continuation_artificial_parameter}
There exists $c\in \R_{>0}$ independent of any parameter such that if
 $\alpha \ge c$, 
\begin{align*}
V^{1-1,1}\in \cQ'(r,r'),\quad V^{1-2,1}\in \cR'(r,r'),\quad V^{1-3,1}\in
 \cS(r,r'),\quad V^{2,1}\in \cW(r,r')
\end{align*}
with $r:=c_0^{-2}\alpha^{-5}b^{-1}B^{-1}$,
 $r':=(A+1)^{-2}(B+1)^{-1}(\beta+1)^{-1}c_0^{-2}\alpha^{-5}$.
\end{lemma}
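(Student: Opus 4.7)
The plan is to run the same tree-expansion bookkeeping as in the proof of Lemma \ref{lem_continuation_natural_parameter}, but now treating $V^{1,0}(\bla)$ as an additional $F$-type vertex (it satisfies neither the vanishing property \eqref{eq_time_vanishing_property} nor an $A,B,\beta$-uniform smallness beyond the prescribed radius $r'$). First I would bound the kernels of $V^{1,0}(\bla)$ in the $\|\cdot\|_{1,r,r'}$-norm. Because the representation \eqref{eq_1_reduction} shows these kernels are supported on the single site $(\hrho,r_L(\hbx))$ (or the pair $(\hrho,r_L(\hbx)),(\heta,r_L(\hby))$) and are linear in $\bla$, the sup in the definition of $\|\cdot\|_{1,r,r'}$ is saturated only when the fixed argument sits at one of these sites, and a direct computation yields $\|V_m^{1,0}\|_{1,r,r'}\le c\,r'$ for $m\in\{2,4\}$. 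The time-translation invariance of the kernels is immediate from their definition. The corresponding bounds on $V^{0-1,0}(u)$ and $V^{0-2,0}(u)$ are \eqref{eq_norm_upper_two}--\eqref{eq_norm_upper_double_two_couple}.

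Next I would treat the four Grassmann polynomials in turn. For $V^{1-3,1}$, the operator $Tree(\{1\},\cC_0)$ is the identity that sets $\psi^1=0$, so $V^{1-3,1}(u,\bla)=V^{1,0}(\bla)$ and membership in $\cS(r,r')$ follows from the input bounds of the previous paragraph once $\alpha\ge c$. For $V^{1-1-1,1,(n)}$ and $V^{1-1-2,1,(n)}$ I would apply \eqref{eq_general_estimation_direct_one} and \eqref{eq_general_estimation_double_one_norm} respectively, identifying $V^{1,0}(\bla)$ with the distinguished vertex that is measured in the $\|\cdot\|_1$-norm. Substituting the input bounds, the $V^{1,0}$ vertex contributes a factor $r'$, while the remaining $n-1$ vertices produce the same geometric decay $(c\alpha^{-3})^{n-1}$ as in the proof of Lemma \ref{lem_continuation_natural_parameter}; summing over $n\ge 2$ converges for $\alpha\ge c$ and yields \eqref{eq_grassmann_set_inverse_volume_one} for $V^{1-1,1}$. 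For $V^{1-2,1,(n)}$ I would invoke the bi-anti-symmetric decomposition of Lemma \ref{lem_general_estimation_divided} together with the estimates \eqref{eq_general_estimation_divided_one} and \eqref{eq_general_estimation_divided_coupled_one}, with $V^{1,0}(\bla)$ playing the role of the $F$ factor placed at index $n$. The coupled inequality \eqref{eq_general_estimation_divided_coupled_one} is essential: the combination $[G_{p_1,q_1},g]_{1,\infty}\|\tilde{\cC}\|_{1,\infty}+[G_{p_1,q_1},\tilde{\cC}]_{1,\infty}\|g\|_{1,\infty}$ produced by the swap of a single covariance factor reproduces, after substituting \eqref{eq_norm_upper_double_two_couple} and \eqref{eq_double_norm_bound}--\eqref{eq_double_norm_bound_dash}, exactly the shape $\|g\|_{1,\infty}'+AB^{-1}\|g\|_{1,\infty}$ required by \eqref{eq_grassmann_set_divided_one_coupled}. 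Summing over partitions in $S(n,m)$ and over $n\ge 2$ gives $V^{1-2,1}\in\cR'(r,r')$. Finally, for $V^{2,1,(n)}$ every term contains at least two $V^{1,0}(\bla)$ factors by construction, so $V^{2,1}(u,\b0)=0$ and $\partial_{\la_j}V^{2,1}(u,\b0)=0$; applying \eqref{eq_general_estimation_direct_one} once more, the two $V^{1,0}$ insertions each contribute a factor $r'$, which together with the definition of $r'$ absorbs the extra powers of $A$, $B$, $\beta$ that would otherwise appear, yielding $V^{2,1}\in\cW(r,r')$.

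The main obstacle is the precise bookkeeping of the $A$-, $B$- and $\beta$-dependent prefactors across the four contributions: the determinant bound on $\cC_0$ scales like $A$ through \eqref{eq_double_determinant_bound}, while $\|\tilde\cC_a\|_{1,\infty}'$ also scales like $A$ through \eqref{eq_double_norm_bound_dash}, and these two dependencies enter Lemma \ref{lem_general_estimation_divided} with different combinatorial multiplicities. The choice $r'=(A+1)^{-2}(B+1)^{-1}(\beta+1)^{-1}c_0^{-2}\alpha^{-5}$ is dictated precisely by the requirement that the worst-case accumulation of such factors in the coupled estimate \eqref{eq_grassmann_set_divided_one_coupled} remain bounded by $(\|g\|_{1,\infty}'+AB^{-1}\|g\|_{1,\infty})L^{-d}$ uniformly in the physical scale parameters. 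Once the correct powers have been matched, the remaining sums over $n$ and over the partitions $S(n,m)$ are geometric in $\alpha^{-3}$ and present no further difficulty; the analyticity and the $\bla$-linearity or vanishing properties follow from the uniform convergence on the prescribed polydiscs together with the structural form of each $V^{\bullet,1,(n)}$.
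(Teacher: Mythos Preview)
Your overall strategy matches the paper's: bound the input kernels of $V^{1,0}(\bla)$, then feed them through the general tree estimates of Subsection~\ref{subsec_general_estimation} for each of the four pieces. However, two of your preliminary claims are wrong and would derail the bookkeeping.

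First, the bound $\|V_m^{1,0}\|_{1,r,r'}\le c\,r'$ is off by a factor of $\beta$. The norm $\|\cdot\|_{1,r,r'}$ is built on $\|\cdot\|_1$, which sums over \emph{all} arguments; there is no ``fixed argument'' to be ``saturated at one of these sites'' as you describe. Since the kernels of $V^{1,0}$ force all time variables to coincide but are otherwise time-independent, the free sum over $s\in[0,\beta)_h$ produces a factor $\beta$: the correct bounds are $\|V_2^{1,0}\|_{1,r,r'}\le 2\beta r'$ and $\|V_4^{1,0}\|_{1,r,r'}\le \beta r'$. The factor $(\beta+1)^{-1}$ in $r'$ is there precisely to absorb this. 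You will separately need the $\|\cdot\|_{1,\infty}$ bounds $\sup_{\bla\in\overline{D(r')}^2}\|V_m^{1,0}(\bla)\|_{1,\infty}\le c\,r'$ (no $\beta$) when treating $V^{2,1,(n)}$, where all but one of the $V^{1,0}$ insertions are measured in the sup-type norm.

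Second, $Tree(\{1\},\cC_0)$ is not the identity. For a single vertex it is the Grassmann Gaussian integral, so
\[
V^{1-3,1}(u,\bla)(\psi)=\int V^{1,0}(\bla)(\psi^0+\psi)\,d\mu_{\cC_0}(\psi^0),
\]
which differs from $V^{1,0}(\bla)(\psi)$. In particular $V^{1-3,1}$ has a nonzero constant part coming from full contraction of $V^{1,0}$ with $\cC_0$, and bounding it requires the determinant bound \eqref{eq_double_determinant_bound}: one gets $\|V_0^{1-3,1}\|_{1,r,r'}\le c_0A\|V_2^{1,0}\|_{1,r,r'}+(c_0A)^2\|V_4^{1,0}\|_{1,r,r'}\le c\,c_0^2(A+1)^2\beta r'\le c\alpha^{-5}$, and similarly $\|V_2^{1-3,1}\|_{1,r,r'}\le c(1+c_0A)\beta r'$, while $V_4^{1-3,1}=V_4^{1,0}$. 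The factor $(A+1)^{-2}$ in $r'$ is dictated exactly by this step; with your identification $V^{1-3,1}=V^{1,0}$ you would not see why that power is necessary, and your later accounting of $A$-powers in $\cQ'$, $\cR'$ would be inconsistent.

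With these two corrections the remainder of your plan (the applications of \eqref{eq_general_estimation_direct_one}, \eqref{eq_general_estimation_double_one_norm}, \eqref{eq_general_estimation_divided_one}, \eqref{eq_general_estimation_divided_coupled_one} and the geometric summation in $\alpha^{-3}$) goes through essentially as in the paper.
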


\begin{proof}
We will repeatedly use the following inequalities, which can be directly
 derived from the definitions. 
\begin{align}
&\|V_2^{1,0}\|_{1,r,r'}\le 2\beta r',\label{eq_norm_upper_field}\\
&\sup_{\bla \in \overline{D(r')}^2}\|V_2^{1,0}(\bla)\|_{1,\infty}\le
 2r',\label{eq_norm_upper_field_infinity}\\
&\|V_4^{1,0}\|_{1,r,r'}\le \beta r',\label{eq_norm_upper_field_four}\\
&\sup_{\bla\in \overline{D(r')}^2}\|V_4^{1,0}(\bla)\|_{1,\infty}\le
 r'.\label{eq_norm_upper_field_four_infinity}
\end{align}

First let us summarize properties of $V^{1-3,1}$. By ``(4.7)'' of 
\cite[\mbox{Lemma 4.1}]{K_BCS_II} (or ``(3.15)'' of 
\cite[\mbox{Lemma 3.1}]{K_BCS_I}), \eqref{eq_double_determinant_bound}, 
\eqref{eq_norm_upper_field} and \eqref{eq_norm_upper_field_four}
\begin{align}
&\|V_0^{1-3,1}\|_{1,r,r'}\le
 c_0A\|V_2^{1,0}\|_{1,r,r'}+(c_0A)^2\|V_4^{1,0}\|_{1,r,r'}\le
 c\alpha^{-5},\label{eq_1_3_0}\\
&\|V_2^{1-3,1}\|_{1,r,r'}\le
 \|V_2^{1,0}\|_{1,r,r'}+c c_0A\|V_4^{1,0}\|_{1,r,r'}\le
 c(1+c_0A)\beta r'.\label{eq_1_3_2}
\end{align}
Since $V_4^{1-3,1}=V_4^{1,0}$, we can derive from
 \eqref{eq_norm_upper_field_four}, \eqref{eq_1_3_2} that 
\begin{align}
&\sum_{m=2}^Nc_0^{\frac{m}{2}}\alpha^m\|V_m^{1-3,1}\|_{1,r,r
'}\le c c_0 \alpha^2(1+c_0A)\beta r'+c_0^2\alpha^4\beta r'\le
 c\alpha^{-1}.\label{eq_1_3}
\end{align}
One part of the claims of \cite[\mbox{Lemma 4.1}]{K_BCS_II} or
 \cite[\mbox{Lemma 3.1}]{K_BCS_I} implies that the kernels of
 $V^{1-3,1}$ satisfy \eqref{eq_time_translation_invariance}. The
 linearity with $\bla\in \C^2$ is clear from the definition. Therefore
 we can conclude from \eqref{eq_1_3_0}, \eqref{eq_1_3} that if
 $\alpha\ge c$, $V^{1-3,1}\in \cS(r,r')$. 

Next let us consider $V^{1-1-1,1,(n)}$ $(n\in \N_{\ge 2})$. One can
 rewrite the defining equality as follows. 
\begin{align*}
&V^{1-1-1,1,(n)}(u,\bla)(\psi)\\
&=\frac{1}{(n-1)!}\sum_{l=2}^{n}\left(\begin{array}{c} n-1 \\
					l-1\end{array}\right)
Tree(\{1,\cdots,n\},\cC_{0})\\
&\quad\cdot V^{1,0}(\bla)(\psi^1+\psi)
\prod_{j=2}^{l}V^{0-1,0}(u)(\psi^j+\psi)
\prod_{k=l+1}^{n}V^{0-2,0}(u)(\psi^k+\psi)\Bigg|_{\psi^{j}=0\atop(\forall
 j\in\{1,\cdots,n\})}.
\end{align*}
Then we can apply \eqref{eq_general_estimation_direct_one},
 \eqref{eq_double_determinant_bound}, \eqref{eq_double_norm_bound},
 \eqref{eq_norm_upper_two}, \eqref{eq_norm_upper_double_four},
\eqref{eq_norm_upper_field} and \eqref{eq_norm_upper_field_four} to
 derive that for $m\in \{0,2,4,\cdots,N\}$
\begin{align*}
&\|V_m^{1-1-1,1,(n)}\|_{1,r,r'}\\
&\le c\sum_{l=2}^{n}\left(\begin{array}{c} n-1 \\
					l-1\end{array}\right)
(c_0A)^{-n+1-\frac{m}{2}}2^{-2m}(c_0B)^{n-1}\sum_{p\in
 \{2,4\}}2^{3p}(c_0A)^{\frac{p}{2}}\beta r'\\
&\quad\cdot(2^6c_0Ar L^{-d})^{l-1}(2^{12}c_0^2A^2br)^{n-l}1_{p+2(n-l)\ge m\ge
 2(n-l)}\\
&\le  c\sum_{l=2}^{n}\left(\begin{array}{c} n-1 \\
					l-1\end{array}\right)
c_0^{-\frac{m}{2}}A^{-\frac{m}{2}}2^{-2m}\sum_{p\in
 \{2,4\}}2^{3p}(c_0A)^{\frac{p}{2}}\beta r'\\
&\quad\cdot (2^6c_0Br L^{-d})^{l-1}(2^{12}c_0^2ABbr)^{n-l}1_{p+2(n-l)\ge m\ge
 2(n-l)}.
\end{align*}
Therefore,
\begin{align}
&\|V_0^{1-1-1,1,(n)}\|_{1,r,r'}\le c \sum_{p\in \{2,4\}}2^{3p}(c_0A)^{\frac{p}{2}}
\beta r'(2^6c_0Br
 L^{-d})^{n-1}\le cL^{-d}(2^6\alpha^{-5})^n,\label{eq_1_1_1_0}\\
&\sum_{m=2}^Nc_0^{\frac{m}{2}}\alpha^m\|V_m^{1-1-1,1,(n)}\|_{1,r,r'}\label{eq_1_1_1_m}\\
&\le c\sum_{l=2}^{n}\left(\begin{array}{c} n-1 \\
					l-1\end{array}\right)\notag\\
&\quad\cdot \sum_{p\in\{2,4\}}2^{3p}(c_0A)^{\frac{p}{2}}\beta r'
 (2^6c_0Br L^{-d})^{l-1}(2^{8}\alpha^2c_0^2Bbr)^{n-l}(1+A^{-1}\alpha^2+1_{p=4}A^{-2}\alpha^{4})\notag\\
&\le c L^{-d}c_0^2\alpha^4(A+1)^2\beta r'(2^6c_0B r+
 2^8\alpha^2c_0^2Bbr)^{n-1}\notag\\
&\le c \alpha^{-1}L^{-d}(2^9\alpha^{-3})^{n-1}.\notag
\end{align}

Next let us study $V^{1-1-2,l,(n)}$ $(n\in \N_{\ge 2})$. In this case
 the main tool is the inequality
 \eqref{eq_general_estimation_double_one_norm}. By combining
 \eqref{eq_general_estimation_double_one_norm} with 
\eqref{eq_double_determinant_bound}, \eqref{eq_double_norm_bound}, 
\eqref{eq_double_norm_bound_dash}, \eqref{eq_norm_upper_double_four}, 
\eqref{eq_norm_upper_double_two_couple}, \eqref{eq_norm_upper_field}, 
\eqref{eq_norm_upper_field_four} we observe that for any $m\in
 \{0,2,4,\cdots,N\}$ 
\begin{align*}
&\|V_m^{1-1-2,1,(n)}\|_{1,r,r'}\\
&\le
 c(c_0A)^{-n-\frac{m}{2}}2^{-2m}(c_0B)^{n-1}(c_0A)^2rL^{-d}c_0A(2^{12}c_0^2A^2br)^{n-2}\\
&\quad\cdot \sum_{p\in \{2,4\}}2^{3p}(c_0A)^{\frac{p}{2}}\beta
 r'1_{p+2n-4\ge m\ge 2n-2}\\
&\le c(c_0A)^{-\frac{m}{2}}2^{-2m}c_0^2ABr
 L^{-d}(2^{12}c_0^2ABbr)^{n-2}\sum_{p\in
 \{2,4\}}2^{3p}(c_0A)^{\frac{p}{2}}\beta r'1_{p+2n-4\ge m\ge 2n-2}.
\end{align*}
Since $n\ge 2$, this implies that 
\begin{align}
\|V_0^{1-1-2,1,(n)}\|_{1,r,r'}=0.\label{eq_1_1_2_0}
\end{align}
Moreover, 
\begin{align}
&\sum_{m=2}^{N}c_0^{\frac{m}{2}}\alpha^m\|V_m^{1-1-2,1,(n)}\|_{1,r,r
'}\label{eq_1_1_2_m}\\
&\le c c_0^2\alpha^2Br L^{-d}(2^8c_0^2\alpha^2Bbr)^{n-2}
\sum_{p\in \{2,4\}}2^{3p}(c_0A)^{\frac{p}{2}}\beta
 r'(1+1_{p=4}A^{-1}\alpha^2)\notag\\
&\le c L^{-d}(2^8\alpha^{-3})^n.\notag
\end{align}

By summing up \eqref{eq_1_1_1_0}, \eqref{eq_1_1_1_m},
 \eqref{eq_1_1_2_0}, \eqref{eq_1_1_2_m} and assuming $\alpha^3\ge
 2^{10}$ we obtain that 
\begin{align*}
&\sum_{n=2}^{\infty}(\|V_0^{1-1-1,1,(n)}\|_{1,r,r'}+\|V_0^{1-1-2,1,(n)}\|_{1,r,r'})\le
 c \alpha^{-10}L^{-d},\\
&\sum_{m=2}^Nc_0^{\frac{m}{2}}\alpha^m\sum_{n=2}^{\infty}(\|V_m^{1-1-1,1,(n)}\|_{1,r,r'}+\|V_m^{1-1-2,1,(n)}\|_{1,r,r'})\le
 c \alpha^{-4}L^{-d}.
\end{align*}
These inequalities imply that $V^{1-1,1}$ is well-defined and
 $$
V^{1-1,1}\in C\Bigg(\overline{D(r)}\times \overline{D(r')}^2,\bigwedge_{even}\cV\Bigg)\cap C^{\o}\Bigg(D(r)\times
 D(r')^2,\bigwedge_{even}\cV\Bigg).
$$
By the definition $V^{1-1,1}$ is linear with $\bla\in \C^2$. It is
 implied by Lemma \ref{lem_general_estimation_direct} and Lemma
 \ref{lem_general_estimation_double} that the kernels of $V^{1-1,1}$
 satisfy \eqref{eq_time_translation_invariance}. Thus by assuming
 $\alpha\ge c$ we can conclude from the above inequalities that
 $V^{1-1,1}\in \cQ'(r,r')$.

Next let us analyze $V^{1-2,1,(n)}$ $(n\in \N_{\ge 2})$. Lemma
 \ref{lem_general_estimation_divided} ensures the existence of
 bi-anti-symmetric functions $V_{a,b}^{1-2,1,(n)}(u,\bla):I^a\times
 I^b\to \C$ $(n\in \N_{\ge 2}$, $a,b\in \{2,4,\cdots,N\}$, $(u,\bla)\in
 \C\times \C^2)$ such that they satisfy
 \eqref{eq_time_translation_invariance},
 \eqref{eq_time_vanishing_property} and 
\begin{align*}
V^{1-2,1,(n)}(u,\bla)(\psi)=\sum_{a,b=2}^N1_{a,b\in
 2\N}\left(\frac{1}{h}\right)^{a+b}\sum_{\bX\in I^a\atop \bY\in
 I^b}V_{a,b}^{1-2,1,(n)}(u,\bla)(\bX,\bY)\psi_{\bX}\psi_{\bY}.
\end{align*}
It is clear from the definition that $\bla\mapsto
 V^{1-2,1,(n)}(u,\bla)(\psi)$ $:\C^2\to \bigwedge_{even}\cV$ is linear
 for any $n\in \N_{\ge 2}$, $u\in \C$. Once the uniform convergence of
 $\sum_{n=2}^{\infty}V^{1-2,1,(n)}(u,\bla)(\psi)$ with $(u,\bla)\in
 \overline{D(r)}\times \overline{D(r')}^2$ is proved, the properties
 \eqref{eq_time_translation_invariance},
 \eqref{eq_time_vanishing_property}, the linearity with $\bla$ and the
 claimed regularity with $(u,\bla)$ are automatically satisfied by
 $V^{1-2,1}$. Let us establish desirable norm bounds. The inequalities
 \eqref{eq_general_estimation_divided_one},
\eqref{eq_double_determinant_bound}, 
\eqref{eq_double_norm_bound},
\eqref{eq_norm_upper_double_two},
\eqref{eq_norm_upper_double_four},
\eqref{eq_norm_upper_field},
\eqref{eq_norm_upper_field_four} lead to that
for any $n\in \N_{\ge 2}$, $a,b\in \{2,4,\cdots,N\}$
\begin{align}
&\|V_{a,b}^{1-2,1,(n)}\|_{1,r,r'}\label{eq_1_2_n_pre}\\
&\le
 \frac{c}{(n-1)!}\sum_{m=0}^{n-1}\sum_{(\{s_j\}_{j=1}^{m+1},\{t_k\}_{k=1}^{n-m})\in S(n,m)}\notag\\
&\quad\cdot (1_{m\neq 0}(m-1)!+1_{m=0})(1_{m\neq n-1}(n-m-2)!+1_{m=n-1})\notag\\&\quad\cdot
 2^{-2a-2b}(c_0A)^{-n+1-\frac{1}{2}(a+b)}(c_0B)^{n-1}c_0^2A^2br
 \sum_{p\in \{2,4\}}2^{3p}(c_0A)^{\frac{p}{2}}\beta
 r'(2^{12}c_0^2A^2br)^{n-2}\notag\\
&\quad\cdot\big(1_{n\in \{s_j\}_{j=2}^{m+1}}1_{2m-2+p\ge a\ge
 2m}1_{b=2(n-m)}
+1_{n\in \{t_k\}_{k=2}^{n-m}}1_{a=2m+2}
1_{2(n-m)-4+p\ge b\ge
 2(n-m)-2}\big)\notag\\
&\le
 \frac{c}{(n-1)!}\sum_{m=0}^{n-1}\sum_{(\{s_j\}_{j=1}^{m+1},\{t_k\}_{k=1}^{n-m})\in S(n,m)}\notag\\
&\quad\cdot (1_{m\neq 0}(m-1)!+1_{m=0})(1_{m\neq n-1}(n-m-2)!+1_{m=n-1})\notag\\&\quad\cdot
 2^{-2a-2b}(c_0A)^{-\frac{1}{2}(a+b)}
\sum_{p\in \{2,4\}}2^{3p}(c_0A)^{\frac{p}{2}}\beta r'
(2^{12}c_0^2ABbr)^{n-1}\notag\\
&\quad\cdot\big(1_{n\in \{s_j\}_{j=2}^{m+1}}1_{2m-2+p\ge a\ge
 2m}1_{b=2(n-m)}
+1_{n\in \{t_k\}_{k=2}^{n-m}}1_{a=2m+2}
1_{2(n-m)-4+p\ge b\ge
 2(n-m)-2}\big).\notag
\end{align}
Thus
\begin{align}
&\sum_{a,b=2}^N1_{a,b\in
 2\N}c_0^{\frac{a+b}{2}}\alpha^{a+b}\|V_{a,b}^{1-2,1,(n)}\|_{1,r,r'}\label{eq_1_2_n}\\
&\le c\sum_{p\in \{2,4\}}2^{3p}(c_0A)^{\frac{p}{2}}\beta r'
(2^{12}c_0^2ABbr)^{n-1}2^{-4n}\alpha^{2n}A^{-n}(1+1_{p=4}\alpha^2
 A^{-1})\notag\\
&\le c c_0^2\alpha^4(1+A)\beta r'(2^{8}c_0^2\alpha^2Bbr)^{n-1}\notag\\
&\le c\alpha^{-1}(2^{8}\alpha^{-3})^{n-1}.\notag
\end{align}
On the other hand, by applying
 \eqref{eq_general_estimation_divided_coupled_one} instead of
 \eqref{eq_general_estimation_divided_one} and
 \eqref{eq_double_norm_bound_dash},
 \eqref{eq_norm_upper_double_two_couple} in addition we observe that for
 any $n\in\N_{\ge 2}$, $a,b\in \{2,4,\cdots,N\}$ and anti-symmetric
 function $g:I^2\to \C$, 
\begin{align*}
&[V_{a,b}^{1-2,1,(n)},g]_{1,r,r'}\\
&\le \frac{c}{(n-1)!}\sum_{m=0}^{n-1}\sum_{(\{s_j\}_{j=1}^{m+1},\{t_k\}_{k=1}^{n-m})\in S(n,m)}\notag\\
&\quad\cdot (1_{m\neq 0}(m-1)!+1_{m=0})(1_{m\neq n-1}(n-m-2)!+1_{m=n-1})\notag\\&\quad\cdot
 2^{-2a-2b}(c_0A)^{-n+1-\frac{1}{2}(a+b)}(c_0B)^{n-2}c_0^2A^2
(rL^{-d}\|g\|_{1,\infty}'c_0B+rL^{-d}c_0A\|g\|_{1,\infty})\\
&\quad\cdot \sum_{p\in \{2,4\}}2^{3p}(c_0A)^{\frac{p}{2}}\beta
 r'(2^{12}c_0^2A^2br)^{n-2}\notag\\
&\quad\cdot\big(1_{n\in \{s_j\}_{j=2}^{m+1}}1_{2m-2+p\ge a\ge
 2m}1_{b=2(n-m)}
+1_{n\in \{t_k\}_{k=2}^{n-m}}1_{a=2m+2}
1_{2(n-m)-4+p\ge b\ge
 2(n-m)-2}\big)\notag\\
&\le c L^{-d}(\|g\|_{1,\infty}'+AB^{-1}\|g\|_{1,\infty})\cdot(\text{R.H.S of
 }\eqref{eq_1_2_n_pre}).
\end{align*}
Therefore, by the same calculation as in \eqref{eq_1_2_n} we reach that
\begin{align}
&\sum_{a,b=2}^N1_{a,b\in
 2\N}c_0^{\frac{a+b}{2}}\alpha^{a+b}[V_{a,b}^{1-2,1,(n)},g]_{1,r,r'}
\le c L^{-d}(\|g\|_{1,\infty}'+AB^{-1}\|g\|_{1,\infty})
\alpha^{-1}(2^{8}\alpha^{-3})^{n-1}.\label{eq_1_2_n_coupled}
\end{align}
Assuming $\alpha^3\ge 2^9$, we deduce from \eqref{eq_1_2_n},
 \eqref{eq_1_2_n_coupled} that 
\begin{align*}
&\sum_{a,b=2}^N1_{a,b\in
 2\N}c_0^{\frac{a+b}{2}}\alpha^{a+b}\sum_{n=2}^{\infty}\|V_{a,b}^{1-2,1,(n)}\|_{1,r,r'}\le c\alpha^{-4},\\
&\sum_{a,b=2}^N1_{a,b\in
 2\N}c_0^{\frac{a+b}{2}}\alpha^{a+b}\sum_{n=2}^{\infty}[V_{a,b}^{1-2,1,(n)},g]_{1,r,r'}
\le c L^{-d}(\|g\|_{1,\infty}'+AB^{-1}\|g\|_{1,\infty})\alpha^{-4}.
\end{align*}
These inequalities enable us to conclude that if $\alpha\ge c$,
 $V^{1-2,1}\in \cR'(r,r')$.

Finally let us treat $V^{2,1,(n)}$ $(n\in \N_{\ge 2})$. Observe that for
 any $n\in \N_{\ge 2}$, $(u,\bla)\in \C\times \C^2$
\begin{align*}
&V^{2,1,(n)}(u,\bla)(\psi)\\
&=\frac{1}{n!}\sum_{l=2}^n\left(\begin{array}{c} n \\ l\end{array}
\right)\sum_{p=0}^{n-l}\left(\begin{array}{c} n-l \\ p\end{array}
\right)Tree(\{1,\cdots,n\},\cC_0)\prod_{j=1}^lV^{1,0}(\bla)(\psi^j+\psi)\\
&\quad\cdot \prod_{k=l+1}^{l+p}V^{0-1,0}(u)(\psi^k+\psi)
\prod_{i=l+p+1}^nV^{0-2,0}(u)(\psi^i+\psi)\Bigg|_{\psi^j=0\atop (\forall
 j\in \{1,\cdots,n\})}.
\end{align*}
We can see from this equality that 
\begin{align*}
&V^{2,1,(n)}\in C^{\o}\Bigg(\C\times \C^2,\bigwedge_{even}\cV\Bigg),\\
&V^{2,1,(n)}(u,\b0)(\psi)=\frac{\partial}{\partial
 \la_j}V^{2,1,(n)}(u,\b0)(\psi)=0,\quad(\forall j\in\{1,2\},\ u\in \C).
\end{align*}
Moreover, Lemma \ref{lem_general_estimation_direct} guarantees that for any
 $(u,\bla)\in \C\times \C^2$ the kernels of $V^{2,1,(n)}(u,\bla)(\psi)$
 satisfy \eqref{eq_time_translation_invariance}. If a uniform
 convergence of $\sum_{n=2}^{\infty}V^{2,1,(n)}(u,\bla)(\psi)$ with
 $(u,\bla)$ in a neighborhood of the origin is established, then
 $V^{2,1}(u,\bla)(\psi)$ will have the regularity with $(u,\bla)$ and
 the other properties described above in the domain. 
Thus it suffices to prove suitable norm bounds which imply the
 desired convergence of $\sum_{n=2}^{\infty}V^{2,1,(n)}$ together with
 the claimed inequalities. We can combine
 \eqref{eq_general_estimation_direct_one} with
 \eqref{eq_double_determinant_bound}, \eqref{eq_double_norm_bound},
 \eqref{eq_norm_upper_two}, \eqref{eq_norm_upper_double_four},
 \eqref{eq_norm_upper_field}, \eqref{eq_norm_upper_field_infinity},
 \eqref{eq_norm_upper_field_four},
 \eqref{eq_norm_upper_field_four_infinity}
to derive that for any $n\in \N_{\ge 2}$, $m\in \{0,2,\cdots,N\}$
\begin{align*}
&\|V^{2,1,(n)}_m\|_{1,r,r'}\\
&\le c\sum_{l=2}^n\left(\begin{array}{c} n \\ l\end{array}
\right)\sum_{p=0}^{n-l}\left(\begin{array}{c} n-l \\ p\end{array}
\right)(c_0A)^{-n+1-\frac{m}{2}}2^{-2m}(c_0B)^{n-1}\sum_{p_1\in \{2,4\}}
2^{3p_1}(c_0A)^{\frac{p_1}{2}}\beta r'\\
&\quad\cdot \prod_{j=2}^l\Bigg(\sum_{p_j\in
 \{2,4\}}2^{3p_j+1}(c_0A)^{\frac{p_j}{2}}r'
\Bigg)\prod_{k=l+1}^{l+p}(2^6c_0Ar
 L^{-d})\prod_{i=l+p+1}^n(2^{12}c_0^2A^2br)\\
&\quad\cdot 1_{\sum_{j=1}^lp_j+2n-4l-2p+2\ge m\ge 2(n-l-p)}\\
&\le c 2^{-2m}c_0^{-\frac{m}{2}}A^{-\frac{m}{2}}
\sum_{l=2}^n\left(\begin{array}{c} n \\ l\end{array}
\right)\sum_{p=0}^{n-l}\left(\begin{array}{c} n-l \\ p\end{array}
\right)\sum_{p_1\in \{2,4\}}2^{3p_1}(c_0A)^{\frac{p_1}{2}}\beta r'\\
&\quad\cdot \prod_{j=2}^l\Bigg(\sum_{p_j\in
 \{2,4\}}2^{3p_j+1}c_0^{\frac{p_j}{2}}A^{\frac{p_j}{2}-1}Br'\Bigg)
(2^6c_0Br L^{-d})^p(2^{12}c_0^2ABbr)^{n-l-p}\\
&\quad\cdot 1_{\sum_{j=1}^lp_j+2n-4l-2p+2\ge m\ge 2(n-l-p)}.
\end{align*}
It follows that 
\begin{align}
&\|V^{2,1,(n)}_0\|_{1,r,r'}\label{eq_2_1_0}\\
&\le c\sum_{l=2}^n\left(\begin{array}{c} n \\ l\end{array}
\right) c_0^2(A+1)^2\beta r'(2^{13}c_0^2(A+1)Br')^{l-1}
(2^6c_0Br L^{-d})^{n-l}\notag\\
&\le c\sum_{l=2}^n\left(\begin{array}{c} n \\ l\end{array}
\right) (2^{13}\alpha^{-5})^{l}
(2^6\alpha^{-5})^{n-l}\notag\\
&\le c(2^{14}\alpha^{-5})^n,\notag\\
&\sum_{m=2}^Nc_0^{\frac{m}{2}}\alpha^m\|V^{2,1,(n)}_m\|_{1,r,r'}\label{eq_2_1_n}\\
&\le c\sum_{l=2}^n\left(\begin{array}{c} n \\ l\end{array}
\right)\sum_{p=0}^{n-l}\left(\begin{array}{c} n-l \\ p\end{array}
\right)\sum_{p_1\in \{2,4\}}2^{3p_1}(c_0A)^{\frac{p_1}{2}}\beta r'\notag\\
&\quad\cdot \prod_{j=2}^l\Bigg(\sum_{p_j\in
 \{2,4\}}2^{3p_j+1}c_0^{\frac{p_j}{2}}A^{\frac{p_j}{2}-1}Br'\Bigg)
(2^6c_0Br L^{-d})^p(2^{8}c_0^2\alpha^2Bbr)^{n-l-p}\notag\\
&\quad\cdot (1+\alpha A^{-\frac{1}{2}})^{\sum_{j=1}^lp_j-2l+2}\notag\\
&\le c\sum_{l=2}^n\left(\begin{array}{c} n \\ l\end{array}
\right)\sum_{p=0}^{n-l}\left(\begin{array}{c} n-l \\ p\end{array}
\right)\sum_{p_1\in \{2,4\}}2^{3p_1}(c_0A)^{\frac{p_1}{2}}\beta r'
(1+\alpha A^{-\frac{1}{2}})^{p_1}
\notag\\
&\quad\cdot \Bigg(\sum_{m\in
 \{2,4\}}2^{3m+1}c_0^{\frac{m}{2}}A^{\frac{m}{2}-1}B r'
(1+\alpha A^{-\frac{1}{2}})^{m-2}
\Bigg)^{l-1}
(2^6c_0Br L^{-d})^p(2^{8}c_0^2\alpha^2Bbr)^{n-l-p}\notag\\
&\le c\sum_{l=2}^n\left(\begin{array}{c} n \\ l\end{array}
\right)c_0^2\alpha^4(A+1)^2\beta r'(2^{15}c_0^2\alpha^2(A+1)Br')^{l-1}
\notag\\
&\quad\cdot (2^6c_0Br L^{-d}+2^8c_0^2\alpha^2Bbr)^{n-l}\notag\\
&\le c
 \alpha^2(2^{15}\alpha^{-3}+2^6\alpha^{-5}+2^8\alpha^{-3})^n\notag\\
&\le c \alpha^2(2^{16}\alpha^{-3})^n.\notag
\end{align}
On the assumption $\alpha^3\ge 2^{17}$, the inequalities
 \eqref{eq_2_1_0}, \eqref{eq_2_1_n} yield that 
\begin{align*}
&\sum_{n=2}^{\infty}\|V_0^{2,1,(n)}\|_{1,r,r'}\le c\alpha^{-10},\quad \sum_{m=2}^Nc_0^{\frac{m}{2}}\alpha^m\sum_{n=2}^{\infty}\|V_m^{2,1,(n)}\|_{1,r,r'}\le
 c\alpha^{-4}.
\end{align*}
Assuming additionally that $\alpha\ge c$, we can conclude that $V^{2,1}\in \cW(r,r')$.
\end{proof}

Using the results obtained in Lemma
\ref{lem_continuation_natural_parameter} and Lemma
\ref{lem_continuation_artificial_parameter}, we can construct an
analytic continuation of the function 
\begin{align}
(u,\bla)\mapsto 
\log\Bigg(
\int e^{V^{0-1,0}(u)(\psi)+V^{0-2,0}(u)(\psi)+V^{1,0}(\bla)(\psi)}d\mu_{\cC_0+\cC_1}(\psi)\Bigg)\label{eq_target_full_integration} 
\end{align}
in a neighborhood of the origin. This can be achieved by integrating the
output of the first integration with the covariance $\cC_1$. 
We want to keep the analyticity with the variable $u$ in the same 
domain as in Lemma
\ref{lem_continuation_natural_parameter}, Lemma
\ref{lem_continuation_artificial_parameter}, while the domain of the
artificial variable $\bla$ can be taken smaller. We only need 
estimates previously proved in \cite[\mbox{Subsection 3.2}]{K_BCS_I}, 
\cite[\mbox{Subsection 4.2}]{K_BCS_II} for this purpose. We will not use
the estimates presented in Subsection
\ref{subsec_general_estimation} in the rest of this
paper. However, we need to argue differently from the previous final
integration steps \cite[\mbox{Lemma 3.8}]{K_BCS_I}, 
\cite[\mbox{Lemma 4.10}]{K_BCS_II}, since here the final covariance
$\cC_1$ depends on time variables. Let 
\begin{align*}
r=c_0^{-2}\alpha^{-5}b^{-1}B^{-1},\quad 
r'=(A+1)^{-2}(B+1)^{-1}(\beta+1)^{-1}c_0^{-2}\alpha^{-5}
\end{align*}
as we set in Lemma \ref{lem_continuation_artificial_parameter}. Then let us define
the functions $V^{end,(n)}$,
$V^{1-3,end}:\overline{D(r)}\times\overline{D(r')}^2\to\C$ $(n\in \N)$
by 
\begin{align*}
&V^{end,(n)}(u,\bla)\\
&:=\frac{1}{n!}Tree(\{1,\cdots,n\},\cC_1)
\\
&\quad\cdot \prod_{j=1}^n\Bigg(
\sum_{m=1}^2V^{0-m,1}(u)(\psi^j)+\sum_{k=1}^3V^{1-k,1}(u,\bla)(\psi^j)+
V^{2,1}(u,\bla)(\psi^j)
\Bigg)\Bigg|_{\psi^j=0\atop (\forall j\in \{1,\cdots,n\})},\\
&V^{1-3,end}(u,\bla):=Tree(\{1\},\cC_1)V^{1-3,1}(u,\bla)(\psi^1).
\end{align*}
Moreover, we set 
$$
V^{end}(u,\bla):=\sum_{n=1}^{\infty}V^{end,(n)}(u,\bla)
$$
if it converges. By the definition and the division formula of Grassmann
Gaussian integral (see e.g. \cite[\mbox{Proposition I.21}]{FKT}) one can
check that $V^{end}$ is an
analytic continuation of the function \eqref{eq_target_full_integration}
if it is proved to be analytic in a neighborhood of the origin. It is
obvious that $V^{end,1-3}$ is actually independent of the variable $u$
and linear with $\bla\in \C^2$. We write as if it depends on $u$ only
for notational consistency. The result is claimed as follows.

\begin{lemma}\label{lem_final_integration}
There exists $c\in \R_{>0}$ independent of any parameter such that if
 $\alpha\ge c$, $L^d\ge A+1$, the following statements hold.
\begin{itemize}
\item 
\begin{align}
V^{end}\in C\left(\overline{D(r)}\times \overline{D(\hat{r})}^2\right)\cap 
C^{\o}\left(D(r)\times D(\hat{r})^2\right).\label{eq_end_regularity}
\end{align}
\item
\begin{align}
\frac{h}{N}\sup_{u\in\overline{D(r)}}|V^{end}(u,\b0)|\le
 (A+1)B^{-1}L^{-d}.\label{eq_end_bound}
\end{align}
\item
\begin{align}
&\left|
\frac{\partial}{\partial \la_j}V^{end}(u,\b0)
-\frac{\partial}{\partial \la_j}V^{1-3,end}(u,\b0)
\right|\le
 (A+1)^3(B+1)(\beta+1)c_0^2\alpha^5L^{-d},\label{eq_end_difference}\\
&(\forall j\in
 \{1,2\},\ u\in D(r)).\notag
\end{align}
\end{itemize}
Here
\begin{align*}
r=c_0^{-2}\alpha^{-5}b^{-1}B^{-1},\quad
 \hat{r}:=2^{-1}(h+1)^{-1}(A+1)^{-2}(B+1)^{-2}(\beta+1)^{-1}c_0^{-2}\alpha^{-5}.
\end{align*}
\end{lemma}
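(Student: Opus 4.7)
My plan is to carry out the final integration against $\cC_1$ by essentially mimicking the proofs of \cite[\mbox{Lemma 3.8}]{K_BCS_I} and \cite[\mbox{Lemma 4.10}]{K_BCS_II}, but with the caveat that here $\cC_1$ depends on the time variables while the previous final covariances did not. The three claims reduce to uniform convergence of $\sum_{n=1}^{\infty}V^{end,(n)}(u,\bla)$ together with suitable norm bounds on each term, so the tools needed are precisely the kernel bounds derived in Lemma \ref{lem_continuation_natural_parameter} and Lemma \ref{lem_continuation_artificial_parameter}, combined with the generic tree-expansion estimates \cite[\mbox{Lemma 3.1, Lemma 3.2, Lemma 3.3}]{K_BCS_I} (or their counterparts in \cite{K_BCS_II}), applied with the covariance bounds \eqref{eq_double_determinant_bound}, \eqref{eq_double_norm_bound}, \eqref{eq_double_norm_bound_dash} specialized to $\cC_1$ (the constant $A$ on $\cC_1$'s determinant bound is effectively $1$).

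First I would prove \eqref{eq_end_bound} and \eqref{eq_end_regularity}. At $\bla=\b0$ only $V^{0-1,1}\in\cQ(r)$ and $V^{0-2,1}\in\cR(r)$ contribute, since $V^{1-1,1},V^{1-2,1},V^{1-3,1}$ are linear and $V^{2,1}$ is quadratic in $\bla$. Expanding the product in $V^{end,(n)}(u,\b0)$ into $2^n$ terms and applying the standard tree bounds, the $n-1$ contractions cost $c_0^{n-1}$ from the determinant bound together with $(c_0B)^{n-1}$ from $\|\tilde\cC_1\|_{1,\infty}$, the $(n-2)!$ combinatorial factor from the number of trees is absorbed by $1/n!$, and the product of vertex norms produces an overall factor of $(A+1)B^{-1}L^{-d}$ times a geometrically small $(c\alpha^{-3})^n$. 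Summing in $n$ gives \eqref{eq_end_bound}; uniform convergence in $u\in\overline{D(r)}$ and a similar estimate at nonzero $\bla$ (with the $\bla$-dependence only worsening by a polynomial factor controlled by $\hat{r}$) gives \eqref{eq_end_regularity}.

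Next I would treat \eqref{eq_end_difference}. Differentiating in $\la_j$ at $\b0$ picks out tree diagrams containing exactly one linear-in-$\bla$ vertex (from $V^{1-1,1}\in\cQ'(r,r')$, $V^{1-2,1}\in\cR'(r,r')$, or $V^{1-3,1}\in\cS(r,r')$) and $n-1$ constant-in-$\bla$ vertices (from $V^{0-1,1}+V^{0-2,1}$); the $V^{2,1}\in\cW(r,r')$ contribution vanishes identically. The $n=1$ single-vertex contribution of $V^{1-3,1}$ is precisely $\partial_{\la_j}V^{1-3,end}(u,\b0)$, so \eqref{eq_end_difference} collects the $n=1$ pieces of $V^{1-1,1},V^{1-2,1}$ and all $n\ge 2$ diagrams. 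The former are bounded by the $\cQ'(r,r')$/$\cR'(r,r')$ definitions directly, producing $L^{-d}$; for the latter, the generic tree bounds again turn $(n-2)!$ into $(c\alpha^{-3})^n$, with each additional $V^{0-m,1}$ vertex contributing its own $L^{-d}$. The one subtle point is when the distinguished linear-in-$\bla$ vertex is $V^{1-3,1}$: its kernel bounds are $L^{-d}$-free, and this is exactly compensated by $V^{1-3,1}$ being linear in $\bla$, so that restricting to $|\bla|\le\hat{r}\le r'/(2(h+1)(A+1)(B+1)(\beta+1))$ supplies a factor $\hat{r}/r'$ which, upon being converted to the uniform norm via $\|V_m^{1-3,1}\|_\infty \le h^m \|V_m^{1-3,1}\|_1$ (this is what forces the $(h+1)^{-1}$ in $\hat{r}$), restores the missing $L^{-d}$ together with the covariance-factors $(A+1)(B+1)(\beta+1)$ absorbed at the contraction to the neighboring vertex.

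The main obstacle will be the bookkeeping in Step 3: unlike the previous papers, the time dependence of $\cC_1$ forces $\|\tilde\cC_1\|'_{1,\infty}$ to scale like $A$, and several vertex types ($\cS$-type in particular) lack an intrinsic $L^{-d}$, so one must verify that the shrunken radius $\hat{r}$ is exactly small enough to close the estimate \eqref{eq_end_difference} while still being large enough to permit a meaningful analytic continuation. Once this balance is verified with the explicit choice of $\hat{r}$, everything else reduces to routine application of \cite[\mbox{Lemma 3.1, Lemma 3.2, Lemma 3.3}]{K_BCS_I}, and analyticity follows from uniform convergence.
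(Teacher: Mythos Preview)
Your proposal contains a genuine gap in the proofs of \eqref{eq_end_bound} and \eqref{eq_end_difference}: you assume that ``each additional $V^{0-m,1}$ vertex contribut[es] its own $L^{-d}$'', but this is false for $m=2$. By the definition of $\cR(r)$, the bi-anti-symmetric kernels of $V^{0-2,1}$ satisfy only $\sum_{p,q}c_0^{(p+q)/2}\alpha^{p+q}\|V^{0-2,1}_{p,q}\|_{1,\infty,r}\le B^{-1}$ with \emph{no} $L^{-d}$ factor; the $L^{-d}$ appears only in the coupled bound \eqref{eq_grassmann_set_divided_coupled}. Thus if you naively expand $V^{end,(n)}(u,\b0)$ into $2^n$ terms and apply the tree bound, the term with $n$ copies of $V^{0-2,1}$ alone gives order $(N/h)B^{-1}(c\alpha^{-4})^n$, which does \emph{not} carry $L^{-d}$ and cannot satisfy \eqref{eq_end_bound}.

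The missing mechanism is a cancellation that the paper exploits explicitly. One rewrites $V^{end,(n)}(u,\b0)$ via the same decomposition as \eqref{eq_decomposition_technique}: a sum over $l\ge 1$ (at least one $V^{0-1,1}$ vertex, carrying $L^{-d}$), a ``connected'' term built from the bi-anti-symmetric kernel $V^{0-2,1}_{p,q}$ on a single tree with $n+1$ leaves (whose bound uses $[V^{0-2,1}_{p,q},\tilde{\cC}_1]_{1,\infty,r}$, which \emph{does} carry $L^{-d}$ via \eqref{eq_grassmann_set_divided_coupled} and \eqref{eq_double_norm_bound_dash}), and a ``divided tree'' term summed over $S(n,m)$. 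The crucial point is that this third term \emph{vanishes identically}: the kernels $V^{0-2,1}_{p,q}$ satisfy the time-vanishing property \eqref{eq_time_vanishing_property}, and $\cC_1$ satisfies the time-translation invariance \eqref{eq_covariance_time_translation}, so each of the two disconnected tree integrals factors through a sum to which \eqref{eq_time_vanishing_property} applies. The same cancellation is needed again in the proof of \eqref{eq_end_difference} for $n\ge 2$, where the presence of a single $V^{1-3,1}$ vertex (which carries no $L^{-d}$) must be compensated by extracting $L^{-d}$ from the remaining $V^{0-m,1}$ vertices via this mechanism; your proposed $\hat{r}/r'$ compensation does not supply this factor, since the bound \eqref{eq_end_difference} is evaluated at $\bla=\b0$ and involves $1/r'$, not $1/\hat{r}$.
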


\begin{proof}
The following inequalities will be often used. For $m\in
 \{0,2,\cdots,N\}$
\begin{align}
&\|V_m^{0-2,1}\|_{1,\infty,r}\le \sum_{p,q\in
 2\N}1_{p+q=m}\|V_{p,q}^{0-2,1}\|_{1,\infty,r},\label{eq_divided_connect_infinity}\\
&\|V_m^{1-2,1}\|_{1,r,r'}\le \sum_{p,q\in
 2\N}1_{p+q=m}\|V_{p,q}^{1-2,1}\|_{1,r,r'}.\label{eq_divided_connect_one}
\end{align}
The following inequality is essentially same as
 \cite[\mbox{(3.92)}]{K_BCS_I}, \cite[\mbox{Lemma
 4.9}]{K_BCS_II}.
\begin{align}
&\|V_m^{a,1}(u,\eps\bla)\|_{1,\infty}\le h\eps
 \|V_m^{a,1}\|_{1,r,r'},\label{eq_grassmann_scaling}\\
&(\forall u\in \overline{D(r)},\ \bla \in\overline{D(r')}^2,\
 \eps\in [0,1/2],\ a\in \{1-1,1-2,1-3,2\},\notag\\
&\quad m\in
 \{0,2,\cdots,N\}).\notag
\end{align}
Cauchy's integral formula can be used to prove it in
 the case $a=2$. 
Set $\eps:=2^{-1}(h+1)^{-1}(B+1)^{-1}$ so that $\eps \in(0,1/2]$. We can
 deduce from ``(3.16)'' of \cite[\mbox{Lemma
 3.1}]{K_BCS_I} (or ``(4.8)'' of \cite[\mbox{Lemma
 4.1}]{K_BCS_II}), \eqref{eq_double_determinant_bound},
 \eqref{eq_double_norm_bound}, 
\eqref{eq_grassmann_set_inverse_volume},
 \eqref{eq_grassmann_set_divided}, 
\eqref{eq_grassmann_set_inverse_volume_one},
\eqref{eq_grassmann_set_divided_one},
\eqref{eq_grassmann_set_descendant}, \eqref{eq_grassmann_set_quadratic},
\eqref{eq_divided_connect_infinity}, \eqref{eq_divided_connect_one},
\eqref{eq_grassmann_scaling} and the assumption $\alpha\ge 2^3$ that for $n\in \N_{\ge 2}$,
 $(u,\bla)\in \overline{D(r)}\times \overline{D(r')}^2$ 
\begin{align*}
&|V^{end,(n)}(u,\eps\bla)|\\
&\le \frac{N}{h}c_0^{-n+1}(c_0B)^{n-1}\Bigg(
\sum_{p=2}^N2^{3p}c_0^{\frac{p}{2}}\Bigg(\|V_p^{0-1,1}\|_{1,\infty,r}+1_{p\ge
 4}\|V_p^{0-2,1}\|_{1,\infty,r}\\
&\qquad\qquad\qquad\qquad\quad +\sum_{a\in
 \{1-1,1-3,2\}}\|V_p^{a,1}(u,\eps \bla)\|_{1,\infty}+1_{p\ge
 4}\|V_p^{1-2,1}(u,\eps \bla)\|_{1,\infty}\Bigg)\Bigg)^n\\
&\le
 \frac{N}{h}B^{n-1}\big(2^6\alpha^{-2}(A+1)B^{-1}L^{-d}+2^{12}\alpha^{-4}B^{-1}+ 
3\cdot 2^6\alpha^{-2}h\eps+2^{12}\alpha^{-4}h\eps\big)^n\\
&\le \frac{N}{h}B^{-1}(2^{14}\alpha^{-2})^n.
\end{align*}
In the last inequality we also used that $L^d\ge A+1$, $h\eps \le
 B^{-1}$. Thus, if $\alpha^2\ge 2^{15}$, 
\begin{align*}
\sum_{n=2}^{\infty}\sup_{(u,\bla)\in \overline{D(r)}\times
 \overline{D(\hat{r})}^2}|V^{end,(n)}(u,\bla)|<\infty,
\end{align*}
which implies \eqref{eq_end_regularity}.

To derive \eqref{eq_end_bound}, let us observe that for $n\in \N_{\ge
 1}$, $u\in \overline{D(r)}$
\begin{align}
&V^{end,(n)}(u,\b0)\label{eq_end_natural_expansion}\\
&=\frac{1}{n!}Tree(\{1,\cdots,n\},\cC_1)
\prod_{j=1}^n\Bigg(
\sum_{m=1}^2V^{0-m,1}(u)(\psi^j)\Bigg)\Bigg|_{\psi^j=0\atop (\forall
 j\in \{1,\cdots,n\})}\notag\\
&=\sum_{l=1}^n\left(\begin{array}{c} n \\ l\end{array}\right)
\frac{1}{n!}Tree(\{1,\cdots,n\},\cC_1)\notag\\
&\quad\cdot \prod_{j=1}^lV^{0-1,1}(u)(\psi^j)
\prod_{k=l+1}^nV^{0-2,1}(u)(\psi^k)\Bigg|_{\psi^j=0\atop (\forall
 j\in \{1,\cdots,n\})}\notag\\
&\quad +
 \frac{1}{n!}Tree(\{1,\cdots,n+1\},\cC_1)\sum_{p,q=2}^N\left(\frac{1}{h}\right)^{p+q}\sum_{\bX\in I^p\atop \bY\in I^q}V_{p,q}^{0-2,1}(u)(\bX,\bY)\psi_{\bX}^1\psi_{\bY}^2\notag\\
&\qquad\cdot \prod_{j=3}^{n+1}V^{0-2,1}(u)(\psi^j)\Bigg|_{\psi^j=0\atop (\forall
 j\in \{1,\cdots,n+1\})}\notag\\
&\quad
 +\frac{1}{n!}\sum_{m=0}^{n-1}\sum_{(\{s_j\}_{j=1}^{m+1},\{t_k\}_{k=1}^{n-m})\in S(n,m)}\sum_{p,q=2}^N\left(\frac{1}{h}\right)^{p+q}\sum_{\bX\in I^p\atop \bY\in I^q}V_{p,q}^{0-2,1}(u)(\bX,\bY)\notag\\
&\qquad\cdot Tree(\{s_j\}_{j=1}^{m+1},\cC_1)\psi_{\bX}^{s_1}\prod_{j=2}^{m+1}V^{0-2,1}(u)(\psi^{s_j})\Bigg|_{\psi^{s_j}=0\atop (\forall
 j\in \{1,\cdots,m+1\})}\notag\\
&\qquad\cdot Tree(\{t_k\}_{k=1}^{n-m},\cC_1)\psi_{\bY}^{t_1}\prod_{k=2}^{n-m}V^{0-2,1}(u)(\psi^{t_k})\Bigg|_{\psi^{t_k}=0\atop (\forall
 k\in \{1,\cdots,n-m\})}.\notag
\end{align}
The above transformation is based on the same idea as that behind
 \eqref{eq_decomposition_technique}. By the properties
 \eqref{eq_time_translation_invariance},
 \eqref{eq_time_vanishing_property} of the kernels of $V^{0-2,1}$ and
 \eqref{eq_covariance_time_translation} the third term in the right-hand
 side of \eqref{eq_end_natural_expansion} vanishes. Then, combination of 
``(3.14)'' of \cite[\mbox{Lemma 3.1}]{K_BCS_I}, 
``(3.24)'' of \cite[\mbox{Lemma 3.2}]{K_BCS_I} (or ``(4.6)'' of
 \cite[\mbox{Lemma 4.1}]{K_BCS_II}, 
``(4.11)'' of \cite[\mbox{Lemma 4.2}]{K_BCS_II}), 
\eqref{eq_double_determinant_bound}, \eqref{eq_double_norm_bound}, 
\eqref{eq_double_norm_bound_dash},
 \eqref{eq_grassmann_set_inverse_volume}, 
\eqref{eq_grassmann_set_divided_coupled} and the assumption $\alpha \ge
 2^2$ yields that
\begin{align*}
&\sup_{u\in\overline{D(r)}}|V^{end,(1)}(u,\b0)|\\
&\le \|V_0^{0-1,1}\|_{1,\infty,r}
+\frac{N}{h}\sum_{m=2}^Nc_0^{\frac{m}{2}}\|V_m^{0-1,1}\|_{1,\infty,r}+\frac{N}{h}c_0^{-1}\sum_{p,q=2}^N2^{2p+2q}c_0^{\frac{p+q}{2}}[V_{p,q}^{0-2,1},\tilde{\cC}_1]_{1,\infty,r}\\
&\le
 \frac{N}{h}\alpha^{-1}AB^{-1}L^{-d}+\frac{N}{h}\alpha^{-2}(A+1)B^{-1}L^{-d}
+c\frac{N}{h}\alpha^{-4} AB^{-1}L^{-d}\\
&\le c\frac{N}{h}\alpha^{-1}(A+1)B^{-1}L^{-d}.
\end{align*}
On the other hand, for $n\in \N_{\ge 2}$ we can use 
``(3.16)'' of \cite[\mbox{Lemma 3.1}]{K_BCS_I},
``(3.26)'' of \cite[\mbox{Lemma 3.2}]{K_BCS_I}
(or ``(4.8)'' of \cite[\mbox{Lemma 4.1}]{K_BCS_II},
``(4.13)'' of \cite[\mbox{Lemma 4.2}]{K_BCS_II}), 
\eqref{eq_double_determinant_bound}, \eqref{eq_double_norm_bound}, 
\eqref{eq_double_norm_bound_dash},
 \eqref{eq_grassmann_set_inverse_volume}, 
\eqref{eq_grassmann_set_divided},
 \eqref{eq_grassmann_set_divided_coupled},
 \eqref{eq_divided_connect_infinity} and the 
 assumptions $\alpha\ge 2^3$, $L^d\ge
 A+1$ to derive that 
\begin{align*}
&\sup_{u\in\overline{D(r)}}|V^{end,(n)}(u,\b0)|\\
&\le \frac{N}{h}\sum_{l=1}^n\left(\begin{array}{c} n \\ l
\end{array}\right) c_0^{-n+1}(c_0B)^{n-1}\\
&\quad\cdot \Bigg(
\sum_{m=2}^N2^{3m}c_0^{\frac{m}{2}}\|V_m^{0-1,1}\|_{1,\infty,r}
\Bigg)^l
\Bigg(
\sum_{p=4}^N2^{3p}c_0^{\frac{p}{2}}\|V_p^{0-2,1}\|_{1,\infty,r}
\Bigg)^{n-l}\\
&\quad + \frac{N}{h}c_0^{-n}(c_0B)^{n-1}\sum_{p,q=2}^N2^{3p+3q}c_0^{\frac{p+q}{2}}[V_{p,q}^{0-2,1},\tilde{\cC}_1]_{1,\infty,r}
\Bigg(
\sum_{m=4}^N2^{3m}c_0^{\frac{m}{2}}\|V_m^{0-2,1}\|_{1,\infty,r}
\Bigg)^{n-1}\\
&\le \frac{N}{h}\sum_{l=1}^n\left(\begin{array}{c} n \\ l
\end{array}\right)
B^{n-1}(2^6\alpha^{-2}(A+1)B^{-1}L^{-d})^l(2^{12}\alpha^{-4}B^{-1})^{n-l}
\\
&\quad +c\frac{N}{h}\alpha^{-4}AB^{-1}L^{-d}(2^{12}\alpha^{-4})^{n-1}\\
&\le c\frac{N}{h}(A+1)B^{-1}L^{-d}(2^{13}\alpha^{-2})^n.
\end{align*}
Therefore, on the assumption $\alpha^2\ge 2^{14}$ 
\begin{align*}
\frac{h}{N}\sum_{n=1}^{\infty}\sup_{u\in\overline{D(r)}}|V^{end,(n)}(u,\b0)|\le
 c\alpha^{-1}(A+1)B^{-1}L^{-d},
\end{align*}
which coupled with the further assumption $\alpha\ge c$ gives
 \eqref{eq_end_bound}. 

Finally let us prove \eqref{eq_end_difference}. For any $u\in D(r)$,
 $j\in \{1,2\}$
\begin{align*}
&\frac{\partial}{\partial \la_j}V^{end,(1)}(u,\b0)-
 \frac{\partial}{\partial \la_j}V^{1-3,end}(u,\b0)\\
&=\frac{1}{r'}Tree(\{1\},\cC_1)(V^{1-1,1}(u,r'\be_j)(\psi^1)+
V^{1-2,1}(u,r'\be_j)(\psi^1))\\
&=\frac{1}{r'}Tree(\{1\},\cC_1)V^{1-1,1}(u,r'\be_j)(\psi^1)\\
&\quad + \frac{1}{r'}Tree(\{1,2\},\cC_1)\sum_{p,q=2}^N\left(\frac{1}{h}\right)^{p+q}
\sum_{\bX\in I^p\atop \bY\in
 I^q}V^{1-2,1}_{p,q}(u,r'\be_j)(\bX,\bY)\psi_{\bX}^1\psi_{\bY}^2
\Bigg|_{\psi^j=0\atop (\forall j\in \{1,2\})},
\end{align*}
where $\be_1:=(1,0)$, $\be_2:=(0,1)\in \R^2$.
To derive the last equality, we transformed the integral of $V^{1-2,1}$
 in the same manner as in
 \eqref{eq_end_natural_expansion} and erased one part by taking into
 account the property
 \eqref{eq_time_vanishing_property} of the kernels of $V^{1-2,1}$ and
 \eqref{eq_covariance_time_translation}. Moreover, by 
``(3.15)'' of \cite[\mbox{Lemma 3.1}]{K_BCS_I}, 
``(3.25)'' of \cite[\mbox{Lemma 3.2}]{K_BCS_I}
(or ``(4.7)'' of \cite[\mbox{Lemma 4.1}]{K_BCS_II}, 
``(4.12)'' of \cite[\mbox{Lemma 4.2}]{K_BCS_II}), 
\eqref{eq_double_determinant_bound}, \eqref{eq_double_norm_bound}, 
\eqref{eq_double_norm_bound_dash},
 \eqref{eq_grassmann_set_inverse_volume_one}, 
\eqref{eq_grassmann_set_divided_one_coupled} and the assumption
 $\alpha\ge 2^{2}$
\begin{align}
&\left|\frac{\partial}{\partial \la_j}V^{end,(1)}(u,\b0)-
 \frac{\partial}{\partial
 \la_j}V^{1-3,end}(u,\b0)\right|\label{eq_end_difference_1}\\
&\le
 \frac{1}{r'}\|V_0^{1-1,1}\|_{1,r,r'}+\frac{1}{r'}\sum_{m=2}^{N}c_0^{\frac{m}{2}}\|V_m^{1-1,1}\|_{1,r,r'}
+\frac{1}{r'}c_0^{-1}\sum_{p,q=2}^N2^{2p+2q}c_0^{\frac{p+q}{2}}[V_{p,q}^{1-2,1},\tilde{\cC_1}]_{1,r,r'}\notag\\
&\le \frac{c}{r'}\alpha^{-1}(A+1)L^{-d}.\notag
\end{align}
Let $n\in \N_{\ge 2}$. Based on the properties
 \eqref{eq_time_translation_invariance}, \eqref{eq_time_vanishing_property}
 of the kernels of $V^{0-2,1}$,
the property \eqref{eq_time_translation_invariance} of the kernels of
 $V^{1-a,1}$ $(a=1,2,3)$
 and
 \eqref{eq_covariance_time_translation}, we can transform the defining
 equality in the same
 way as above and obtain that for $u\in D(r)$, $j\in \{1,2\}$
\begin{align*}
&\frac{\partial}{\partial \la_j}V^{end,(n)}(u,\b0)\\
&=\frac{1}{(n-1)!r'}Tree(\{1,\cdots,n\},\cC_1)\\
&\quad\cdot \sum_{a=1}^3V^{1-a,1}(u,r'\be_j)(\psi^1)
\prod_{k=2}^n\Bigg(
\sum_{p=1}^2V^{0-p,1}(u)(\psi^k)\Bigg)\Bigg|_{\psi^i=0\atop (\forall
 i\in \{1,\cdots,n\})}\notag\\
&=\frac{1}{(n-1)!r'}\sum_{l=2}^n
\left(\begin{array}{c} n-1 \\ l-1\end{array}\right)
Tree(\{1,\cdots,n\},\cC_1)\\
&\quad\cdot \sum_{a=1}^3V^{1-a,1}(u,r'\be_j)(\psi^1)
\prod_{k=2}^lV^{0-1,1}(u)(\psi^k)
\prod_{s=l+1}^nV^{0-2,1}(u)(\psi^s)\Bigg|_{\psi^i=0\atop (\forall
 i\in \{1,\cdots,n\})}\notag\\
&\quad + \frac{1}{(n-1)!r'} Tree(\{1,\cdots,n+1\},\cC_1)
\sum_{p,q=2}^N\left(\frac{1}{h}\right)^{p+q}\sum_{\bX\in I^p\atop \bY\in
 I^q}V_{p,q}^{0-2,1}(u)(\bX,\bY)\psi_{\bX}^1\psi_{\bY}^2\\
&\qquad\cdot \prod_{k=3}^nV^{0-2,1}(u)(\psi^k)\sum_{a=1}^3V^{1-a,1}(u,r'\be_j)(\psi^{n+1})
\Bigg|_{\psi^i=0\atop (\forall
 i\in \{1,\cdots,n+1\})}.
\end{align*}
In this situation we can apply 
``(3.17)'' of \cite[\mbox{Lemma 3.1}]{K_BCS_I}, 
``(3.27)'' of \cite[\mbox{Lemma 3.2}]{K_BCS_I}
(or ``(4.9)'' of \cite[\mbox{Lemma 4.1}]{K_BCS_II}, 
``(4.14)'' of \cite[\mbox{Lemma 4.2}]{K_BCS_II}), 
\eqref{eq_double_determinant_bound}, \eqref{eq_double_norm_bound}, 
\eqref{eq_double_norm_bound_dash},
 \eqref{eq_grassmann_set_inverse_volume},
 \eqref{eq_grassmann_set_divided}, \eqref{eq_grassmann_set_divided_coupled},
 \eqref{eq_grassmann_set_inverse_volume_one}, 
\eqref{eq_grassmann_set_divided_one},
\eqref{eq_grassmann_set_descendant},
 \eqref{eq_divided_connect_infinity}, \eqref{eq_divided_connect_one} and
 the inequalities $\alpha\ge 2^{3}$, 
 $L^d\ge A+1$ to
 deduce that 
\begin{align*}
&\left|\frac{\partial}{\partial \la_j}V^{end,(n)}(u,\b0)\right|\\
&\le \frac{1}{r'}
\sum_{l=2}^n
\left(\begin{array}{c} n-1 \\
      l-1\end{array}\right)c_0^{-n+1}(c_0B)^{n-1}\\
&\quad\cdot\sum_{p=2}^N2^{3p}c_0^{\frac{p}{2}}
(\|V_p^{1-1,1}\|_{1,r,r'}+1_{p\ge 4}\|V_p^{1-2,1}\|_{1,r,r'}
+\|V_p^{1-3,1}\|_{1,r,r'})\\
&\quad\cdot \Bigg(\sum_{q=2}^N2^{3q}c_0^{\frac{q}{2}}\|V_q^{0-1,1}\|_{1,\infty,r}
\Bigg)^{l-1}
\Bigg(\sum_{m=4}^N2^{3m}c_0^{\frac{m}{2}}\|V_m^{0-2,1}\|_{1,\infty,r}
\Bigg)^{n-l}\\
&\quad +
 \frac{1}{r'}c_0^{-n}(c_0B)^{n-1}\sum_{p,q=2}^N2^{3p+3q}c_0^{\frac{p+q}{2}}
[V_{p,q}^{0-2,1},\tilde{\cC}_1]_{1,\infty,r}
\Bigg(\sum_{m=4}^N2^{3m}c_0^{\frac{m}{2}}\|V_m^{0-2,1}\|_{1,\infty,r}
\Bigg)^{n-2}\\
&\qquad\cdot \Bigg(\sum_{s=2}^N2^{3s}c_0^{\frac{s}{2}}
(\|V_s^{1-1,1}\|_{1,r,r'}+1_{s\ge 4}\|V_s^{1-2,1}\|_{1,r,r'}
+\|V_s^{1-3,1}\|_{1,r,r'})\Bigg)\\
&\le \frac{c}{r'}B^{n-1}\alpha^{-2}
\sum_{l=2}^n
\left(\begin{array}{c} n-1 \\ l-1\end{array}\right)
(2^6\alpha^{-2}(A+1)B^{-1}L^{-d})^{l-1}(2^{12}\alpha^{-4}B^{-1})^{n-l}\\
&\quad +\frac{c}{r'}B^{n-2}\alpha^{-6}AL^{-d}(2^{12}\alpha^{-4}B^{-1})^{n-2}\\
&\le \frac{c}{r'}(A+1)L^{-d}(2^{13}\alpha^{-2})^n.
\end{align*}
Thus by assuming that $\alpha^2\ge 2^{14}$ we have that 
\begin{align}
\sum_{n=2}^{\infty}\left|\frac{\partial}{\partial
 \la_j}V^{end,(n)}(u,\b0)\right|\le \frac{c}{r'}\alpha^{-4}(A+1)L^{-d}.
\label{eq_end_difference_n}
\end{align}
By coupling \eqref{eq_end_difference_1} with \eqref{eq_end_difference_n}
 and assuming that $\alpha\ge c$ once more we reach
 \eqref{eq_end_difference}.
\end{proof}

\subsection{The infinite-volume limit}\label{subsec_infinite_volume}

Among all the lemmas prepared in this section so far, Lemma
\ref{lem_standard_covariance_bounds}, Lemma
\ref{lem_covariance_determinant_bound}, Lemma
\ref{lem_final_integration} are the main necessary tools to prove
Theorem \ref{thm_infinite_volume_limit}. With these lemmas we can
straightforwardly follow the arguments of \cite[\mbox{Subsection
5.2}]{K_BCS_II} to complete the proof of Theorem
\ref{thm_infinite_volume_limit}. Though we should not lengthen the paper
by repeating the same statements as before, let us state a few pivotal
lemmas for the sake of readability. These are close to lemmas
proved in \cite{K_BCS_I}, \cite{K_BCS_II} but are adjusted to the
present situation. Let us recall the definitions of $V(u)(\psi)$,
$W(u)(\psi)$ given in the beginning of \cite[\mbox{Subsection
4.4}]{K_BCS_II} and $A^1(\psi)$, $A^2(\psi)$, $A(\psi)$ given in 
\cite[\mbox{Section 3}]{K_BCS_II}. It is apparent from
\eqref{eq_0_1_2_reduction}, \eqref{eq_1_reduction} that 
\begin{align*}
V^{0-1,0}(u)(\psi)+V^{0-2,0}(u)(\psi)=-V(u)(\psi)+W(u)(\psi),\quad
 V^{1,0}(\bla)(\psi)=-A(\psi).
\end{align*}
A practical application of Lemma \ref{lem_standard_covariance_bounds}, 
Lemma \ref{lem_covariance_determinant_bound}, 
Lemma \ref{lem_final_integration} results in the following lemma.

\begin{lemma}\label{lem_practical_double_scale_result}
Set 
\begin{align*}
\hat{A}:=(e_{min}+\beta^{-1}+\beta^{-1}e_{min}^{-1}+1)\max\{e_{min}^{-1},e_{min}^{-d-1}\},\quad
 \hat{B}:=\max\{e_{min}^{-1},e_{min}^{-d-1}\}.
\end{align*}
Then there exist $c\in \R_{>0}$ independent of any parameter and
 $\hat{c}_0\in\R_{\ge 1}$ depending only on $d$, $b$,
 $(\hbv_j)_{j=1}^d$, $c_E$ such that the following statements hold for
 any $\alpha\in \R_{\ge 1}$, $h\in \frac{2}{\beta}\N$, $L\in \N$,
 $\phi\in \C$
 satisfying that 
\begin{align}
&\alpha\ge c,\quad L^d\ge
 \hat{A}+1,\notag\\
&h\ge\max\{\sqrt{e_{max}^2+|\phi|^2},1\}+\frac{1}{\beta}(3\pi+2).
\label{eq_final_h_assumption}
\end{align}
\begin{enumerate}[(i)]
\item\label{item_practical_double_scale_bound}
\begin{align*}
&e^{-4b\beta(\hat{A}+1)\hat{B}^{-1}}\le \left|
\int e^{-V(u)(\psi)+W(u)(\psi)}d\mu_{C(\phi)}(\psi)
\right|\le e^{4b\beta(\hat{A}+1)\hat{B}^{-1}},\\
&\left(\forall u\in \overline{D(\hat{c}_0^{-2}\alpha^{-5}b^{-1}\hat{B}^{-1})}
\right).
\end{align*}
\item\label{item_practical_double_scale_difference}
\begin{align*}
&\left|\frac{\int
 e^{-V(u)(\psi)+W(u)(\psi)}A^j(\psi)d\mu_{C(\phi)}(\psi)}{\int
 e^{-V(u)(\psi)+W(u)(\psi)}d\mu_{C(\phi)}(\psi)}-\int A^j(\psi)d\mu_{C(\phi)}(\psi)
\right|\\
&\le
 (\hat{A}+1)^3(\hat{B}+1)(\beta+1)\hat{c}_0^2\alpha^5L^{-d},\quad
 \left(\forall j\in \{1,2\},\ u\in\overline{D(\hat{c}_0^{-2}\alpha^{-5}b^{-1}\hat{B}^{-1})}
\right).
\end{align*}
\end{enumerate}
\end{lemma}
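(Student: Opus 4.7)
The plan is to match the concrete covariances $C_0,C_1$ of Subsection \ref{subsec_covariances} with the generic $\cC_0,\cC_1$ of Subsection \ref{subsec_double_scale_integration}, then apply Lemma \ref{lem_final_integration} and translate back via the gauge invariance recorded in \eqref{eq_covariance_double_decomposition}. After this identification, the statements (i) and (ii) will essentially be direct corollaries of \eqref{eq_end_bound} and \eqref{eq_end_difference}.

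First I will verify the hypotheses needed to invoke Lemma \ref{lem_final_integration}. The covariance $C_0$ is time-independent by construction, so it satisfies \eqref{eq_first_covariance_time_independence}, and $C_1$ satisfies \eqref{eq_covariance_time_translation} since it is a Matsubara sum in $s-t$. Lemma \ref{lem_standard_covariance_bounds}(ii) supplies the determinant bound \eqref{eq_double_determinant_bound} for $C_0$ with $A$ proportional to $\beta^{-1}e_{min}^{-1}$, while Lemma \ref{lem_covariance_determinant_bound} supplies the $O(1)$ determinant bound for $C_1$; the bounds of Lemma \ref{lem_standard_covariance_bounds}(iii),(iv) fit \eqref{eq_double_norm_bound}, \eqref{eq_double_norm_bound_dash} with $B$, $A$ matching $\hat{B}$, $\hat{A}$ up to multiplicative constants. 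After choosing $\hat{c}_0$ large enough to absorb every $c(d,b,(\hbv_j)_{j=1}^d,c_E)$ appearing in the two covariance lemmas, the assumption \eqref{eq_final_h_assumption} is exactly the common refinement of \eqref{eq_h_domination} and \eqref{eq_h_new_domination}, and the setting of Lemma \ref{lem_final_integration} is in force with $A=\hat{A}$, $B=\hat{B}$, $c_0=\hat{c}_0$.

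Next I will exploit gauge invariance. The relation \eqref{eq_covariance_double_decomposition} factors out a phase $e^{i\pi(s-t)/\beta}$ between $C(\phi)$ and $C_0+C_1$; because $V(u)$, $W(u)$ and each $A^j$ are sums of quartic or quadratic monomials whose fermion operators all carry the same time index, this phase cancels within every Wick-type contraction. Combined with the identification $V^{0-1,0}(u)+V^{0-2,0}(u)=-V(u)+W(u)$ and $V^{1,0}(\bla)=-\la_1A^1-\la_2A^2$ (immediate from \eqref{eq_0_1_2_reduction}, \eqref{eq_1_reduction}) and with the division formula for Grassmann Gaussian integrals, this gives
\begin{equation*}
\int e^{-V(u)+W(u)}d\mu_{C(\phi)}(\psi)=\int e^{V^{0-1,0}(u)+V^{0-2,0}(u)}d\mu_{C_0+C_1}(\psi)=\exp(V^{end}(u,\b0))
\end{equation*}
on the polydisc provided by Lemma \ref{lem_final_integration}. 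Statement (i) then follows by taking absolute values and inserting \eqref{eq_end_bound}, noting $N/h=4b\beta L^d$ so that $|V^{end}(u,\b0)|\le 4b\beta(\hat{A}+1)\hat{B}^{-1}$. For statement (ii) I will differentiate the analogous identity $\int e^{-V(u)+W(u)+zV^{1,0}(\bla)}d\mu_{C(\phi)}=\exp(V^{end}(u,z\bla))$ in $z$ at $z=0$, which yields the normalized expectation of $-A^j$ on the left; the same manipulation applied to $V^{1-3,end}$ together with the division formula produces $-\int A^jd\mu_{C(\phi)}$, so the sought difference is exactly $-(\partial V^{end}/\partial\la_j-\partial V^{1-3,end}/\partial\la_j)(u,\b0)$, which is controlled by \eqref{eq_end_difference}.

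I expect the only real obstacle to be bookkeeping: carefully identifying the generic constants $A$, $B$, $c_0$ with $\hat{A}$, $\hat{B}$ and $\hat{c}_0$, checking that $L^d\ge\hat{A}+1$ entails the hypothesis $L^d\ge A+1$ of Lemma \ref{lem_final_integration} after the absorption of constants, and verifying that the polydisc $\overline{D(\hat{r})}^2$ used in the Cauchy-type differentiation in $\bla$ lies comfortably inside the domain of analyticity provided by Lemma \ref{lem_final_integration} while still being large enough for the derivative estimate to give the claimed bound. None of these points requires any new covariance estimate beyond those already established in Subsection \ref{subsec_covariances}.
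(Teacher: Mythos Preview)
Your proposal is correct and follows essentially the same route as the paper: specialize $\cC_0,\cC_1$ to $C_0,C_1$, verify the hypotheses of Lemma \ref{lem_final_integration} via Lemmas \ref{lem_standard_covariance_bounds} and \ref{lem_covariance_determinant_bound}, pass between $C(\phi)$ and $C_0+C_1$ by the gauge transform encoded in \eqref{eq_covariance_double_decomposition}, and read off (i) and (ii) from \eqref{eq_end_bound} and \eqref{eq_end_difference}.

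Two small points worth tightening. First, your stated reason for gauge invariance is not quite right: the second term in \eqref{eq_0_1_2_reduction} (hence $W(u)$) involves two distinct times $s,t$, so it is \emph{not} true that all fermion operators in each monomial share the same time index. The correct mechanism, which the paper uses, is the substitution $\psi_{\orho\rho\bx s\xi}\mapsto e^{-i\xi\pi s/\beta}\psi_{\orho\rho\bx s\xi}$; every monomial in $-V(u)+W(u)-A$ pairs each $\overline{\psi}$ with a $\psi$ at the \emph{same} time, so the phases cancel within each monomial and the action is invariant. Second, the paper first establishes $e^{V^{end}(u,\bla)}=\int e^{-V(u)+W(u)-A}d\mu_{C(\phi)}$ only for small $|u|,\|\bla\|$ (where the logarithm is a priori defined), and then extends to the full closed polydisc by the identity theorem and continuity using \eqref{eq_end_regularity}; you implicitly assume this extension, and it is worth making the analytic-continuation step explicit.
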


\begin{proof} 
We take the generalized covariances $\cC_0$, $\cC_1$ to be $C_0$,
 $C_1$, which were analyzed in Subsection \ref{subsec_covariances},
 respectively. We can see from Lemma
 \ref{lem_standard_covariance_bounds}, Lemma
 \ref{lem_covariance_determinant_bound} that on the assumption
 \eqref{eq_final_h_assumption} $c_0$, $A$, $B$ can be taken to be
 $\hat{c}_0$, $\hat{A}$, $\hat{B}$ respectively. Accordingly the claims
 of Lemma \ref{lem_final_integration} hold with $\hat{c}_0$, $\hat{A}$,
 $\hat{B}$ in place of $c_0$, $A$, $B$. By using the relation
 \eqref{eq_covariance_double_decomposition} and the gauge transform
 $\psi_{\orho\rho\bx s \xi}\mapsto
 e^{-i\xi\frac{\pi}{\beta}s}\psi_{\orho\rho\bx s \xi}$ we can prove that
if $|u|$, $\|\bla\|_{\C^2}$ are sufficiently small, 
\begin{align*}
&\Re \int e^{-V(u)(\psi)+W(u)(\psi)-A(\psi)}d\mu_{C(\phi)}(\psi)>0,\\
&V^{end}(u,\bla)=\log\Bigg(\int e^{-V(u)(\psi)+W(u)(\psi)-A(\psi)}d\mu_{C(\phi)}(\psi)
\Bigg).
\end{align*}
For the proof of the above properties let us refer to the proof of 
\cite[\mbox{Lemma 4.13}]{K_BCS_I} or \cite[\mbox{Proposition
 5.9}]{K_BCS_II} where a similar claim was proved. Then it follows from
 \eqref{eq_end_regularity}, the identity theorem and continuity that on
 the assumptions of this lemma 
\begin{align}
&e^{V^{end}(u,\bla)}=\int
 e^{-V(u)(\psi)+W(u)(\psi)-A(\psi)}d\mu_{C(\phi)}(\psi),\label{eq_identity_theorem_end}\\
&\Big(\forall (u,\bla)\in
 \overline{D(\hat{c}_0^{-2}\alpha^{-5}b^{-1}\hat{B}^{-1})}\notag\\
&\qquad\qquad\quad\times 
\overline{D(2^{-1}(h+1)^{-1}(\hat{A}+1)^{-2}(\hat{B}+1)^{-2}(\beta+1)^{-1}\hat{c}_0^{-2}\alpha^{-5})}^2\Big).\notag
\end{align}
On the other hand, by the definition and the same gauge transform as
 above
\begin{align}
V^{1-3,end}=-\int A(\psi)d\mu_{C(\phi)}(\psi).\label{eq_source_term_end}
\end{align}
By combining \eqref{eq_identity_theorem_end},
 \eqref{eq_source_term_end} with \eqref{eq_end_bound},
 \eqref{eq_end_difference} we can derive the claimed inequalities.
\end{proof}

The next lemma is essentially based on \cite[\mbox{Proposition
 4.16}]{K_BCS_I}. The proof of \cite[\mbox{Proposition
 5.10}]{K_BCS_II} can be read as a guide to deduce the lemma from 
\cite[\mbox{Proposition 4.16}]{K_BCS_I}. 

\begin{lemma}\label{lem_infinite_volume_limit_perturbative}
Let $\hat{A}$, $\hat{B}$, $c$, $\hat{c}_0$ be those introduced in Lemma
 \ref{lem_practical_double_scale_result}. Assume that $L^d\ge \hat{A}+1$
 and $\alpha\ge c$. Then for any non-empty compact set $Q$ of $\C$ 
\begin{align*}
&\lim_{h\to\infty\atop h\in \frac{2}{\beta}\N}\int 
e^{-V(u)(\psi)+W(u)(\psi)}d\mu_{C(\phi)}(\psi),\quad 
\lim_{L\to\infty\atop L\in \N}\lim_{h\to\infty\atop h\in \frac{2}{\beta}\N}\int 
e^{-V(u)(\psi)+W(u)(\psi)}d\mu_{C(\phi)}(\psi)
\end{align*}
converge in $C(Q\times
 \overline{D(2^{-1}\hat{c}_0^{-2}\alpha^{-5}b^{-1}\hat{B}^{-1})})$
as sequences of functions of the variable $(\phi,u)$. Here we consider $C(Q\times
 \overline{D(2^{-1}\hat{c}_0^{-2}\alpha^{-5}b^{-1}\hat{B}^{-1})})$ as
 the Banach space with the uniform norm.
\end{lemma}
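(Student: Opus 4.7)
The approach I would follow mirrors the proof of \cite[Proposition 5.10]{K_BCS_II}: combine the $(h,L)$-uniform bound of Lemma \ref{lem_practical_double_scale_result}(\ref{item_practical_double_scale_bound}) with Cauchy's integral formula to produce an equicontinuous family of analytic functions, and then identify the limit via \cite[Proposition 4.16]{K_BCS_I}. Since $Q$ is compact, there is $M_Q\in \R_{>0}$ with $|\phi|\le M_Q$ for all $\phi\in Q$, so the condition \eqref{eq_final_h_assumption} is satisfied uniformly for $\phi\in Q$ as soon as $h\ge \sqrt{e_{max}^2+M_Q^2}+\frac{1}{\beta}(3\pi+2)$. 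For such $h$ and for $L$, $\alpha$ fulfilling the hypotheses, Lemma \ref{lem_practical_double_scale_result}(\ref{item_practical_double_scale_bound}) yields, uniformly in $h$, $L$, and $\phi\in Q$, the bound
\begin{align*}
\left|\int e^{-V(u)(\psi)+W(u)(\psi)}d\mu_{C(\phi)}(\psi)\right|\le e^{4b\beta(\hat{A}+1)\hat{B}^{-1}},\qquad u\in \overline{D(\hat{c}_0^{-2}\alpha^{-5}b^{-1}\hat{B}^{-1})}.
\end{align*}
The integrand is a polynomial in $u$ (hence entire), and it depends continuously on $\phi$ since $C(\phi)$ depends smoothly on $\phi$ through $E(\phi)(\bk)$; in particular the family is jointly continuous in $(\phi, u)$.

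Next, I would apply Cauchy's integral formula on the disk $\overline{D(\hat{c}_0^{-2}\alpha^{-5}b^{-1}\hat{B}^{-1})}$: the modulus bound above converts directly into an equicontinuity estimate in $u$ on the halved disk $\overline{D(2^{-1}\hat{c}_0^{-2}\alpha^{-5}b^{-1}\hat{B}^{-1})}$ whose modulus of continuity is independent of $h$, $L$, and $\phi\in Q$. Equicontinuity in $\phi$ on $Q$ can be obtained analogously by viewing the integral as an analytic function of $\phi$ in a complex neighborhood of $Q$ (the covariance $C(\phi)$ extends analytically in $\phi$) and applying Cauchy's formula again. The result is that the family $\{(\phi,u)\mapsto \int e^{-V(u)+W(u)}d\mu_{C(\phi)}(\psi)\}_{h,L}$ is jointly equicontinuous on the compact set $Q\times \overline{D(2^{-1}\hat{c}_0^{-2}\alpha^{-5}b^{-1}\hat{B}^{-1})}$, uniformly in $h$ and $L$.

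Finally, for each fixed $(\phi,u)$ in this compact set, pointwise existence of both sequential limits (first $h\to\infty$ with $L$ fixed, then $L\to\infty$) is furnished by \cite[Proposition 4.16]{K_BCS_I}. Pointwise convergence combined with joint equicontinuity on a compact set yields uniform convergence in $C(Q\times \overline{D(2^{-1}\hat{c}_0^{-2}\alpha^{-5}b^{-1}\hat{B}^{-1})})$ by a standard Arzel\`a--Ascoli-type argument (uniqueness of the pointwise limit suffices to rule out oscillation across subsequences). The main obstacle I anticipate is verifying equicontinuity in $\phi$ with constants that do not depend on $h$ or $L$: since $\phi$ enters the Gaussian measure itself, one must either confirm that the analytic extension of the integral in $\phi$ is bounded uniformly in $(h,L)$ on a complex neighborhood of $Q$ (which essentially reduces to reapplying Lemma \ref{lem_practical_double_scale_result} with $\phi$ varying in a complex disk around $Q$, using that its bounds depend on $\phi$ only through $|\phi|$ via \eqref{eq_final_h_assumption}), or else bound the $\phi$-derivative of the integrand directly using the smoothness of $E(\phi)(\bk)$ and the already-established bound.
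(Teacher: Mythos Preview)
Your outline matches what the paper itself indicates: it gives no detailed proof but says the lemma ``is essentially based on \cite[Proposition 4.16]{K_BCS_I}'' with ``the proof of \cite[Proposition 5.10]{K_BCS_II}'' as the guide, which is precisely the uniform-bound plus Cauchy/equicontinuity plus pointwise-limit scheme you describe.

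One claim needs correction. You assert that ``the covariance $C(\phi)$ extends analytically in $\phi$,'' but $E(\phi)(\bk)$ carries $\overline{\phi}$ in its upper-right block, so $C(\phi)$ is real-analytic in $(\Re\phi,\Im\phi)$ but not holomorphic in the single complex variable $\phi$; Cauchy's formula in $\phi$ alone is therefore unavailable. The standard remedy is to regard the two off-diagonal entries $\phi$ and $\overline{\phi}$ as independent complex parameters $(\phi,\phi')$, so that the Grassmann integral becomes jointly holomorphic in $\C^2$; the covariance estimates of Subsection \ref{subsec_covariances} depend on $\phi$ only through $\phi\overline{\phi}=|\phi|^2$ (via \eqref{eq_extended_big_band}, \eqref{eq_extended_small_band}, \eqref{eq_final_h_assumption}) and persist on a complex neighborhood of $\{(\phi,\overline{\phi}):\phi\in Q\}$, after which Cauchy in two variables yields the required equicontinuity. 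Your fallback of bounding the real $\phi$-derivative directly would also work. With this repair your argument is complete and coincides with the route the paper points to.
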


Now we can describe how to derive the claims of Theorem
\ref{thm_infinite_volume_limit} by following the final part of the proof of
\cite[\mbox{Theorem 1.3}]{K_BCS_II} presented in \cite[\mbox{Subsection
5.2}]{K_BCS_II}.

\begin{proof}[Proof of Theorem \ref{thm_infinite_volume_limit}]

The proof of the claims ``(i), (ii), (iii), (iv), (v)'' of
 \cite[\mbox{Theorem 1.3}]{K_BCS_II} straightforwardly applies to prove
 \eqref{item_partition_positivity}, \eqref{item_free_energy},
\eqref{item_SSB}, \eqref{item_ODLRO}, \eqref{item_CD} of Theorem
 \ref{thm_infinite_volume_limit} respectively. In the proof of
 \cite[\mbox{Theorem 1.3}]{K_BCS_II} the basic lemmas ``Lemma 3.1'',
 ``Lemma 3.2'', ``Lemma 3.6'', ``Lemma 5.11'' of \cite{K_BCS_II} were
 frequently used. We should remark that here the same statements as these
 lemmas hold for any $\beta\in \R_{>0}$, $\theta\in \R$ including the
 case $\beta\theta/2\in \pi(2\Z+1)$. This is because in this
 paper the free partition function does not vanish for any $\theta\in
 \R$ thanks to the assumption \eqref{eq_one_particle_lowest_band}. Let
 us fix $\alpha\in \R_{\ge 1}$ satisfying the condition $\alpha\ge c$
 required in Lemma \ref{lem_practical_double_scale_result} and 
Lemma \ref{lem_infinite_volume_limit_perturbative}. Set
 $c':=4^{-1}\hat{c}_0^{-2}\alpha^{-5}$. We see that $c'\in (0,1]$, it
 depends only on $d$, $b$,
 $(\hbv_j)_{j=1}^d$, $c_E$ and 
\begin{align*}
\left(-\frac{2c'}{b}\min\{e_{min},e_{min}^{d+1}\},0\right)\subset 
 \overline{D(2^{-1}\hat{c}_0^{-2}\alpha^{-5}b^{-1}\hat{B}^{-1})}.
\end{align*}
This means that the inequalities and the convergence properties stated
 in Lemma \ref{lem_practical_double_scale_result}, Lemma
 \ref{lem_infinite_volume_limit_perturbative} are applicable to the
 Grassmann integral formulation with the coupling constant 
$$U\in \left(-\frac{2c'}{b}\min\{e_{min},e_{min}^{d+1}\},0\right).$$
Subsequently, for $U$ belonging to this open interval the claims of
 Theorem \ref{thm_infinite_volume_limit} can be proved.

Here we only summarize which lemmas are necessary to
 conclude the claims of Theorem \ref{thm_infinite_volume_limit} if we
 straightforwardly follow the proof of \cite[\mbox{Theorem 1.3}]{K_BCS_II}.
We avoid fully repeating the same arguments as before. 
The key point of translating the proof of \cite[\mbox{Theorem
 1.3}]{K_BCS_II} into the proof of Theorem
 \ref{thm_infinite_volume_limit} is to replace ``Proposition 5.9
 (i),(ii)'', ``Proposition 5.10'' of \cite{K_BCS_II} by Lemma
 \ref{lem_practical_double_scale_result}
 \eqref{item_practical_double_scale_bound},\eqref{item_practical_double_scale_difference}, 
Lemma \ref{lem_infinite_volume_limit_perturbative} respectively.
We can prove 
\eqref{item_partition_positivity}, \eqref{item_SSB}, \eqref{item_ODLRO},
 \eqref{item_CD}, \eqref{item_free_energy} in this order as in the proof
 of \cite[\mbox{Theorem 1.3}]{K_BCS_II}. 

\eqref{item_partition_positivity}: ``Lemma 3.1'', ``Lemma 3.2'', ``Lemma
 3.6 (i),(iii),(iv)'' of \cite{K_BCS_II}, Lemma
 \ref{lem_practical_double_scale_result}
 \eqref{item_practical_double_scale_bound} and Lemma
 \ref{lem_infinite_volume_limit_perturbative} of this paper.

\eqref{item_SSB}: ``Lemma 3.1'', ``Lemma 3.6 (i),(iii)'', ``Lemma
 5.11'', ``Lemma A.1'' of \cite{K_BCS_II}, 
Lemma \ref{lem_standard_covariance_bounds}
 \eqref{item_full_determinant}, 
Lemma \ref{lem_practical_double_scale_result}
 \eqref{item_practical_double_scale_bound},\eqref{item_practical_double_scale_difference}
 and Lemma \ref{lem_infinite_volume_limit_perturbative} of this paper.

\eqref{item_ODLRO}, \eqref{item_CD}: ``Lemma 3.1'', ``Lemma 3.6 (i),(iii)'', ``Lemma
 5.11'', ``Lemma A.2'', ``Lemma A.3'' of \cite{K_BCS_II}, 
Lemma \ref{lem_standard_covariance_bounds}
 \eqref{item_full_determinant}, Lemma \ref{lem_practical_double_scale_result}
 \eqref{item_practical_double_scale_bound},\eqref{item_practical_double_scale_difference}
 and Lemma \ref{lem_infinite_volume_limit_perturbative} of this paper.

\eqref{item_free_energy}:  ``Lemma 3.1'', ``Lemma 3.2'', ``Lemma 3.6 (iii)'', ``Lemma A.4'' of \cite{K_BCS_II}.
\end{proof}

\appendix

\section{A special matrix-valued function}\label{app_special_function}

Here we construct a matrix-valued function, which is used to prove that
the function $\tau(\cdot)$ can have more than one local minimum points
in Subsection \ref{subsec_shape}. 

\begin{lemma}\label{lem_special_function}
For any $d$, $b\in \N$, basis $(\hbv_j)_{j=1}^d$ of $\R^d$, $s,t\in
 \R_{>0}$ satisfying $0<s<t<1$, $e_{max}$, $e_{min}\in\R_{>0}$
 satisfying $0<e_{min}<e_{max}$ there exists $E\in \cE(e_{min},e_{max})$
 such that 
\begin{align*}
&D_d|\{\bk\in \G_{\infty}^{*}\ |\ \Tr|E(\bk)|=b e_{max}\}|=s,\\
&D_d|\{\bk\in \G_{\infty}^*\ |\ \Tr|E(\bk)|=b e_{min}\}|=1-t,
\end{align*}
where $|S|$ denotes the Lebesgue measure of a measurable set $S$ $(\subset
 \R^d)$. 
\end{lemma}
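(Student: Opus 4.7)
The plan is to reduce the construction of $E$ to a one-dimensional scalar plateau function. Setting $E(\bk) := f(\bk) I_b$ for a real scalar $f \in C^\infty(\R^d, \R)$ with $e_{min} \le f \le e_{max}$ automatically enforces Hermiticity and the time-reversal symmetry \eqref{eq_time_reversal_symmetry}, and since $|E(\bk)| = f(\bk) I_b$ under this ansatz, $\Tr|E(\bk)| = b f(\bk)$. The two level-set conditions then become the scalar identities
\begin{align*}
D_d |\{\bk \in \G_{\infty}^* : f(\bk) = e_{max}\}| = s, \qquad
D_d |\{\bk \in \G_{\infty}^* : f(\bk) = e_{min}\}| = 1 - t,
\end{align*}
together with the extremum conditions in \eqref{eq_one_particle_lowest_band}; the lattice periodicity condition on $E$ becomes $2\pi\hbv_j$-periodicity of $f$ in each lattice direction.

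To produce such an $f$ I would further collapse the dimension by putting $f(\bk) := \phi(\<\bv_1, \bk\>)$, where $\bv_1$ is the dual basis vector satisfying $\<\bv_1, \hbv_j\> = \delta_{1,j}$ and $\phi : \R \to \R$ is smooth, $2\pi$-periodic, and even. Periodicity in every $\hbv_j$-direction and the symmetry $f(-\bk) = f(\bk)$ follow immediately. Parametrizing $\bk = \sum_j \hat k_j \hbv_j$ with $\hat k_j \in [0, 2\pi]$ and using $d\bk = |\det(\hbv_1, \ldots, \hbv_d)| \, d\hat k_1 \cdots d\hat k_d$, the level-set measures simplify to
\begin{align*}
D_d |\{\bk \in \G_{\infty}^* : f(\bk) = c\}|
 = \frac{|\{x \in [0, 2\pi] : \phi(x) = c\}|}{2\pi},
\end{align*}
because the Jacobian $|\det(\hbv_1, \ldots, \hbv_d)|$ and the $(2\pi)^{d-1}$ factor from integrating out $\hat k_2, \ldots, \hat k_d$ cancel exactly against $D_d^{-1} = |\det(\hbv_1,\ldots,\hbv_d)|(2\pi)^d$. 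So the task reduces to producing a smooth, $2\pi$-periodic, even $\phi$ valued in $[e_{min}, e_{max}]$ whose top-plateau $\{\phi = e_{max}\}$ and bottom-plateau $\{\phi = e_{min}\}$ inside one period have respective lengths $2\pi s$ and $2\pi(1-t)$.

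This last scalar step is elementary and is where the hypothesis $0 < s < t < 1$ is used: it yields $s + (1-t) < 1$, so the two plateaus fit in one period with strictly positive room left over for smooth interpolation. A concrete choice is to declare $\phi \equiv e_{max}$ on $[-\pi s, \pi s]$, $\phi \equiv e_{min}$ on $[\pi - \pi(1-t), \pi + \pi(1-t)]$, and glue on the two remaining intervals by a strictly monotone $C^\infty$ transition built from a standard mollified step, e.g.\ via a convolution of a step function with a compactly supported smooth bump, rescaled to match the endpoint values and all derivatives. The resulting $\phi$ is even about $0$ and about $\pi$, hence extends $2\pi$-periodically to an even smooth function on $\R$, takes values in $[e_{min}, e_{max}]$, and attains both extrema on sets of the prescribed positive lengths. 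Setting $E(\bk) := \phi(\<\bv_1, \bk\>) I_b$ then gives a member of $\cE(e_{min}, e_{max})$ with the required level-set measures. I do not anticipate any real obstacle; the only mild technicality is writing down an explicit smooth monotone transition on the two gap intervals, which is standard.
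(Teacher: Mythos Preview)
Your argument is correct and takes a cleaner route than the paper's. Both proofs set $E(\bk)=f(\bk)I_b$ for a scalar $f$, but the paper builds $f$ as a genuine $d$-dimensional product
\[
\Phi(x_1,\ldots,x_d)=\prod_{j=1}^d\phi(x_j)+e_{min},
\]
with a one-dimensional bump $\phi$ whose plateau half-widths are $\pi s^{1/d}$ and $\pi t^{1/d}$; the level sets $\{\Phi=e_{max}\}$ and $\{\Phi=e_{min}\}$ are then a cube and a cube-complement in the $(\hat k_1,\ldots,\hat k_d)$ coordinates, which is where the $d$-th roots come from. You instead let $f$ depend only on $\hat k_1=\<\bv_1,\bk\>$, so the level sets are slabs and the remaining $d-1$ coordinates integrate out trivially, yielding directly $D_d|\{\bk:f(\bk)=c\}|=(2\pi)^{-1}|\{x\in[0,2\pi]:\phi(x)=c\}|$ with no fractional powers. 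The paper's product construction gives level sets concentrated near a single point of the torus, but the lemma is only used (in Propositions~\ref{prop_necessity_equality} and~\ref{prop_necessity_inequality}) through the two measure constraints and the bound $e_{min}\le f\le e_{max}$, for which your slab construction is equally adequate and shorter.
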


\begin{proof}
By a standard procedure one can construct a function $\phi$ $(\in
 C^{\infty}(\R))$ satisfying that 
\begin{align}
&\phi(x)=(e_{max}-e_{min})^{\frac{1}{d}}\text{ if }|x-\pi|\le \pi
 s^{\frac{1}{d}},\notag\\
&\phi(x)=0\text{ if }|x-\pi|\ge \pi
 t^{\frac{1}{d}},\notag\\
&\phi(x)\in (0,(e_{max}-e_{min})^{\frac{1}{d}})\text{ if }\pi
 s^{\frac{1}{d}}<|x-\pi|<\pi t^{\frac{1}{d}},\notag\\
&\phi(\pi+x)=\phi(\pi-x),\quad (\forall x\in
 \R).\label{eq_deep_time_reversal}
\end{align}
Let us define the function $\Phi$ $(\in C^{\infty}(\R^d))$ by 
$\Phi(x_1,\cdots,x_d):=\prod_{j=1}^d\phi(x_j)+e_{min}$.
Observe that
\begin{align*}
&\Phi(x_1,\cdots,x_d)=e_{max}\text{ if }|x_j-\pi|\le \pi
 s^{\frac{1}{d}}\ (\forall j\in \{1,\cdots,d\}),\notag\\
&\Phi(x_1,\cdots,x_d)=e_{min}\text{ if }\exists j\in
 \{1,\cdots,d\}\text{ s.t. }|x_j-\pi|\ge \pi
 t^{\frac{1}{d}},\notag\\
&\Phi(x_1,\cdots,x_d)\in (e_{min},e_{max})\text{ otherwise.}
\end{align*}
Then let us define the matrix-valued function $\hat{E}:\G_{\infty}^*\to
 \Mat(b,\C)$ by $\hat{E}(\bk):=\Phi((\hbv_1,\cdots,\hbv_d)^{-1}\bk)I_b$
 $(\bk\in \G_{\infty}^*)$. We can periodically extend $\hat{E}$ to be a
 map from $\R^d$ to $\Mat(b,\C)$. If $E$ denotes the extension, it
 follows that $E\in \cE(e_{min},e_{max})$. Let us confirm the property
 \eqref{eq_time_reversal_symmetry}. The other properties are obvious. 
Take $\bk\in \R^d$. There exist $\hat{k}_j\in [0,2\pi)$, $m_j\in \Z$
 $(j=1,\cdots,d)$ such that $\bk=\sum_{j=1}^d(\hat{k}_j+2\pi
 m_j)\hbv_j$. By the periodicity and \eqref{eq_deep_time_reversal}, 
\begin{align*}
\overline{E(-\bk)}&=E\left(\sum_{j=1}^d(2\pi-\hat{k}_j)\hbv_j\right)=\Phi(2\pi-\hat{k}_1,\cdots,2\pi-\hat{k}_d)I_b=\Phi(\hat{k}_1,\cdots,\hat{k}_d)I_b\\
&=E(\bk).
\end{align*}
Moreover, we can verify that
\begin{align*}
&D_d|\{\bk\in \G^*_{\infty}\ |\ \Tr |E(\bk)| = b e_{max}\}|\\
&=D_d|\{\bk\in
 \G_{\infty}^*\ |\ (\hbv_1,\cdots,\hbv_d)^{-1}\bk\in [\pi-\pi
 s^{\frac{1}{d}},\pi+\pi s^{\frac{1}{d}}]^d
\}|=s,\\
&D_d|\{\bk\in \G^*_{\infty}\ |\ \Tr |E(\bk)| = b e_{min}\}|\\
&=D_d|\{\bk\in
 \G_{\infty}^*\ |\ (\hbv_1,\cdots,\hbv_d)^{-1}\bk\in
 [0,2\pi]^d\backslash (\pi-\pi
 t^{\frac{1}{d}},\pi+\pi t^{\frac{1}{d}})^d
\}|=1-t.
\end{align*}
\end{proof}

\section{A definite integral formula}\label{app_integral}

Here we derive an explicit formula of a definite integral, which is used
in the proof of Proposition \ref{prop_one_band_result}. 

\begin{lemma}\label{lem_definite_integral}
For $x$, $t\in \R_{\ge 0}$
\begin{align}
&\frac{1}{2\pi}\int_0^{2\pi}dk \frac{1}{1+x (t(\cos k+1)+1)^2}\label{eq_definite_integral}\\
&=\frac{((2t+1)^2x+1)^{\frac{1}{2}}+(x+1)^{\frac{1}{2}}}{\sqrt{2}(x+1)^{\frac{1}{2}}((2t+1)^2x+1)^{\frac{1}{2}}\big((x+1)^{\frac{1}{2}}((2t+1)^2x+1)^{\frac{1}{2}}+(2t+1)x+1\big)^{\frac{1}{2}}
}.\notag
\end{align}
\end{lemma}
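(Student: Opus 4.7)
The plan is to reduce \eqref{eq_definite_integral} to a real rational integral via the Weierstrass substitution and then evaluate it via partial fractions. Since both sides of \eqref{eq_definite_integral} are continuous in $(x,t)\in\R_{\ge 0}^2$ and agree trivially on the axes $x=0$ or $t=0$, I may assume $x,t>0$. Since $\cos(2\pi-k)=\cos k$, the integral over $[0,2\pi]$ equals twice that over $[0,\pi]$; the substitution $u=\tan(k/2)$ then sends $[0,\pi]$ diffeomorphically onto $[0,\infty)$ with $\cos k+1=2/(1+u^2)$ and $dk=2\,du/(1+u^2)$. After clearing denominators this yields
\begin{equation*}
\frac{1}{2\pi}\int_0^{2\pi}\frac{dk}{1+x(t(\cos k+1)+1)^2}=\frac{2}{\pi}\int_0^\infty\frac{(1+u^2)\,du}{Au^4+Bu^2+C},
\end{equation*}
where $A:=1+x$, $B:=2(1+(2t+1)x)$, $C:=1+(2t+1)^2 x$.

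The central algebraic observation is the identity
\begin{equation*}
(1+x)(1+(2t+1)^2x)-(1+(2t+1)x)^2=4xt^2,
\end{equation*}
equivalently $B^2-4AC=-16xt^2<0$. Consequently the quartic factors over the reals as $Au^4+Bu^2+C=A(u^2+pu+q)(u^2-pu+q)$ with $q:=\sqrt{C/A}$ and $p:=\sqrt{2q-B/A}>0$; positivity of $2q-B/A=(2\sqrt{AC}-B)/A$ also follows from the identity.

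Next I apply partial fractions, exploiting the $u\mapsto-u$ symmetry of the left-hand side:
\begin{equation*}
\frac{1+u^2}{(u^2+pu+q)(u^2-pu+q)}=\frac{\alpha u+\beta}{u^2+pu+q}+\frac{-\alpha u+\beta}{u^2-pu+q},
\end{equation*}
with $\beta=1/(2q)$ and $\alpha=(1-q)/(2pq)$ obtained by matching the quadratic and constant coefficients. The antiderivative of $(\alpha u+\beta)/(u^2+pu+q)$ decomposes as $(\alpha/2)\log(u^2+pu+q)+(\beta-\alpha p/2)\,r^{-1}\arctan((u+p/2)/r)$, where $r:=\sqrt{q-p^2/4}>0$, and analogously for the mirror term. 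Summed, the logarithmic parts combine into $(\alpha/2)\log[(u^2+pu+q)/(u^2-pu+q)]$, which vanishes at both $u=0$ and $u\to\infty$, while the arctangents contribute $\pi/r$ in total. Therefore
\begin{equation*}
\frac{2}{\pi}\int_0^\infty\frac{(1+u^2)\,du}{A(u^2+pu+q)(u^2-pu+q)}=\frac{2(\beta-\alpha p/2)}{Ar}=\frac{1+q}{2qAr}.
\end{equation*}

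The remaining step is algebraic cleanup. From $qA=\sqrt{AC}$ and $(2qAr)^2=4AC(q/2+B/(4A))=C(2\sqrt{AC}+B)$ one has $2qAr=\sqrt{C}\sqrt{2\sqrt{AC}+B}$, while $1+q=(\sqrt{A}+\sqrt{C})/\sqrt{A}$, so the integral equals $(\sqrt{A}+\sqrt{C})/(\sqrt{AC}\sqrt{2\sqrt{AC}+B})$. Writing $\sqrt{2\sqrt{AC}+B}=\sqrt{2}\sqrt{\sqrt{AC}+B/2}$ with $B/2=1+(2t+1)x$ and substituting back $A$ and $C$ recovers exactly the right-hand side of \eqref{eq_definite_integral}. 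No step is conceptually delicate; the only nonroutine ingredient is the identity governing the factorization, which is also what accounts for the precise combinations $(x+1)^{1/2}((2t+1)^2x+1)^{1/2}$ and $(2t+1)x+1$ that appear in the stated closed form.
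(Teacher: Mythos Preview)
Your proof is correct. Both you and the paper begin with the Weierstrass substitution $u=\tan(k/2)$, but from there you diverge: the paper factors $(\cdot)^2+\frac{1}{xt^2}$ over $\C$ as a product of complex-conjugate linear factors and then evaluates each resulting integral via the closed form $\int_0^\infty\frac{ds}{s^2+re^{i\theta}}=\tfrac{\pi}{2}r^{-1/2}e^{-i\theta/2}$ (itself justified by the residue theorem). You instead stay entirely within real calculus, recognizing that $B^2-4AC=-16xt^2<0$ forces the real biquadratic factorization $A(u^2+pu+q)(u^2-pu+q)$, and then finish with a symmetric real partial-fraction decomposition and arctangent antiderivatives. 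Your route is more elementary in that it avoids complex analysis altogether, at the modest cost of tracking a couple more coefficients; the paper's route is slightly slicker in the intermediate algebra. The key identity $(1+x)(1+(2t+1)^2x)-(1+(2t+1)x)^2=4xt^2$ that you highlight is exactly what makes your real factorization go through and, implicitly, what underlies the paper's final trigonometric simplification as well.
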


\begin{proof}
When $x=0$ or $t=0$, the equality obviously holds. Let us assume that
 $x>0$, $t>0$. One can prove by applying the residue theorem that
\begin{align}
\int_0^{\infty}ds \frac{1}{s^2+r
 e^{i\theta}}=\frac{\pi}{2}r^{-\frac{1}{2}}e^{-i\frac{\theta}{2}},\quad
 (\forall r\in \R_{>0},\ \theta \in
 (0,\pi)).\label{eq_residue_application}
\end{align}
By introducing a new variable $s$ by $s=\tan(k/2)$ we observe that
\begin{align*}
(\text{L.H.S of }\eqref{eq_definite_integral})&=\frac{2}{\pi x
 t^2}\int_0^{\infty}ds
 (1+s^2)^{-1}\Bigg(\left(\frac{2}{1+s^2}+\frac{1}{t}\right)^2+\frac{1}{xt^2}\Bigg)^{-1}\\
&=\frac{1}{i\pi}\Bigg((x^{\frac{1}{2}}-i)^{-1}\int_{0}^{\infty}ds 
\Bigg(s^2+\frac{(2t+1)x+1}{x+1}+\frac{2x^{\frac{1}{2}}t}{x+1}i\Bigg)^{-1}
\\
&\qquad\quad -(x^{\frac{1}{2}}+i)^{-1}\int_{0}^{\infty}ds 
\Bigg(s^2+\frac{(2t+1)x+1}{x+1}-\frac{2x^{\frac{1}{2}}t}{x+1}i\Bigg)^{-1}
\Bigg).
\end{align*}
Here we can apply \eqref{eq_residue_application} with 
\begin{align*}
r=\left(\frac{(2t+1)^2x+1}{x+1}\right)^{\frac{1}{2}},\quad
 \theta=\tan^{-1}\left(\frac{2t x^{\frac{1}{2}}}{(2t+1)x+1}\right)\
 \left(\in\left(0,\frac{\pi}{2}\right)\right)
\end{align*}
to derive that
\begin{align*}
(\text{L.H.S of }\eqref{eq_definite_integral})&=
\frac{1}{i\pi}\left((x^{\frac{1}{2}}-i)^{-1}\frac{\pi}{2}r^{-\frac{1}{2}}e^{-i\frac{\theta}{2}}-(x^{\frac{1}{2}}+i)^{-1}\frac{\pi}{2}r^{-\frac{1}{2}}e^{i\frac{\theta}{2}}\right)\\
&=\frac{\cos\left(\frac{\theta}{2}\right)\left(1-x^{\frac{1}{2}}\tan\left(\frac{\theta}{2}\right)\right)}{((2t+1)^2x+1)^{\frac{1}{4}}(x+1)^{\frac{3}{4}}}.
\end{align*}
Substitution of the equalities
\begin{align*}
&\tan\left(\frac{\theta}{2}\right)=\frac{1}{2tx^{\frac{1}{2}}}\left(
(x+1)^{\frac{1}{2}}((2t+1)^2x+1)^{\frac{1}{2}}-((2t+1)x+1)\right),\\
&\cos\left(\frac{\theta}{2}\right)=\frac{\big((x+1)^{\frac{1}{2}}((2t+1)^2x+1)^{\frac{1}{2}}+(2t+1)x+1\big)^{\frac{1}{2}}}{\sqrt{2}(x+1)^{\frac{1}{4}}((2t+1)^2x+1)^{\frac{1}{4}}}
\end{align*}
leads to the right-hand side of \eqref{eq_definite_integral}.
\end{proof}

\section*{Supplementary List of Notations}

\begin{center}
\begin{tabular}{lll}
Notation & Description & Reference \\
\hline
$e_{min}$ & minimum of magnitude of free dispersion relation & Subsection
 \ref{subsec_main_results}\\
$e_{max}$ & maximum of magnitude of free dispersion relation & Subsection
 \ref{subsec_main_results}\\
$\cE(e_{min},e_{max})$ & set of matrix-valued functions & Subsection
 \ref{subsec_main_results}\\
$g_E(\cdot)$ & real-valued function on $\R_{>0}\times \R\times \R$ & Subsection
 \ref{subsec_main_results}\\
$\beta_c$ & critical inverse temperature & Lemma \ref{lem_critical_temperature}\\
$c_E$ & positive constant depending only on $E(\cdot)$ & \eqref{eq_band_spectra_total_derivative}
\end{tabular}
\end{center}

\end{document}